\DeclareMathOperator*{\argmin}{arg\,min}
\newcommand{\E}{\mathrm{E}}
\newcommand{\Var}{\mathrm{Var}}
\newcommand{\abs}[1]{\left\vert#1\right\vert}
\newcommand{\Pn}{\frac{1}{n}\sum_{i=1}^n}
\newcommand{\fhatnorm}{\left\Vert \widehat{f}-f_0\right\Vert_n}
\newtheorem{thm}{Theorem}
\newtheorem{lem}{Lemma}
\newtheorem{defn}{Definition}[section]
\newtheorem{assm}{Assumption}
\newtheorem{rem}{Remark}
\numberwithin{equation}{section}
\begin{document}
\title{Uniform Inference in High-Dimensional Threshold Regression Models\thanks{\setlength{\baselineskip}{3.5mm} We received very useful comments from Mehmet Caner, Harold Chiang, Barry Goodwin, Vincent Han, Atsushi Inoue, Ilze Kalnina, Tong Li, Zheng Li, Yuya Sasaki, Yulong Wang, and Ping Yu (listed in alphabetical order), as well as participants in 
Midwest Econometrics Group Conference 2022\&2024, Bristol Econometrics Study Group, $19^{th}$ Economic Graduate Student Conference, $39^{th}$ Canadian Econometrics Study Group, and $19^{th}$ International Symposium on Econometric Theory and Applications. Authors are listed in alphabetical order.} }

\author{
Jiatong Li\thanks{\setlength{\baselineskip}{3.5mm}Jiatong Li: jiatongli@hust.edu.cn. School of Economics, Huazhong University of Science and Technology} 
\qquad 
Hongqiang Yan \thanks{\setlength{\baselineskip}{3.5mm}Hongqiang Yan: hongqiang.yan@asu.edu. Morrison School of Agribusiness, Arizona State University,
7231 E Sonoran Arroyo Mall, Santan Hall Suite 235, Mesa, AZ 85212-6414,USA} 
}

\maketitle
\begin{abstract}
We develop a uniform inference theory for high-dimensional slope parameters in threshold regression models, allowing for either cross-sectional or time series data. We first establish oracle inequalities for prediction errors, and $\ell_1$ estimation errors for the Lasso estimator of the slope parameters and the threshold parameter, accommodating heteroskedastic non-subgaussian error terms and non-subgaussian covariates. Next, we derive the asymptotic distribution of tests involving an increasing number of slope parameters by debiasing (or desparsifying) the Lasso estimator in cases with no threshold effect and with a fixed threshold effect. We show that the asymptotic distributions in both cases are the same, allowing us to perform uniform inference without specifying whether the model is a linear or threshold regression. Additionally, we extend the theory to accommodate time series data under the near-epoch dependence assumption. 
Finally, we identify statistically significant factors influencing cross-country economic growth and quantify the effects of military news shocks on US government spending and GDP, while also estimating a data-driven threshold point in both applications.
\\
\bigskip\\
	{\bf Keywords:}   Sample splitting,  Model selection, High-dimensional inference, Oracle inequalities. 
\bigskip\\
	{\bf JEL Codes:} C12, C13, C24.
\end{abstract}

\onehalfspacing

\section{Introduction}

Consider the following threshold regression model 
\begin{equation}
Y_{i}=X_{i}^{\prime }\beta _{0}+X_{i}^{\prime }\delta _{0}\bm{1}\{Q_i <\tau_0 \}+U_{i},\ \ \ i=1,\ldots ,n,  \label{model}
\end{equation}%
where $X_{i}$ is a $p\times 1$ covariate vector, and $Q_{i}$ is the threshold variable determining regime switching; for example, rich countries may follow a different economic growth pattern from poor countries. $\tau_0$ is the unknown threshold parameter, and $U_{i}$ is the error term. In this paper, we focus on uniform inference for high-dimensional regression parameters $(\beta _{0},\delta _{0}),$ allowing for $p>n.$ The threshold autoregression (TAR) model, with the lag of the series as the threshold variable, was formally introduced by \cite{tong1980threshold} to analyze cyclical time series data. It is a class of non-linear time series models and is parsimonious for nonparametric model estimation. \footnote{For a survey paper, see \cite{hansen2011threshold}.} \footnote{\cite{chan1993consistency} and \cite{chan1998limiting} study the limiting properties of the least square estimators in the threshold autoregression model.} \cite{potter1995nonlinear} applies it to study the properties of US GNP and finds that the response of output to shocks is asymmetric throughout different stages of the business cycle. 

Subsequently, threshold regression is utilized by \cite{hansen2000} to identify multiple regimes based on a particular predetermined variable, allowing for either time series or cross-sectional data. Since then, there has been growing interest in reanalyzing previous empirical applications using threshold models, particularly when multiple equilibria may exist. For example, \cite{lee2016} consider cross-country economic growth behaviors initially analyzed by  \cite{durlauf1995multiple}; \cite{yu2021threshold} and \cite{lee2023threshold} examine race-based tipping behavior in residential segregation discussed in \cite{card2008tipping}; and \cite{canerdebt}, \cite{aj2013}, and \cite{debt2017} investigate the effect of government debt on economic growth originally studied by \cite{reinhart2010growth}. In this paper, we confirm the existence of multiple steady states in cross-country economic growth by showing that some threshold-effect coefficients are significantly different from zero.  In addition, we apply the high-dimensional local projection threshold model to find a “data-driven” threshold point that defines the state of the economy and reestimate the impulse response to a military spending news shock in government spending and GDP. 

The main contribution of this paper is to develop a uniform inference procedure for an increasing number of slope parameters in high-dimensional threshold regression models, a class of parsimonious nonlinear regression models. To the best of our knowledge, we are the first to establish that the debiased Lasso estimator achieves uniform convergence over a large class of parameters without the need to pre-specify the existence of a threshold effect, even when the number of covariates grows much faster than the sample size. In contrast, the existing literature has primarily focused on applying debiased methods within high-dimensional linear regression frameworks. 
Meanwhile, we demonstrate that the asymptotic distributions of the tests are identical in cases with no threshold effect and with a fixed threshold effect. 
Specifically, we derive oracle inequalities for prediction errors and $\ell_1$ estimation errors for the Lasso estimator of the slope parameters and the threshold parameter under more general conditions, allowing for heteroskedastic non-subgaussian error terms and non-subgaussian covariates when studying cross-sectional data. Moreover, we further extend the framework to high-dimensional time series threshold regression models and establish uniform inference theory under the near-epoch dependence assumption.


This work focuses on a high-dimensional framework. Firstly, variable selection is necessary to identify threshold effects. A linear model that incorporates a broader set of regressors can outperform a statistical model that emphasizes threshold effects with a specific set of covariates, as highlighted by \cite{lee2016}.   Meanwhile, economic theory often provides guidance on a set of variables that are likely to be relevant,, but does not specify precisely which variables are truly important or the functional form through which they should enter the model. This ambiguity leaves researchers with the challenge of selecting an appropriate set of control variables from a potentially large pool, which may include not only the raw regressors available in the data but also their interactions and various nonlinear transformations. We thus study high-dimensional settings to keep the model free from variable selection. High dimensionality may also result from addressing the issue of confoundedness (\cite{doi:10.1080/07350015.2016.1204919}) and avoiding the non-invertibility of a structural moving average model (\cite{doi:10.1198/016214502388618960}). Specifically, when we study local projection threshold regression, which is a special case of time series threshold regression, the number of covariates naturally becomes large due to the inclusion of multiple lags to control for autocorrelation.\footnote{See \cite{hdlp} for further discussion on impulse response analysis with a large number of variables.}
Additionally, when we apply high-dimensional threshold models to empirical applications, due to sample splitting, the total number of parameters may be larger than the sample size in the regime with the fewest observations, particularly when multiple threshold points exist, leading to poor estimation and out-of-sample prediction in finite samples. However, traditional estimation and inferential methods, such as OLS and MLE, are no longer valid even in high-dimensional linear regression models. Many methods are available for high-dimensional estimation and variable selection, for example, Lasso in \cite{tibshirani1996regression}. We apply Lasso to estimate the high-dimensional threshold regression (allowing for $p>n$), as in \cite{lee2016} and \cite{canerkock2017}.

In this paper, we first study threshold regression with cross-sectional data, allowing for heteroskedastic non-subgaussian error terms and non-subgaussian covariates. We use the concentration inequality \footnote{The concentration inequality originates from \cite{chernozhukov2014gaussian} and \cite{chernozhukov2015comparison}, as formulated in Lemma 2 of \cite{chiang_rodrigue_sasaki_2023}} for the partial sum of random variables that we propose in Lemma \ref{conpart} to derive oracle inequalities for both the prediction error and $\ell_1$ estimation error for coefficients, which are qualitatively same as those in \cite{lee2016}.  

Next, we construct the desparsified Lasso estimator by using nodewise regression to estimate the empirical precision matrix. In our proof, we maintain the dependence assumption between covariates and the threshold variable, and we apply the inverse of a 2 $\times$ 2 block matrix to construct the precision matrix when the threshold effect may exist.\footnote{The independence assumption would significantly simplify the proof, but it is uncommon in empirical applications.} Based on the inference theory of \cite{canerkock2018} and our oracle inequalities, we establish the asymptotic distribution of tests involving an increasing number of slope parameters in the cases with no threshold effect and with a fixed threshold effect. We show that the asymptotic distributions in both cases are identical. We also provide a uniformly consistent covariance matrix estimator in both cases. \footnote{There is a slight difference between the limits of their asymptotic variances since, in the case of a fixed effect, there is a true value for the threshold parameter.} We further construct asymptotically valid confidence intervals for the interest of the slope parameter, which are uniformly valid and contract at the optimal rate. Moreover, we develop the uniform inference theory for the debiased Lasso estimator to the setting of the high-dimensional time series threshold regression model under the near-epoch-dependence assumption, with local projection threshold regression as a special case.


{\bf{Relation to literature.}} The existing literature on high-dimensional threshold regression has focused on deriving oracle inequalities for the prediction errors and estimation errors for the Lasso estimator of the slope parameters and the threshold parameter in the case of fixed design with gaussian errors (\cite{lee2016}), and on model selection consistency in the case of random design with sub-gaussian covariates and errors (\cite{canerkock2017}). However, high-dimensional inference is another important topic in statistics and econometrics; for example, the estimation of impulse response functions is an essential part of econometric inference in time series models. Thus, in this paper, we perform uniform inference for high-dimensional threshold regression parameters, allowing for either cross-sectional or time series data, by applying the de-sparsified method of \cite{geer2014} to complement the existing literature. This method desparsifies the estimator by constructing a reasonable approximate inverse of the singular empirical Gram matrix, thereby removing the bias from the estimation of the shrinkage method. Our asymptotic result is uniformly valid over the class of sparse models, where $s_0$ represents the sparsity level, which can grow with $n.$ 

A growing body of literature applies the desparsified method of \cite{geer2014} to perform uniform inference in high-dimensional regression models, motivated by the insight of \cite{Leeb_Pötscher_2005} that failing to account for the model selection step can lead to invalid statistical inference. \cite{Gold2018} desparsified the Lasso estimator based on a two-stage least squares estimation, allowing both numbers of instruments and of regressors to exceed the sample size. \cite{semenova2023inference} desparsified the orthogonal Lasso estimator in their third stage when heterogeneous treatment effects are present. Additionally, \cite{10.1093/jjfinec/nbac023}, \cite{ADAMEK20231114}, and \cite{hdlp} constructed the desparsified Lasso estimator in high-dimensional time series models. The desparsified method has also been applied in high-dimensional panel data models to perform uniform inference, as shown in works by \cite{KOCK2016}, \cite{kock_tang_2019}, and \cite{chiang_rodrigue_sasaki_2023}. However, all of these studies test hypotheses for a bounded number of parameters. \cite{canerkock2018} considered hypotheses involving an increasing number of parameters in linear regression models. We contribute to this strand of literature by developing a uniform inference framework in threshold regression models, a class of nonlinear regressions, accommodating an increasing number of slope parameters.

{\bf Organization:} The rest of the paper is organized as follows. Section \ref{recallest} recalls the Lasso estimator of \cite{lee2016} and establishes oracle inequalities under weaker conditions on covariates and error terms. We construct the debiased Lasso estimator and derive the uniformly asymptotic distribution of hypothesis tests in Section \ref{dbLASSO}. Section \ref{timeseries} develops the uniform inference theory for high-dimensional time series threshold regression models. In Section \ref{simemp}, we investigate finite sample properties of our debiased Lasso estimator, followed by two empirical applications. Section \ref{conclusion} concludes. All proofs are deferred to the Appendix. 

\subsection*{Notation} 
Denote the $\ell _{q}$ norm of a vector $a$ by $\left| a\right| _{q}$ and the empirical norm of $a \in \mathbb{R}^n$ by $||a||_n:= \left(n^{-1}\sum_{i=1}^n a_i^2\right)^{1/2}.$ For any $m\times n$ matrix $A,$ the induced $l_1$-norm and $l_\infty$-norm of $A$ are defined as $\Vert A\Vert_{l_1}:=\max_{1\leq j\leq n}\sum_{i=1}^m|A_{ij}|$ and $\Vert A\Vert_{l_\infty}:=\max_{1\leq i\leq m}\sum_{j=1}^n|A_{ij}|,$ respectively. Additionally, define $\Vert A\Vert{}_\infty:=\max_{1\leq i\leq m, 1\leq j\leq n}|A_{ij}|.$

For $a\in\mathbb{R}^n,$ denote the cardinality of $J(a)$ by $|J(a)|,$ where $J(a)=\{j=1,...,n: a_j \neq 0\}.$ Let $a_M$ denote the vector in $\mathbb{R}^n$ that has the same coordinates as $a$ on $M$ and zero coordinates on $M^c.$ Let the superscript $^{\left( j\right) }$ denote the $j$th element of a vector or the $j$th column of a matrix.

Finally, define $f_{(\alpha ,\tau )}(x,q):=x^{\prime }\beta +x^{\prime }\delta
\bm{1}\{q<\tau \},$ $f_{0}(x,q):=x^{\prime }\beta _{0}+x^{\prime }\delta
_{0}\bm{1}\{q<\tau _{0}\},$ and $\widehat{f}(x,q):=x^{\prime }\widehat{\beta }%
+x^{\prime }\widehat{\delta }\bm{1}\{q<\widehat{\tau }\}$.
The prediction norm is defined as $\left\Vert \widehat{f}-f_{0}\right\Vert _{n}
:=\sqrt{\frac{1}{n}  \sum_{i=1}^n \left(  \widehat{f}(X_i,Q_i) -  f_{0}(X_i,Q_i) \right)^2 }.$

\noindent  The literature refers to the method as either the “debiased” or the “desparsified” Lasso estimator; for clarity, we consistently use the term “debiased Lasso estimator” throughout the remainder of this paper.
\section{The Lasso Estimator and Oracle Inequalities }\label{recallest}

\subsection{Lasso Estimation}\label{sec:est}

The threshold regression model \eqref{model} can be rewritten as
\begin{align}\label{model2}
Y_i=\left\{\begin{aligned}& X_{i}^{\prime }\beta _{0}+U_{i},\quad &\text{if  $Q_{i} \geq \tau_{0}$}, \\
& X_{i}^{\prime }(\beta _{0}+\delta_0)+U_{i},\quad&\text{if  $Q_{i}<\tau_{0}.$}
\end{aligned}\right.
\end{align}%
$Q_{i} $ is the threshold variable that splits the sample into two regimes and $\delta_0$ represents the threshold effect between two regimes. The model \eqref{model} thus captures a regime switch based on the observable variable $Q_i.$ The parameter $\tau_0$ is the unknown threshold parameter, which lies within a compact parameter space $T=[t_0,t_1].$ There is no threshold effect when $\delta_0 = 0,$ and the model reduces to a linear regression. 

Denoting a $(2p\times 1)$ vector by $\bm{X}_{i}(\tau )=(X_{i}^{\prime }, X_{i}^{\prime }\bm{1}\{Q_i <\tau \})^{\prime }$ and an $(n\times 2p)$ matrix by $\bm{{X}}(\tau),$ where the $i$-th row is $\bm{X}_{i}(\tau )^{\prime }.$
Let ${X}$ and ${X}(\tau)$ denote the first and last $p$ columns of $\bm{{X}}(\tau),$ respectively. Thus,  we can rewrite \eqref{model} as 
\begin{equation} 
Y_{i}=\bm{X}_{i}(\tau _{0})^{\prime }\alpha _{0}+U_{i},\ \ \ i=1,\ldots
,n. \label{model1}
\end{equation}
where $\alpha_{0}=(\beta _{0}^{\prime },\delta _{0}^{\prime })^{\prime }.$ 
In this paper, our interest lies in performing uniform inference for the high-dimensional slope parameter $\alpha_0,$ allowing for $p>n.$

Let $\bm{Y}:= (Y_{1},\ldots ,Y_{n})^{\prime }.$ The residual sum of squares is
\begin{equation}
\begin{aligned}
\label{defs}
S_{n}(\alpha ,\tau ) =\frac{1}{n}\sum_{i=1}^{n}\left( Y_{i}-X_{i}^{\prime }\beta
-X_{i}^{\prime }\delta \bm{1}\{Q_i <\tau \}\right) ^{2} 
 =\left\Vert \bm{Y}-\bm{X}(\tau)\alpha \right\Vert _{n}^{2}.
\end{aligned}
\end{equation}

\noindent The Lasso estimator for threshold regression can thus be defined as the one-step minimizer: 
\begin{equation}
\begin{aligned}\label{joint-max}
(\widehat{\alpha }, \widehat{\tau }) :=\text{argmin}_{\alpha \in \mathcal{B} \subset \mathbb{R}^{2p}, \tau \in \mathbb{T}\subset \mathbb{R}}\left\{
S_{n}(\alpha ,\tau )+\lambda \left| \bm{D}(\tau )\alpha \right|_{1}\right\},
\end{aligned}
\end{equation}
where $\mathcal{B}$ is the parameter space for $\alpha_0,$ and $\lambda$ is a tuning parameter.
The $(2p \times 2p)$ diagonal weighting matrix is denoted as follows:
\begin{equation}\label{weight}
\bm{D}(\tau ):=\text{diag}\left\{ \left\Vert \bm{X}^{(j)}(\tau
)\right\Vert _{n},\ \ j=1,...,2p\right\}.
\end{equation}%
Furthermore, we can rewrite the penalty term as 
\begin{equation*}
\begin{aligned}
\lambda \left| \bm{D}(\tau )\alpha \right|_{1}
&= \lambda \sum_{j=1}^{2p} \left\Vert\bm{ {X}}^{(j)}(\tau
)\right\Vert _{n} \left\vert\alpha^{(j)}\right\vert \\
&= \lambda \sum_{j=1}^{p} \left[ \left\Vert X^{\left( j\right) }\right\Vert
_{n} \left\vert \alpha^{(j)} \right\vert +  \left\Vert X^{\left( j\right) }(\tau) \right\Vert
_{n} \left\vert \alpha^{(p+j)} \right\vert \right].
\end{aligned}
\end{equation*}

Meanwhile, the one-step minimizer $(\widehat{\alpha }, \widehat{\tau })$ in \eqref{joint-max} can be regarded as a two-step minimizer:

\noindent (i) For each $\tau \in \mathbb{T}$,  $\widehat{\alpha }(\tau )$ is defined as
\begin{align}\label{LASSO-fixed-tau}
\widehat{\alpha }(\tau ):=\text{argmin}_{\alpha \in \mathcal{B} \subset \mathbb{R}^{2p}}\left\{
S_{n}(\alpha ,\tau )+\lambda \left| \bm{D}(\tau )\alpha \right|
_{1}\right\};
\end{align}%

\noindent (ii) Define $\widehat{\tau }$ as the estimator of $\tau _{0}$ such that:

\begin{equation}\label{tau-max}
\widehat{\tau }:=\text{argmin}_{\tau \in \mathbb{T}\subset \mathbb{R}}\left\{ S_{n}(\widehat{\alpha }(\tau ),\tau )+\lambda \left| \bm{D}%
(\tau )\widehat{\alpha }(\tau )\right| _{1}\right\}.
\end{equation}%
Note that these estimators are weighted Lasso estimators
that use a data-dependent $\ell_1$ penalty to balance covariates. \cite{chiang_rodrigue_sasaki_2023} summarize various ways to impose weights depending on different situations in Remark B.1. Additionally, in practice, $\widehat \tau$ is selected from the potential values of threshold variable $Q$ over $\{Q_1,\dots, Q_n\}.$ \footnote{If $n$ is very large, $\mathbb{T}$ can be approximated by a grid of $N$ evaluation points; see p.4 in \cite{hansen2000}.}  This selection results in an interval, and the maximum of the interval is chosen as the estimator $\widehat{\tau}.$ 

\subsection{Oracle Inequalities}\label{them1}

After recalling the Lasso estimator, we proceed to establish the oracle inequalities for the estimators in \eqref{joint-max}. First, we make the following assumptions, some of which are modified from \cite{lee2016}.

\begin{assm}\label{as1}
Let $\left\{ X_i, Q_i, U_i\right\}_{i=1}^n$ denote a sequence of independently distributed random variables. 

\noindent (i) For the parameter space $\mathcal {B}$ for $\alpha_0$, any $\alpha := (\alpha_1, \dots, \alpha_{2p}) \in \mathcal {B} \subset \mathbb{R}^{2p},$ including $\alpha_0$, satisfies $|\alpha|_{\infty}\le C_1,$ for some constant $C_1 >0$. Furthermore, $\frac{s_0^2|\delta_0|_1^2\log{p}}{n}=o_p(1).$

\noindent (ii) The threshold variable $Q_i,$ $i=1,...,n,$ is continuously distributed on $[0,1]$ with intensity function $f_Q(\tau).$ The parameter $\tau_0$ lies in $\mathbb{T}=[t_0, t_1],$ where $0<t_0<t_1<1.$ 

\noindent (iii) The covariates $X_i,$ $i=1,...,n,$ satisfy $\max_{1\le j \le p}E\left[\left(X_i^{(j)}\right)^4\right]\le C_2^4$ and $\min_{1\le j\le p}\\ E\left[\left(X_i^{(j)}\left(t_0\right)\right)^2\right]\ge C_3^2,$ for some constants $ C_2$ and $C_3$. Additionally, $ E\left[X_i^{(j)} X_i^{(l)} \vert Q_i=\tau\right]$
is continuous and bounded when $\tau$ is in a neighborhood of $\tau_0,$ for all  $1\le j,l \le p$.

\noindent  (iv) The error terms $U_i,$ $i=1,...,n,$ satisfy $ E(U_i|X_i,Q_i) = 0$ and $\max_{1\leq i \leq n}E(U_i^4)\le C_4<\infty,$ for a positive constant $C_4$. Additionally, $ E\left[X_i^{(j)} X_i^{(l)} U_i^{2} \vert Q_i=\tau\right]$
is continuous and bounded when $\tau$ is in a neighborhood of $\tau_0,$ for all  $1\le j,l \le p$.

\noindent  (v)  $\frac{\sqrt{EM_{UX}^2}\sqrt{\log{p}}}{\sqrt{n}}=o_p(1),$ where $M_{UX}= \max_{1\le i\le n} \max_{1\le j \le p} \left|U_{i}X_{i}^{(j)}\right|$. 

\noindent  (vi) $\frac{\sqrt{EM_{XX}^2}\sqrt{\log{p}}}{\sqrt{n}}=o_p(1),$ where $M_{XX}= \max_{1\le i\le n} \max_{1\le j,l \le p} \left| X_i^{(j)} X_i^{(l)}\right|$.

\end{assm}

The first part of Assumption \ref{as1} (i) restricts the magnitude of slope parameters. The second part further implies that $s_0$ and $|\delta_0|_1$ may grow with $n$, which ensures that Assumption 6 in Appendix E of \cite{lee2016} holds for sufficiently large $n$. Assumption \ref{as1}
(ii) ensures that there are no ties among the $Q_i$s. We can empirically transform the distribution of the threshold variables to a uniform distribution. Suppose that the threshold variable $\{\tilde{Q}\}$ has a continuous distribution for which the cumulative distribution function is $F_{\tilde{Q}}$.
The probability integral transform implies that the random variable $Q$ has a standard uniform distribution where  $Q$ is defined as $Q=F_{\tilde{Q}}(\tilde{Q}).$
To transform the marginals, we compute $Q_i=\widehat{F}_{\tilde{Q}}(\tilde{ Q}_i)=\frac{\text{rank of $\tilde{Q}_i$ among $\left\{\tilde{ Q}_i\right\}_{i=1}^n$ }}{n},$ where $\widehat{F}_{\tilde{Q}}$ denotes the empirical
distribution functions of the data $\left\{\tilde{ Q}_i\right\}_{i=1}^n$.
In particular, as a result of a continuous distribution, there is no tie among  $\left\{\tilde{ Q}_i\right\}_{i=1}^n$.
\footnote{We maintain the dependence between $Q_i$ and $X_i$ in the proof, and we will show in Section \ref{simmo} that the performance of our estimator does not depend on whether $Q_i$ is among the components of $X_i.$} Applying the Cauchy-Schwarz inequality under Assumptions \ref{as1} (iii) and (iv) yields $\max_{1\le j,l \le p} E\left[X_i^{(j)} X_i^{(l)}\right]\le C_2^2$ uniformly in $i$; $\max_{1\le j \le p} \Var \left(U_iX_{i}^{(j)}\right)$, 
$\max_{1\le j,l\le p} \Var   \left(X_i^{(j)} X_i^{(l)}\right),$ $\max_{1\le j,l\le p} \Var\left(X_i^{(j)} X_i^{(l)}\bm{1}\{Q_i <\tau_0 \}\right),$ and \\ $\max_{1\le j \le p}  \Var  \left(X_{i}^{(j)}(t_0)\right)^2$ are bounded uniformly in $i$.

\begin{rem}
Assumption \ref{as1} imposes weaker conditions on the covariates and error terms compared to the fixed covariates and gaussian errors in \cite{lee2016} and the sub-gaussian covariates and errors in \cite{canerkock2017}, as it allows for heteroskedastic non-subgaussian error terms and non-subgaussian covariates. 
\end{rem}

Define  \begin{equation}
\lambda=\frac{A}{\mu}\frac{\sqrt{\log{p}}}{\sqrt{n}}\label{lambda}
\end{equation} as the tuning parameter in  (\ref{joint-max}) 
for a constant $A \geq 0$ and a fixed constant $\mu\in(0,1).$

\begin{lem}\label{lemmaf}
Suppose that Assumption \ref{as1} holds. Let $(\widehat{\alpha},\widehat{\tau})$ be the Lasso estimator defined by \eqref{joint-max}.
Then, with probability at least $1-C(logn)^{-1},$ we have
\begin{align}\label{lem2-conc}
\left\Vert \widehat{f}-f_{0}\right\Vert _{n}
& \leq \sqrt{ (6+2\mu_2)C_1 \sqrt{C_2^2+\mu_1\lambda}}\sqrt{s_0 \lambda}.
\end{align}
\end{lem}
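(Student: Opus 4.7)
The plan is to start from the Lasso basic inequality. Substituting $Y_i = \bm{X}_i(\tau_0)'\alpha_0 + U_i$ into the defining inequality $S_n(\widehat\alpha,\widehat\tau) + \lambda|\bm{D}(\widehat\tau)\widehat\alpha|_1 \leq S_n(\alpha_0,\tau_0) + \lambda|\bm{D}(\tau_0)\alpha_0|_1$ and rearranging yields
\begin{equation*}
\|\widehat f - f_0\|_n^2 \;\leq\; \frac{2}{n}\sum_{i=1}^n U_i\bigl(\widehat f(X_i,Q_i) - f_0(X_i,Q_i)\bigr) + \lambda|\bm{D}(\tau_0)\alpha_0|_1 - \lambda|\bm{D}(\widehat\tau)\widehat\alpha|_1.
\end{equation*}
The core technical step is to build a high-probability event on which the stochastic cross term and the data-dependent weights $\bm{D}(\tau)$ are controlled uniformly in $\tau \in \mathbb{T}$. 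I would apply the Chernozhukov--Chetverikov--Kato maximal inequality as packaged in Lemma~2 of \cite{chiang_rodrigue_sasaki_2023}, together with the partial-sum concentration of Lemma~\ref{conpart}, using Assumption~\ref{as1}(iii)--(vi) to verify the requisite moment and envelope conditions. The target event is
\begin{equation*}
\mathcal{A} \;:=\; \Bigl\{\sup_{\tau\in\mathbb{T}}\max_{1\leq j \leq 2p}\bigl|\tfrac{1}{n}\textstyle\sum_{i=1}^n U_i \bm{X}_i^{(j)}(\tau)\bigr| \leq \mu\lambda\Bigr\} \;\cap\; \Bigl\{\sup_{\tau\in\mathbb{T}}\max_{1\leq j\leq 2p}\|\bm{X}^{(j)}(\tau)\|_n^2 \leq C_2^2 + \mu_1\lambda\Bigr\},
\end{equation*}
and I expect $\Pr(\mathcal{A}) \geq 1 - C(\log n)^{-1}$, which matches the probability statement in the lemma.

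Working on $\mathcal{A}$, I would decompose
$\widehat f(X_i,Q_i) - f_0(X_i,Q_i) = \bm{X}_i(\widehat\tau)'(\widehat\alpha - \alpha_0) + X_i'\delta_0(\mathbf{1}\{Q_i < \widehat\tau\} - \mathbf{1}\{Q_i < \tau_0\})$ and bound the cross term by Hölder's inequality, invoking the uniform empirical-process control on $\mathcal{A}$ together with $|\delta_0|_1$ handled via the sparsity condition $s_0^2|\delta_0|_1^2\log p /n = o_p(1)$ in Assumption~\ref{as1}(i). For the penalty difference, splitting coordinates along the support $S_0 = J(\alpha_0)$ with $|S_0|\leq s_0$ and applying the triangle inequality with the uniform weight bound yields the standard cancellation on $S_0^c$:
\begin{equation*}
\lambda|\bm{D}(\tau_0)\alpha_0|_1 - \lambda|\bm{D}(\widehat\tau)\widehat\alpha|_1 \;\leq\; \lambda\sqrt{C_2^2+\mu_1\lambda}\,\bigl(|(\widehat\alpha - \alpha_0)_{S_0}|_1 - |\widehat\alpha_{S_0^c}|_1\bigr).
\end{equation*}
Combining these bounds, using $|\widehat\alpha|_\infty \vee |\alpha_0|_\infty \leq C_1$ from Assumption~\ref{as1}(i) so that $|(\widehat\alpha - \alpha_0)_{S_0}|_1 \leq 2 C_1 s_0$, and absorbing the residual $\mu\lambda|\delta_0|_1$ slack, I expect to arrive at
\begin{equation*}
\|\widehat f - f_0\|_n^2 \;\leq\; (6+2\mu_2)C_1\sqrt{C_2^2+\mu_1\lambda}\,s_0\lambda,
\end{equation*}
which yields \eqref{lem2-conc} after taking square roots.

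The main obstacle I anticipate is the empirical-process piece involving the indicator difference $\mathbf{1}\{Q_i<\widehat\tau\} - \mathbf{1}\{Q_i<\tau_0\}$ evaluated at the data-dependent $\widehat\tau$: handling this uniformly over the VC class $\{\mathbf{1}\{Q<\tau\}:\tau\in\mathbb{T}\}$ under only four finite moments on $X_i$ and $U_i$ (rather than sub-Gaussian tails) is exactly the role played by the partial-sum concentration in Lemma~\ref{conpart}. This is also the reason the resulting rate is the slow $\sqrt{s_0\lambda}$ rather than the usual $\lambda\sqrt{s_0}$, since a uniform compatibility/restricted-eigenvalue condition in $\tau$ is not available in this setting and the argument bypasses it, paying the price of a slower rate.
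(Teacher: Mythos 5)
Your proposal follows essentially the same route as the paper: the Lasso basic inequality, high-probability events controlling $\max_j\|\bm{X}^{(j)}(\tau)\|_n^2$ and the cross terms uniformly in $\tau$ via the CCK and partial-sum concentration lemmas, Hölder's inequality, and the $\ell_\infty$ bound $|\alpha|_\infty\le C_1$ to control $|(\widehat\alpha-\alpha_0)_{J_0}|_1$, $|\alpha_0|_1$, and $|\delta_0|_1$ by multiples of $C_1 s_0$, arriving at the identical constant $(6+2\mu_2)$. The only nit is that your displayed penalty-difference inequality omits the weight-change term $\lambda\left|\left(\bm{D}(\widehat\tau)-\bm{D}(\tau_0)\right)\alpha_0\right|_1$ (which contributes the $2C_1 s_0$ piece of the constant), though your surrounding text and final arithmetic show you account for it.
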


Lemma \ref{lemmaf} provides a non-asymptotic upper bound on the prediction error, regardless of whether the specification is a linear or threshold regression, as in Theorem 1 of \cite{lee2016}. The prediction error is consistent as $n\to\infty$, $p\to\infty,$ and $s_0\lambda\to 0.$ The lemma plays an important role in deriving oracle inequalities in Theorem \ref{main-thm-case1} for linear models and Theorem \ref{thmftau} for threshold models.

Next, we impose the standard assumptions in high-dimensional regression models. To this end, we define the population covariance matrix $\bm{\Sigma}(\tau)=E\left[1/n \sum_{i=1}^n\bm{X}_i(\tau)\bm{X}_i(\tau)'\right]$, $\bm {M}={E}[1/n \sum_{i=1}^n{X_i}{X_i}']$, $\bm {M}(\tau)=E\left[1/n \sum_{i=1}^n {X_i}(\tau){X_i}(\tau) '\right],$ and $\bm {N}(\tau)=\bm {M}-\bm {M}(\tau)$. 
$\bm{\Sigma}(\tau)$ can be represented as a $2\times2$ matrix due to the properties of the indicator function, i.e.,
\begin{equation*}\bm{\Sigma}(\tau)= {\begin{bmatrix} \begin{array}{cccc}
	\bm {M} &\bm {M}(\tau)\\
	\bm {M}(\tau)&\bm {M}(\tau) \end{array} \end{bmatrix}}.\end{equation*} 
\noindent Meanwhile, we define the following population uniform adaptive restricted eigenvalue and impose certain assumptions,

\scalebox{0.85}{\parbox{0.1\linewidth}{\begin{equation*}
\kappa(s_0,c_0, \mathbb{S},\bm{\Sigma}) =  \min_{\tau \in \mathbb{S}} \quad \min_{J_0 \subset \{1,..., 2p \}, |J_0 | \le s_0} \quad \min_{\gamma \neq 0, \Vert\gamma_{J_0^c} \Vert_1 \le c_0\sqrt{s_0} \Vert\gamma_{J_0}\Vert_2} \frac{ \left(\gamma'E\left[1/n \sum_{i=1}^n\bm{X}_i(\tau)\bm{X}_i(\tau)'\right]\gamma\right)^{1/2}}{|\gamma_{J_0}|_2}.
\end{equation*}}}



\begin{assm}\label{as2}(i) $\bm{M}(\tau)$ and $\bm{N}(\tau)$ are non-singular.\\
(ii) (Uniform Adaptive Restricted Eigenvalue Condition) For a positive number $c_0,$ and some set $\mathbb{S} \subset \mathbb{R}$, the following condition holds
\begin{equation}
\kappa(s_0, c_0, \mathbb{S},\bm{\Sigma})>0.\label{recond}
\end{equation}
\end{assm}
Assumption \ref{as2} (i) is a standard assumption for model estimation. Assumption \ref{as2} (ii) is a uniform adaptive restricted eigenvalue condition, which is a common and high-level condition in the literature of high-dimensional econometrics and statistics. This condition can be relaxed if $\bm{\Sigma}(\tau)$ has full rank. Moreover, $\bm{\Sigma}(\tau)$ is invertible by applying Theorem 2.1 (ii) in \cite{lu2002inverses} under Assumption \ref{as2} (i). We then can do the gaussian elimination to obtain \begin{equation}\label{invformlu}
\bm{\Theta}(\tau):=\bm{\Sigma}(\tau)^{-1}= {\begin{bmatrix} \begin{array}{cccc}
			\bm{N}(\tau)^{-1} &	-\bm{N}(\tau)^{-1} \\
			-\bm{N}(\tau)^{-1}&	\bm{M}(\tau)^{-1}+	\bm{N}(\tau)^{-1} \end{array} \end{bmatrix}}.
\end{equation}
Thus, Assumption \ref{as2} (ii) holds under the non-singularity conditions for $\bm{M}(\tau)$ and $\bm{N}(\tau).$ Lemma \ref{tau22} shows that $1/n \bm {X }(\tau)\bm{X }(\tau)'$ uniformly converges to  $\bm{\Sigma}(\tau);$ therefore, the empirical adaptive restricted condition holds as stated in Lemma \ref{lemeg}.

Given that $\tau_0$ is unknown, we impose that the restricted
eigenvalue condition holds uniformly over $\tau.$ 
Here, we analyze two separate cases. When $\delta_0=0,$ Assumption \ref{as2} is required to hold uniformly with $\mathbb{S} =\mathbb{T},$ the entire parameter space for $\tau_0,$ since $\tau_0$ is not identified. When $\delta_0\ne0,$ this condition holds uniformly in a neighborhood of $\tau_0$ for the identification of $\tau_0$. The Uniform Adaptive Restricted Eigenvalue (UARE) Condition is applied to tighten the bound in Lemma \ref{lemmaf} for establishing the oracle inequalities for the prediction error as well as the $\ell _{1}$ estimation error for the parameters. Although we consider two cases separately, similar to \cite{lee2016}, we can make predictions and estimate $\alpha_0$ without pretesting the existence of the threshold effect.

\subsubsection{Case I. No Threshold Effect.} \label{nothreshold}
In the case where $\delta_0=0$, the true model simplifies to a linear model $Y_{i}= X_{i}^{\prime }\beta _{0}+U_{i}.$ The model (\ref{model1}) is thus much more over-parameterized, but we can still estimate the slope parameter vector $\alpha_0$ precisely, as shown in Theorem \ref{main-thm-case1}.

\begin{thm}\label{main-thm-case1}
Supposed that $\delta_0=0$ and that Assumptions \ref{as1}-\ref{as2} hold with $\kappa = \\ \kappa \left(s_0, \frac{1+\mu}{1-\mu },\mathbb{T},\bm{\Sigma}\right).$ Let $(\widehat{\alpha},\widehat{\tau})$ be the Lasso estimator from \eqref{joint-max} with $\lambda$ satisfying \eqref{lambda}. Then, as $n\rightarrow \infty,$ with probability at least $1-C(logn)^{-1},$ we have
\begin{equation*}
\begin{aligned}
&\left\Vert \widehat{f}-f_{0}\right\Vert _{n} \leq  \frac{2\sqrt{2}}{\kappa}\left( \sqrt{C_2^2+\mu_1\lambda}\right)\sqrt{s_0}\lambda, \\
&\left| \widehat{\alpha }-\alpha _{0}\right| _{1} \le\frac{4\sqrt{2}}{\left( 1-\mu \right)\kappa ^{2}}\frac{C_2^2+\mu_1\lambda}{\sqrt{C_3^2-\mu_1\lambda}}{s_0\lambda }.
\end{aligned}
\end{equation*}
Furthermore, these bounds are valid uniformly over the $l_0$-ball $$\mathcal{A}^{({1})}_{\ell_0}(s_0)=\left\{\alpha_0\in	\mathbb{R}^{2p} \mid|\beta_0|_{\infty}\le C_1,\left|\beta_0\right|_0
\le s_0 ,\delta_0=0\right\}.$$
\end{thm}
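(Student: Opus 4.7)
The plan is to derive both bounds via the standard ``basic inequality $\to$ cone condition $\to$ restricted eigenvalue'' chain, specialized to exploit that when $\delta_0 = 0$ the penalized objective at $\alpha_0$ is invariant in $\tau$.

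First, I would compare $(\widehat{\alpha},\widehat{\tau})$ against the feasible point $(\alpha_0,\widehat{\tau})$ in the Lasso problem \eqref{joint-max}. Since $\alpha_0=(\beta_0',0')'$, we have $S_n(\alpha_0,\widehat{\tau})=\|U\|_n^2$ and $|\bm{D}(\widehat{\tau})\alpha_0|_1=\sum_{j=1}^p\|X^{(j)}\|_n|\beta_0^{(j)}|$, both independent of $\widehat{\tau}$. Expanding the RSS and decomposing the $\ell_1$ penalty over $S_0=J(\beta_0)\subset\{1,\dots,2p\}$ yields the basic inequality
\begin{equation*}
\|\widehat{f}-f_0\|_n^2 + \lambda\,|\bm{D}(\widehat{\tau})(\widehat{\alpha}-\alpha_0)_{S_0^c}|_1 \;\leq\; 2\langle U,\widehat{f}-f_0\rangle_n + \lambda\,|\bm{D}(\widehat{\tau})(\widehat{\alpha}-\alpha_0)_{S_0}|_1.
\end{equation*}

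Next I would invoke the same high-probability event (of probability $\geq 1-C(\log n)^{-1}$) used in Lemma \ref{lemmaf}, which bounds $\sup_{\tau\in\mathbb{T}}\max_j|\langle U,\bm{X}^{(j)}(\tau)\rangle_n|$ via the concentration inequalities of \cite{chernozhukov2014gaussian} and \cite{chernozhukov2015comparison} together with Lemma \ref{conpart}, to obtain $2|\langle U,\widehat{f}-f_0\rangle_n|\leq \mu\lambda\,|\bm{D}(\widehat{\tau})(\widehat{\alpha}-\alpha_0)|_1$. Substituting and collecting terms gives $(1-\mu)|\bm{D}(\widehat{\tau})(\widehat{\alpha}-\alpha_0)_{S_0^c}|_1\leq (1+\mu)|\bm{D}(\widehat{\tau})(\widehat{\alpha}-\alpha_0)_{S_0}|_1$, which places $\widehat{\alpha}-\alpha_0$ in the cone with constant $c_0=(1+\mu)/(1-\mu)$ required by the UARE condition. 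Passing from the weighted cone to the unweighted one uses the uniform envelope $\sqrt{C_3^2-\mu_1\lambda}\leq \|\bm{X}^{(j)}(\tau)\|_n\leq \sqrt{C_2^2+\mu_1\lambda}$, valid simultaneously in $\tau\in\mathbb{T}$ and $j$ on a high-probability event from Assumption \ref{as1}(iii) and the same concentration tools.

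Having verified the cone condition, I would apply the empirical UARE (Lemma \ref{lemeg}, justified because Assumption \ref{as2}(ii) is imposed with $\mathbb{S}=\mathbb{T}$ precisely to accommodate the unidentified $\tau_0$) at $\tau=\widehat{\tau}$, giving $\|\widehat{f}-f_0\|_n\geq \kappa\,|(\widehat{\alpha}-\alpha_0)_{S_0}|_2$. Combined with $\|\widehat{f}-f_0\|_n^2\leq (1+\mu)\lambda\sqrt{C_2^2+\mu_1\lambda}\sqrt{s_0}\,|(\widehat{\alpha}-\alpha_0)_{S_0}|_2$ obtained above, dividing delivers the prediction bound. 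For the $\ell_1$ bound I would chain
\begin{equation*}
|\widehat{\alpha}-\alpha_0|_1 \leq \frac{|\bm{D}(\widehat{\tau})(\widehat{\alpha}-\alpha_0)|_1}{\sqrt{C_3^2-\mu_1\lambda}} \leq \frac{2\sqrt{s_0}\,\sqrt{C_2^2+\mu_1\lambda}}{(1-\mu)\sqrt{C_3^2-\mu_1\lambda}}\,|(\widehat{\alpha}-\alpha_0)_{S_0}|_2,
\end{equation*}
and then substitute $|(\widehat{\alpha}-\alpha_0)_{S_0}|_2\leq \|\widehat{f}-f_0\|_n/\kappa$ together with the prediction bound, producing the advertised $s_0\lambda/\kappa^2$ rate. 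All constants depend only on $(s_0,C_1,C_2,C_3,\mu,\lambda,\kappa)$, and the high-probability events involve only $(X,Q,U)$, so the uniformity over $\mathcal{A}^{(1)}_{\ell_0}(s_0)$ is automatic. The main obstacle is the control of $\sup_{\tau\in\mathbb{T}}$ of both the noise inner product and the empirical second-moment envelope: because $\widehat{\tau}$ is data-dependent and the indicator $\bm{1}\{Q_i<\tau\}$ generates a partial-sum process that is not directly amenable to pointwise Bernstein-type arguments, Lemma \ref{conpart} is essential here.
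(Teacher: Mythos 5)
Your proposal follows essentially the same route as the paper's proof (Lemmas \ref{lemma1}, \ref{lemmaprobA3}, \ref{lemmaprobAB}, \ref{lemeg} and \ref{thm-case1}): comparison against $(\alpha_0,\widehat{\tau})$ using $f_{(\alpha_0,\widehat\tau)}=f_0$ when $\delta_0=0$, noise control uniformly in $\tau$ via the partial-sum concentration of Lemma \ref{conpart} (events $\mathbb{A}_3,\mathbb{A}_4$), the cone condition with $c_0=\frac{1+\mu}{1-\mu}$, the empirical UARE over all of $\mathbb{T}$, and the envelope bounds $\sqrt{C_3^2-\mu_1\lambda}\le\Vert\bm{X}^{(j)}(\tau)\Vert_n\le\sqrt{C_2^2+\mu_1\lambda}$ from events $\mathbb{A}_1,\mathbb{A}_2$ to convert between weighted and unweighted quantities. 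The only bookkeeping difference is that you apply the restricted eigenvalue to the unweighted increment (so the quadratic form is exactly $\Vert\widehat f-f_0\Vert_n^2$) whereas the paper applies it to the weighted one; both are standard and lead to the stated rates.
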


The bound on the prediction norm in Theorem \ref{main-thm-case1} is much tighter than that in Lemma \ref{lemmaf}.
Compared with the oracle inequalities established in the high-dimensional linear model literature (e.g. \cite{Bickel2009}, \cite{geer2014}), Theorem \ref{main-thm-case1} provides results of the same magnitude, indicating that our estimation procedure remains valid for linear models, despite being more overparameterized due to the inclusion of the additional parameters $\delta$ and $\tau$.

\subsubsection{Case II. Fixed Threshold Effect.} \label{fixedthreshold}
In the case where $\delta_0\ne0,$ we assume that the true model has a well-identified and fixed threshold effect.

\begin{assm}[Identifiability under Sparsity and Discontinuity of Regression]
\label{A-discontinuity}
For any $\eta$ and $\tau $ such that $ \eta<\left\vert \tau -\tau
_{0}\right\vert $
and $\alpha \in \left\{ \mathcal{\alpha }:\left|\alpha \right|_0
\le s_0 \right\} $, there exists a constant $C_4>0$ such that, with probability approaching one, 
\begin{equation*}
\left\Vert f_{\left( \alpha ,\tau \right) }-f_0 \right\Vert _{n}^{2}  > C_4\eta.
\end{equation*}
\end{assm}

Assumption \ref{A-discontinuity} states identifiability of $\tau_0.$ Its validity was studied in Appendix B.1 (pages A7–A8) of \cite{lee2016} when Assumption \ref{as1} holds. \footnote{We omit the restriction $\eta\ge\min_{i}\left\vert Q_{i}-\tau_0\right\vert$
since $\eta > \min_{i}\left\vert Q_{i}-\tau_0\right\vert$ holds in the random design with a continuous threshold variable $Q$.} When $\tau_0$ is known, the UARE condition is only required to hold uniformly in a neighborhood of $\tau_0.$ We derive an upper bound for $|\widehat{\tau}-\tau_0|$ in Lemma \ref{lem-claim2} and thus define $$\eta^\ast = \frac{ 2  (3+\mu_2)C_{1}}{C_4}   \sqrt{C_2^2+\mu_1\lambda} s_0\lambda, \quad \mathbb{S} = \left\{ \left\vert \tau -\tau _{0}\right\vert \leq
\eta ^{\ast }\right\}.$$

\begin{assm}[Smoothness of Design]
\label{A-smoothness}
For any $\eta >0,$ there exists a constant $C_5<\infty $  such that with probability to one,
\begin{align}
\sup_{1 \le j,l\le p}\sup_{\left\vert \tau -\tau _{0}\right\vert <\eta } \frac{1}{n}\sum_{i=1}^{n}\left\vert X_{i}^{\left( j\right) } X_{i}^{\left( l\right) }\right\vert \left\vert \bm{1}\{Q_i <\tau_0 \} -\bm{1}\{Q_i <\tau\}\right\vert\leq C_5\eta,\label{xx}
\end{align}
\end{assm}


\begin{assm}[Well-defined second moments]
\label{A-secondmoments}
For any $\eta$ such that \\ $1/n \leq \eta \leq \sqrt{ (6+2\mu_2)C_1 \sqrt{C_2^2+\mu_1\lambda}}\sqrt{s_0 \lambda},$ $h_n^2({\eta})$ is bounded with probability approaching one, where
\begin{equation}
    h_n^2({\eta}) = \frac{1}{2n\eta}\sum_{i= \min\{1,[n\left( \tau
_{0}-\eta \right)]\}}^{\max\{[n\left( \tau _{0}+\eta] \right),n\}}
\left(X_i'\delta_0\right)^2,
\end{equation}
and $[\cdot]$ denotes an integer part of any real number.
\end{assm}

Assumptions \ref{A-smoothness} and \ref{A-secondmoments} are similar to those in \cite{lee2016} and \cite{canerkock2017}. Lemma \ref{as4proof} shows that Assumption \ref{A-smoothness} holds automatically under Assumptions \ref{as1} and \ref{A-secondmoments}.

\begin{thm}\label{thmftau}
Suppose that $\delta_0\ne0$ and that Assumptions \ref{as1} and \ref{as2}
hold with $\kappa =\kappa (s_0, \frac{2+\mu}{1-\mu },\mathbb{S},\bm{\Sigma} ).$ Furthermore, suppose that Assumptions \ref{A-discontinuity}, \ref{A-smoothness} and \ref{A-secondmoments} hold. Let $(\widehat{\alpha },\widehat{\tau })$ be the Lasso estimator from \eqref{joint-max} with $\lambda$ satisfying \eqref{lambda}.
Then, $n\rightarrow \infty,$ with probability at least $1-C(logn)^{-1},$ we have
\begin{equation*}
\begin{aligned}
&\left\Vert \widehat{f}-f_{0}\right\Vert _{n} \leq 6\frac{\sqrt{C_2^2+\mu\lambda}}{\kappa} \sqrt{s_0}\lambda, \\
&\left| \widehat{\alpha }-\alpha _{0}\right| _{1} \leq \frac{36(C_2^2+\mu\lambda)}{\kappa^2(1-\mu)\sqrt{C_3^2-\mu\lambda}} s_0\lambda , \\
&\left\vert \widehat{\tau}-\tau _{0}\right\vert \leq  \left( \frac{3\left( 1+\mu \right) \sqrt{\left(C_2^2+\mu\lambda\right)}}{(1-\mu )\sqrt{\left(C_3^2-\mu\lambda\right)}}+1\right) \frac{12(C_2^2+\mu\lambda)}{\kappa^2C_4} s_0\lambda^2.
\end{aligned}
\end{equation*}Furthermore, these bounds are valid uniformly over the $l_0$-ball $$\mathcal{A}^{({2})}_{\ell_0}(s_0)=\left\{\alpha_0\in	\mathbb{R}^{2p} \mid|\alpha_0|_{\infty}\le C_1,\left|\alpha_0 \right|_0
\le s_0 ,\delta_0\ne0 \right\}.$$
\end{thm}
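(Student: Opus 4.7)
The plan is to bootstrap sharper rates from Lemma~\ref{lemmaf}'s preliminary bound by invoking Assumptions~\ref{as2}, \ref{A-discontinuity}, \ref{A-smoothness}, and \ref{A-secondmoments} in sequence. First I would use the crude prediction bound $\|\widehat{f}-f_{0}\|_{n}^{2}\le(6+2\mu_{2})C_{1}\sqrt{C_{2}^{2}+\mu_{1}\lambda}\,s_{0}\lambda$ of Lemma~\ref{lemmaf}, combined with Assumption~\ref{A-discontinuity} applied to the $s_{0}$-sparse truncation $\widehat{\alpha}_{J_{0}}$ of $\widehat{\alpha}$ (where $J_{0}=\mathrm{supp}(\alpha_{0})$), to force $|\widehat{\tau}-\tau_{0}|\le\eta^{\ast}$. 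This localisation places $\widehat{\tau}\in\mathbb{S}$ and makes both the UARE condition and the smoothness bound \eqref{xud} available at the realised threshold estimate---a necessary preliminary before any cone-type argument can be run.

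With $\widehat{\tau}\in\mathbb{S}$ in hand, I would derive the standard Lasso basic inequality. The optimality of $(\widehat{\alpha},\widehat{\tau})$ in \eqref{joint-max} and the identity $S_{n}(\widehat{\alpha},\widehat{\tau})-S_{n}(\alpha_{0},\tau_{0})=\|\widehat{f}-f_{0}\|_{n}^{2}-(2/n)\sum_{i}U_{i}(\widehat{f}-f_{0})(X_{i},Q_{i})$ give
\[
\|\widehat{f}-f_{0}\|_{n}^{2}+\lambda|\bm{D}(\widehat{\tau})\widehat{\alpha}|_{1}\le \tfrac{2}{n}\sum_{i}U_{i}(\widehat{f}-f_{0})(X_{i},Q_{i})+\lambda|\bm{D}(\tau_{0})\alpha_{0}|_{1}.
\]
Decomposing $\widehat{f}-f_{0}=X'(\widehat{\beta}-\beta_{0})+X'(\widehat{\delta}-\delta_{0})\bm{1}\{Q<\widehat{\tau}\}+X'\delta_{0}[\bm{1}\{Q<\widehat{\tau}\}-\bm{1}\{Q<\tau_{0}\}]$, the first two cross-terms are absorbed into $\mu\lambda|\bm{D}(\widehat{\tau})(\widehat{\alpha}-\alpha_{0})|_{1}$ on the KKT event $\{n^{-1}|\sum_{i}U_{i}\bm{X}_{i}^{(j)}(\tau)|\le(\mu\lambda/2)\|\bm{X}^{(j)}(\tau)\|_{n}\text{ uniformly in }\tau\in\mathbb{T}\}$, which holds with the advertised probability via Lemma~\ref{conpart} together with Assumption~\ref{as1}(v); the third is bounded by $\lambda\sqrt{\eta^{\ast}}/2$ via Assumption~\ref{A-smoothness}\eqref{xud} at $\eta=\eta^{\ast}$.

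Splitting along $J_{0}$ and $J_{0}^{c}$ then yields the cone constraint with constant $(2+\mu)/(1-\mu)$ (up to a $\sqrt{\eta^{\ast}}$ remainder of strictly smaller order than $s_{0}\lambda^{2}$), and replacing the empirical weights in $\bm{D}(\widehat{\tau})$ by the deterministic surrogates $\sqrt{C_{3}^{2}-\mu\lambda}$ and $\sqrt{C_{2}^{2}+\mu\lambda}$---via the uniform convergence of $\bm{X}(\tau)\bm{X}(\tau)'/n$ to $\bm{\Sigma}(\tau)$---makes the UARE condition directly applicable with $\kappa=\kappa(s_{0},(2+\mu)/(1-\mu),\mathbb{S},\bm{\Sigma})$. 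A standard $2ab\le a^{2}+b^{2}$ manipulation then produces the sharp prediction bound $\|\widehat{f}-f_{0}\|_{n}\le 6\sqrt{C_{2}^{2}+\mu\lambda}\,\sqrt{s_{0}}\lambda/\kappa$ and the matching $\ell_{1}$ bound. Finally, feeding this sharper prediction bound back into Assumption~\ref{A-discontinuity}, through the triangle inequality $\|f_{(\widehat{\alpha}_{J_{0}},\widehat{\tau})}-f_{0}\|_{n}\le\|\widehat{f}-f_{0}\|_{n}+\|f_{(\widehat{\alpha}_{J_{0}},\widehat{\tau})}-\widehat{f}\|_{n}$ (the second term bounded via $\ell_{1}$ control of $\widehat{\alpha}_{J_{0}^{c}}$ and the empirical weight estimates), contributes the stated factor $3(1+\mu)\sqrt{C_{2}^{2}+\mu\lambda}/[(1-\mu)\sqrt{C_{3}^{2}-\mu\lambda}]+1$ and closes the $s_{0}\lambda^{2}$ rate for $|\widehat{\tau}-\tau_{0}|$. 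Uniformity over $\mathcal{A}^{(2)}_{\ell_{0}}(s_{0})$ is automatic since every constant depends only on $(s_{0},C_{1},\ldots,C_{5},\mu)$.

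The hard part will be the third cross-term, $n^{-1}\sum_{i}U_{i}X_{i}'\delta_{0}[\bm{1}\{Q_{i}<\widehat{\tau}\}-\bm{1}\{Q_{i}<\tau_{0}\}]$, which must be controlled at the \emph{data-dependent} $\widehat{\tau}$ without destroying the cone structure. This is precisely what pins down the logical ordering of the argument---crude bound $\Rightarrow$ localisation $\Rightarrow$ smoothness $\Rightarrow$ UARE---and is the reason Assumption~\ref{A-discontinuity} must be applied a first time, with the loose Lemma~\ref{lemmaf} rate, to constrain $\widehat{\tau}$ before the cone step can absorb this remainder; the same assumption is then re-used, with the sharp rate, to produce the final $\lambda^{2}$-type bound on $|\widehat{\tau}-\tau_{0}|$.
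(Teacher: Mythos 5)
Your overall architecture matches the paper's (crude bound from Lemma~\ref{lemmaf} $\Rightarrow$ localisation $|\widehat{\tau}-\tau_0|\le\eta^{\ast}$ via Assumption~\ref{A-discontinuity} $\Rightarrow$ basic inequality $\Rightarrow$ cone with constant $(2+\mu)/(1-\mu)$ $\Rightarrow$ UARE), and you correctly identify the cross-term at the data-dependent $\widehat{\tau}$ as the crux. However, there is a genuine gap: you claim that after the single localisation to $\eta^{\ast}$ the remainder terms are ``of strictly smaller order than $s_0\lambda^2$,'' so that one pass of the cone/UARE argument yields the sharp rates. This is false. With $\eta^{\ast}\asymp s_0\lambda$, the smoothness remainder from \eqref{xud} is $\lambda\sqrt{\eta^{\ast}}\asymp\sqrt{s_0}\,\lambda^{3/2}$, and $s_0\lambda^2/(\sqrt{s_0}\lambda^{3/2})=\sqrt{s_0\lambda}\to 0$, so this remainder is of \emph{larger} order than the target $s_0\lambda^2$. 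Worse, your UARE step silently identifies $\|\widehat f-f_0\|_n^2$ with $\|\bm{X}(\widehat\tau)(\widehat\alpha-\alpha_0)\|_n^2$; the mismatch $X'\delta_0[\bm{1}\{Q<\tau_0\}-\bm{1}\{Q<\widehat\tau\}]$ contributes an additional term of order $c_\alpha|\delta_0|_1 c_\tau$ (via \eqref{xx}), which at the first pass ($c_\alpha\asymp s_0$, $c_\tau\asymp s_0\lambda$) is $\asymp s_0^2|\delta_0|_1\lambda\gg s_0\lambda^2$. So a single pass cannot produce the stated bounds.

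The paper resolves exactly this by an iterative bootstrap (its Lemmas~\ref{lem-claim3}, \ref{lem-claim4} and \ref{main-thm-fixed-threshold}): the one-pass bound is a maximum of three quantities $G_1\vee G_2\lambda s_0\vee G_3\sqrt{s_0|\delta_0|_1}$, two of which depend on the current localisation radii $(c_\tau,c_\alpha)$; one feeds the improved $(c_\tau,c_\alpha)$ back in and shows, by a monotonicity-plus-contradiction argument, that after finitely many iterations the $(c_\tau,c_\alpha)$-independent term $G_2\lambda s_0$ dominates, at which point the advertised constants emerge. To repair your proof you would need either this iteration or a quantitatively different one-shot argument; as written, the step ``a standard $2ab\le a^2+b^2$ manipulation then produces the sharp prediction bound'' does not go through.
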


We establish inequalities of the same order (up to a constant) in Theorem \ref{thmftau} as those in Theorem \ref{main-thm-case1}. These results hold uniformly over the parameter space $\mathcal{B}_{\ell_0}(s_0)$, defined as
$$\mathcal{B}_{\ell_0}(s_0)= \mathcal{A}^{({1})}_{\ell_0}(s_0)\cup\mathcal{A}^{({2})}_{\ell_0}(s_0)=\left\{\alpha_0\in	\mathbb{R}^{2p} \mid|\alpha_0|_{\infty}\le C,\left|\alpha_0\right|_0\le s_0\right\}.$$
Regarding the super-consistency of $\widehat{\tau}$, \cite{lee2016} notes that the least squares objective function is locally linear rather than locally quadratic, in a neighborhood of $\tau_0.$  

\begin{rem}
We derive the asymptotic independence between $\widehat{\tau}$ and $\widehat{\alpha}$ based on the separability of the objective function in Section \ref{fixedeffectprf}. The estimation error bound for $\tau_0$ can be further tightened using the two-step estimation procedure proposed in \cite{Lee03072018}, which also derives the asymptotic distribution of $\widehat{\tau},$ which is beyond the focus of the work. 
\end{rem}

The main contribution of this section is that we use concentration inequality to establish oracle inequalities with non-subgaussian random regressors and heteroskedastic non-subgaussian error terms. These oracle inequalities are the basis for developing the uniform inference theory.

\section{The Debiased Lasso Estimator and Uniform Inference}\label{dbLASSO}

\subsection{The Debiased Lasso Estimator} 
To perform uniform inference for the slope parameters, we first construct the debiased Lasso estimator proposed by \cite{geer2014} in our high-dimensional threshold regression model as follows: \footnote{This estimator is obtained by inverting the Karush-Kuhn-Tucker (KKT) conditions.}\begin{equation}
\widehat{a}(\widehat{\tau}) =\widehat{\alpha}(\widehat{\tau})+ \widehat{\bm{\Theta}}(\widehat{\tau})\bm {X}(\widehat{\tau})'(Y-\bm {X}(\widehat{\tau}) \widehat{\alpha }(\widehat{\tau}))/n, \label{despCLASSO}
\end{equation}
where $\widehat{\alpha}(\widehat{\tau})$ is obtained from \eqref{joint-max}, and $\widehat{\bm{\Theta}}(\widehat{\tau})$ is an approximate inverse of the empirical Gram matrix $\widehat{\bm{\Sigma}}(\widehat{\tau}) = \bm {X}(\widehat{\tau})\bm {X}(\widehat{\tau})' /n$ because $\widehat{\bm{\Sigma}}(\widehat{\tau})$ is singular in our high-dimensional model. 

We apply nodewise regression following \citet{Meinshausen2006} to construct an approximate inverse matrix. Since the inverse is built using \eqref{invformlu}, nodewise regression is applied once or twice, depending on the invertibility of the empirical analogs of $\bm{M}(\tau)$ and $\bm{N}(\tau)$. For example, when $\tau n$ or $(1 - \tau) n$ is smaller than $p$, or when strong multicollinearity exists, $\widehat{\bm{M}}(\tau)$ and $\widehat{\bm{N}}(\tau)$ may become singular, necessitating multiple applications of nodewise regression.

The following discussion will focus on the debiased Lasso estimator under two cases.

\subsubsection{Case I. No Threshold Effect}\label{noshtrsholdbias}

In the case where $\delta_0=0$, the true model simplifies to a linear model $Y = X\beta_0 + U.$ Substituting this into $\eqref{despCLASSO}$ yields
\begin{equation}
\begin{aligned}
\widehat{a}(\widehat{\tau})&= \widehat{\alpha}(\widehat{\tau})+ \widehat{\bm{\Theta}}(\widehat{\tau})\bm {X}(\widehat{\tau})'(X\beta_0 + U -\bm {X}(\widehat{\tau}) \widehat{\alpha }(\widehat{\tau}))/n \\
&= \widehat{\alpha}(\widehat{\tau})+ \widehat{\bm{\Theta}}(\widehat{\tau})\bm {X}(\widehat{\tau})'(\bm {X}(\widehat{\tau})\alpha_0 + U -\bm {X}(\widehat{\tau}) \widehat{\alpha }(\widehat{\tau}))/n \\
&=\alpha_0-\alpha_0+\widehat{\alpha}(\widehat{\tau})- \widehat{\bm{\Theta}}(\widehat{\tau})\widehat{\bm{\Sigma}}(\widehat{\tau})(\widehat{\alpha}(\widehat{\tau})-\alpha_0)+\widehat{\bm{\Theta}}(\widehat{\tau})\bm {X}(\widehat{\tau})'U/n\\
&=\alpha_0+\widehat{\bm{\Theta}}(\widehat{\tau})\bm {X}(\widehat{\tau})'U/n- \Delta(\widehat{\tau})/n^{1/2},
\end{aligned}
\end{equation}
where $\Delta(\tau) = \sqrt{n} \left( \widehat{\bm{\Theta}}(\tau) \widehat{\bm{\Sigma}}(\tau) - I_{2p} \right) \left(\widehat{\alpha}(\widehat \tau) - \alpha_0\right).$ The second equality holds due to $\delta_0 = 0.$ 

We define a $(2p\times1)$ vector $g$ with $|g|_2 = 1$ and let $H =\{j=1,\dots,2p\mid g_j\ne0\}$ with cardinality $|H| = h<p.$ $H$ is the index set of the coefficients involved in the hypothesis to be tested. 
We allow for $h\rightarrow \infty$ but require $h/n\rightarrow 0,$ as $n \rightarrow \infty.$ Our test $g'\alpha=g'\alpha_0$ could thus involve an increasing number of parameters. By the Cauchy–Schwarz inequality, we have $|g|_1\le\sqrt{h}$. In particular, $g=e_j$ represents the case where we test only a single coefficient, where $e_j$ is the $2p \times 1$ unit vector with the $j$-th element being one.

Our focus is on
\begin{equation}\label{stat2}
\sqrt{n}g'(\widehat{a}(\widehat{\tau}) -\alpha_{0})=g'\widehat{\bm{\Theta}}(\widehat{\tau})\bm {X}(\widehat{\tau})'U/n^{1/2} -g'\Delta(\widehat{\tau}),
\end{equation}
and we will derive its asymptotic distribution by applying
a central limit theorem to $g'\widehat{\bm{\Theta}}(\widehat{\tau})\bm {X}(\widehat{\tau})'U/n^{1/2}$ and by showing that $g'\Delta(\widehat{\tau})$ is asymptotically negligible.

\subsubsection{Case II. Fixed Threshold Effect}
This subsection explores the case where the threshold effect is well-identified and fixed. Following a procedure similar to Section \ref{noshtrsholdbias}, we substitute $Y = \bm {X}( \tau_0) \alpha_0 + U$ into $\eqref{despCLASSO},$ yielding
\begin{equation}
\begin{aligned}\label{decompo}
\widehat{\alpha}(\widehat{\tau})= \alpha_0&+\widehat{\bm{\Theta}}(\widehat{\tau})\bm {X}(\widehat{\tau})'\left(\bm {X}(\tau_0)\alpha_0 -\bm {X}(\widehat{\tau})\alpha_0 \right)/n\\
&-\widehat{\bm{\Theta}}(\widehat{\tau}) \lambda \bm{D}(\widehat{\tau}) \widehat{\rho}+\widehat{\bm{\Theta}}(\widehat{\tau})\bm {X}(\widehat{\tau})'U/n- \Delta(\widehat{\tau})/n^{1/2}.
\end{aligned}
\end{equation}

In this case, our focus is on 
\begin{equation}
\begin{aligned}\label{stat}
\sqrt{n}g'(\widehat{a}(\widehat{\tau}) -\alpha_{0})&=g'\widehat{\bm{\Theta}}(\widehat{\tau})\bm {X}(\widehat{\tau})'U//n^{1/2}-g'\Delta(\widehat{\tau})\\
&+g'\widehat{\bm{\Theta}}(\widehat{\tau})(\bm {X}(\widehat{\tau})'\bm {X}(\tau_0)-\bm {X}(\widehat{\tau})'\bm {X}(\widehat{\tau}))\alpha_0 /n^{1/2},
\end{aligned}
\end{equation}
and we will derive its asymptotic distribution by applying
a central limit theorem to $g'\widehat{\bm{\Theta}}(\widehat{\tau})\bm {X}(\widehat{\tau})'U/n^{1/2}$ and by showing that $g'\widehat{\bm{\Theta}}\left(\widehat{\tau})(\bm {X}(\widehat{\tau})'\bm {X}(\tau_0)-\bm {X}(\widehat{\tau})'\bm {X}(\widehat{\tau})\right)\alpha_0 /n^{1/2}$ and $g'\Delta(\widehat{\tau})$  are asymptotically negligible.

\subsubsection{Constructing the Approximate Inverse  \texorpdfstring{$\widehat{\bm{\Theta}}(\tau)$}{widehat{bm{Theta}}(tau)}}\label{constructprecison}

In this subsection, we formalize the process of constructing the approximate inverse matrix $\widehat{\bm{\Theta}}(\tau)$ of the singular empirical Gram matrix. The approach closely follows that of \cite{geer2014}, with the additional requirement of verifying that the specified conditions are met.

We seek a well-behaved $\widehat{\bm{\Theta}}({\tau})$ and examine the asymptotic properties of $\widehat{\bm{\Theta}}(\tau)$ uniformly across $\tau\in \mathbb{T}$. Recalling \eqref{invformlu}, we have
 $$\bm{\Theta}(\tau) =\bm{\Sigma}(\tau)^{-1}= {\begin{bmatrix} \begin{array}{cccc}
			\bm{N}(\tau)^{-1} &	-\bm{N}(\tau)^{-1} \\
			-\bm{N}(\tau)^{-1}&	\bm{M}(\tau)^{-1}+	\bm{N}(\tau)^{-1} \end{array} \end{bmatrix}}.$$
Define $\bm {A}(\tau)=\bm {M}(\tau)^{-1}$ and $\bm {B}(\tau)=\bm {N}(\tau)^{-1}.$  We construct the approximate inverse $\widehat{\bm {A}}(\tau)$ of $\widehat{\bm {M}}(\tau)$ and $\widehat{\bm {B}}(\tau)$ of $\widehat{\bm {N}}(\tau),$ where $\widehat{\bm{M}}(\tau) = \frac{1}{n}\sum_{i=1}^{n}{X_i}{X_i}'\bm{1}\{Q_i <\tau\}$ and $\widehat{\bm{N}}(\tau) = \frac{1}{n}\sum_{i=1}^{n}{X_i}{X_i}'\bm{1}\{Q_i \ge\tau \},$ to build the approximate inverse matrix $\widehat{\bm{\Theta}}({\tau}).$

\noindent Denote the $(p\times 1)$ vector by $\widetilde{X}^{(j)}_{i}(\tau )=X_{i}^{(j)}\bm{1}\{Q_i \ge\tau \}$ and the $(n\times p)$ matrix by $ \widetilde{X}(\tau).$ Let ${X}^{(-j)}(\tau )$ and $ \widetilde{X}^{(-j)}(\tau)$ denote the submatrices of $ X(\tau)$ and $ \widetilde{X}(\tau),$ respectively, without the $j$-th column. We study the following nodewise regression models with covariates orthogonal to the error terms in $L_2$ for all $ j = 1,\dots, p$, we consider
$$X^{(j)}(\tau)=X^{(-j)}(\tau)'{\gamma}_{0,j}(\tau)+\upsilon^{(j)},$$ 
$$\widetilde{X}^{(j)}(\tau)=\widetilde{X}^{(-j)}(\tau)'\widetilde{\gamma}_{0,j}(\tau)+\widetilde{\upsilon}^{(j)}. \footnote{See Appendix B of \cite{canerkock2018} for the details about the covariance matrix's representation of the regression coefficients.}$$
$\upsilon^{(j)}$ and $\widetilde{\upsilon}^{(j)}$ may be functions of $\tau$ even though they are independence of $Q$.

\noindent We then impose the following assumption to control the tail distribution of $\left|\upsilon^{(j)} _{i} X_{i}^{(l)}\right|$ and $\left|\widetilde{\upsilon}^{(j)} _{i}X_{i}^{(l)}\right|,$ allowing us to apply the oracle inequalities in Section \ref{them1} to the nodewise regressions.
\begin{assm}\label{asnode}
(i) $\max_{1 \le j \le p }
| \gamma_{0,j} |_{\infty} \le C$ and $\max_{1 \le j \le p }
| \widetilde\gamma_{0,j} |_{\infty} \le C'$, for some positive constants $C$ and $C'.$\\
(ii) For $i=1,...,n,$ and $j=1,...,p,$ 
$ E\left[\upsilon^{(j)}_{i}\middle|X_i,Q_i\right] = 0$ and $ E\left[\widetilde{\upsilon}^{(j)}_{i}\middle|X_i,Q_i\right]= 0;$  $E\left[ \left(\upsilon^{(j)}_{i}\right)^{2} \right]$ and $E\left[ \left(\widetilde{\upsilon}^{(j)}_{i}\right)^{2}\right]$ are uniformly bounded in $j = 1,\dots, p.$\\
 (iii) $\frac{\sqrt{EM_{\upsilon X}^2}\sqrt{\log{p}}}{\sqrt{n}}<\infty$ and $\frac{\sqrt{EM_{\widetilde \upsilon X}^2}\sqrt{\log{p}}}{\sqrt{n}}<\infty$, where $M_{\upsilon X}= \max_{1\le i\le n} \max_{1\le l \le p} \left|\upsilon ^{(j)} _{i}X_{i}^{(l)} \right|$ and $M_{\widetilde \upsilon X}= \max_{1\le i\le n} \max_{1\le l \le p} \left|\widetilde \upsilon ^{(j)} _{i}X_{i}^{(l)} \right|.$
\end{assm}

Now, we begin constructing $\widehat{\bm {A}}(\tau).$ Given any $\tau \in \mathbb{T} $, for each $j=1,...,p,$ the Lasso estimator for the nodewise regression is given by
\begin{equation} 
\widehat{\gamma}_j(\tau)=\text{argmin}_{\gamma \in\mathbb{R}^{p-1}} \|X^{(j)}(\tau)-X^{(-j)}(\tau)\gamma_j\|_n^2+ \lambda_{node,j}\left|\bm{\widehat{\Gamma}_j}(\tau )\gamma_j\right|_1,\label{NodeCLObj}
\end{equation} where
$\bm{\widehat{\Gamma}_j}(\tau ):=\text{diag}\left\{ \left\Vert X^{(l)}(\tau
)\right\Vert _{n}, l=1,...,p,l\ne j\right\},$
with components of $\widehat{\gamma}_j(\tau)=\{\widehat{\gamma}_j^{(k)}(\tau);\ k=1,...,p,\ k\neq j\}$. We choose $\lambda_{node,j} = \lambda_{node},$ required for the validity of Lemma \ref{lemmanode}. Define
\begin{equation}
 \widehat{\bm{C}}(\tau) = \left( \begin{array}{cccc}
			1 & -\widehat{\gamma}_1^{(2)}(\tau) &  \cdots & -\widehat{\gamma}_1^{(p)}(\tau) \\
			-\widehat{\gamma}_2^{(1)}(\tau) & 1 & \cdots & -\widehat{\gamma}_2^{(p)}(\tau) \\
			\hdots & \hdots & \ddots & \hdots \\
			-\widehat{\gamma}_p^{(1)}(\tau)&  -\widehat{\gamma}_p^{(2)}(\tau) &  \cdots &  1 \end{array} \right).\label{Chat}
\end{equation}
and $\widehat{\bm{Z}}(\tau)^2 = diag \left( \widehat{z}_1(\tau)^2, \dots, \widehat{z}_p(\tau)^2\right)$, where
\begin{equation}
\widehat{z}_j(\tau)^2 =\left|\left|X^{(j)}(\tau)-X^{(-j)}(\tau)\widehat{\gamma}_j(\tau)\right|\right|_n^2+\lambda_{node}\left|\bm{\widehat{\Gamma}_j}(\tau )\widehat{\gamma}_j(\tau)\right|_1.
\end{equation}
We thus construct
\begin{equation}
\widehat{\bm {A}}(\tau)= \widehat{\bm{Z}}(\tau)^{-2} \widehat{\bm{C}}(\tau).
\end{equation}

Next, we show that $\widehat{\bm {A}}(\tau)$ is an approximate inverse matrix of $\widehat{\bm {M}}(\tau)$. Let $\widehat{A}_j(\tau)$ denote the $j$-th row of $\widehat{\bm {A}}(\tau)$. Thus, $\widehat{A}_j(\tau)=\widehat{C_j}(\tau)/\widehat{z}_j(\tau)^2$. 
Denoting by $\widetilde{e}_j$ the $j$-th unit vector, the KKT conditions imply that
\begin{equation}
\left| \widehat{A}_j(\tau)'\widehat{\bm{M}}(\tau) - \widetilde{e}_j' \right|_{\infty} 
\leq
\left|\bm{\widehat{\Gamma}}_j(\tau )\right|\frac{\lambda_{node}}{\widehat{z}_j(\tau)^2}.
\end{equation}

\noindent Similarly, we use the same process to construct $\widehat{\bm {B}}(\tau).$ Given any $\tau \in \mathbb{T},$ for each $j=1,...,p,$ define
\begin{align*}
    \widehat{\widetilde{\gamma}}_{j}(\tau) &= \text{argmin}_{\gamma \in \mathbb{R}^{p-1}} \left|\left|\widetilde{X}^{(j)}(\tau) - \widetilde{X}^{(-j)}(\tau)'\widetilde{\gamma}_j\right|\right|_n^2 + \lambda_{\text{node}}\left|\bm{\widehat{\widetilde{\Gamma}}_j}(\tau )\widetilde{\gamma}_j\right|_1, \\
    \bm{\widehat{\widetilde{\Gamma}}_j}(\tau ) &= \text{diag}\left\{ \left\Vert \widetilde{X}^{(l)}(\tau) \right\Vert_{n},\ l=1,\dots,p,\ l \ne j \right\},
\end{align*}
with components of $\widehat{\widetilde{\gamma}}_{j}(\tau) =\{\widehat{\widetilde{\gamma}}_j^{(k)}(\tau): k=1,...,p, k \neq j\}$. Meanwhile, define
\begin{equation}
\widehat{\widetilde{\bm{C}}}(\tau) = \left( \begin{array}{cccc}
			1 & -\widehat{\widetilde{\gamma}}_1^{(2)}(\tau) &  \cdots & -\widehat{\widetilde{\gamma}}_1^{(p)}(\tau) \\
			-\widehat{\widetilde{\gamma}}_2^{(1)}(\tau) & 1 & \cdots & -\widehat{\widetilde{\gamma}}_2^{(p)}(\tau) \\
			\hdots & \hdots & \ddots & \hdots \\
			-\widehat{\widetilde{\gamma}}_p^{(1)}(\tau)&  -\widehat{\widetilde{\gamma}}_p^{(2)}(\tau) &  \cdots &  1 \end{array} \right)
\end{equation}
and $\widehat{\widetilde{\bm{Z}}}(\tau)^2 = \text{diag}\left(\widehat{\widetilde{z}}_1(\tau)^2, \dots, \widehat{\widetilde{z}}_p(\tau)^2\right)$ with
\begin{equation*}
    \widehat{\widetilde{z_j}}(\tau)^2 = \|\widetilde{X}^{(j)}(\tau) - \widetilde{X}^{(-j)}(\tau)'\widehat{\widetilde{\gamma}}_{j}(\tau)\|_n^2 + \lambda_{\text{node}}\left|\bm{\widehat{\widetilde{\Gamma}}}_{j}(\tau )\widehat{\widetilde{\gamma}}_{j}(\tau)\right|_1.
\end{equation*}
We then construct \begin{equation}
\widehat{\bm {B}}(\tau)= \widehat{\widetilde{\bm{Z}}}(\tau)^{-2} \widehat{\widetilde{\bm{C}}}(\tau)\label{hattheta}.
\end{equation}

Therefore, we obtain
\begin{equation}\label{invform}
\widehat{\bm{\Theta}}(\tau)= {\begin{bmatrix} \begin{array}{cccc}
		\widehat{\bm{B}}(\tau) &	-\widehat{\bm{B}}(\tau)\\
			-\widehat{\bm{B}}(\tau)&	\widehat{\bm{A}}(\tau)+\widehat{\bm{B}}(\tau) \end{array} \end{bmatrix}},
\end{equation} 
and provide the asymptotic properties of $\widehat{\bm{\Theta}}(\tau)$ in the following. Define $\Bar{s}= \sup_{\tau\in\mathbb{T}}\max_ {j\in H} s_j(\tau)$, $s_j(\tau)=\vert S_j(\tau)\vert$, and $ S_j(\tau)  =\{ i =1,...,2p: \Theta_{j,i}(\tau)\ne 0\},$

\begin{lem}\label{lemmanode}
Suppose that Assumptions \ref{as1}-\ref{asnode} hold and set $\lambda_{node}=\frac{C}{\mu}\sqrt{\frac{\log{p}}{n}} $. Then,
\begin{align}
\max_{j\in H}\sup_{\tau\in \mathbb{T}} \left|\widehat{\Theta}_{j}(\tau) -\Theta_{j}(\tau) \right|_1&=O_p \left(\Bar{s} \sqrt{\frac{\log{p}}{n}}\right)\label{l1theta}\\
\max_{j\in H}\sup_{\tau\in \mathbb{T}} \left|\widehat{\Theta}_{j}(\tau) -\Theta_{j}(\tau) \right|_2&=O_p \left(\sqrt{\frac{\Bar{s} \log{p}}{n}}\right)\label{l2theta}\\
\max_{j\in H}\sup_{\tau\in  \mathbb{T}} \left|\widehat{\Theta}_{j}(\tau) \right|_{1}&=O_p \left(\sqrt{\Bar{s}}\right)\\
\max_{j\in H}\sup_{\tau\in  \mathbb{T}} \left|  \widehat{\Theta}_{j}(\tau)'\widehat{\bm{\Sigma}}(\tau) - {e}_j' \right|_{\infty}&=O_p \left(\sqrt{\frac{\log{p}}{n}}\right)\label{linftheta}
\end{align}
\end{lem}	
We derive the approximation error that arises from the inversion of the covariance matrix. For instance, (\ref{linftheta}) establishes an upper bound on the maximal absolute entry of the $j$-th row of $\widehat{\bm{\Theta}}(\tau)'\widehat{\bm{\Sigma}}(\tau)-\bm{I}_{2p}$. These bounds hold uniformly over the entire $\tau$ parameter space and are valid regardless of whether the underlying model is a linear or threshold regression. Moreover, they provide sufficient conditions to ensure that $g'\Delta(\tau)$ in (\ref{stat2}) and (\ref{stat}) is asymptotically negligible.

\subsection{Uniform Inference for the Debiased Lasso Estimator}
This section derives the asymptotic distribution of tests in cases with no threshold effect and a fixed threshold effect, showing that their distributions are the same. Furthermore, we construct uniform confidence intervals for the parameters of interest that contract at the optimal rate. To this end, we first impose some assumptions to establish the validity of the asymptotically gaussian inference.

\begin{assm}\label{asnd}
(i) $\max_{1\le j \le p}E\left[\left(X_i^{(j)}\right)^{12}\right]$ and $E\left[U_i^{8}\right]$ are bounded uniformly in i.

$\frac{\sqrt{EM_{X^6}^2}\sqrt{\log{p}}}{\sqrt{n}}=o_p(1)$, $\frac{\sqrt{EM_{X^2U^2}^2}\sqrt{\log{p}}}{\sqrt{n}}=o_p(1)$  and  $\frac{\sqrt{EM_{X^4U^2}^2}\sqrt{\log{p}}}{\sqrt{n}}=o_p(1),$ \\ where $M_{X^6}= \max_{1\le i\le n} \max_{1\le k,l,j\le p} \left|\left({X}^{(k)}_i{X}^{(l)}_i{X}^{(j)}_i\right)^2\right|$,
$M_{X^2U^2}=\\ \max_{1\le i\le n} \max_{1\le j,l \le p}\left|X_i^{(j)} X_i^{(l)}U_{i}^2\right|,$ and $M_{X^4U^2}=\max_{1\le i\le n} \max_{1\le j,l \le p} \left|\left( X_i^{(j)} X_i^{(l)}U_{i}\right)^2\right|.$

\noindent (ii)$$(h)^{\frac{3}{2}}s_0^2\Bar{s}^2\frac{\log{p}}{\sqrt{n}}=o_{p}(1); \quad \frac{(h\Bar{s})^{3}}{n}=o_p(1). $$

\noindent (iii) $ \kappa(s_0, c_0,  \mathbb{T} ,\bm{{\Sigma}}_{xu})$ and $ {\kappa}(s_0, c_0,  \mathbb{T} ,\bm{\Sigma}) $  are bounded away from zero.\\
$\phi_{\max}(\bm {\Sigma}_{xu}(\tau)) $ and
$\phi_{\max}(\bm{\Sigma}(\tau)) $ are bounded from above, for $\tau \in \mathbb{T}.$
\end{assm}

Assumption \ref{asnd} provides sufficient conditions for applying the central limit theorem to establish the asymptotic distribution of the statistics. Assumption \ref{asnd} (i) controls the tail behavior of the covariates and the error terms. By Assumption \ref{asnd} (i), $\max_{1\le j,l \le p}\Var \left(X_i^{(j)} X_i^{(l)}U_{i}^2\right)$, $\max_{1\le k,l,j\le p}\Var\left({X}^{(k)}_i{X}^{(l)}_i{X}^{(j)}_i\right)^2,$ and $\max_{1\le j ,l\le p} \Var \left( X_i^{(j)} X_i^{(l)}U_{i}\right)^2$
are bounded from above uniformly in $i$. Assumption \ref{asnd} (ii) restricts the number of covariates ($p$), the number of parameters included in conducting joint inference ($h$), the sparsity of the population covariance matrix ($\Bar{s}$), and the sparsity of the slope parameters ($s_0$). Notably, the second part of Assumption \ref{asnd} (ii) is to verify the Lyapunov condition. Assumption \ref{asnd} (iii) restricts the eigenvalues of $\bm{\Sigma}_{xu}(\tau) $ and $\bm{\Sigma}(\tau)$, where $\bm{\Sigma}_{xu}(\tau) = E\left[1/n\sum_{i=1}^n\bm{X}_i(\tau)\bm{X}_i'(\tau){U}_i^2\right].$

\begin{thm}\label{thm3}
Suppose that Assumptions \ref{as1} to \ref{asnd} hold. Then, as $n\rightarrow \infty,$ we have
\begin{equation}
\frac{\sqrt{n}g'(\widehat{a}\left(\widehat{\tau}) -\alpha_{0 }\right)}{\sqrt{g'\widehat{\bm{\Theta}}(\widehat{\tau})\widehat{\bm{\Sigma}}_{xu} (\widehat{\tau})\widehat{\bm{\Theta}}(\widehat{\tau})'g}}\stackrel{d}{\to}N(0,1),\label{asymdist}
\end{equation}
uniformly in $\alpha_0\in\mathcal{B}_{\ell_0}(s_0).$

Furthermore, 
\begin{equation}
\sup_{\alpha_0\in\mathcal{A}^{({1})}_{\ell_0}(s_0)}\left|g'\widehat{\bm{\Theta}}(\widehat{\tau})\widehat{\bm{\Sigma}}_{xu} (\widehat{\tau})\widehat{\bm{\Theta}}(\widehat{\tau})'g-g'\bm{\Theta}(\widehat{\tau})\bm{\Sigma}_{xu} (\widehat{\tau})\bm{\Theta}(\widehat{\tau})'g\right|
= o_p(1)\label{asymcov} 
\end{equation}
\begin{equation}
\sup_{\alpha_0\in\mathcal{A}^{({2})}_{\ell_0}(s_0)}\left|g'\widehat{\bm{\Theta}}(\widehat{\tau})\widehat{\bm{\Sigma}}_{xu} (\widehat{\tau})\widehat{\bm{\Theta}}(\widehat{\tau})'g-g'\bm{\Theta}(\tau_0)\bm{\Sigma}_{xu} (\tau_0)\bm{\Theta}(\tau_0)'g\right|
= o_p(1),\label{asymcov2} 
\end{equation}
where $\widehat{\bm{\Sigma}}_{xu}(\widehat{\tau})=\frac{1}{n}\sum_{i=1}^n \bm{X}_i(\widehat{\tau})\bm{X}_i(\widehat{\tau})'\left(\widehat{U}_i(\widehat{\tau})\right)^2.$ 
\end{thm}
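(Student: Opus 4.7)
The plan is to treat the two cases through the decompositions displayed in equations~(\ref{stat2}) and~(\ref{stat}), reduce the problem to a single leading linear term $T_n := g'\widehat{\bm{\Theta}}(\widehat{\tau})\bm{X}(\widehat{\tau})'U/\sqrt{n}$, apply a Lyapunov CLT to $T_n$ after replacing $\widehat{\bm{\Theta}}(\widehat{\tau})$ and $\widehat{\tau}$ by their population counterparts, and absorb every remaining piece into an $o_p(1)$ term using the oracle inequalities of Theorems~\ref{main-thm-case1}--\ref{thmftau} together with the approximate inverse bounds promised by Lemma~\ref{lemmanode}. First I would verify that under Assumption~\ref{asnd}, $g'\bm{\Theta}(\tau_0)\bm{\Sigma}_{xu}(\tau_0)\bm{\Theta}(\tau_0)'g$ is bounded away from zero, so dividing by the estimated standard error does not inflate remainders. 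Then I would write
\[
T_n = g'\bm{\Theta}(\tau_*)\bm{X}(\tau_*)'U/\sqrt{n} + R_n,
\]
where $\tau_* = \widehat\tau$ in Case~I (any $\tau$ works since $\delta_0=0$) and $\tau_* = \tau_0$ in Case~II. A triangular-array CLT with Lyapunov's criterion, verified through the moment bounds in Assumption~\ref{asnd}(i) and the size condition $(h\bar s)^3/n = o_p(1)$, delivers asymptotic normality of the leading term with variance $g'\bm{\Theta}(\tau_*)\bm{\Sigma}_{xu}(\tau_*)\bm{\Theta}(\tau_*)'g$.

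For the remainder I would bound $|g'\Delta(\widehat{\tau})|$ by $|g|_1 \,\|\widehat{\bm{\Theta}}(\widehat{\tau})\widehat{\bm{\Sigma}}(\widehat{\tau})-I_{2p}\|_\infty \, \sqrt{n}\,|\widehat\alpha(\widehat\tau)-\alpha_0|_1$, using $|g|_1\le\sqrt{h}$, the KKT-based nodewise-regression infinity-norm bound $O_p(\lambda_{\mathrm{node}})$ from Lemma~\ref{lemmanode}, and the $\ell_1$ oracle rate $O_p(s_0\lambda)$. The product then satisfies the sparsity-rate condition in Assumption~\ref{asnd}(ii) because $\sqrt{h}\cdot\lambda_{\mathrm{node}}\cdot\sqrt{n}\,s_0\lambda \asymp \sqrt{h}\,s_0\log p/\sqrt n = o_p(1)$. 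In Case~II the additional threshold-mismatch term is handled by splitting
\[
\bm{X}(\widehat{\tau})'\bm{X}(\tau_0) - \bm{X}(\widehat{\tau})'\bm{X}(\widehat{\tau}) = \sum_{i=1}^n X_i X_i'\bigl(\mathbf{1}\{Q_i<\tau_0\}-\mathbf{1}\{Q_i<\widehat\tau\}\bigr),
\]
applying Assumption~\ref{A-smoothness} to bound its operator behavior by $O_p(|\widehat\tau-\tau_0|)$, and invoking the super-consistency rate $|\widehat\tau-\tau_0|=O_p(s_0\lambda^2)$ from Theorem~\ref{thmftau}. The uniform bound on $\|\widehat{\bm{\Theta}}\|_{l_1}$ together with $|\alpha_0|_1\le s_0 C_1$ closes this piece.

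For the uniform covariance consistency claims~(\ref{asymcov}) and~(\ref{asymcov2}), I would use the triangle inequality with four telescoped differences, replacing $\widehat{\bm\Theta}(\widehat\tau)$ by $\bm\Theta(\widehat\tau)$ (Lemma~\ref{lemmanode}), $\widehat{\bm\Sigma}_{xu}(\widehat\tau)$ by $\bm\Sigma_{xu}(\widehat\tau)$ (uniform LLN under the moment bounds in Assumption~\ref{asnd}(i)), and the residual $\widehat U_i(\widehat\tau)$ by $U_i$ via the prediction-norm bound of Theorem~\ref{thmftau}. In Case~II the additional step $\bm\Theta(\widehat\tau)\bm\Sigma_{xu}(\widehat\tau)\bm\Theta(\widehat\tau)'\to\bm\Theta(\tau_0)\bm\Sigma_{xu}(\tau_0)\bm\Theta(\tau_0)'$ follows from the Lipschitz continuity of $\bm\Sigma(\cdot)$ at $\tau_0$ granted by Assumption~\ref{as1}(iii)--(iv) combined with $|\widehat\tau-\tau_0|\to 0$. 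The quadratic form $g'(\cdot)g$ is controlled by $|g|_1^2=O(h)$ multiplied by $\|\cdot\|_\infty$, which the size conditions in Assumption~\ref{asnd}(ii) render $o_p(1)$ uniformly over the two $\ell_0$-balls.

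The main obstacle will be handling the interaction between three kinds of growth simultaneously in Case~II: the increasing dimension of the test vector $g$ (through $\sqrt{h}$ factors), the sparsity $\bar s$ of $\bm\Theta(\tau_0)$, and the super-consistency yet nontrivial rate of $\widehat\tau$. Verifying Lyapunov's condition for $T_n$ at the normalized scale $\sqrt{g'\bm\Theta\bm\Sigma_{xu}\bm\Theta'g}$ is delicate because the summands are not independent of the random weight $g'\bm\Theta(\tau_0)\bm{X}_i(\tau_0)$; I expect the $(h\bar s)^3/n=o_p(1)$ part of Assumption~\ref{asnd}(ii) to be exactly what is needed here, via a Cauchy--Schwarz bound that converts $|g'\bm\Theta(\tau_0)\bm{X}_i(\tau_0)|^3$ into $(h\bar s)^{3/2}$ times twelfth-moment control from Assumption~\ref{asnd}(i). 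Everything else is careful bookkeeping supported by Lemmas already established.
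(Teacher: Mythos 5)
Your proposal is correct and follows essentially the same route as the paper: the same decomposition into the leading term $g'\widehat{\bm{\Theta}}(\widehat{\tau})\bm{X}(\widehat{\tau})'U/\sqrt{n}$ plus remainders, the same H\"older/KKT/oracle-rate bound on $g'\Delta(\widehat{\tau})$, the same treatment of the threshold-mismatch term via Assumption~\ref{A-smoothness} and the super-consistency of $\widehat{\tau}$, the same Lyapunov verification using $(h\bar{s})^3/n=o_p(1)$, and the same telescoping argument for the covariance consistency claims. The only cosmetic difference is that in Case~II you center the leading term at $\tau_0$ rather than keeping $\widehat{\tau}$ and invoking uniformity over $\tau\in\mathbb{T}$ as the paper does, but both variants rest on the same lemmas.
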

Thus, we establish the asymptotic distribution of tests involving an increasing number of slope parameters in the cases with no threshold effect and a fixed threshold effect, showing that their asymptotic distributions are identical. Additionally, we provide a uniformly consistent covariance matrix estimator for both cases. However, there is a slight difference between the limits of their asymptotic variances, since there is a true value for the threshold parameter in the latter case. Because we lack prior knowledge of the existence of a threshold effect, we simultaneously impose the assumptions of Theorems \ref{main-thm-case1} and \ref{thmftau} to establish Theorem \ref{thm3}. Here, the number of parameters involved in hypotheses is allowed to grow to infinity at a rate restricted by Assumption \ref{asnd}(ii). 

In the case of a fixed number of parameters being tested, by \eqref {asymdist}, we have

\begin{equation}
\left|{\left(\widehat{\bm{\Theta}}(\widehat{\tau})\widehat{\bm{\Sigma}}_{xu}(\widehat{\tau}) \widehat{\bm{\Theta}}(\widehat{\tau})'\right)_{H,H}^{-\frac{1}{2}} {\sqrt{n}(\widehat{a}(\widehat{\tau})_{H} -\alpha_{0,H}) }}\right|_{2}^2\stackrel{d}{\to} \chi^2(h)\label{chitest},
\end{equation}
for a fixed cardinality $h.$ Thus, a $\chi^2$ test can be applied to test a hypothesis involving $h$ parameters simultaneously.

Furthermore, the asymptotic result can be applied to test for a threshold effect based on the active parameters. Define the active set as $\widehat J=\{j:\widehat{\delta}_j \neq 0\}$ and $\widehat s = |\widehat J|,$ where $\widehat s$ can grow with $n$ and is of the same magnitude of $s_0.$ Recall that $H$ is the index set of $g;$ let $H = \widehat J,$ and assign equal weights to the non-zero elements of $g.$ The null hypothesis is $H_0: \delta_{\widehat{J},0}=0.$ Subsequently, a t-test can be performed on $g'\widehat{a}(\widehat{\tau})$ to test for the existence of a threshold effect.

Next, we establish confidence intervals for the parameters of interest. Let $\varPhi(t)$ denote the cumulative distribution function (CDF) of the standard normal distribution, 
and let $z_{1-\alpha/2}$ be the $1-\alpha/2$ percentile of the standard normal distribution.
Define $\widehat{\sigma}_j(\widehat{\tau})=\sqrt{e_j'\widehat{\bm{\Theta}}(\widehat{\tau})\widehat{\bm{\Sigma}}_{xu}(\widehat{\tau}) \widehat{\bm{\Theta}}(\widehat{\tau})'e_j}$ for all $ j\in\{1,..., 2p \}.$ 
Let $\text{diam}([a, b])$ denote the length of the interval $[a, b]\subset\mathbb{R}.$

\begin{thm}\label{thm5}
Suppose that Assumptions \ref{as1}, \ref{as2}, \ref{A-discontinuity}, \ref{A-smoothness},  \ref{asnode} and \ref{asnd} hold. Then, as $n \rightarrow \infty,$ we have
\begin{equation}\label{thm51}
\sup_{t\in\mathbb{R}}\sup_{\alpha_0\in\mathcal{B}_{\ell_0}(s_0)}\left|\mathbb{P}\left\{\frac{\sqrt{n}g'(\widehat{a}(\widehat{\tau}) -\alpha_{0 })}{\sqrt{g'\widehat{\bm{\Theta}}( \widehat{\tau})\widehat{\bm{\Sigma}}_{xu}(\widehat{\tau}) \widehat{\bm{\Theta}}(\widehat{\tau})'g}}\le t\right\}-\varPhi(t)\right|
{\to}0.
\end{equation}
Furthermore, for  all  $ j\in\{1,\dots,2p \},$ 
\begin{equation}
\lim_{n\to\infty}
\inf_{\alpha_0\in\mathcal{B}_{\ell_0}(s_0)}
\mathbb{P}\left\{\alpha_{0}^{(j)}\in \left[\widehat{a}^{(j)}(\widehat{\tau})-z_{1-\frac{\alpha}{2}}\frac{\widehat{\sigma}_j( \widehat{\tau})}{\sqrt{n}},\widehat{a}^{(j)}( \widehat{\tau})+z_{1-\frac{\alpha}{2}}\frac{\widehat{\sigma}_j(\widehat{\tau})}{\sqrt{n}}\right]\right\}=1-\alpha,\label{th42}
\end{equation}
and 
\begin{equation}
\sup_{\alpha_0\in\mathcal{B}_{\ell_0}(s_0)}diam\left(\left[\widehat{a}^{(j)}(\widehat{\tau})-z_{1-\frac{\alpha}{2}}\frac{\widehat{\sigma}_j(\widehat{\tau})}{\sqrt{n}},\widehat{a}^{(j)}(\widehat{\tau})+z_{1-\frac{\alpha}{2}}\frac{\widehat{\sigma}_j( \widehat{\tau})}{\sqrt{n}}\right]\right)=O_p\left(\frac{1}{\sqrt{n}}\right).\label{th43}
\end{equation}
\end{thm}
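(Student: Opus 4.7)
The plan is to build on Theorem \ref{thm3}, which already delivers both the uniform (over $\alpha_0\in\mathcal{B}_{\ell_0}(s_0)$) asymptotic normality of the studentized statistic and the uniform consistency of the sandwich variance estimator. The remaining work is threefold: (a) upgrade the pointwise (in $t$) convergence in \eqref{asymdist} to a Kolmogorov-type bound uniform in both $t\in\mathbb{R}$ and $\alpha_0\in\mathcal{B}_{\ell_0}(s_0)$; (b) specialize this to a single coordinate to obtain the coverage statement; and (c) control the length of the interval using the uniform consistency of the variance estimator and the eigenvalue bounds in Assumption \ref{asnd}(iii).

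For \eqref{thm51}, I would run a grid argument in the spirit of P\'olya's theorem, combined with the uniformity over $\alpha_0$ built into Theorem \ref{thm3}. Fix $\epsilon>0$ and choose a finite partition $-\infty=t_0<t_1<\dots<t_K=\infty$ of $\mathbb{R}$ with $\Phi(t_{k+1})-\Phi(t_k)<\epsilon$ for every $k$; this is possible because $\Phi$ is bounded and uniformly continuous. Writing $F_n^{\alpha_0}(t)$ for the CDF of the studentized statistic under $\alpha_0$, Theorem \ref{thm3} gives $\sup_{\alpha_0}|F_n^{\alpha_0}(t_k)-\Phi(t_k)|\to 0$ at each of the finitely many grid points, and hence after taking the maximum over $k$ as well. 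Monotonicity of $F_n^{\alpha_0}$ then yields, for any $t\in[t_k,t_{k+1}]$,
\[
F_n^{\alpha_0}(t)-\Phi(t)\le F_n^{\alpha_0}(t_{k+1})-\Phi(t_{k+1})+[\Phi(t_{k+1})-\Phi(t_k)],
\]
together with the analogous lower bound. This delivers $\sup_{t,\alpha_0}|F_n^{\alpha_0}(t)-\Phi(t)|\le\epsilon+o(1)$, and since $\epsilon$ is arbitrary, \eqref{thm51} follows.

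With \eqref{thm51} in hand, the coverage statement \eqref{th42} is obtained by taking $g=e_j$, so that the studentized statistic reduces to $\sqrt{n}(\widehat{a}^{(j)}(\widehat{\tau})-\alpha_0^{(j)})/\widehat{\sigma}_j(\widehat{\tau})$ and the event $\{\alpha_0^{(j)}\in[\widehat{a}^{(j)}(\widehat{\tau})\pm z_{1-\alpha/2}\widehat{\sigma}_j(\widehat{\tau})/\sqrt{n}]\}$ coincides with $\{|\sqrt{n}(\widehat{a}^{(j)}(\widehat{\tau})-\alpha_0^{(j)})/\widehat{\sigma}_j(\widehat{\tau})|\le z_{1-\alpha/2}\}$; its probability then converges uniformly to $\Phi(z_{1-\alpha/2})-\Phi(-z_{1-\alpha/2})=1-\alpha$. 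For \eqref{th43}, the length of the interval is $2z_{1-\alpha/2}\widehat{\sigma}_j(\widehat{\tau})/\sqrt{n}$, so it is enough to show $\widehat{\sigma}_j(\widehat{\tau})=O_p(1)$ uniformly in $\alpha_0$. Equations \eqref{asymcov}--\eqref{asymcov2} imply that $\widehat{\sigma}_j^2(\widehat{\tau})$ is within $o_p(1)$ of its population analogue uniformly over $\mathcal{B}_{\ell_0}(s_0)$, which is in turn bounded above under Assumption \ref{asnd}(iii); taking square roots gives the claim.

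The delicate step is (a): Theorem \ref{thm3} is phrased as a convergence in distribution but in fact provides uniformity over $\alpha_0$, and the P\'olya-style discretization above must be executed so that the resulting $o(1)$ term is a genuine joint supremum over $(t,\alpha_0)$ rather than only an iterated limit. Once this is pinned down, parts (b) and (c) reduce to direct specialization together with the variance consistency and eigenvalue bounds already in hand.
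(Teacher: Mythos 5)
Your overall architecture is right, and parts (b) and (c) match the paper's proof: for \eqref{th42} you set $g=e_j$ and rewrite the non-coverage event as an exceedance of the studentized statistic, and for \eqref{th43} you bound $\widehat{\sigma}_j(\widehat{\tau})$ by its population analogue plus $o_p(1)$ using \eqref{asymcov}--\eqref{asymcov2} and the eigenvalue bounds $\phi_{\max}(\bm{\Sigma}_{xu})$, $\phi_{\min}(\bm{\Sigma})$ from Assumption \ref{asnd}(iii). That is exactly what the paper does.

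The gap is in step (a), and you have in fact identified it yourself as ``the delicate step'' without supplying it. Your argument takes as its starting point that Theorem \ref{thm3} already delivers $\sup_{\alpha_0\in\mathcal{B}_{\ell_0}(s_0)}|F_n^{\alpha_0}(t)-\Phi(t)|\to 0$ at each fixed $t$, and then uses a P\'olya grid to upgrade to the joint supremum over $(t,\alpha_0)$. But the $\alpha_0$-uniformity of the CDF convergence is precisely what the paper's proof of Theorem \ref{thm5} establishes; it is not a free consequence of the Slutsky-type argument in the proof of Theorem \ref{thm3}, where the decomposition $t=t_1'(\widehat{\tau})+o_p(1)$ is only shown to hold for each fixed $\alpha_0$ (with rates that happen to be uniform). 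The paper closes this by introducing events $\mathcal{F}_{1,n},\dots,\mathcal{F}_{4,n}$ on which the bias terms ($g'\Delta(\widehat{\tau})$, the threshold-misalignment term, the plug-in error in $\widehat{\bm{\Theta}}$) and the variance ratio are uniformly small over $\mathcal{B}_{\ell_0}(s_0)$, showing these events have probability tending to one uniformly, and then sandwiching $F_n^{\alpha_0}(t)$ between $\Phi(t(1\pm\varepsilon)\pm D\varepsilon)\pm\varepsilon$ with constants $D_1,\dots,D_4$ that do not depend on $\alpha_0$; the argument is run separately for $\delta_0=0$ and $\delta_0\neq 0$ (the latter requiring the extra event $\mathcal{F}_{4,n}$), a case split your proposal omits. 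Once that sandwich is in place, the $t$-uniformity you obtain by discretization comes for free, since $\sup_{t}|\Phi(t(1\pm\varepsilon)\pm D\varepsilon)-\Phi(t)|\to 0$ as $\varepsilon\to 0$; so your P\'olya step is a valid alternative for the easier half of the problem, but the harder half --- the explicit, $\alpha_0$-free control of all remainder and variance-ratio terms --- is missing from your write-up.
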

	
Therefore, we show that the convergence of a linear combination of the parameters of $\widehat{a}(\widehat{\tau})$ to the standard normal distribution is uniformly valid over the $\ell_0$-ball with a radius of at most $s_0.$ Researchers can perform uniform inference for high-dimensional slope parameters without specifying whether the specification is a linear or threshold regression. In addition, these confidence intervals are asymptotically honest and contract at the optimal rate.

\begin{rem}
Our results can be readily extended to panel data models with fixed effects under strict exogeneity, as this framework accommodates heteroskedastic error terms and provides uniformly consistent estimators of the asymptotic covariance matrix. 
\end{rem}

\section{Time Series Threshold Model}\label{timeseries}

We establish the uniform inference theory for the debiased Lasso estimator in the high-dimensional time series threshold regression model, extending the model of \cite{ADAMEK20231114} by allowing for the existence of a threshold effect, with local projection threshold regression as a special case. We now assume that $(Y_i, X_i, Q_i)$ is a sequence of dependent data while still focusing on the model (\ref{model1}) as follows:
\begin{equation}
Y_{i}=\bm{X}_{i}(\tau _{0})^{\prime }\alpha _{0}+U_{i},\ \ \ i=1,\ldots
,n. \label{equmodel}
\end{equation}

\subsection{Oracle Inequalities and Uniform Inference} \label{sec:time:inf}
We continue studying the Lasso estimator from equation (\ref{joint-max}) without adding weights, as the data will be standardized. 
Some assumptions are from \cite{ADAMEK20231114}, but we list them here for completeness and briefly discuss them. In the notation, $C$ represents an arbitrary positive finite constant, and its value may vary from line to line.

\begin{assm}\label{time:dgp}
Let $\left\{ X_i, Q_i, U_i\right\}_{i=1}^n$ denote a sequence of random variables. Define $\boldsymbol{W}_i = (X_i^\prime, U_i)^\prime$ and suppose that there exist some constants $\bar m>m>2$, and $d\geq \max\{1,(\bar m/m-1)/(\bar m-2)\}$ such that

\noindent (i) $\E\left[\bm {W}_i\right]=0,$ $\E\left[X_i U_i\right]=0,$ $\E\left[Q_iU_i\right]=0,$ and $\max_{1\leq j\leq p+1,\ 1\leq i\leq N}E\left|W_{i}^{(j)}\right|^{2\bar m} \leq C,$ for a positive constant $C.$

\noindent (ii) Let $\boldsymbol{s}_{N,i}$ denote a $k(N)$-dimensional triangular array that is $\alpha$-mixing of size $-d/(1/m-1/\bar{m})$ with $\sigma\text{-field}$ $\mathcal{F}^{\boldsymbol{s}}_i:=\sigma\left\lbrace\boldsymbol{s}_{n,i},\boldsymbol{s}_{n,i-1},\dots\right\rbrace$ such that $\boldsymbol{W}_i$ is $\mathcal{F}^{\boldsymbol{s}}_i$-measurable. The process $\left\lbrace W_{i}^{(j)}\right\rbrace$ is $L_{2m}$-near-epoch-dependent (NED) of size 
$-d$ on $\boldsymbol{s}_{N,i }$ with positive bounded NED constants \footnote{A sequence $\boldsymbol{W}_{N}$ is of size $-d$ if $\bm{W}_N = O(N^{-d-\varepsilon})$ for some $\varepsilon>0.$}, uniformly over $j=1,\ldots,n + 1.$

\noindent (iii) The threshold variable $Q_i,$ $i=1,...,n,$ is continuously distributed. The parameter $\tau_0$ lies in $\mathbb{T}=[t_0, t_1],$ where $0<t_0<t_1<1.$ 

\noindent (iv) $ E\left[X_i^{(j)} X_i^{(l)} \vert Q_i=\tau\right]$ and $E\left[X_i^{(j)} X_i^{(l)} U_i^{2} \vert Q_i=\tau\right]$
are continuous and bounded when $\tau$ is in a neighborhood of $\tau_0,$ for all  $1\le j,l \le p$.

\end{assm}

This assumption is the time series analogue of Assumption \ref{as1}. NED is a general form of dependence, including examples such as mixing, martingale, and mixingale dependence. It can be approximated by a mixing process.

We assume that $\alpha_0$ is weakly sparse, which is a more general condition than the exact sparsity assumption for cross-sectional data. Additionally, establishing oracle inequalities under this condition for independent data is not difficult. 
\begin{assm}
\label{ass:sparsity}
For some $0\leq r<1$ and sparsity level $s_r$, define the $2p$-dimensional sparse compact parameter space
\begin{equation*}
\mathcal{B}_{2p}(r, s_r) :=\left\{\alpha\in \mathbb{R}^{2p}: |\alpha|_r^r \leq s_r, \; |\alpha|_{\infty} \leq C, \, \exists C < \infty \right\},
\end{equation*}
and assume that $\alpha_0\in\mathcal{B}_{2p}(r, s_r)$.
\end{assm}

\noindent In addition, we define $\mathcal{A}^{({1})}_{2p}(r,s_r)=\left\{\alpha\in \mathbb{R}^{2p}: |\alpha|_r^r \leq s_r, \; |\alpha|_{\infty} \leq C, \delta_0=0\right\}$ and
$\mathcal{A}^{({2})}_{2p}(r,s_r)\\ =\left\{\alpha\in \mathbb{R}^{2p}: |\alpha|_r^r \leq s_r, \; |\alpha|_{\infty} \leq C, \delta_0\ne0\right\}.$ Thus, $\mathcal{B}_{2p}(r, s_r)=\mathcal{A}^{({1})}_{2p}(r,s_r) \cup \mathcal{A}^{({2})}_{2p}(r,s_r).$

We impose the following standard compatibility conditions for high-dimensional regression models, as in \cite{ADAMEK20231114}, which are stronger than the uniform adaptive restricted eigenvalue conditions in Assumption \ref{as2} (ii), while they are still considered regular conditions for establishing the consistency of the Lasso estimator.
\begin{assm}
\label{ass:compatibility}
Recall $\boldsymbol\Sigma(\tau):=1/n\sum\limits_{i=1}^n\E\left[\boldsymbol{X}_i(\tau)\boldsymbol{X}_i(\tau)'\right].$
For a general index set $S$ with cardinality $|S|$, define the compatibility constant
\begin{equation*}
	\phi_{\boldsymbol{\Sigma}(\tau)}^2(S):=\min_{\tau \in \mathbb{S}}\min_{\left\{ \gamma\neq 0:| \gamma_{S^c}|_1\leq 3| \gamma_{S}|_1\right\}}\left\{\frac{| S|\gamma'{\boldsymbol{\Sigma}(\tau)} \gamma}{| \gamma_{S}|^2_1}\right\}.
\end{equation*}
	Assume that $\phi_{{\boldsymbol{\Sigma}}(\tau)}^2(S_\lambda)\geq 1/C$, which implies that
	\begin{equation*}
	|\gamma_{S_\lambda}|^2_1\leq\frac{| S_\lambda|\gamma'{\boldsymbol{\Sigma}(\tau)} \gamma}{\phi_{{\boldsymbol{\Sigma}}(\tau)}^2(S_\lambda)} \leq C\vert S_\lambda\vert \gamma'{\boldsymbol{\Sigma}(\tau)} \gamma,
	\end{equation*}
	for all $\gamma$ satisfying $|\gamma_{S^c_\lambda}|_1\leq 3| \gamma_{S_\lambda}|_1\neq 0.$
\end{assm}

The compatibility conditions for $\bm{M}(\tau)$ and $\bm{M}$ can be defined accordingly.

\begin{thm}\label{thm:time:oraine}
Suppose that Assumptions \ref{A-discontinuity}, \ref{A-smoothness}, \ref{time:dgp}, \ref{ass:sparsity}, \ref{ass:compatibility} and the conditions of Lemma \ref{lem:time:lam} hold. Then, as $n\rightarrow \infty,$ with probability at least $1-C\log \log n^{-1},$ we have
\begin{equation*}
\begin{aligned}
&\left|\left|\widehat f -f_0 \right|\right|_{n}^2 \leq C\lambda^{2-r}s_r, \\
& |\widehat \alpha - \alpha_0|_1 \leq C\lambda^{1-r}s_r,
\end{aligned}
\end{equation*}
when the fixed threshold effect exists, 
\begin{equation*}
|\widehat\tau - \tau_0|_1 \leq C\lambda^{2-r}s_r.
\end{equation*}
\end{thm}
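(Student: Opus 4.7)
My approach is to merge the oracle-inequality template of Theorems~\ref{main-thm-case1} and \ref{thmftau} with the time-series weakly-sparse Lasso argument of \cite{ADAMEK20231114}. The former supplies the threshold-regression bookkeeping (split into a no-threshold and a fixed-threshold case), while the latter supplies the NED concentration and the weak-sparsity truncation. Throughout I aim at a single unified argument that delivers all three displayed bounds on the same high-probability event.

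First, I would invoke Lemma~\ref{lem:time:lam} to certify that, on an event of probability at least $1-C(\ln\ln n)^{-1}$, the penalty level $\lambda$ dominates all relevant empirical-process terms, namely $\max_{1\le j\le 2p}\sup_{\tau\in\mathbb{T}}|2n^{-1}\sum_{i=1}^n U_i\bm{X}_i^{(j)}(\tau)|/\Vert\bm{X}^{(j)}(\tau)\Vert_n\le \mu\lambda$ and the associated uniform control of $\bm{D}(\tau)$. This is the time-series analogue, via maximal inequalities for $L_{2m}$-NED arrays of size $-d$, of the Chernozhukov-type concentration that underlies Lemma~\ref{lemmaf} in the cross-sectional case.

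Conditional on this event, I would start from the basic inequality $S_n(\widehat{\alpha},\widehat{\tau})+\lambda|\bm{D}(\widehat{\tau})\widehat{\alpha}|_1\le S_n(\alpha_0,\tau_0)+\lambda|\bm{D}(\tau_0)\alpha_0|_1$, expand $S_n$ around $(\alpha_0,\tau_0)$, and absorb the resulting $\bm{X}(\widehat{\tau})-\bm{X}(\tau_0)$ cross-terms using the smoothness controls \eqref{xx}--\eqref{xud}. To switch from exact to weak sparsity under Assumption~\ref{ass:sparsity}, I would introduce the cut-off index set $S_\lambda:=\{j:|\alpha_0^{(j)}|>\lambda\}$, so that $|S_\lambda|\le \lambda^{-r}s_r$ and $|\alpha_{0,S_\lambda^c}|_1\le \lambda^{1-r}s_r$. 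The cone-type relation $|(\widehat{\alpha}-\alpha_0)_{S_\lambda^c}|_1\le 3|(\widehat{\alpha}-\alpha_0)_{S_\lambda}|_1+C\lambda^{1-r}s_r$ then follows from the basic inequality, after which Assumption~\ref{ass:compatibility} applied with $S=S_\lambda$ yields $|(\widehat{\alpha}-\alpha_0)_{S_\lambda}|_1^2\le C|S_\lambda|\,\Vert\widehat{f}-f_0\Vert_n^2$. Substituting back delivers the prediction bound $\Vert\widehat{f}-f_0\Vert_n^2\le C\lambda^{2-r}s_r$, and hence the $\ell_1$ bound $|\widehat{\alpha}-\alpha_0|_1\le C\lambda^{1-r}s_r$. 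For the threshold bound under $\delta_0\ne 0$, plugging the prediction bound into Assumption~\ref{A-discontinuity} gives $|\widehat{\tau}-\tau_0|\le C_4^{-1}\Vert\widehat{f}-f_0\Vert_n^2\le C\lambda^{2-r}s_r$, mirroring Theorem~\ref{thmftau}.

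The main obstacle I anticipate is twofold: (a) transferring the compatibility condition from the population matrix $\bm{\Sigma}(\tau)$ in Assumption~\ref{ass:compatibility} to the empirical Gram matrix $\widehat{\bm{\Sigma}}(\widehat{\tau})$ uniformly in $\tau$ under NED dependence, which demands a time-series uniform law of large numbers that plays the role of Lemma~\ref{tau22} in the cross-sectional case; and (b) breaking the circularity in which the bound on $|\widehat{\tau}-\tau_0|$ enters the prediction bound through the smoothness terms \eqref{xx}--\eqref{xud}. I would resolve (b) by first deriving a coarse preliminary bound $|\widehat{\tau}-\tau_0|\le \eta^{\ast}$ using only the Lemma~\ref{lemmaf}-level prediction inequality, and then bootstrapping to the sharper rate once the compatibility-based prediction bound is in hand.
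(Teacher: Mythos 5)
Your proposal is correct and follows essentially the same route as the paper: the paper's proof is itself a sketch that establishes time-series analogues of Lemmas \ref{concenine} and \ref{conpart} via the mixingale concentration inequality (Lemma \ref{time:mixconc}) with the Triplex and Markov inequalities, verifies the relevant NED/mixingale properties, and then reruns the arguments of Theorems \ref{main-thm-case1} and \ref{thmftau} with the weak-sparsity truncation set $S_\lambda$ from \cite{ADAMEK20231114}. Your two anticipated obstacles --- the uniform-in-$\tau$ transfer of compatibility to the empirical Gram matrix and the coarse-bound-then-bootstrap iteration for $\widehat\tau$ --- are exactly the ingredients the paper imports from Lemmas \ref{tau22}, \ref{lemeg} and \ref{lem-claim2}--\ref{main-thm-fixed-threshold}, so your write-up is, if anything, more explicit than the paper's own.
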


Theorem \ref{thm:time:oraine} provides oracle inequalities for dependent data, in comparison to Theorems \ref{main-thm-case1} and \ref{thmftau}. We do not separate the results into two cases, as the oracle inequalities are qualitatively the same whether or not a fixed threshold effect exists. Additionally, we provide a non-asymptotic bound for $\widehat\tau$ when the threshold effect does exist.

We utilize the same nodewise regression approach as in Section \ref{constructprecison} to construct the approximate inverses of the empirical Gram matrices $\widehat{\bm M}(\tau)$ and $\widehat{\bm N}(\tau),$ denoted by $\widehat {\bm A}(\tau)$ and $\widehat {\bm B}(\tau),$ for obtaining the debiased Lasso estimator. Thus, for all $j = 1,\dots, p,$
$$X^{(j)}(\tau)=X^{(-j)}(\tau)'{\gamma}_{0,j}(\tau)+\upsilon^{(j)},$$ 
$$\widetilde{X}^{(j)}(\tau)=\widetilde{X}^{(-j)}(\tau)'\widetilde{\gamma}_{0,j}(\tau)+\widetilde{\upsilon}^{(j)}.$$

\noindent We provide the following assumptions for applying Theorem \ref{thm:time:oraine} to the nodewise Lasso regressions. 
\begin{assm}\label{time:ass:nodewise}
(i)  For some $0\leq r<1$ and sparsity levels $s_{r}^{(j)},$ $\widetilde{s}_{r}^{(j)},$ let $\gamma_{0,j}\in\mathcal{B}_{p-1}\left(r,s_{r}^{(j)}\right)$ and $\widetilde{\gamma}_{0,j}\in\mathcal{B}_{p-1}\left(r,\widetilde{s}_{r}^{(j)}\right),$ $\forall j=1,...,p.$\\
(ii) For $i=1,...,n,$ and $j=1,...,p,$ $E\left[\upsilon_{i}^{(j)}\right]=0;$
$E\left[\widetilde{\upsilon}_{i}^{(j)}\right]=0,$  
$ E\left[\upsilon^{(j)}_{i}\middle|X_i,Q_i\right] = 0$ and $ E\left[\widetilde{\upsilon}^{(j)}_{i}\middle|X_i,Q_i\right]= 0;$  $\max_{1\leq j\leq p,\ 1\leq i\leq n} E \left[\left|\upsilon_{i}^{(j)}\right|^{2\bar m} \right] \leq C,$ $\max_{1\leq j\leq p,\ 1\leq i\leq n} E \left[\left|\widetilde{\upsilon}_{i}^{(j)}\right|^{2\bar m} \right]\\ \leq C'$.\\
(iii) $\bm{M}(\tau)$ and $\bm{N}(\tau)$ are non-singular.
\end{assm}

Assumption \ref{time:ass:nodewise} is analogous to Assumptions 4 and 5 in \cite{ADAMEK20231114}. Assumption \ref{time:ass:nodewise} (iii) implies that $\bm{\Sigma}(\tau)$ is invertible, as discussed below Assumption \ref{as2}.

\noindent Therefore, we have
\begin{equation*}
\widehat{\bm{\Theta}}(\tau)= {\begin{bmatrix} \begin{array}{cccc}
		\widehat{\bm{B}}(\tau) &	-\widehat{\bm{B}}(\tau)\\
			-\widehat{\bm{B}}(\tau)&	\widehat{\bm{A}}(\tau)+\widehat{\bm{B}}(\tau) \end{array} \end{bmatrix}}= {\begin{bmatrix} \begin{array}{cccc}
		\widehat{\widetilde{\bm{Z}}}(\tau)^{-2} \widehat{\widetilde{\bm{C}}}(\tau) &	-\widehat{\widetilde{\bm{Z}}}(\tau)^{-2} \widehat{\widetilde{\bm{C}}}(\tau)\\
			-\widehat{\widetilde{\bm{Z}}}(\tau)^{-2} \widehat{\widetilde{\bm{C}}}(\tau)&	\widehat{\bm{Z}}(\tau)^{-2} \widehat{\bm{C}}(\tau)+\widehat{\widetilde{\bm{Z}}}(\tau)^{-2} \widehat{\widetilde{\bm{C}}}(\tau) \end{array} \end{bmatrix}},
\end{equation*} as in \eqref{invformlu}.

Recall the debiased Lasso estimators from Section \ref{dbLASSO}: in the case of no threshold effect, $$\sqrt{n}g'(\widehat{a}(\widehat{\tau}) -\alpha_{0})=g'\widehat{\bm{\Theta}}(\widehat{\tau})\bm {X}(\widehat{\tau})'U/n^{1/2} -g'\Delta(\widehat{\tau}),$$ and in the case of fixed threshold effect, \begin{equation*} \begin{aligned}
\sqrt{n}g'(\widehat{a}(\widehat{\tau}) -\alpha_{0})&=g'\widehat{\bm{\Theta}}(\widehat{\tau})\bm {X}(\widehat{\tau})'U//n^{1/2}-g'\Delta(\widehat{\tau})\\
&+g'\widehat{\bm{\Theta}}(\widehat{\tau})(\bm {X}(\widehat{\tau})'\bm {X}(\tau_0)-\bm {X}(\widehat{\tau})'\bm {X}(\widehat{\tau}))\alpha_0 /n^{1/2}.
\end{aligned} \end{equation*}

Define $\bm{Z}(\tau_0)^2 = diag \left( z_1(\tau_0)^2, \dots, z_p(\tau_0)^2\right)$ and $\widetilde{\bm{Z}}(\tau_0)^2 = diag \left( \widetilde{z}_1(\tau_0)^2, \dots, \widetilde{z}_p(\tau_0)^2\right),$ where $z_j(\tau_0)^2:=1/n\sum_{i=1}^nE\left[\left(\upsilon_{i}^{(j)}(\tau_0)\right)^2\right]$ and $\widetilde{z}_j(\tau_0)^2:=1/n\sum_{i=1}^nE\left[\left(\widetilde{\upsilon}_{i}^{(j)}(\tau_0)\right)^2\right]$ from population nodewise regression. Furthermore, we define the long-run covariance matrices
${\boldsymbol{\Omega}}_{p,n} = E\left[1/n\left(\sum_{i=1}^n \boldsymbol{w}_i\right)\left(\sum_{i=1}^n \boldsymbol{w}'_i\right)\right],$ ${\widetilde{\boldsymbol{\Omega}}}_{p,n} = E\left[1/n\left(\sum_{i=1}^n \widetilde{\boldsymbol{w}}_i\right)\left(\sum_{i=1}^n \widetilde{\boldsymbol{w}}'_i\right)\right],$ and $\overline{\boldsymbol{\Omega}}_{p,n} = E\left[1/n\left(\sum_{i=1}^n \boldsymbol{w}_i\right)\left(\sum_{i=1}^n \widetilde{\boldsymbol{w}}'_i\right)\right],$
where $\boldsymbol{w}_i=\left(v_{i}^{(1)}u_i,\dots,v_{i}^{(p)}u_i\right)'$ and $\widetilde{\boldsymbol{w}}_i=\left(\widetilde{v}_{i}^{(1)}u_i,\dots,\widetilde{v}_{i}^{(p)}u_i\right)'.$ $\boldsymbol{\Omega}_{p,n},$ ${\widetilde{\boldsymbol{\Omega}}}_{p,n},$ and $\overline{\boldsymbol{\Omega}}_{p,n}$ can be rewritten as $\boldsymbol{\Omega}_{p,n}=\boldsymbol{\Xi}(0)+\\ \sum_{l=1}^{n-1}\left(\boldsymbol{\Xi}(l)+\boldsymbol{\Xi}^\prime(l)\right),$ $\widetilde{\boldsymbol{\Omega}}_{p,n}=\widetilde{\boldsymbol{\Xi}}(0)+\sum_{l=1}^{n-1}\left(\widetilde{\boldsymbol{\Xi}}(l)+\widetilde{\boldsymbol{\Xi}}^\prime(l)\right)$ and $\overline{\boldsymbol{\Omega}}_{p,n}=\overline{\boldsymbol{\Xi}}(0)+\\\sum_{l=1}^{n-1}\left(\overline{\boldsymbol{\Xi}}(l)+ \overline{\boldsymbol{\Xi}}^\prime(l)\right)$ where $\boldsymbol{\Xi}(l) = \frac{1}{n}\sum_{i=l+1}^nE\left[ \boldsymbol{w}_i \boldsymbol{w}_{i-l}^\prime\right],$  $\widetilde{\boldsymbol{\Xi}}(l) = \frac{1}{n}\sum_{i=l+1}^nE\left[ \widetilde{\boldsymbol{w}}_i \widetilde{\boldsymbol{w}}_{i-l}^\prime\right],$ and $\overline{\boldsymbol{\Xi}}(l) = \frac{1}{n}\sum_{i=l+1}^nE\left[{\boldsymbol{w}}_i \widetilde{\boldsymbol{w}}_{i-l}^\prime\right]$,  as in \cite{ADAMEK20231114}.
 
Let $\underline \lambda = \min_{j=1,...,2p} \lambda_j$ and $\bar \lambda = \max_{j=1,...,2p} \lambda_j,$ satisfy \eqref{eq:condition_lambda} and let $\bar s_r = \\ \max\left\{\max_{j=1,\dots,p}s_r^{(j)},\max_{j=1,\dots,p}\widetilde{s}_r^{(j)}\right\}.$  Define
$\lambda_{\min}=  \min\{\lambda, \underset{\bar{}}{\lambda}\},$ $\lambda_{\max}=  \max\{\lambda, \bar{\lambda}\},$  $s_{r,\max} =  \max\{s_r, \bar{s}_r\}.$ Similar to Section \ref{dbLASSO}, we define $g$ as a $(2p\times1)$ vector with $|g|_2 = 1$ and let $H =\{j=1,\dots,2p\mid g_j\ne0\}$ with cardinality $|H| = h<C.$ $H$ contains the indices of the coefficients involved in the hypothesis to be tested. The following theorem establishes the asymptotic normality of the debiased Lasso estimator.

\begin{thm}\label{time:thm:CLT}
Suppose that Assumptions \ref{A-discontinuity}, \ref{A-smoothness} and \ref{time:dgp} to \ref{time:ass:nodewise} hold, that $s_{r,max}^{3/2}log p/\sqrt{n}\rightarrow 0,$ and that the smallest eigenvalues of $\boldsymbol{\Omega}_{p,n},$ ${\widetilde{\boldsymbol{\Omega}}}_{p,n}$ and $\overline{\boldsymbol{\Omega}}_{p,n}$ are bounded away from 0.
Furthermore, 
assume that $\lambda_{\max}^2\leq(\log\log n) \lambda_{\min}^{r}\left[\sqrt{n}s_{r,\max}\right]^{-1}$, 
and
\begin{equation*}
\begin{split}
0<r<1:&\quad\lambda_{\min}\geq  {(\log \log n)}\left[s_{r,\max}\left(\frac{p^{\left(\frac{2}{d}+\frac{2}{m-1}\right)}}{\sqrt{n}}\right)^{\frac{1}{\left(\frac{1}{d}+\frac{m}{m-1}\right)}}\right]^{\frac{1}{r}}\\
r=0:&\quad s_{0,\max}\leq {(\log \log n)^{-1}}\left[\frac{\sqrt{n}}{p^{\left(\frac{2}{d}+\frac{2}{m-1}\right)}}\right]^{\frac{1}{\left(\frac{1}{d}+\frac{m}{m-1}\right)}},\quad \lambda_{\min}\geq {(\log \log n)}\frac{p^{1/m}}{\sqrt{n}}.
\end{split}
\end{equation*}

\noindent Then, as $n \rightarrow \infty,$ we have
\begin{equation*}
\frac{\sqrt{n}g'(\widehat{a}(\widehat\tau)-\alpha_0)}{\sqrt{g'\boldsymbol{\Psi}\left(\widehat{\tau}\right)g}}\overset{d}{\to}N\left(0,1\right),
\end{equation*}
uniformly in $\alpha_0\in\mathcal{B}_{2p}(r, s_r)$, where

\noindent $\boldsymbol{\Psi}\left(\tau\right):=$
\scalebox{0.70}{\parbox{0.1\linewidth}{\begin{equation*}
\begin{aligned}
\begin{bmatrix}\widetilde{\bm{Z}}(\tau)^{-2}\widetilde{\boldsymbol{\Omega}}_{p,n}\widetilde{\bm{Z}}(\tau)^{-2} & \widetilde{\bm{Z}}(\tau)^{-2}\overline{\boldsymbol{\Omega}}_{p,n}\bm{Z}(\tau)^{-2}-\widetilde{\bm{Z}}(\tau)^{-2}\widetilde{\boldsymbol{\Omega}}_{p,n}\widetilde{\bm{Z}}(\tau)^{-2}\\
\widetilde{\bm{Z}}(\tau)^{-2}\overline{\boldsymbol{\Omega}}_{p,n}\bm{Z}(\tau)^{-2}-\widetilde{\bm{Z}}(\tau)^{-2}\widetilde{\boldsymbol{\Omega}}_{p,n}\widetilde{\bm{Z}}(\tau)^{-2}& \bm{Z}(\tau)^{-2}\boldsymbol{\Omega}_{p,n}\bm{Z}(\tau)^{-2}+\widetilde{\bm{Z}}(\tau)^{-2}\widetilde{\boldsymbol{\Omega}}_{p,n}\widetilde{\bm{Z}}(\tau)^{-2}-2\widetilde{\bm{Z}}(\tau)^{-2}\overline{\boldsymbol{\Omega}}_{p,n}\bm{Z}(\tau)^{-2}\end{bmatrix}.
\end{aligned}
\end{equation*}}}
\end{thm}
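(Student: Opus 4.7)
My plan is to generalize the argument of Theorem \ref{thm3} to time series by replacing the independent CLT with a CLT for NED processes, while carefully tracking how the estimated threshold $\widehat{\tau}$ and the block structure of $\widehat{\bm{\Theta}}(\widehat\tau)$ enter the limit. Starting from the decompositions of Section \ref{dbLASSO}: in the no-threshold case ($\delta_0=0$), $\sqrt{n}g'(\widehat{a}(\widehat{\tau})-\alpha_{0}) = g'\widehat{\bm{\Theta}}(\widehat{\tau})\bm{X}(\widehat{\tau})'U/\sqrt{n} - g'\Delta(\widehat{\tau})$, and in the fixed-threshold case there is the additional remainder $R_n := g'\widehat{\bm{\Theta}}(\widehat{\tau})(\bm{X}(\widehat{\tau})'\bm{X}(\tau_0)-\bm{X}(\widehat{\tau})'\bm{X}(\widehat{\tau}))\alpha_0/\sqrt{n}$. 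The strategy is to show $g'\Delta(\widehat\tau) = o_p(1)$ and $R_n = o_p(1)$, so that both cases reduce to the same leading stochastic term $L_n := g'\widehat{\bm{\Theta}}(\widehat{\tau})\bm{X}(\widehat{\tau})'U/\sqrt{n}$, and then to apply a NED CLT to $L_n$.

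For the bias term I would bound $|g'\Delta(\widehat\tau)| \leq |g|_1 \cdot \|\widehat{\bm{\Theta}}(\widehat\tau)\widehat{\bm{\Sigma}}(\widehat\tau) - I_{2p}\|_\infty \cdot \sqrt{n}|\widehat\alpha(\widehat\tau) - \alpha_0|_1$. The KKT conditions for the nodewise regressions constructing $\widehat{\bm{A}}(\widehat\tau)$ and $\widehat{\bm{B}}(\widehat\tau)$, combined with the oracle rates Theorem \ref{thm:time:oraine} produces when applied to those nodewise regressions under Assumption \ref{time:ass:nodewise}, give $\|\widehat{\bm\Theta}(\widehat\tau)\widehat{\bm\Sigma}(\widehat\tau) - I_{2p}\|_\infty = O_p(\lambda_{\max})$. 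The same theorem applied to the main regression gives $|\widehat\alpha(\widehat\tau)-\alpha_0|_1 = O_p(\lambda_{\min}^{1-r}s_{r,\max})$. Since $|g|_1\leq\sqrt{h}$ with $h$ bounded, the imposed rate $\lambda_{\max}^2 \leq (\ln\ln n)\lambda_{\min}^r[\sqrt{n}s_{r,\max}]^{-1}$ forces the product to be $o_p(1)$. For $R_n$, I would rewrite $\bm{X}(\widehat\tau)'\bm{X}(\tau_0) - \bm{X}(\widehat\tau)'\bm{X}(\widehat\tau)$ in terms of the indicator differences $\mathbf{1}\{Q_i<\tau_0\}-\mathbf{1}\{Q_i<\widehat\tau\}$, then apply Assumption \ref{A-smoothness} with $\eta = C\lambda^{2-r}s_r$ (the non-asymptotic rate for $|\widehat\tau-\tau_0|$ from Theorem \ref{thm:time:oraine}), using $|\alpha_0|_1$ bounded by Assumption \ref{ass:sparsity} and $\|\widehat{\bm\Theta}(\widehat\tau)\|_{l_1}$ controlled by the nodewise oracle inequalities.

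For the leading term I would exploit the block form of $\widehat{\bm\Theta}$ to decompose $\widehat{\bm\Theta}(\widehat\tau)\bm{X}(\widehat\tau)'U$ into stacked blocks $\widehat{\bm B}(\widehat\tau)\widetilde X(\widehat\tau)'U$ and $\widehat{\bm A}(\widehat\tau)X(\widehat\tau)'U - \widehat{\bm B}(\widehat\tau)\widetilde X(\widehat\tau)'U$. Using $\widehat{\bm A}(\widehat\tau)=\widehat{\bm Z}(\widehat\tau)^{-2}\widehat{\bm C}(\widehat\tau)$ and $\widehat{\bm B}(\widehat\tau)=\widehat{\widetilde{\bm Z}}(\widehat\tau)^{-2}\widehat{\widetilde{\bm C}}(\widehat\tau)$, I would replace $\widehat{\bm C}(\widehat\tau)X(\widehat\tau)'U/\sqrt n$ by $n^{-1/2}\sum_i\boldsymbol{w}_i$ and $\widehat{\widetilde{\bm C}}(\widehat\tau)\widetilde X(\widehat\tau)'U/\sqrt n$ by $n^{-1/2}\sum_i\widetilde{\boldsymbol{w}}_i$, with remainders bounded by $\max_j|\widehat\gamma_j - \gamma_{0,j}|_1$ times $\|\bm X(\tau)'U/n\|_\infty$. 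The former are controlled by the nodewise oracle bounds, the latter by a uniform-in-$\tau$ NED maximal inequality analogous to Lemma \ref{lem:time:lam}. Next I would replace $\widehat{\bm Z}(\widehat\tau)^{-2}$ and $\widehat{\widetilde{\bm Z}}(\widehat\tau)^{-2}$ by the population weights at $\tau_0$, justified by consistency of $\widehat\tau$ combined with the conditional-moment continuity in Assumption \ref{time:dgp}(iv). After these approximations $L_n = g'\bm\Theta(\tau_0)\cdot n^{-1/2}\sum_i\boldsymbol{v}_i + o_p(1)$, where $\boldsymbol{v}_i$ is a $2p$-dimensional NED random vector built from $\boldsymbol{w}_i$ and $\widetilde{\boldsymbol{w}}_i$. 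Since $h$ is bounded, the Cram\'er--Wold device reduces asymptotic normality of $g'\bm\Theta(\tau_0)n^{-1/2}\sum_i\boldsymbol{v}_i$ to a univariate CLT for a linear combination of NED sums, for which I would invoke the NED CLT used in \cite{ADAMEK20231114}. Direct computation of the long-run variance using $\boldsymbol{\Omega}_{p,n}$, $\widetilde{\boldsymbol{\Omega}}_{p,n}$, and $\overline{\boldsymbol{\Omega}}_{p,n}$ recovers exactly $\boldsymbol{\Psi}(\tau_0)^2$; continuity of $\boldsymbol{\Psi}$ in $\tau$ plus $\widehat\tau\to\tau_0$ then permit replacing $\tau_0$ by $\widehat\tau$ in the denominator.

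The hardest step will be assembling uniform-in-$\tau$ control on the empirical process objects $\|\widehat{\bm\Theta}(\tau)\widehat{\bm\Sigma}(\tau)-I_{2p}\|_\infty$, $\|\bm X(\tau)'U/n\|_\infty$, and on the nodewise Lasso remainders, since the Gaussian-approximation concentration tools of \cite{chernozhukov2014gaussian} that drove the cross-sectional Theorem \ref{thm3} are unavailable under NED dependence. I would address this by combining the NED maximal inequality from Lemma \ref{lem:time:lam}, the discontinuity-plus-smoothness Assumption \ref{A-smoothness}, and the compatibility condition in Assumption \ref{ass:compatibility}. Uniformity over $\alpha_0 \in \mathcal{B}_{2p}(r,s_r)$ then follows because every constant in the rate bounds depends on $\alpha_0$ only through $s_r$ and the uniform moment bounds in Assumption \ref{time:dgp}(i).
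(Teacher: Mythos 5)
Your proposal follows essentially the same route as the paper's own (very terse) proof, which simply disposes of the extra remainder $g'\widehat{\bm{\Theta}}(\widehat{\tau})(\bm {X}(\widehat{\tau})'\bm {X}(\tau_0)-\bm {X}(\widehat{\tau})'\bm {X}(\widehat{\tau}))\alpha_0 /n^{1/2}$ via Lemma \ref{suptau0} under the condition $s_{r,max}^{3/2}\log p/\sqrt{n}\rightarrow 0$ and then invokes the combination of the proof of Theorem \ref{thm3} with that of Theorem 2 in \cite{ADAMEK20231114}; your treatment of the bias term, the threshold remainder, the nodewise block approximation, and the NED CLT with the long-run variance $\boldsymbol{\Psi}$ is exactly what that combination entails, spelled out in more detail than the paper provides. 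The only point to watch is that in the no-threshold-effect case $\widehat{\tau}$ has no population limit, so the replacement of $\widehat{\bm Z}(\widehat\tau)^{-2}$ by population weights must be argued uniformly over $\tau\in\mathbb{T}$ rather than "by consistency of $\widehat\tau$," which your final paragraph on uniform-in-$\tau$ control already accommodates.
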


\begin{rem}
We establish the uniform asymptotic normality of the debiased Lasso estimator based on a finite number of parameters, a common consideration in time series inference. We can include an increasing number of tested parameters at the cost of assuming an $\alpha$-mixing process instead of the NED framework. \footnote{See Section 4.3 in \cite{ADAMEK20231114} for further discussion.} 
\end{rem}

To estimate the asymptotic variance, we consider the long-run variance kernel estimator, as in \cite{ADAMEK20231114},
$\widehat{\boldsymbol{\Omega}}=\widehat{\boldsymbol{\Xi}}(0)+\sum\limits_{l=1}^{\widehat{k}_n-1} K\left(\frac{l}{\widehat{k}_n}\right) \left(\widehat{\boldsymbol{\Xi}}(l)+\widehat{\boldsymbol{\Xi}}^\prime(l)\right),$
$\widehat{\widetilde{\boldsymbol{\Omega}}}=\widehat{\widetilde{\boldsymbol{\Xi}}}(0)+\sum\limits_{l=1}^{\widetilde{k}_n-1} K\left(\frac{l}{\widetilde{k}_n}\right) \left(\widehat{\widetilde{\boldsymbol{\Xi}}}(l)+\widehat{\widetilde{\boldsymbol{\Xi}}}^\prime(l)\right),$ and $\widehat{\overline{\boldsymbol{\Omega}}}=\widehat{\overline{\boldsymbol{\Xi}}}(0)+\sum\limits_{l=1}^{\overline{k}_n-1} K\left(\frac{l}{\overline{k}_n}\right) \left(\widehat{\overline{\boldsymbol{\Xi}}}(l)+\widehat{\overline{\boldsymbol{\Xi}}}^\prime(l)\right),$
where $\widehat{\boldsymbol{\Xi}}(l)=\frac{1}{n-l}\sum\limits_{i=l+1}^{n}\widehat{\boldsymbol{w}}_i \widehat{\boldsymbol{w}}_{i-l}^\prime$ with 
$\widehat{w}_{i}^{(j)}=\widehat{v}_{i}^{(j)}\widehat{u}_i,$ $\widehat{\widetilde{\boldsymbol{\Xi}}}(l)=\frac{1}{n-l}\sum\limits_{i=l+1}^{n}\widehat{\widetilde{\boldsymbol{w}}}_i \widehat{\widetilde{\boldsymbol{w}}}_{i-l}^\prime$ with 
$\widehat{\widetilde{w}}_{i}^{(j)}=\widehat{\widetilde{v}}_{i}^{(j)}\widehat{u}_i,$ and $\widehat{\overline{\boldsymbol{\Xi}}}(l)=\frac{1}{n-l}\sum\limits_{i=l+1}^{n}\widehat{\boldsymbol{w}}_i \widehat{\widetilde{\boldsymbol{w}}}_{i-l}^\prime,$ the kernel $K(\cdot)$ can be taken as the Bartlett kernel $K(l/k_n) =  \left(1-\frac{l}{k_n}\right)$ (\cite{NeweyWest87}) and the bandwidths $k_n,$ $\widetilde{k}_n$ and $\overline{k}_n$ should increase with the sample size at an appropriate rate. Define $k_n = \max\left\{\widehat{k}_n, \widetilde{k}_n, \overline{k}_n\right\}.$

\begin{thm}\label{time:thm:LRVconsistency} 
Take $\widehat{\boldsymbol{\Omega}},$ $\widehat{\widetilde{\boldsymbol{\Omega}}}$ and $\widehat{\overline{\boldsymbol{\Omega}}}$ with  $k_n\to\infty$ as $n\to\infty$, such that
$k_nh^2(\sqrt{n}h^2)^{-\frac{1}{1/d+m/(m-2)}} \\ \rightarrow0$. Suppose that 
\begin{equation*}\begin{split}
&\lambda_{\max}^{2-r}\leq (\log \log n)^{-1} \min\left\lbrace\left[\sqrt{k_n}\sqrt{n}s_{r,\max}\right]^{-1}\right.,\left[k_nh^{1/m}n^{1/m}s_{r,\max}\right]^{-1},\\
&\qquad\qquad\qquad\qquad\quad\quad\left.\left[k_n^2h^{3/m}n^{(3-m)/m}s_{r,\max}\right]^{-1},\left[k_n^{2/3}h^{1/(3m)}n^{(m+1)/3m}s_{r,\max}\right]^{-1}\right\rbrace,\\
&\lambda_{\max}^2\leq (\log \log n)^{-1} \lambda_{\min}^{r}\left[\sqrt{n}h^{2/m}s_{r,\max}\right]^{-1},\text{ and }\\
\end{split}\end{equation*}
\begin{equation*}
\begin{split}
0<r<1:&\quad\lambda_{\min}\geq  (\log \log n)\left[s_{r,\max}\left(\frac{(hp)^{\left(\frac{2}{d}+\frac{2}{m-1}\right)}}{\sqrt{n}}\right)^{\frac{1}{\left(\frac{1}{d}+\frac{m}{m-1}\right)}}\right]^{\frac{1}{r}},\\
r=0:&\quad s_{0,\max}\leq (\log \log n)^{-1}\left[\frac{\sqrt{n}}{(hp)^{\left(\frac{2}{d}+\frac{2}{m-1}\right)}}\right]^{\frac{1}{\left(\frac{1}{d}+\frac{m}{m-1}\right)}},\quad \lambda_{\min}\geq (\log \log n)\frac{(hp)^{1/m}}{\sqrt{n}},
\end{split}
\end{equation*}

\noindent and that Assumptions \ref{A-discontinuity}, \ref{A-smoothness} and \ref{time:dgp} to \ref{time:ass:nodewise} hold, then, we have
\begin{equation}
\sup_{\alpha^0\in\mathcal{A}^{({1})}_{2p}(r,s_r)}\left|g'\widehat{\boldsymbol{\Psi}}\left(\widehat{\tau}\right)g  - g'\boldsymbol{\Psi}\left(\widehat{\tau}\right)g\right|_1=o_p(1),
\end{equation}

\begin{equation}
 \sup_{\alpha^0\in\mathcal{A}^{({2})}_{2p}(r,s_r)}\left| g'\widehat{\boldsymbol{\Psi}}\left(\widehat{\tau}\right)g - g'\boldsymbol{\Psi}\left(\tau_0\right)g\right|_1=o_p(1), 
\end{equation}
where
\noindent $\widehat{\boldsymbol{\Psi}}\left(\widehat{\tau}\right)=$
\scalebox{0.70}{\parbox{0.1\linewidth}{\begin{equation*}
\begin{aligned}\begin{bmatrix}\widehat{\widetilde{\bm{Z}}}(\widehat{\tau})^{-2}\widehat{\widetilde{\boldsymbol{\Omega}}}\widehat{\widetilde{\bm{Z}}}(\widehat{\tau})^{-2} & \widehat{\widetilde{\bm{Z}}}(\tau)^{-2}\widehat{\overline{\boldsymbol{\Omega}}}\widehat{\bm{Z}}(\tau)^{-2}-\widehat{\widetilde{\bm{Z}}}(\widehat{\tau})^{-2}\widehat{\widetilde{\boldsymbol{\Omega}}}\widehat{\widetilde{\bm{Z}}}(\widehat{\tau})^{-2}\\
\widehat{\widetilde{\bm{Z}}}(\tau)^{-2}\widehat{\overline{\boldsymbol{\Omega}}}\widehat{\bm{Z}}(\tau)^{-2}-\widehat{\widetilde{\bm{Z}}}(\widehat{\tau})^{-2}\widehat{\widetilde{\boldsymbol{\Omega}}}\widehat{\widetilde{\bm{Z}}}(\widehat{\tau})^{-2}& \widehat{\bm{Z}}(\widehat{\tau})^{-2}\widehat{\boldsymbol{\Omega}}\widehat{\bm{Z}}(\widehat{\tau})^{-2}+\widehat{\widetilde{\bm{Z}}}(\widehat{\tau})^{-2}\widehat{\widetilde{\boldsymbol{\Omega}}}\widehat{\widetilde{\bm{Z}}}(\widehat{\tau})^{-2}-2\widehat{\widetilde{\bm{Z}}}(\tau)^{-2}\widehat{\overline{\boldsymbol{\Omega}}}\widehat{\bm{Z}}(\tau)^{-2}\end{bmatrix}.
\end{aligned}
\end{equation*}}}
\end{thm}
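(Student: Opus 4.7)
The plan is to decompose the error $\widehat{\boldsymbol{\Psi}}(\widehat\tau) - \boldsymbol{\Psi}(\cdot)$ into three pieces and control each separately: (a) the error in estimating the diagonal normalizations $\bm{Z}(\cdot)^{-2}$ and $\widetilde{\bm{Z}}(\cdot)^{-2}$ via nodewise Lasso; (b) the error of the kernel long-run variance estimators $\widehat{\boldsymbol{\Omega}}$, $\widehat{\widetilde{\boldsymbol{\Omega}}}$, $\widehat{\overline{\boldsymbol{\Omega}}}$ relative to $\boldsymbol{\Omega}_{p,n}$, $\widetilde{\boldsymbol{\Omega}}_{p,n}$, $\overline{\boldsymbol{\Omega}}_{p,n}$; and (c) in Case II, the error from evaluating the population object at $\widehat\tau$ rather than $\tau_0$. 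By the $2\times 2$ block structure of $\boldsymbol{\Psi}$, submultiplicativity of induced matrix norms, and the triangle inequality, it suffices to bound each piece in the appropriate entrywise or $\ell_1$ norm and combine.

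For (a), I would apply Theorem \ref{thm:time:oraine} to each nodewise regression $X^{(j)}(\tau)=X^{(-j)}(\tau)'\gamma_{0,j}(\tau)+\upsilon^{(j)}$ and $\widetilde X^{(j)}(\tau)=\widetilde X^{(-j)}(\tau)'\widetilde\gamma_{0,j}(\tau)+\widetilde\upsilon^{(j)}$, valid under Assumption \ref{time:ass:nodewise}, to obtain $\max_j|\widehat\gamma_j(\widehat\tau)-\gamma_{0,j}(\widehat\tau)|_1 = O_p(\lambda_{\max}^{1-r}s_{r,\max})$ and the analogous bound for the tilde version. A standard empirical-process bound for the quadratic form $\|X^{(j)}(\widehat\tau)-X^{(-j)}(\widehat\tau)\widehat\gamma_j(\widehat\tau)\|_n^2$, together with the KKT penalty term $\lambda_{\mathrm{node}}|\bm{\widehat\Gamma}_j(\widehat\tau)\widehat\gamma_j(\widehat\tau)|_1$, yields $\max_j|\widehat z_j(\widehat\tau)^2-z_j(\widehat\tau)^2|=o_p(1)$. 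Because the eigenvalue hypothesis bounds $z_j(\widehat\tau)^2$ away from zero uniformly, inversion preserves $o_p(1)$ rates entrywise.

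For (b), I would adapt the HAC-kernel consistency argument used for Theorem 5 of \cite{ADAMEK20231114}. The rate conditions in the hypothesis are the precise combinations needed to (i) replace the plug-in $\widehat u_i,\widehat v_i^{(j)},\widehat{\widetilde v}_i^{(j)}$ by their population counterparts at cost $o_p(1)$ in the entrywise $\ell_\infty$ norm (using Theorem \ref{thm:time:oraine} and the nodewise oracle bounds), and (ii) apply the bias-variance decomposition for the Bartlett-kernel autocovariance estimator under the NED/mixing structure of Assumption \ref{time:dgp}(ii), with bandwidth $k_n$ growing at a rate compatible with $k_nh^2(\sqrt n h^2)^{-1/(1/d+m/(m-2))}\to 0$. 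The cross-long-run-variance $\widehat{\overline{\boldsymbol{\Omega}}}$ is handled identically because $\overline{\boldsymbol{\Xi}}(l)$ inherits the NED and moment structure of the component processes. Combining (a) and (b) via block-matrix multiplication and the triangle inequality yields the Case I claim, since the target $\boldsymbol{\Psi}(\widehat\tau)$ is evaluated at the same argument $\widehat\tau$.

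For (c), in Case II I use $|\widehat\tau-\tau_0|=O_p(\lambda^{2-r}s_r)=o_p(1)$ from Theorem \ref{thm:time:oraine}, combined with the smoothness of the design in $\tau$ encoded in Assumption \ref{A-smoothness} (bounding $\sup_{|\tau-\tau_0|<\eta}n^{-1}\sum_i|X_i^{(j)}X_i^{(l)}||\bm 1\{Q_i<\tau_0\}-\bm 1\{Q_i<\tau\}|$ by $C\eta$). This continuity argument transfers to the population nodewise residuals and long-run covariances and shows $|z_j(\widehat\tau)^2-z_j(\tau_0)^2|=o_p(1)$, $|\boldsymbol{\Omega}_{p,n}(\widehat\tau)-\boldsymbol{\Omega}_{p,n}(\tau_0)|_\infty=o_p(1)$, and similarly for the tilde and bar versions. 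The main obstacle will be the simultaneous control in step (b): the plug-in residuals $\widehat u_i$ and $\widehat v_i^{(j)}$ depend on both $\widehat\alpha(\widehat\tau)$ and $\widehat\tau$, so bounding $\widehat{\boldsymbol{\Xi}}(l)-\boldsymbol{\Xi}(l)$ uniformly over $l=0,\ldots,k_n-1$ requires careful bookkeeping of the cross products $\widehat w_i\widehat w_{i-l}'-w_iw_{i-l}'$, and each cross term must be controlled uniformly in $l$ under the stated NED and rate conditions. Once these entrywise bounds are assembled, they combine via $|g|_1\leq\sqrt h\leq\sqrt C$ (since $h$ is finite here) to give the stated $|\cdot|_1$ consistency of the quadratic form $\widehat{\boldsymbol{\Psi}}(\widehat\tau)$.
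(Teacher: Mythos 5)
Your proposal is correct and takes essentially the same route as the paper, whose own proof is a one-line deferral: "combine the proof of Theorem \ref{thm3} with that of Theorem 3 in \cite{ADAMEK20231114}." Your three-part decomposition --- (a) nodewise normalization errors via the oracle inequalities of Theorem \ref{thm:time:oraine}, (b) HAC/Bartlett long-run variance consistency following \cite{ADAMEK20231114}, and (c) transferring from $\widehat\tau$ to $\tau_0$ in Case II via the super-consistency $|\widehat\tau-\tau_0|=O_p(\lambda^{2-r}s_r)$ together with Assumption \ref{A-smoothness} --- is exactly what that combination amounts to (mirroring Lemmas \ref{thm3s42}--\ref{thm3s4} and \ref{theta0}--\ref{suptau0} in the cross-sectional case), and in fact supplies more detail than the paper itself does.
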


We provide a uniformly consistent covariance matrix estimator in the cases with a fixed threshold effect and without the threshold effect in Theorem \ref{time:thm:LRVconsistency}. Similar to \eqref{asymcov} and  \eqref{asymcov2}, there is a slight difference between the limits of the two asymptotic variances, since there is a true value for the threshold parameter in the case of a fixed threshold effect.

\begin{rem}
\cite{10.1093/jjfinec/nbac023} provides faster convergence rates of the heteroskedasticity and
autocorrelation consistent estimator.
\end{rem}

\begin{thm}\label{cor:usefulResult}
Suppose that Assumptions \ref{A-discontinuity}, \ref{A-smoothness} and \ref{time:dgp} to \ref{time:ass:nodewise} hold, that$s_{r,max}^{3/2}log p/\sqrt{n}\rightarrow 0,$ that the smallest eigenvalues of $\boldsymbol{\Omega}_{p,n},$ $\overline{\boldsymbol{\Omega}}_{p,n}$ and that $\widetilde{\boldsymbol{\Omega}}_{p,n}$ are bounded away from 0, and $k_nn^{-\frac{1}{2/d+2m/(m-2)}}\rightarrow0$ for some $k_n\to\infty$.
Further, assume that $\lambda\sim\lambda_{\max}\sim\lambda_{\min}$, and that
\begin{equation*}
\begin{split}
0<r<1:&\quad (\log \log n)^{-1} s_{r,\max}^{1/r}\left[\frac{p^{\left(\frac{2}{d}+\frac{2}{m-1}\right)}}{\sqrt{n}}\right]^{\frac{1}{r\left(\frac{1}{d}+\frac{m}{m-1}\right)}}\leq\lambda\leq\ \log \log n \left[k_n^2\sqrt{n}s_{r,\max}\right]^{-1/(2-r)},\\
r=0:&\quad\ (\log \log n)^{-1} \frac{p^{1/m}}{\sqrt{n}}\leq\lambda\leq \log \log n \left[k_n^2\sqrt{n}s_{0,\max}\right]^{-1/2}.
\end{split}
\end{equation*}
Assume that $k_n^rs_{r,\max}p^{\left(2-r\right)\left(\frac{d+m-1}{dm+m-1}\right)}n^{\frac{1}{4}\left(r-\frac{d(m-1)(2-r)}{dm+m-1}\right)}\rightarrow0$, and that $k_n^2s_{0,\max}\frac{p^{2/m}}{\sqrt{n}}\rightarrow0$  if $r=0$.
Then, we have

\begin{equation*}
    \begin{aligned}
 &\sup_{t\in\mathbb{R}} \sup_{\alpha_0 \in \mathcal{B}_{2p}(r,s_r)} \left|P\left(\frac{\sqrt{n}g'(\widehat{a}(\widehat\tau)-\alpha_0)}{\sqrt{g'\widehat{\boldsymbol{\Psi}}\left(\widehat{\tau}\right)g}}\leq t\right)-\Phi(t)\right| \rightarrow 0.       
    \end{aligned}
\end{equation*}
\end{thm}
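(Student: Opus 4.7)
The plan is to piece together Theorem \ref{time:thm:CLT} (the uniform CLT with the infeasible standardizer $\boldsymbol{\Psi}(\widehat\tau)$) and Theorem \ref{time:thm:LRVconsistency} (uniform consistency of the feasible $\widehat{\boldsymbol{\Psi}}(\widehat\tau)$), and then upgrade pointwise convergence in $t$ to uniform convergence via a Polya-type argument.

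First I would check that the present hypotheses imply the rate conditions of Theorems \ref{time:thm:CLT} and \ref{time:thm:LRVconsistency}. Because here we impose the single rate $\lambda\sim\lambda_{\min}\sim\lambda_{\max}$, the compound conditions on $(\lambda_{\min},\lambda_{\max})$ appearing in those two theorems collapse to the two-sided inequality on $\lambda$ assumed here: the stated upper bound $\lambda\leq\ln\ln n\,[k_n^2\sqrt n s_{r,\max}]^{-1/(2-r)}$ delivers both $\lambda_{\max}^{2-r}\leq(\ln\ln n)[\sqrt n s_{r,\max}]^{-1}$ required in Theorem \ref{time:thm:CLT} and the sharper $h^{2/m}$-refined upper bound required in Theorem \ref{time:thm:LRVconsistency} (the extra factors of $h$ are harmless because $|H|=h<C$). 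The displayed lower bound on $\lambda$ and the bandwidth restriction $k_n n^{-1/(2/d+2m/(m-2))}\to 0$ match the remaining hypotheses.

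Next I would write
\begin{equation*}
\frac{\sqrt n\,g'(\widehat a(\widehat\tau)-\alpha_0)}{\sqrt{\widehat{\boldsymbol{\Psi}}(\widehat\tau)}}
=\frac{\sqrt n\,g'(\widehat a(\widehat\tau)-\alpha_0)}{\sqrt{\boldsymbol{\Psi}(\widehat\tau)}}\cdot\sqrt{\frac{\boldsymbol{\Psi}(\widehat\tau)}{\widehat{\boldsymbol{\Psi}}(\widehat\tau)}}.
\end{equation*}
By Theorem \ref{time:thm:CLT} the first factor converges in distribution to $N(0,1)$ uniformly in $\alpha_0\in\mathcal{B}_{2p}(r,s_r)$. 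For the ratio in the second factor I would split into the two sub-classes. On $\mathcal{A}^{({1})}_{2p}(r,s_r)$, Theorem \ref{time:thm:LRVconsistency} directly gives $\widehat{\boldsymbol{\Psi}}(\widehat\tau)-\boldsymbol{\Psi}(\widehat\tau)=o_p(1)$ uniformly. On $\mathcal{A}^{({2})}_{2p}(r,s_r)$ the same theorem gives $\widehat{\boldsymbol{\Psi}}(\widehat\tau)-\boldsymbol{\Psi}(\tau_0)=o_p(1)$, so I would combine this with continuity of $\tau\mapsto\boldsymbol{\Psi}(\tau)$ on a neighborhood of $\tau_0$ and the bound $|\widehat\tau-\tau_0|\leq C\lambda^{2-r}s_r=o_p(1)$ from Theorem \ref{thm:time:oraine} to conclude $\boldsymbol{\Psi}(\widehat\tau)-\boldsymbol{\Psi}(\tau_0)=o_p(1)$, hence $\widehat{\boldsymbol{\Psi}}(\widehat\tau)/\boldsymbol{\Psi}(\widehat\tau)\to_p 1$ uniformly in either regime. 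Because the smallest eigenvalues of $\boldsymbol{\Omega}_{p,n},\widetilde{\boldsymbol{\Omega}}_{p,n},\overline{\boldsymbol{\Omega}}_{p,n}$ are bounded away from zero, $\boldsymbol{\Psi}(\tau)$ is bounded away from zero uniformly in $\tau\in\mathbb T$, so the square-rooted ratio tends in probability to $1$ uniformly in $\alpha_0\in\mathcal{B}_{2p}(r,s_r)$. A uniform Slutsky argument then yields, for each fixed $t\in\mathbb R$,
\begin{equation*}
\sup_{\alpha_0\in\mathcal{B}_{2p}(r,s_r)}\left|P\left(\frac{\sqrt n\,g'(\widehat a(\widehat\tau)-\alpha_0)}{\sqrt{\widehat{\boldsymbol{\Psi}}(\widehat\tau)}}\leq t\right)-\Phi(t)\right|\to 0.
\end{equation*}
To turn this $\alpha_0$-uniform pointwise-in-$t$ conclusion into the doubly uniform statement of the theorem, I would apply Polya's argument: since $\Phi$ is continuous with limits $0$ and $1$ at $\mp\infty$, fix $\epsilon>0$, truncate to $[-T,T]$ with $\Phi(-T)<\epsilon$ and $1-\Phi(T)<\epsilon$, cover $[-T,T]$ by a finite grid $t_1<\cdots<t_K$ with $\max_k|\Phi(t_{k+1})-\Phi(t_k)|<\epsilon$, apply the $\alpha_0$-uniform pointwise convergence at each $t_k$ (and at $\pm T$ for tail control), and interpolate between grid points using monotonicity of both CDFs.

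The main obstacle I expect is the mismatch in scaling: Theorem \ref{time:thm:CLT} standardizes by $\boldsymbol{\Psi}(\widehat\tau)$ while Theorem \ref{time:thm:LRVconsistency} certifies convergence of $\widehat{\boldsymbol{\Psi}}(\widehat\tau)$ to \emph{two different} limits depending on whether $\delta_0=0$ or $\delta_0\neq 0$. Reconciling these in the fixed-threshold regime requires the continuity of $\boldsymbol{\Psi}(\cdot)$ in $\tau$ together with the super-consistency of $\widehat\tau$ from Theorem \ref{thm:time:oraine}, and every estimate in the Slutsky step must be verified to hold uniformly in $\alpha_0\in\mathcal{B}_{2p}(r,s_r)$. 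Once this reconciliation is in place, the remainder is standard Slutsky and Polya bookkeeping.
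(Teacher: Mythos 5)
Your proposal is correct and follows essentially the same route as the paper, which proves this result by combining the argument of Theorem \ref{thm5} (the Slutsky-type decomposition into the infeasibly-standardized statistic times the variance ratio, with the split over $\mathcal{A}^{(1)}_{2p}(r,s_r)$ and $\mathcal{A}^{(2)}_{2p}(r,s_r)$ and the reconciliation of $\boldsymbol{\Psi}(\widehat\tau)$ with $\boldsymbol{\Psi}(\tau_0)$ via super-consistency of $\widehat\tau$) with Corollary 2 of \cite{ADAMEK20231114}. Your explicit Polya step and the continuity-of-$\boldsymbol{\Psi}$ argument merely spell out details the paper leaves implicit, so no substantive gap remains.
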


Similar to the result in Theorem \ref{thm5}, we show that the convergence of a linear combination of the parameters of the debiased estimator $\widehat{a}(\widehat{\tau})$ to the standard normal distribution is uniformly valid over the $\ell_r$-ball. This allows researchers to perform uniform inference without specifying whether the specification is a linear or threshold regression.

\subsection{Local Projection Inference}

In this section, we develop the uniform inference theory for the debiased impulse response parameters in the high-dimensional local projection threshold (HDLPT) model. We focus on the following local projection threshold regression: 

\scalebox{0.78}{\parbox{0.1\linewidth}{\begin{align}\label{eq:localprojection}
    Y_{i+h}=\left\{\begin{aligned}&\beta_{h,0} + \phi_{h}x_{i}+ \rho_{h}Y_i  + \boldsymbol{\eta}_h^\prime\bm{w}_{s,i}+\sum_{k=1}^K \boldsymbol{\Delta}_{h,k}^\prime\bm{z}_{t-i}+U_{h,i}, \quad \text{if  $Q_{i} \geq \tau_{0}$}, \\
    &(\beta_{h,0} + \delta_{h,0}) +(\phi_{h}+\delta_{h,x,0})x_{i}+ (\rho_{h}+\delta_{h,y,0})Y_i  + (\boldsymbol{\eta}_h^\prime+\delta_{h,\boldsymbol{\eta},0}^\prime)\bm{w}_{s,i}+\sum_{k=1}^K (\boldsymbol{\Delta}_{h,k}^\prime+\delta_{h,k,\boldsymbol{\Delta},0}^\prime)\bm{z}_{t-i} +U_{h,i}, \\  &\quad \quad \quad \quad \quad \quad \quad \quad \quad \quad \quad \quad \quad \quad \quad \quad \quad\quad\quad\quad\text{if  $Q_{i}<\tau_{0}.$}
\end{aligned}\right.
\end{align}}}

\noindent where $h = 0, 1, \ldots, h_{\max},$ \footnote{We assume that the unknown threshold parameter ($\tau_0$) is the same when estimating the impulse response function across different horizons and that the number of horizons is finite.} $\beta_h =(\beta_{h,0}, \phi_{h},  \rho_{h}, \boldsymbol{\eta}_h', \boldsymbol{\Delta}_{h,k}')'$ represents the projection parameters when the threshold variable is above the threshold point $\tau_0$, while $\beta_{h,0} + \delta_{h,0}, \phi_{h}+\delta_{h,x,0},  \rho_{h}+\delta_{h,y,0}, \boldsymbol{\eta}_h+\delta_{h,\boldsymbol{\eta},0}, \boldsymbol{\Delta}_{h,k}+\delta_{h,k,\boldsymbol{\Delta},0}$ are the projection parameters when the threshold variable is below the threshold point. $U_{h,i}$ is the projection error and $\bm{z}_i=\left(\bm{w}_{s,i}^\prime, Y_i, x_i, \bm{w}_{f,i}^\prime\right)^\prime$ includes the response $Y_i$, the shock variable $x_i,$  and the vectors of control variables consisting of  ``slow" variables $\bm{w}_{s,i}\in\mathbb{R}^{n_s},$ and the ``fast" variables $\bm{w}_{f,i}\in\mathbb{R}^{n_f}$ for identification purposes. We are interested in $\phi_{h}$ and $\phi_{h}+\delta_{h,x,0}$, either of which represents the response at horizon $h$ of $y_i$ after an impulse in $x_i.$  When $\delta_0 = (\delta_{h,x,0},\delta_{h,y,0},\delta_{h,\boldsymbol{\eta},0}^\prime,\delta_{h,k,\boldsymbol{\Delta},0}^\prime)^\prime=0,$ it reduces to the local projection regression, similar to equation (1) in \cite{hdlp}. We focus on a small number of parameters, allowing us to rewrite equation (\ref{eq:localprojection}) as
\begin{equation}\label{local:eq:DGP}
    Y_i=\underset{1\times 2H}{\bm{X}_{\mathcal{H},i}^\prime(\tau_0)}\underset{2H\times 1}{\alpha_{\mathcal{H},0}}+\underset{1\times(2p-2H)}{\bm{X}_{-\mathcal{H},i}^\prime(\tau_0)}\underset{(2p-2H)\times 1}{\alpha_{-\mathcal{H},0}}+U_i,\quad i=1,\dots,n,
\end{equation}
and furthermore,
\begin{equation*}
\bm{Y}=\bm{X}_\mathcal{H}(\tau_0)\alpha_{\mathcal{H},0}+\bm{X}_{-\mathcal{H}}(\tau_0)\alpha_{-\mathcal{H},0}+\bm{U}.
\end{equation*}

We now have two groups of parameters. The parameters of interest are $\alpha_{\mathcal{H},0}=(\beta_{H,0},\delta_{H,0}),$ which belong to the first group while the second group includes the parameters for control variables, where $\mathcal{H}$ is the index set representing the $2H$ variables of interest. Without loss of generality, we order the variables in $\bm{X}(\tau) = (\bm{X}_\mathcal{H}(\tau),\bm{X}_\mathcal{-H}(\tau)).$ We then apply the penalization method, as in \cite{hdlp}, penalizing only the parameters for control variables $\alpha_{-\mathcal{H},0}.$ Thus, shrinkage bias does not affect the unpenalized parameters of interest. We suppress the dependence on $h$ since each horizon of the LPs is estimated separately. The Lasso estimator is given as follows:
\begin{equation}\label{time:eq:problem}
     \left(\widehat\alpha(\widehat\tau), \widehat\tau\right)=\left(\left(\widehat{\alpha}_{\mathcal{H}}'(\widehat\tau),\widehat{\alpha}_{-\mathcal{H}}'(\widehat\tau)\right)', \widehat\tau\right)=\argmin_{\alpha\in\mathbb{R}^{2p},\tau \in \mathbb{T}}||\bm{Y}-\bm{X}(\tau)\alpha||_n^2+2\lambda|\bm{D}\alpha|_1, 
\end{equation}
where $\bm{D}$ is an $2p\times 2p$ diagonal matrix with $\bm{D}_{i,i}=0$ for $i\in \mathcal{H}$ and $\bm{D}_{i,i}=1$. 
\footnote{As in Section 2.1 of \cite{hdlp}, the Lasso estimator can be derived by
\begin{equation*}\begin{split}
    \left(\widehat{\alpha}_{-\mathcal{H}}(\widehat\tau),\widehat\tau\right)&=\argmin_{\alpha\in\mathbb{R}^{(2p-2H)},\tau \in \mathbb{T}}\left|\left|\bm{M}_{(\bm{X}_{\mathcal{H}}(\tau))}\bm{Y}-\bm{M}_{(\bm{X}_{\mathcal{H}}(\tau))}\bm{X}_{-\mathcal{H}}(\tau)\alpha\right|\right|_n^2+2\lambda|\alpha|_1,\\
    \widehat\alpha_\mathcal{H}(\widehat\tau)&=\widehat{\bm{\Sigma}}_\mathcal{H}^{-1}\bm{X}_\mathcal{H}(\widehat\tau)^\prime\left(\bm{Y}-\bm{X}_{-\mathcal{H}}(\widehat\tau)\widehat\alpha_{-\mathcal{H}}\right)/n,
\end{split}\end{equation*}
where $\bm{M}_{(\bm{X}_{\mathcal{H}}(\tau))}:=I-\bm{X}_\mathcal{H}(\tau)(\bm{X}_\mathcal{H}(\tau)^\prime\bm{X}_\mathcal{H}(\tau))^{-1}\bm{X}_\mathcal{H}(\tau)^\prime,$ and $\widehat{\bm{\Sigma}}_\mathcal{H}:=\bm{X}_\mathcal{H}(\tau)^\prime\bm{X}_\mathcal{H}(\tau)/n$.} 
The oracle inequalities are qualitatively the same as those in Theorem \ref{thm:time:oraine}.

Next, we consider the debiased Lasso estimator 
\begin{equation}\label{eq:MDL}
\widehat{a}_{\mathcal{H}}(\widehat\tau)\:=\widehat{\alpha}_{\mathcal{H}}(\widehat\tau)+\widehat{\bm{\Theta}}(\widehat\tau)\bm{X}(\widehat\tau)^\prime(\bm{Y}-\bm{X}(\widehat\tau)\widehat\alpha(\widehat\tau))/n,
\end{equation}
where $\widehat{\bm{\Theta}}(\widehat{\tau})$ is an $2H\times 2p$ submatrix of an approximate inverse of $\widehat{\bm{\Sigma}}(\widehat{\tau}):=\bm{X}(\widehat{\tau})^\prime\bm{X}(\widehat{\tau})/n$. We still use nodewise regression and follow the same process as in Section \ref{sec:time:inf} to obtain $\widehat{\bm{\Theta}}(\widehat\tau).$ We construct 
\begin{equation*}\label{ChatDef} 
\widehat{\boldsymbol{C}}_{H}(\tau):=\left( \begin{array}{cccccc}
1      & -\widehat{\gamma}_{1}^{(2)}(\tau) & \dots & -\widehat{\gamma}_{1}^{(H)}(\tau) & \dots & -\widehat{\gamma}_{1}^{(p)}(\tau)\\
-\widehat{\gamma}_{2}^{(1)} (\tau)     & 1 & \dots & -\widehat{\gamma}_{2}^{(H)}(\tau) & \dots & -\widehat{\gamma}_{2}^{(p)}(\tau)\\
\vdots & \vdots &\ddots & \vdots\\
-\widehat{\gamma}_{H}^{(1)}(\tau)      & -\widehat{\gamma}_{H}^{(2)}(\tau)  & \dots & 1 & \dots & -\widehat{\gamma}_{H}^{(p)}(\tau)
\end{array}\right),
\end{equation*}
and $\widehat{\boldsymbol{Z}}_{H}(\tau)^2:=\text{diag}\left(\widehat{z}_1(\tau)^2,\dots,\widehat{z}_H(\tau)^{2}\right)$, where
$\widehat{z}_j(\tau)^2:= ||X^{(j)}(\tau)-X^{(-j)} (\tau)\widehat{\gamma}_j(\tau)||_n^2 + 2\lambda_j |\widehat{\gamma}_j(\tau)|_1,$
we thus obtain $\widehat {A}_{H}(\tau) = \widehat{\boldsymbol{Z}}_{H}(\tau)^{-2}\widehat{\boldsymbol{C}}_{H}(\tau).$ Similarly, we have $\widehat {B}_{H}(\tau) = \widehat{\widetilde{\boldsymbol{Z}}}_{H}(\tau)^{-2}\widehat{\widetilde{\boldsymbol{C}}}_{H}(\tau).$ Define $\boldsymbol{\Omega}_{H,p,n},$ $\overline{\boldsymbol{\Omega}}_{H,p,n},$ and $\widetilde{\boldsymbol{\Omega}}_{H,p,n}$ as the top-left $H\times H$ submatrices of $\boldsymbol{\Omega}_{p,n},$ $\overline{\boldsymbol{\Omega}}_{p,n}$ and $\widetilde{\boldsymbol{\Omega}}_{p,n},$ respectively.

\begin{thm}\label{time:thm:locInference}

Suppose that Assumptions \ref{A-discontinuity}, \ref{A-smoothness} and \ref{time:dgp} to \ref{time:ass:nodewise} hold, that $H \leq C,$ that $s_{r,max}^{3/2}log p/\sqrt{n}\rightarrow 0,$ that the smallest eigenvalues of $\boldsymbol{\Omega}_{H,p,n},$ $\overline{\boldsymbol{\Omega}}_{H,p,n},$ and that $\widetilde{\boldsymbol{\Omega}}_{H,p,n}$ are bounded away from 0, and $k_nn^{-\frac{1}{2/d+2m/(m-2)}}\rightarrow0$ for some $k_n\to\infty$.
Further, assume that $\lambda\sim\lambda_{\max}\sim\lambda_{\min}$, and that
\begin{equation*}
\begin{split}
0<r<1:&\quad (\log \log n)^{-1} s_{r,\max}^{1/r}\left[\frac{p^{\left(\frac{2}{d}+\frac{2}{m-1}\right)}}{\sqrt{n}}\right]^{\frac{1}{r\left(\frac{1}{d}+\frac{m}{m-1}\right)}}\leq\lambda\leq\ \log \log n \left[k_n^2\sqrt{n}s_{r,\max}\right]^{-1/(2-r)},\\
r=0:&\quad\ (\log \log n)^{-1} \frac{p^{1/m}}{\sqrt{n}}\leq\lambda\leq \log \log n \left[k_n^2\sqrt{n}s_{0,\max}\right]^{-1/2}.
\end{split}
\end{equation*}
Assume that $k_n^rs_{r,\max}p^{\left(2-r\right)\left(\frac{d+m-1}{dm+m-1}\right)}n^{\frac{1}{4}\left(r-\frac{d(m-1)(2-r)}{dm+m-1}\right)}\rightarrow0$, and that $k_n^2s_{0,\max}\frac{p^{2/m}}{\sqrt{n}}\rightarrow0$  if $r=0$.
Then, for $g \in \mathbb{R}^{\mathcal{H}},$ we have
\begin{equation*}
    \begin{aligned}
 \sup_{t\in\mathbb{R}} \sup_{\alpha_0 \in \boldsymbol{B}_{2p}(r,s_r)} \left|P\left(\frac{\sqrt{n}g'(\widehat{a}_{\mathcal{H}}(\widehat\tau)-\alpha_{\mathcal{H},0})}{\sqrt{g'\widehat{\boldsymbol{\Psi}}_{\mathcal{H}}(\widehat{\tau})g}}\leq t\right)-\Phi(t)\right|=o_p(1),   
    \end{aligned}
\end{equation*}
where

\noindent $\widehat{\boldsymbol{\Psi}}_\mathcal{H}\left(\widehat{\tau}\right)=$
\scalebox{0.65}{\parbox{0.1\linewidth}{\begin{equation*}
\begin{aligned}\begin{bmatrix}\widehat{\widetilde{\bm{Z}}}_\mathcal{H}(\widehat{\tau})^{-2}\widehat{\widetilde{\boldsymbol{\Omega}}}_\mathcal{H}\widehat{\widetilde{\bm{Z}}}_\mathcal{H}(\widehat{\tau})^{-2} & \widehat{\widetilde{\bm{Z}}}_\mathcal{H}(\tau)^{-2}\widehat{\overline{\boldsymbol{\Omega}}}_\mathcal{H}\widehat{\bm{Z}}_\mathcal{H}(\tau)^{-2}-\widehat{\widetilde{\bm{Z}}}_\mathcal{H}(\widehat{\tau})^{-2}\widehat{\widetilde{\boldsymbol{\Omega}}}_\mathcal{H}\widehat{\widetilde{\bm{Z}}}_\mathcal{H}(\widehat{\tau})^{-2}\\
\widehat{\widetilde{\bm{Z}}}_\mathcal{H}(\tau)^{-2}\widehat{\overline{\boldsymbol{\Omega}}}_\mathcal{H}\widehat{\bm{Z}}_\mathcal{H}(\tau)^{-2}-\widehat{\widetilde{\bm{Z}}}_\mathcal{H}(\widehat{\tau})^{-2}\widehat{\widetilde{\boldsymbol{\Omega}}}_\mathcal{H}\widehat{\widetilde{\bm{Z}}}_\mathcal{H}(\widehat{\tau})^{-2}& \widehat{\bm{Z}}_\mathcal{H}(\widehat{\tau})^{-2}\widehat{\boldsymbol{\Omega}}_\mathcal{H}\widehat{\bm{Z}}_\mathcal{H}(\widehat{\tau})^{-2}+\widehat{\widetilde{\bm{Z}}}_\mathcal{H}(\widehat{\tau})^{-2}\widehat{\widetilde{\boldsymbol{\Omega}}}_\mathcal{H}\widehat{\widetilde{\bm{Z}}}_\mathcal{H}(\widehat{\tau})^{-2}-2\widehat{\widetilde{\bm{Z}}}_\mathcal{H}(\tau)^{-2}\widehat{\overline{\boldsymbol{\Omega}}}_\mathcal{H}\widehat{\bm{Z}}_\mathcal{H}(\tau)^{-2}\end{bmatrix}.
\end{aligned}
\end{equation*}}}

\end{thm}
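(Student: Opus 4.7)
The plan is to adapt the proof strategy of Theorem \ref{cor:usefulResult} to the local projection setting, exploiting the fact that $H\le C$ is fixed so that the $g$-indexed components behave like a finite-dimensional functional. First I would write out the basic decomposition of $\sqrt{n}\,g'(\widehat a_{\mathcal H}(\widehat\tau)-\alpha_{\mathcal H,0})$ analogous to \eqref{stat2} and \eqref{stat}. Because only the control coefficients $\alpha_{-\mathcal H}$ are penalized in \eqref{time:eq:problem}, plugging \eqref{local:eq:DGP} into \eqref{eq:MDL} yields
\begin{equation*}
\sqrt{n}\,g'(\widehat a_{\mathcal H}(\widehat\tau)-\alpha_{\mathcal H,0}) \;=\; g'\widehat{\bm\Theta}(\widehat\tau)\bm X(\widehat\tau)'U/\sqrt n \;-\; g'\Delta_{\mathcal H}(\widehat\tau) \;+\; g'\widehat{\bm\Theta}(\widehat\tau)\bigl[\bm X(\widehat\tau)'\bm X(\tau_0)-\bm X(\widehat\tau)'\bm X(\widehat\tau)\bigr]\alpha_0/\sqrt n,
\end{equation*}
where $\Delta_{\mathcal H}(\tau)=\sqrt n(\widehat{\bm\Theta}(\tau)\widehat{\bm\Sigma}(\tau)-I)_{\mathcal H,\cdot}\,(\widehat\alpha(\tau)-\alpha_0)$, and the third term vanishes identically in the no-threshold case ($\delta_0=0$).

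Next I would control the three pieces separately. For the remainder $g'\Delta_{\mathcal H}(\widehat\tau)$, I would bound it by $|g|_1\,\|\widehat{\bm\Theta}(\widehat\tau)\widehat{\bm\Sigma}(\widehat\tau)-I\|_\infty\,|\widehat\alpha(\widehat\tau)-\alpha_0|_1\cdot\sqrt n$ and apply the KKT bound on the nodewise residuals (as used for $\widehat{\bm A}(\tau)$ and $\widehat{\bm B}(\tau)$) together with the $\ell_1$-oracle inequality $|\widehat\alpha-\alpha_0|_1\le C\lambda^{1-r}s_r$ from the local-projection analogue of Theorem \ref{thm:time:oraine}; the rate conditions on $\lambda, s_{r,\max}, k_n, p$ in the statement of Theorem \ref{cor:usefulResult} are precisely tailored to make this product $o_p(1)$. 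For the threshold-correction term I would use $\|\bm X(\widehat\tau)'\bm X(\tau_0)-\bm X(\widehat\tau)'\bm X(\widehat\tau)\|_\infty$ bounded via Assumption \ref{A-smoothness}\eqref{xx} by $C|\widehat\tau-\tau_0|$, combined with the super-consistent rate $|\widehat\tau-\tau_0|=O_p(\lambda^{2-r}s_r)$ from Theorem \ref{thm:time:oraine}; multiplied by $|\alpha_0|_1\le s_r^{1/r}$ (weak sparsity) and $|g|_1\le\sqrt H$, this term is also $o_p(1)$ under the stated rate constraints.

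For the leading stochastic term, I would replace $\widehat{\bm\Theta}(\widehat\tau)$ by its population counterpart $\bm\Theta(\widehat\tau)$ using the nodewise consistency of $\widehat{\bm A},\widehat{\bm B}$ from the local-projection analogue of Theorem \ref{thm:time:oraine}, then express $g'\bm\Theta(\widehat\tau)\bm X(\widehat\tau)'U/\sqrt n$ as a fixed-dimensional linear combination of the NED processes $\{\boldsymbol w_i\}$ and $\{\widetilde{\boldsymbol w}_i\}$ that enter $\boldsymbol\Omega_{H,p,n},\widetilde{\boldsymbol\Omega}_{H,p,n},\overline{\boldsymbol\Omega}_{H,p,n}$. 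Invoking a CLT for NED arrays (as in Lemma A.1 of \cite{ADAMEK20231114}) gives asymptotic normality with the limit variance $\boldsymbol\Psi_{\mathcal H}(\widehat\tau)$; Theorem \ref{time:thm:LRVconsistency}, applied to the $\mathcal H$-block, delivers $|\widehat{\boldsymbol\Psi}_{\mathcal H}(\widehat\tau)-\boldsymbol\Psi_{\mathcal H}(\widehat\tau)|=o_p(1)$ uniformly. Slutsky, a Polya-type argument to convert pointwise convergence of CDFs into uniform convergence in $t$, and a union bound over the two sparsity classes $\mathcal A^{(1)}_{2p}(r,s_r),\mathcal A^{(2)}_{2p}(r,s_r)$ finish the proof.

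The main obstacle will be the threshold-correction term: it is absent in the conventional Lasso CLT of \cite{ADAMEK20231114} and couples $|\widehat\tau-\tau_0|$ with $|\alpha_0|_1$ under only weak sparsity. Handling it requires exploiting both Assumption \ref{A-smoothness}\eqref{xx} uniformly in a shrinking neighborhood of $\tau_0$ and the super-consistent rate of $\widehat\tau$ at the same time, which tightens the interplay between $\lambda$, $s_{r,\max}$, and $k_n$. A secondary difficulty is showing that the limit variance is the same in both the no-threshold and fixed-threshold cases up to evaluating $\bm\Theta$ and the long-run covariances at $\widehat\tau$ versus $\tau_0$; this is the local-projection counterpart of \eqref{asymcov}--\eqref{asymcov2} and relies on the continuity of $E[X_iX_i'\mid Q_i=\tau]$ and $E[X_iX_i'U_i^2\mid Q_i=\tau]$ near $\tau_0$ from Assumption \ref{time:dgp}(iv) together with $\widehat\tau\to\tau_0$.
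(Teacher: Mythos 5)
Your proposal is correct and follows essentially the same route the paper takes: the paper itself omits the detailed argument and states that Theorem \ref{time:thm:locInference} follows by combining the proof of Theorem \ref{cor:usefulResult} (decomposition into the leading score term, the KKT remainder $g'\Delta_{\mathcal H}(\widehat\tau)$, and the threshold-correction term, followed by the NED CLT, the long-run variance consistency of Theorem \ref{time:thm:LRVconsistency}, and the uniformity argument of Theorem \ref{thm5} over $\mathcal{A}^{(1)}_{2p}\cup\mathcal{A}^{(2)}_{2p}$) with Theorem 1 of \cite{10.1093/ectj/utae012} for the unpenalized-$\mathcal H$ modification, which is exactly what you sketch. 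One small correction: under Assumption \ref{ass:sparsity} the bound is $|\alpha_0|_1\le C^{1-r}|\alpha_0|_r^r\le C^{1-r}s_r$ (interpolating via $|\alpha_{0j}|=|\alpha_{0j}|^r|\alpha_{0j}|^{1-r}$ and $|\alpha_0|_\infty\le C$), not $s_r^{1/r}$; the correct bound is tighter and is what the stated rate conditions are calibrated against.
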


We also use an autocorrelation robust Newey-West long-run covariance estimator, as in Section \ref{sec:time:inf}.

The uniformly consistent covariance results for the cases with no threshold effect and a fixed threshold effect are similar to those in Theorem \ref{time:thm:LRVconsistency}. The proof of Theorem \ref{time:thm:locInference} follows from the proofs of Theorem \ref{cor:usefulResult}, and Theorem 1 in \cite{hdlp}. Therefore, we omit the detailed proof.

\section{Monte Carlo Simulation and Applications}\label{simemp}

We study the finite sample properties of the proposed debiased Lasso estimator for high-dimensional threshold regression through Monte Carlo experiments. We compare our debiased
threshold Lasso (DTLasso) estimators for threshold regression with those from linear regression in cases with no threshold effect and a fixed threshold effect. Additionally, we apply our method to two empirical applications, one related to the multiple steady states of economic growth by \cite{durlauf1995multiple}, and the other concerning the effect of a military spending news shock on government spending and GDP by \cite{ramey2018government}.

\subsection{Monte Carlo Simulation}\label{simmo}


We first describe the data-generating process. We consider the threshold regression model (\ref{model}), where the rows of the design matrix are i.i.d. realizations of $N(0,\bm{\Sigma}),$ with $\Sigma_{j,k} = 0.5^{|j-k|},$ a Toeplitz structure, and the error terms are $t$ distributed with 10 degrees of freedom. When the threshold variable $Q_i$ is independent of $X_i$, we take $Q_i \sim \text{uniform}(0,1)$. 
We also consider the case where the threshold variable correlates with the covariates. We take $\tau_0=0.5$ unless otherwise specified. We use the grid search method to find $\tau$ from 0.15 to 0.85 by steps of 0.01. Without loss of generality, we assume that $\beta_{0}$ is a $p\times 1$ vector with the first $s_0$ elements being $b$, the remaining $p-s_0$ elements being zeros and that $\delta_0$ is a $p\times 1$ vector with the first $s_0$ elements being 0, the next $s_0$ elements being $b1$, and the remaining elements being zeros. 

To choose the tuning parameters $\lambda$, we use the generalized information criterion (GIC) proposed by \cite{GIC}. We utilize GIC and ten-fold cross-validation to select the tuning parameters $\lambda$. However, according to our simulation results, cross-validation does not significantly enhance the quality of the results, while the processing time is considerably longer. Hence, we select both $\lambda$ and $\lambda_{node}$ based on GIC.

In Figures \ref{w/ toeplitz} and \ref{w/o toeplitz}, we plot the constructed 95\% confidence intervals for the realizations $(n,2p,s_0,b, b1, \rho_{Q,X^{(2)}})=(400,600,15,2,1,0.5)$ and $(n,2p,s_0,b, b1, \rho_{Q,X^{(2)}})=(400,600,15,2,0,0.5)$ for both threshold and linear regression models. DTLasso estimator for threshold regression performs much better than the debiased estimator for linear regression when there is a fixed threshold effect. Even when the threshold effect does not exist, our estimator for the threshold regression still performs comparably to that for the linear regression.

\begin{figure}[ht]
    \centering
    \begin{subfigure}[b]{0.38\linewidth}
        \centering
        \includegraphics[width=\linewidth]{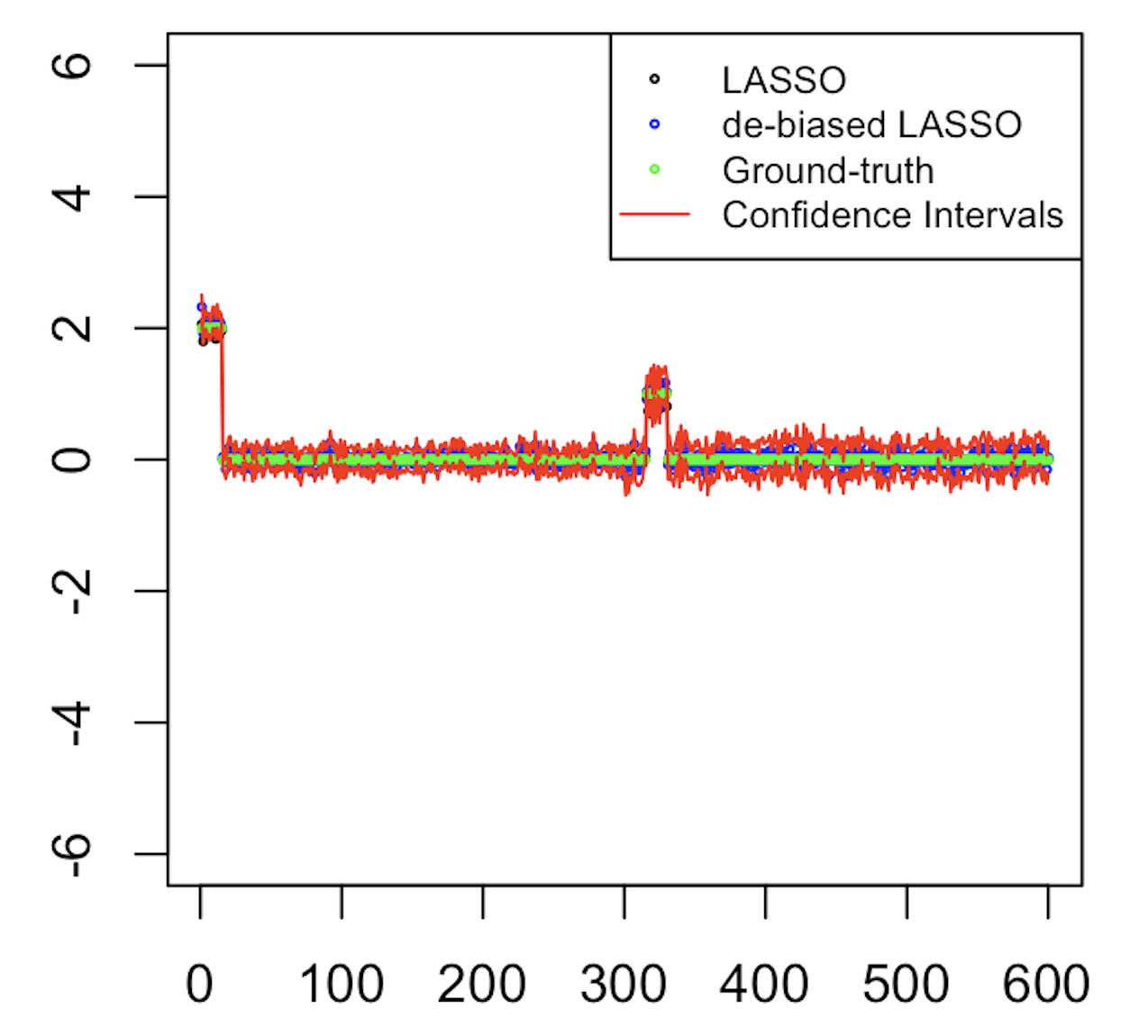} 
        \caption{DTLasso estimator}
        \label{threw} 
    \end{subfigure}
     \hspace{0.05\linewidth}
    \begin{subfigure}[b]{0.38\linewidth}
        \centering
        \includegraphics[width=\linewidth]{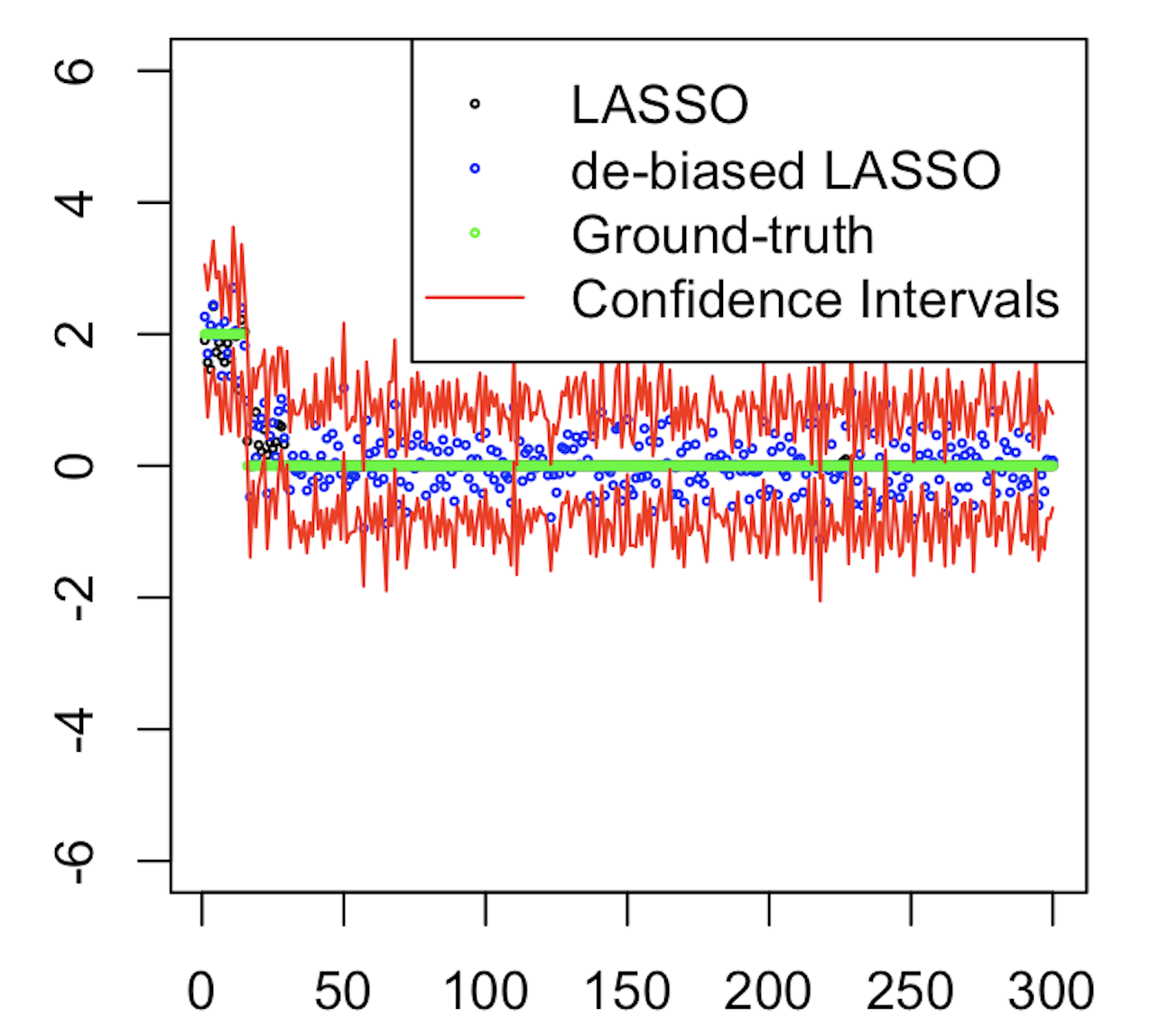} 
        \caption{Debiased linear estimator}
        \label{linw} 
    \end{subfigure}

    \caption{95\% confidence intervals for one realization $(n,2p,s_0,b, b1, \rho_{Q,X^{(2)}})=(400,600,15,2,1, 0.5)$ (with a fixed threshold effect).}
    \label{w/ toeplitz} 
\end{figure}

\begin{figure}[ht]
    \centering
    \begin{subfigure}[b]{0.35\linewidth}
        \centering
        \includegraphics[width=\linewidth]{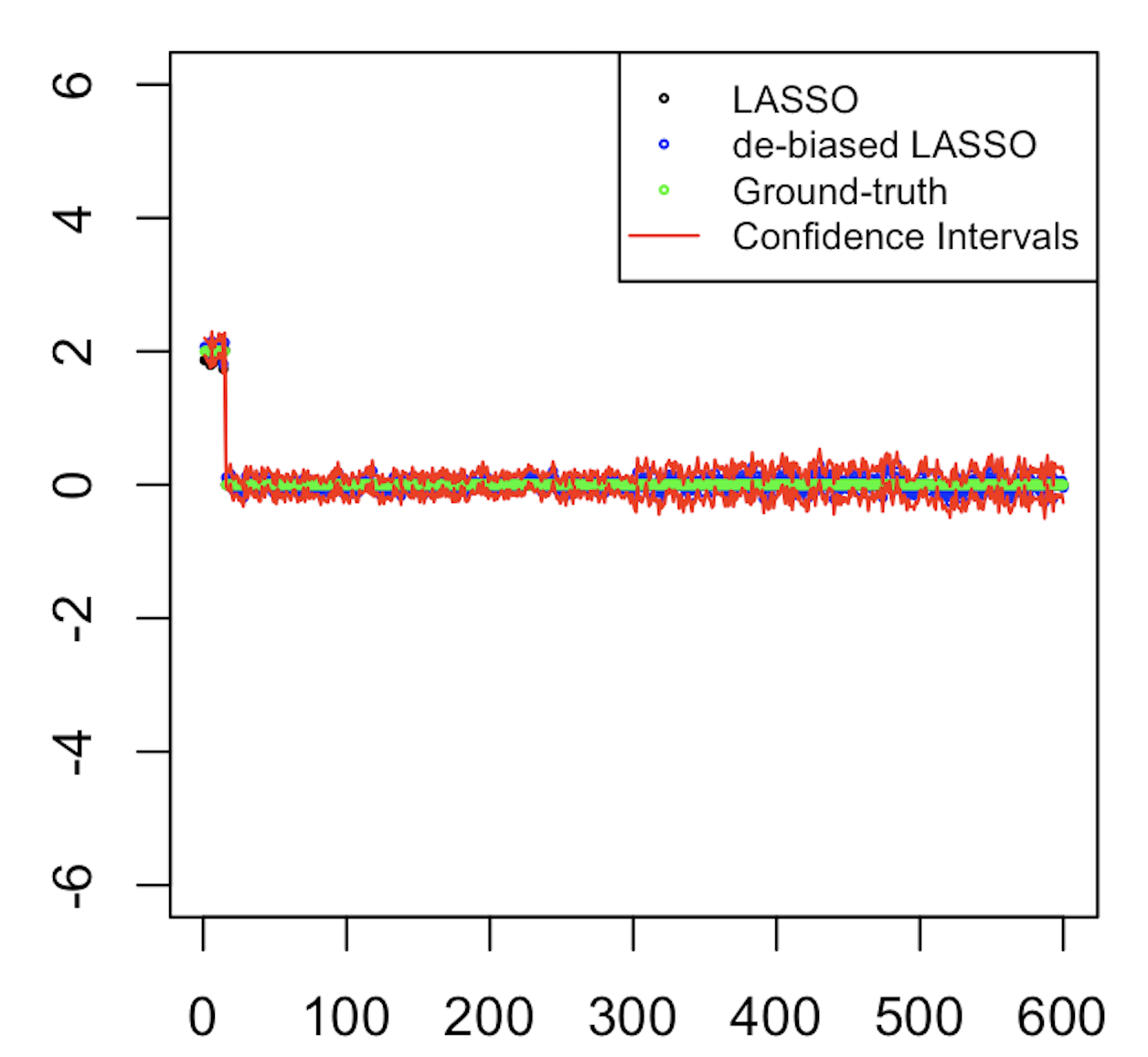} 
        \caption{DTLasso estimator}
        \label{threw/} 
    \end{subfigure}
     \hspace{0.05\linewidth}
    \begin{subfigure}[b]{0.35\linewidth}
        \centering
        \includegraphics[width=\linewidth]{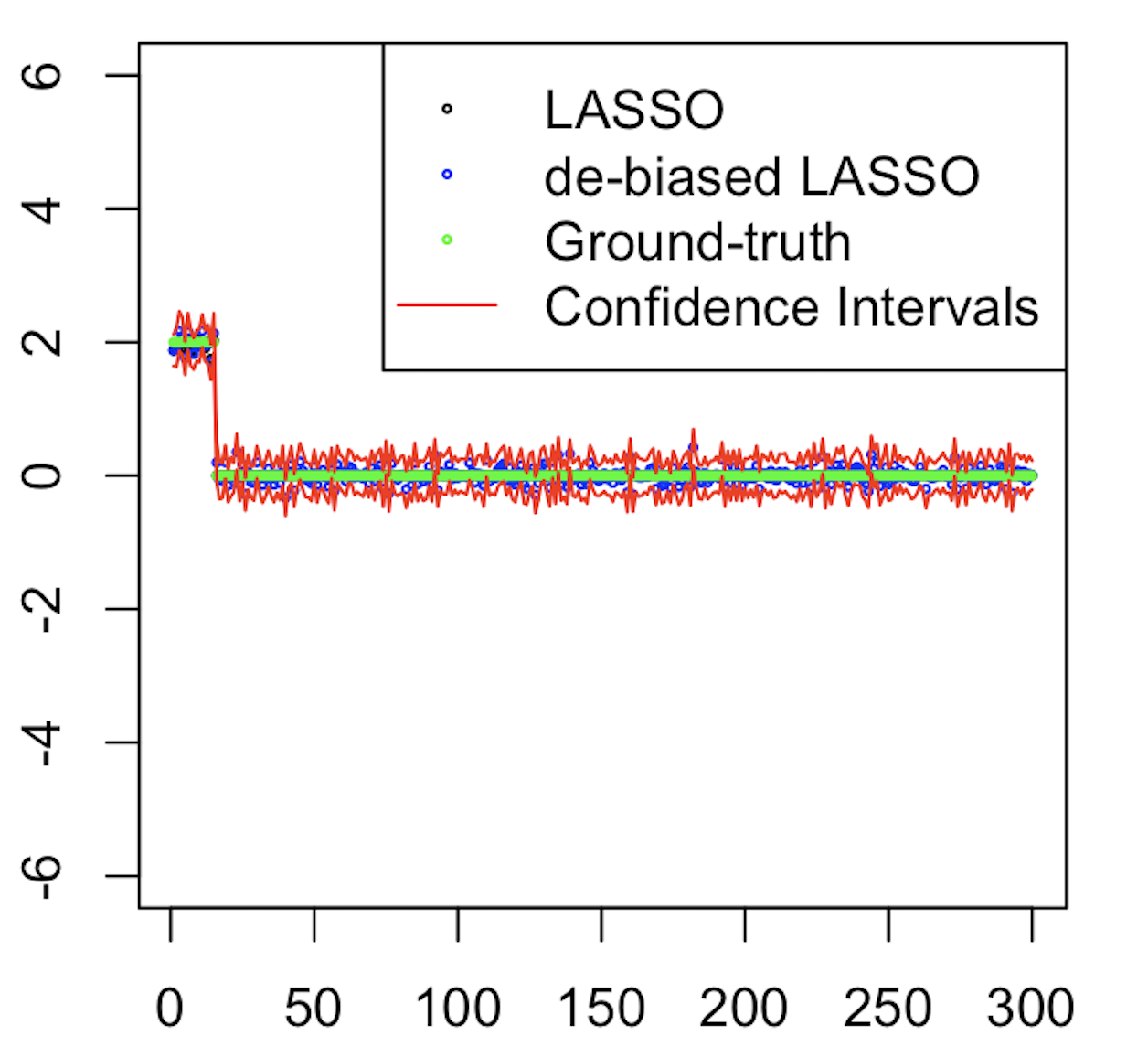} 
        \caption{Debiased linear estimator}
        \label{linw/} 
    \end{subfigure}

    \caption{95\% confidence intervals for one realization $(n,2p,s_0,b,b1, \rho_{Q,X^{(2)}})=(400,600,15,2,0,0.5)$ (without threshold effect). }
    \label{w/o toeplitz} 
\end{figure}

Additionally, we consider 100 independent realizations for each parameter $\alpha_{0,i}$ for each model specification. We focus only on the parameters $\beta_{0,i}$ as we study two separate cases: one with a fixed threshold effect and the other without a threshold effect ($\delta_{0,i}=0$). We compute the average length of the corresponding confidence interval $\rm{Avglength} \left(J_i(\beta)\right)$, and the average
\begin{equation}
    \ell \equiv p^{-1} \sum_{i\in [p]} \rm{Avglength} (J_i(\beta)).
\end{equation}
We also compute the average length of intervals for the active and inactive parameters,
\begin{equation}
\ell_S \equiv s_0^{-1} \sum_{i\in S} \rm{Avglength}(J_i(\beta))\,, \quad 
\ell_{S^c} \equiv (p-s_0)^{-1} \sum_{i\in S^c} \rm{Avglength}(J_i(\beta)),
\end{equation}
and the average coverage for individual parameters,
\begin{equation}
\begin{aligned}
&\widehat {Cov} \equiv p^{-1} \sum_{i\in [p]} \widehat {\mathbb{P}}[\beta_{0,i} \in J_i(\beta)], \quad
\widehat {Cov}_S \equiv s_0^{-1} \sum_{i\in S} \widehat {\mathbb{P}}[\beta_{0,i} \in J_i(\beta)],\\
&\widehat {Cov}_{S^c} \equiv (p-s_0)^{-1} \sum_{i\in S^c} \widehat {\mathbb{P}}[0 \in J_i(\beta)],
\end{aligned}
\end{equation}
where $\widehat {\mathbb{P}}$ denotes the empirical probability computed based on $100$ realizations for each configuration.
The results are reported in Table~\ref{tbl:confidence}. The debiased estimator performs well in terms of the presence of a fixed threshold effect, the correlation between the threshold variable and covariates, and varying magnitudes of effects, different levels of sparsity, the location of the threshold point, as well as different sample sizes and numbers of covariates.

\begin{table*}
\centering
\scalebox{0.85}{\begin{tabular}{|l|c|c|c|c|c|c|c|c|}\hline
\diaghead{\theadfont Configuration Measure}%
{Configuration}{Measure}&\thead{$|\widehat\tau - \tau_0|$}&
\thead{$\ell$}&\thead{$\ell_S$} & \thead{$\ell_{S^c}$} & \thead{$\widehat{Cov}$} & \thead{$\widehat{Cov}_S$} & \thead{$\widehat{Cov}_{S^c}$} \\
\hline
{$(400, 600, 15, 2, 1, 0) $} & 0.0032 & 0.3380 & 0.3387 & 0.3379 &0.9409 &0.8133 & 0.9476\\
{$(400, 600, 15, 1, 0.5, 0) $} & 0.0008 &0.3335 & 0.3344 & 0.3334 & 0.9408 &0.8293 &0.9467 \\
{$(400, 600, 15, 2, 0, 0)$} &  - &0.3297 & 0.3320 & 0.3296 & 0.9417 &0.834 &0.9474 \\
{$(400, 600, 30, 2, 1, 0) $} &  0.0006 & 0.3446 & 0.3456 & 0.3445 & 0.9362 & 0.827 &0.9483 \\
{$(400, 600, 15, 2, 1, 0.5) $} &  0.0026 & 0.3433 & 0.3437 & 0.3433 & 0.9436 &0.8293 &0.9496 \\
{$(400, 600, 15, 2, 1, 0)  \, \tau_0 = 0.4$} & 0.0017 & 0.3201 & 0.3201 & 0.3201 & 0.9425 &0.8127 &0.9493 \\
{$(800, 600, 15, 2, 1, 0) $} & 0.0094 & 0.5938 & 0.5915 & 0.5940 & 0.9412 &0.8773 &0.9446 \\
{$(400, 1000, 15, 2, 1, 0) $} & 0.0045 & 0.3397 & 0.3418 & 0.3396 & 0.9456 &0.816 &0.9496 \\
\hline
\end{tabular}
}
\caption{Simulation results for absolute threshold parameter estimation error, average length of confidence intervals, and average coverage.}\label{tbl:confidence}
\end{table*}



Additionally, we consider a test for the family
of hypotheses $\left\{H_{0}^{(j)}: \, \alpha_{0}^{(j)} = 0\right\}$ for $j=1,\dots,2p.$ We report the familywise error rate (FWER) based on the Bonferroni-Holm procedure and the empirical power, \begin{equation*}
    \text{Power} = s_0^{-1}\sum_{i \in S}P\left(H_{0,i}\, \text{is rejected}\right).
\end{equation*} We compare our results from threshold models with those from linear regression in Table \ref{tbl:FWER}. The FWER based on the threshold model is close to the preassigned significance level of 0.05 and is robust to the magnitude of the threshold effect and the level of sparsity. DTLasso estimator for the threshold model has more power than that for the linear model, even when the threshold effect is small.

\begin{table*}[h]
\begin{center}
{\small
\begin{tabular}{|c|c|c|c|c|c|c| }
\hline
\multicolumn{1}{ |c| }{} & \multicolumn{2}{ c| }{Threshold Model} & \multicolumn{2}{ c| }{Linear Model}
\\
\hline
Configuration& FWER & Power & FWER & Power \\ \hline
{$(400, 600, 15, 2,1,0)$} &0.05 & 0.9996& 0.06& 0.7393\\
{$(400, 600, 15, 1, 0.5,0)$} &0.03 & 0.8697& 0.01& 0.4967\\
{$(400, 600, 30, 2, 1,0)$}&0.08 &0.9993 & 0.02&0.2001   \\
{$(1000, 1200, 15, 2,1, 0)$} &0.02 & 1& 0.45&1 \\
\hline
\end{tabular}
}
\end{center}
\caption{Simulation results for FWER and Power from threshold models and linear models}\label{tbl:FWER}
\end{table*}


\subsection{Economic Growth Rate} \label{empiricalone}
\cite{durlauf1995multiple} provide a theoretical background for the existence of multiple steady states in economic growth models. They also consider a broad set of control variables to check the robustness, but \cite{lee2016} argue that this approach still restricts variable selection. Therefore, they apply the Lasso method to simultaneously select covariates and choose between linear and threshold models with high-dimensional data. To further identify the relevant covariates, we continue applying a threshold regression model to study countries' economic growth and analyze the significance of covariates by the debiased Lasso estimator. Our setup follows Equation (3.1) in \cite{lee2016},
\begin{equation}\label{eq:emp01}
\mathit{gr}_{i}=\beta _{0}+\beta _{1}\mathit{lgdp60}_{i}+X_{i}^{\prime
}\beta _{2}+\bm{1}\{Q_i <\tau \}\left( \delta _{0}+\delta _{1}\mathit{lgdp60}%
_{i}+X_{i}^{\prime }\delta _{2}\right) + U_{i},
\end{equation}%
where $\mathit{gr}_{i}$ is the annualized GDP growth rate  for each country $i$ during the period 1960-1985, $\mathit{lgdp60}_{i}$ represents the log GDP in 1960, and $%
X_{i}$ is a vector of additional covariates, including education, demographic characteristics, market openness, politics, and interaction terms. Table \ref{tb:listVar} from \cite{lee2016} provides a detailed introduction to the covariates.
We use either the initial GDP or the adult literacy rate in 1960 as the threshold variable $Q_{i}$ following \cite{durlauf1995multiple} and \cite{lee2016}. The grid interval ranges from the 10th to the 90th percentiles of the threshold variable. We utilize covariates from \cite{lee2016} and the dataset originating from \cite{barro1994data} and \cite{durlauf1995multiple}. With initial GDP as the threshold variable, we have 80 countries with 46 covariates (including a constant term). With literacy rate as the threshold variable, we have 70 countries with 47 covariates. The traditional OLS and MLE are no longer valid because of $2p > n.$

\begin{table*}[ht]
\begin{center}
{\small
\caption{Lasso and debiased estimates with $Q={gdp60}$}
\label{tb:resultM1}
\scalebox{0.8}{
\begin{tabular}{|l||c|c||c|c|}
\hline 
\multirow{2}{*}{Variable} & \multicolumn{2}{c||}{Lasso estimates} & \multicolumn{2}{c|}{Debiased estimates} \\
\cline{2-5}
 & $ \widehat{\beta}$ & $\widehat{\delta}$  & $ \widehat{\beta}$ & $ \widehat{\delta}$ \\

\textit{lgdp60} 		& -0.0120 &	-&$-0.0121^{***}$ & - \\
		& 	& & (0.0004) &  \\
$\textit{ls}_k$			& $0.0038$ & -	& $0.0039^{***}$ & - \\
		& 	& & (0.0000) &  \\
\textit{pyrf60}		& -	& - & $-9.9317\times 10^{-5***}$  & $-0.0002^{*}$ \\
&& 	& ($3.3073\times 10^{-5}$) & (0.0001) \\
\textit{hyrm60}		& 0.0130	& -& $0.0096^{*}$  & - \\
& & 	& (0.0057) & \\
\textit{hyrf60}		& - & -0.0900	& -  & $-0.0900^{***}$\\
&  & 	&  & (0.0116)\\
\textit{nom60}		& 		-	& $2.64 \times 10^{-5}$ &$-5.4171\times 10^{-6**}$  &  $2.1027 \times 10^{-6***}$\\
&  &  & ($2.3052 \times 10^{-6}$)	 & ($2.8062 \times 10^{-6}$) \\
\textit{nof60}		& 		-	& - &$-4.0786\times 10^{-6**}$  &  -\\
&  &  & ($1.9850 \times 10^{-6}$)	 &  \\
\textit{prim60}		&$-0.0001$& -	&$-0.0001^{***}$ & -\\
&  & 	& (0.0000) & \\
\textit{prif60}		&- & -	&$-3.5416\times 10^{-6*}$ &-\\
&  & 	& ($2.1024\times 10^{-6}$) &\\
\textit{pricm60}		&$-1.73\times 10^{-4}$  & $-0.35\times 10^{-4}$		& $0.0002^{***}$ & -\\
&   & 	& (0.0000) &  \\
\textit{pricf60}		&- & -	&$-1.2026\times 10^{-5*}$ & $-3.2212\times 10^{-5*}$ \\
& 	&   & ($5.9965\times 10^{-6}$) & ($1.8100\times 10^{-5}$) \\
\textit{seccm60}		&-  & 0.0014	& - & $0.0013^{***}$\\
& 	&   &  & (0.0000) \\
\textit{llife}			&$0.0523$	& - & $0.0523^{***}$& $-6.6548\times 10^{-5*}$ \\
		& 	&  & (0.0000) & ($2.6427\times 10^{-5}$)	 \\
\textit{lfert}			&$-0.0047$  & -	& $-0.0047^{***}$& - \\
& 	 &  & (0.0001) &  \\
\textit{gcon/gdp}		&$-0.0542$ & -	& $-0.0568^{***}$& - \\
		& 	&  & (0.0054) &  \\
\textit{wardum}		&-	& -0.0022 & -& $-0.0024^{**}$\\
& 	&  &  & (0.0011) \\
\textit{wartime}		&$-0.0143$ & -0.0023	& $-0.0162^{*}$ &- \\
		& &  	& (0.0089) &  \\
\textit{lbmp}			&$-0.0174$ & -0.0015	& $-0.0184^{***}$& \\
		& 	&  & (0.0040) &  \\
\textit{tot}			&- & 0.0974	& - & -\\
& 	&  &  & \\
$\textit{lgdp60} \times\textit{pyrf60}$   &     $-3.81\times 10^{-6}$&-  &  -&  - \\
		& 	& &  &  \\
$\textit{lgdp60} \times\textit{syrm60}$   &     -&0.0002 &  -&  - \\
		& 	& &  &  \\
$\textit{lgdp60} \times\textit{hyrm60}$   &       -&0.0050  &  -&  $0.0049^{***}$ \\
		& 	& &  &  (0.0003)\\
$\textit{lgdp60} \times\textit{hyrf60}$   &       -0.0003 &- &  $-0.0900^{***}$&  - \\
		& 	& & (0.0116) &  \\
$\textit{lgdp60} \times\textit{nom60} $ 	& - &$8.26\times 10^{-6}$  & $-2.1027\times 10^{-5***}$ & $7.7197\times 10^{-6***}$\\
& & 	& ($2.8062\times 10^{-6}$) & ($6.1102\times 10^{-7}$) \\
$\textit{lgdp60} \times\textit{prim60}$	&- & -  & $-6.2650\times 10^{-7**}$ & - \\
		& & 	& ($2.8155\times 10^{-7}$) &  \\
$\textit{lgdp60} \times\textit{prif60}$	&- & $-8.11\times 10^{-6}$  & - & $-8.9824\times 10^{-6**}$  \\
		& & 	&  & ($1.2593\times 10^{-6}$)  \\
$\textit{lgdp60} \times\textit{pricm60}$	& -  &  -& $-1.1052 \times 10^{-6*}$ & - \\
& 	& & ($6.5299\times 10^{-7}$) &  - \\
$\textit{lgdp60} \times\textit{seccf60}$
& $-2.87\times 10^{-6}$ & -  & - & $-1.8483\times 10^{-5**}$ \\
& & 	&  & ($8.3111\times 10^{-6}$) \\
\hline
\multicolumn{5}{p{.8\textwidth}}{\footnotesize \emph{Note: }***\ p$<$0.01, **\ p$<$0.05, *\ p$<$0.10; standard errors (in parentheses). }

\end{tabular}
}
}
\end{center}
\end{table*}

Table \ref{tb:resultM1} shows the results with initial GDP as the threshold variable. We provide additional findings based on the adult literacy rate as the threshold variable are presented in Table \ref{tb:resultM2} in Appendix \ref{appendixB}. Firstly, we confirm the presence of a threshold effect, as some values of $\delta_2$ are significantly different from 0. This finding provides evidence supporting the existence of multiple steady states in growth models and implies that the rate of growth convergence may differ across regimes defined by varying levels of initial GDP. Secondly, the significance of the covariates varies across regimes. For example, a higher proportion of secondary school completion accelerates a developing country's economic growth, whereas participating in an external war exerts a significantly negative effect on growth performance.
Finally, we compare our findings with those reported in \cite{lee2016} and observe that our model identifies a larger set of significant variables, such as the average years of primary schooling among the female population in 1960 and the percentage of primary schooling completed in the female population in 1960. Furthermore, for covariates with significant debiased estimates, we report their standard errors to support valid post-selection inference.

\subsection{Government Spending and GDP} \label{empiricaltwo}

\cite{ramey2018government} provide theoretical and empirical background on the effect of a military news shock on government spending and GDP and use state-dependent LP to estimate impulse responses. Later, \cite{hdlp} reestimate the impulse responses through a state-dependenct HDLP specification while including more lags for a robustness check. However, the threshold point defining the state of the economy in both literatures is based on a predetermined standard. For example, defining the state as slack when the unemployment rate exceeds 6.5 percent (the US Federal Reserve's standard). A state is defined as a zero lower bound (ZLB) when the 3-month Treasury bill rates are below 0.5 percent. We thus apply a high-dimensional local projection threshold model to find a theoretically more accurate threshold point to define the state and estimate the impulse responses accordingly. The model is as follows, 
\begin{equation}\label{eq:emp02}
Y_{i+h}= \beta_{h,1}x_i + \sum_{k=1}^K\boldsymbol{z}_{i-k}'\beta_{h,2,k} + \bm{1}\{Q_{i-1} <\tau \}\left( \delta_{h,1}x_i + \sum_{k=1}^K\boldsymbol{z}_{i-k}'\delta_{h,2,k}\right) + U_{h,i}, \footnote{We omit the intercept here, as we demean the data.}
\end{equation}
where $Y_i$ includes real per capita GDP and government spending, $x_i$ is the military spending news shock, $\boldsymbol{z}_i$ includes lags of the news, GDP, government spending, and tax. We use a quarterly dataset from 1889Q1 to 2015Q4.\footnote{The dataset is available at \url{https://econweb.ucsd.edu/~vramey/research.html\#govt}.} Section II.B in \cite{ramey2018government} provides a more detailed description of the data and variables. The time series length is 161, spanning a long U.S. history. It includes many prolonged periods of slack state and extended periods of near-zero bound state, allowing us to estimate the impulse responses over reasonable horizons with non-changing states. We use one lag of the 3-month Treasury bill rate or one lag of the unemployment rate as the threshold variable. Section IV. A. and Section V.A in \cite{ramey2018government} present the narrative reasons for the choice of the variable to define the state of the economy.

\begin{figure*}[b]
        \centering
        \includegraphics[width=0.7\textwidth, height=0.5\textwidth]{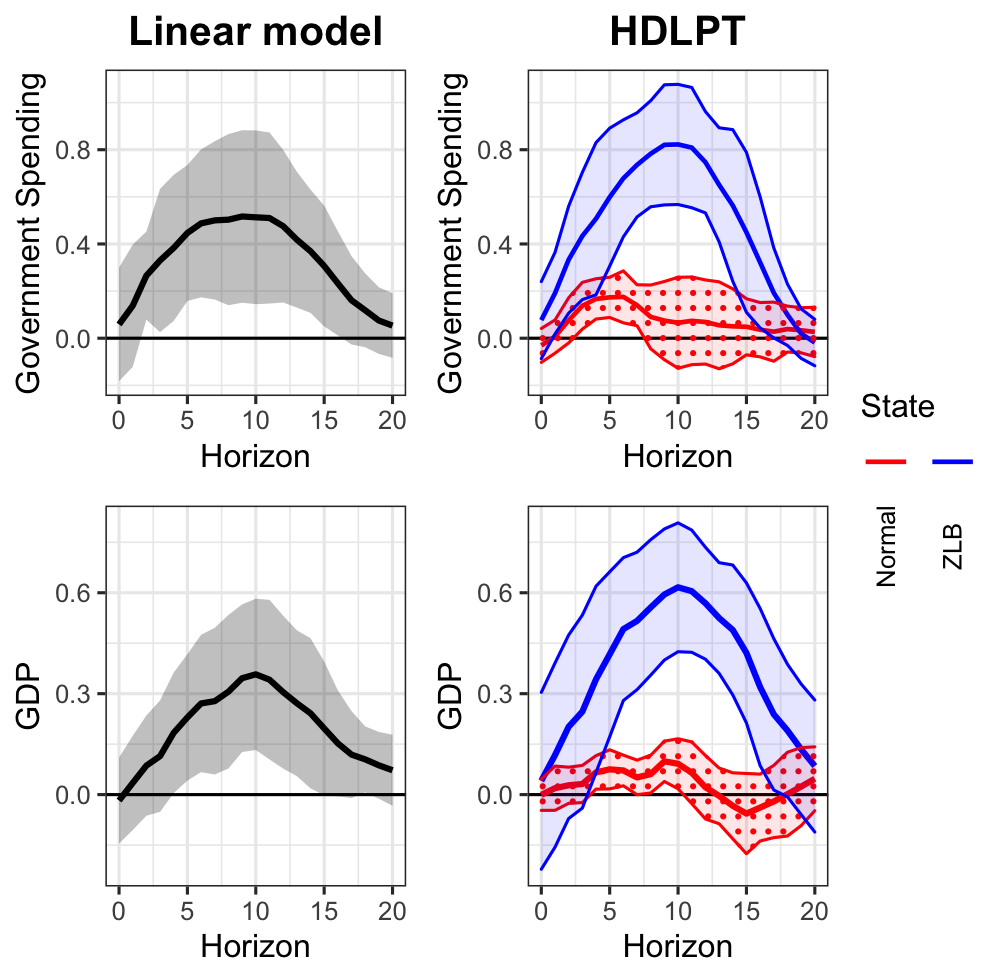} 
        \captionsetup{width=0.7\textwidth}
        \caption{Impulse responses to a military spending news shock of the size of 1\% of GDP in government spending and GDP.}
        \label{LPinterestrate} 
\end{figure*}

When one lag of the 3-month Treasury bill rate is the threshold variable, we take $K = 40$, and define the range of the threshold parameter as the interval spanning from the 10th to the 90th sample quantile for the threshold variable. We then define the state as ZLB when the rate is below 1.02$\%$ (corresponding to the threshold derived when $h=0$) and the normal state when it exceeds 1.02$\%$. 
Figure \ref{LPinterestrate} shows the impulse responses to a military spending news shock in government spending and GDP. The results indicate that both government spending and GDP exhibit significantly stronger responses during ZLB compared to normal states. Moreover, at its peak, the response of government spending to a military spending news shock exceeds the corresponding response of GDP. Compared to Figure 11 in \cite{ramey2018government}, the peaks of both government spending and GDP occur two quarters earlier, and the magnitudes of these peaks are also larger. However, in the normal state, our responses are much more subtle.

When one lag of the unemployment rate is the threshold variable, we take $K = 40,$ and define the range of the threshold parameter as the interval spanning from the 10th to the 90th sample quantile for the threshold variable. We define the state as the low unemployment state when the unemployment rate is below 4.58$\%$ (corresponding to the threshold derived when $h=0$) and the high unemployment state (slack state) when the unemployment rate exceeds 4.58$\%$. Figure \ref{LPunemp} shows the impulse responses to a military spending news shock in government spending and GDP. The results indicate that both government spending and GDP exhibit significantly larger responses during high-unemployment states compared to low-unemployment states. Moreover, the response of government spending to a military spending news shock reaches its peak earlier, is more persistent, and exhibits a slightly larger peak magnitude than the corresponding response of GDP. Compared to Figure 5 in \cite{ramey2018government} and Figure 4 in \cite{hdlp}, in the low unemployment state, the patterns are very similar for both government spending and GDP. In the slack state, for government spending, the peak of the responses occurs earlier but lasts longer, and the peak magnitude is smaller. For GDP, compared to Figure 5 in \cite{ramey2018government}, the peak occurs 5 periods earlier, and the peak magnitude is smaller. While comparing to Figure 4 in \cite{hdlp}, the peak of the responses occurs 2 periods earlier, and the peak magnitude is smaller in the slack state. In addition, the impulse responses are slight at horizon 0 in the linear, high, and low unemployment states. 

Our results are more robust as we include more lags and are less sensitive to the number of lags due to the Lasso estimator, which addresses variable selection. In contrast, \cite{ramey2018government} only include four lags and do not perform robustness checks for the number of lags. In contrast, \cite{ramey2018government} include only four lags and do not perform robustness checks on the number of lags, and their specification does not involve any variable selection procedure.

\begin{figure*}[b]
        \centering
        \includegraphics[width=0.7\textwidth, height=0.5\textwidth]{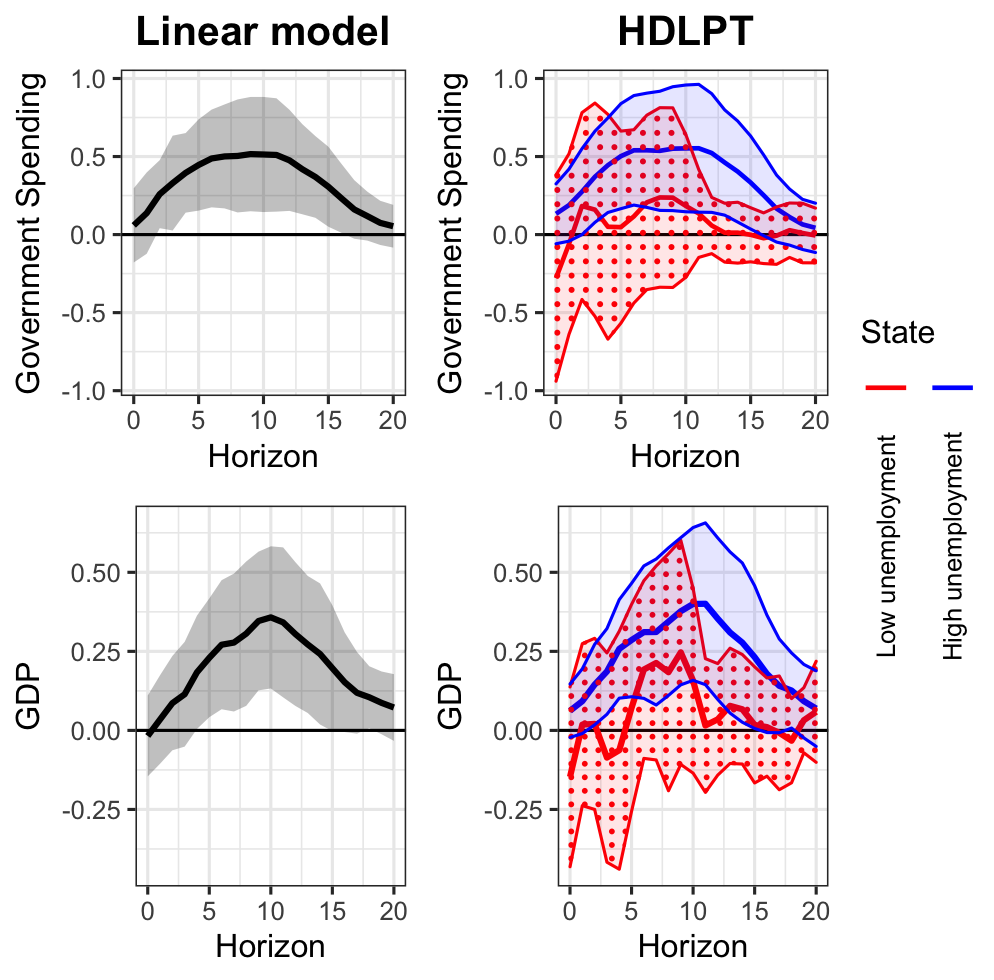} 
        \captionsetup{width=0.7\textwidth}
        \caption{Impulse responses to a military spending news shock of the size of 1\% of GDP in government spending and GDP.}
        \label{LPunemp} 
\end{figure*}

\section{Conclusion}\label{conclusion}
This paper proposes a debiased Lasso estimator for high-dimensional slope parameters in threshold regression models, allowing for either cross-sectional or time series data. We derive the asymptotic distribution of tests involving an increasing number of slope parameters and construct uniformly valid confidence bands. We show that the asymptotic distributions are the same in the cases with no threshold effect and a fixed threshold effect. Our study allows for less restrictive assumptions than existing research in high-dimensional threshold models, accommodating heteroskedastic non-subgaussian error terms and non-subgaussian covariates. Future research directions could include considering multiple threshold points or multiple threshold variables in the current framework and developing a uniform inference theory in panel data models with threshold effects.

\clearpage

\bibliographystyle{chicago}
\bibliography{Refs}

\clearpage
\section{Appendix A}\label{appendixA}

\renewcommand{\thelem}{A.\arabic{lem}}
\setcounter{lem}{0}

We first recall the concentration inequality from  \cite{chernozhukov2014gaussian} and \cite{chernozhukov2015comparison}, as formulated in Lemma 2 of \cite{chiang_rodrigue_sasaki_2023}, to derive oracle inequalities.
For notation, $C$ is an arbitrary positive finite constant, and its value may vary from line to line.

\begin{lem}[A Concentration Inequality]
\label{concenine}
Let $\{X_i\}_{i=1}^n$ be $p$-dimensional independent random vectors, $B = \sqrt{E\left[\max_{1\le i\le n}\max_{1\le j\le p}\left|X_{i}^{(j)}\right|^2\right]}$ and $\sigma^2 = \\ \max_{1\le j\le p}1/n\sum_{i=1}^nE\left[\left(X_{i}^{(j)}\right)^2\right].$ For $C>0,$ with probability at least $1-C(logn)^{-1},$
    $$\max_{1\le j\le p}\left|\frac{1}{n}\sum_{i=1}^n\left(X_{i}^{(j)}-E\left[X_{i}^{(j)}\right]\right)\right| \lesssim \sqrt{\frac{\sigma^2log(p\vee n)}{n}} + \frac{Blog(p\vee n)}{n}.$$
\end{lem}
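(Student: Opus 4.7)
The plan is to derive this maximal concentration inequality in two steps: (i) establish an expectation-level maximal inequality via a truncation-plus-Bernstein argument, then (ii) upgrade that expectation bound to a tail bound by Markov's inequality. Writing $Z_i^{(j)} := X_i^{(j)} - E[X_i^{(j)}]$, the $\{Z_i\}_{i=1}^n$ are independent mean-zero $p$-vectors with $\max_j (1/n)\sum_i E[(Z_i^{(j)})^2] \le 4\sigma^2$ and $\sqrt{E[\max_{i,j}(Z_i^{(j)})^2]} \le 2B$, so it suffices to work with the pair $(\sigma, B)$ up to absolute constants.

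The core step is the Nemirovski / Chernozhukov--Chetverikov--Kato maximal expectation bound
\[
E\!\left[\max_{1\le j\le p}\left|\frac{1}{n}\sum_{i=1}^n Z_i^{(j)}\right|\right]\;\lesssim\;\sqrt{\frac{\sigma^2\log(p\vee n)}{n}}+\frac{B\log(p\vee n)}{n}.
\]
I would prove this by splitting each $Z_i^{(j)}$ at a truncation level of order $B$: on the bounded part, Bernstein's inequality applied coordinate-wise, combined with a union bound over $j\in\{1,\ldots,p\}$, produces the sub-Gaussian term $\sqrt{\sigma^2\log p/n}$; on the tail part, a Hoffmann--J\o rgensen/envelope argument using $E[\max_{i,j}(Z_i^{(j)})^2]\le B^2$ produces the envelope-dominated term $B\log p/n$. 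Replacing $\log p$ by $\log(p\vee n)$ is a free constant.

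Finally, I convert the expectation bound into the claimed $1-C(\log n)^{-1}$-event by Markov's inequality: take the threshold equal to a $\log n$ multiple of the expectation bound, so that the failure probability is at most $(\log n)^{-1}$; the resulting logarithmic slack is what is absorbed into the ``$\lesssim$'' in the Chiang--Rodrigue--Sasaki (2023, Lemma 2) formulation. Alternatively, one could invoke the sharp tail refinement of the maximal inequality (Theorem 5.1 of Chernozhukov et al., 2014), which delivers concentration around the expectation with the same rate directly, avoiding the Markov step altogether. The main obstacle is the maximal expectation inequality: the truncation level must be tuned so that the Bernstein contribution captures the $\sigma^2$ dependence and the residual captures the $B$ dependence, with both terms landing at the correct logarithmic rate. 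Once that is in place, the tail conversion is a one-line argument.
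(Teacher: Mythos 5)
The paper does not prove this lemma at all: it is imported verbatim as Lemma~2 of Chiang, Rodrigue and Sasaki (2023), which in turn rests on the maximal and deviation inequalities of Chernozhukov, Chetverikov and Kato. So you are supplying a proof where the paper supplies a citation; your sketch of the expectation-level bound (centering, truncation at level $B$, coordinate-wise Bernstein plus a union bound for the $\sqrt{\sigma^2\log p/n}$ term, and an envelope/Hoffmann--J{\o}rgensen argument for the $B\log p/n$ term) is the standard route to $E\bigl[\max_j|\frac{1}{n}\sum_i Z_i^{(j)}|\bigr]\lesssim\sqrt{\sigma^2\log(p\vee n)/n}+B\log(p\vee n)/n$ and is fine.

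The gap is in the conversion to a tail bound. Markov's inequality with threshold $t=(\log n)\cdot E[\max_j|\cdot|]$ does give failure probability at most $(\log n)^{-1}$, but the event you then control is $\max_j|\cdot|\le(\log n)\bigl(\sqrt{\sigma^2\log p/n}+B\log p/n\bigr)$, which is strictly weaker than the claimed bound: the extra $\log n$ factor multiplies the $\sqrt{\sigma^2\log p/n}$ term and cannot be absorbed either into ``$\lesssim$'' (which hides absolute constants only) or into the upgrade from $\log p$ to $\log(p\vee n)$ --- when $p\ge n$ these are already equal and the $\log n$ inflation remains. The inflation would also break the downstream uses in the paper, where the bound must be of the exact order $\sqrt{\log p/n}$ to match the tuning parameter $\lambda$. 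The correct tool is the one you mention only as an ``alternative'': the Fuk--Nagaev-type refinement (Theorem 5.1 / Lemma E.2 of Chernozhukov et al.), namely $P\bigl(\max_j|\sum_iZ_i^{(j)}|\ge 2E[\max_j|\sum_iZ_i^{(j)}|]+t\bigr)\le \exp(-t^2/(3n\sigma^2))+KE[M^2]/t^2$ with $M=\max_{i,j}|Z_i^{(j)}|$. Choosing $t\asymp\sqrt{n\sigma^2\log\log n}+B\sqrt{\log n}$ makes both tail terms $O((\log n)^{-1})$, and this deviation is dominated by $\sqrt{\sigma^2\log(p\vee n)/n}\cdot n+B\log(p\vee n)$, which yields the stated inequality. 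You should promote that alternative to the main argument and drop the Markov step.
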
 

\noindent Without loss of generality, we will assume $p>n$ throughout the appendix.

\begin{lem}[A Concentration Inequality for Partial Sum of Random Variables]\label{conpart}
Let $\{X_i\}_{i=1}^n$ be $p$-dimensional independent random vectors, $B = \\ \sqrt{E\left[\max_{1\le i\le n}\max_{1\le j\le p}\left|X_{i}^{(j)}\right|^2\right]}$ and $\sigma^2 = \max_{1\le j\le p}1/n\sum_{i=1}^nE\left[\left(X_{i}^{(j)}\right)^2\right].$ For $C>0,$ with probability at least $1-C(logn)^{-1},$ we have 

$$(i) \, \max_{1\le j\le p}\max_{1\le k\le n}\left|\frac{1}{n}\sum_{i=1}^k\left(X_{i}^{(j)}-E\left[X_{i}^{(j)}\right]\right)\right| \lesssim \sqrt{\frac{\sigma^2log(pn)}{n}} + \frac{Blog(pn)}{n},$$ 

$$(ii) \, \max_{1\le j\le p}\max_{1\le q \le k\le n}\left|\frac{1}{n}\sum_{i=q}^k\left(X_{i}^{(j)}-E\left[X_{i}^{(j)}\right]\right)\right| \lesssim \sqrt{\frac{\sigma^2log(pn^2)}{n}} + \frac{Blog(pn^2)}{n}.$$

\end{lem}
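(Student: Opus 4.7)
The plan is to reduce both partial-sum statements to a single application of Lemma \ref{concenine} by embedding the family of partial sums into a single sum of a higher-dimensional vector via indicator padding. This is the standard trick: rather than taking a union bound over $k$ (or $(q,k)$) with separate tail inequalities, we absorb the extra index into the dimension of Lemma \ref{concenine} so that the $\log(p\vee n)$ factor is inflated only inside the logarithm.

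For part (i), for each $j\in\{1,\dots,p\}$ and $k\in\{1,\dots,n\}$, define the padded coordinate $\widetilde X_i^{(j,k)} := X_i^{(j)}\mathbf{1}\{i\le k\}$. This yields a new $pn$-dimensional random vector $\widetilde X_i$ whose full sample mean recovers the partial sums:
\[
\frac{1}{n}\sum_{i=1}^n \widetilde X_i^{(j,k)} \;=\; \frac{1}{n}\sum_{i=1}^k X_i^{(j)}.
\]
Because $|\widetilde X_i^{(j,k)}|\le |X_i^{(j)}|$ pointwise, the new envelope constant satisfies $\widetilde B := \sqrt{E[\max_{i}\max_{j,k}|\widetilde X_i^{(j,k)}|^2]} \le B$, and the new variance proxy satisfies $\widetilde\sigma^2 := \max_{j,k}\frac{1}{n}\sum_{i=1}^n E[(\widetilde X_i^{(j,k)})^2] = \max_{j,k}\frac{1}{n}\sum_{i=1}^k E[(X_i^{(j)})^2] \le \sigma^2$. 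Applying Lemma \ref{concenine} to the $pn$-dimensional vector $\widetilde X_i$ and invoking $\log(pn\vee n)=\log(pn)$ (since $p>n$ WLOG, or since $pn\ge n$) delivers (i) with probability at least $1-C(\log n)^{-1}$.

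For part (ii), I use the same idea with a doubly padded vector: for $1\le q\le k\le n$ and $1\le j\le p$, let $\widehat X_i^{(j,q,k)} := X_i^{(j)}\mathbf{1}\{q\le i\le k\}$. The number of triples $(j,q,k)$ is $p\binom{n+1}{2} \le pn^2$, so the effective dimension in Lemma \ref{concenine} is at most $pn^2$. Exactly as above, $|\widehat X_i^{(j,q,k)}|\le |X_i^{(j)}|$ gives $\widehat B\le B$ and $\widehat\sigma^2\le\sigma^2$, and $\frac{1}{n}\sum_{i=1}^n \widehat X_i^{(j,q,k)} = \frac{1}{n}\sum_{i=q}^k X_i^{(j)}$. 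Lemma \ref{concenine} then yields (ii) with the $\log(pn^2)$ factor replacing $\log(pn)$ and the probability bound $1-C(\log n)^{-1}$ preserved.

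There is essentially no hard step: the only point requiring a brief check is that the reduction does not inflate $\sigma^2$ or $B$, which is immediate from the pointwise domination $|\widetilde X_i^{(j,k)}|,|\widehat X_i^{(j,q,k)}|\le|X_i^{(j)}|$. Consequently, the bound in Lemma \ref{concenine} carries over with the dimension parameter $p$ replaced by $pn$ (respectively $pn^2$), which is precisely the statement of (i) (respectively (ii)).
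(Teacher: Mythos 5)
Your proposal is correct and is essentially the paper's own argument: the paper implements exactly your indicator-padding idea by multiplying $X_i^{(j)}$ by the entries $\xi_i^{(k)}=\bm{1}\{i\le k\}$ (and $\xi_i^{(k)}\widetilde\xi_i^{(q)}$ for part (ii)) of a deterministic triangular matrix, thereby embedding the partial sums into a single application of Lemma \ref{concenine} with dimension $pn$ (respectively $pn^2$), and checking as you do that $\sigma^2$ and $B$ are not inflated. No substantive difference.
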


\begin{proof}[Proof of Lemma \ref{conpart}]
Denote a deterministic upper triangular matrix with all elements equal to one by $\Xi_{n,n}:$ \begin{equation}\label{simmat}\Xi_{n,n} = 
\begin{pmatrix}
1 &1 & \cdots & 1\\
0 & 1 & \cdots & 1 \\
\vdots  & \vdots  & \ddots & \vdots  \\
0 & 0 & \cdots & 1
\end{pmatrix}.
\end{equation}
Let $\xi_i^{(k)}$ be the $i$-th row, $k$-th column element of $\Xi_{n,n},$ and $\widetilde{\xi}_i^{(q)}$ be the $i$-th row and $q$-th column of the transpose of $\Xi_{n,n}$, denoted by $\Xi^{T}_{n,n}.$

To prove (i), write
\begin{equation*}
 \max_{1\le j\le p}\max_{1\le k\le n} \left|\sum_{i=1}^k \left(X_i^{(j)}-E\left[X_{i}^{(j)}\right]\right)\right|= \max_{1\le j \le p}\max_{1\le k\le n} \left|\sum_{i=1}^n \left(X_i^{(j)}-E\left[X_{i}^{(j)}\right]\right)\xi_i^{(k)}\right|
\end{equation*} and
$ X_i^{(j)}\xi_i^{(k)}$ is a independent random variable. Due to the speciality of matrix $\Xi_{n,n},$ we obtain $\max_{1\le j \le p}\max_{1\le k\le n} 1/n\sum_{i=1}^n  E\left[ \left(X_i^{(j)}\xi_i^{(k)}\right)^2\right] = \max_{1\le j \le p} 1/n\sum_{i=1}^n E\left[ \left(X_i^{(j)}\right)^2\right] = \sigma^2$ and $ \sqrt{E\left[\max_{1\le i\le n}\max_{1\le k\le n}\max_{1\le j\le p}\left|X_{i}^{(j)}\xi_i^{(k)}\right|^2\right]}= \sqrt{E\left[\max_{1\le i\le n}\max_{1\le j\le p}\left|X_{i}^{(j)}\right|^2\right]} \\= B.$  Then applying Lemma \ref{concenine} yields
\begin{equation*}
    \max_{1\le j \le p}\max_{1\le k \le n}\left|\frac{1}{n}\sum_{i=1}^n\left(X_i^{(j)}-E\left[X_{i}^{(j)}\right]\right)\xi_i^{(k)}\right|\lesssim \sqrt{\frac{\sigma^2log(pn)}{n}} + \frac{Blog(pn)}{n},
\end{equation*}
(i) thus holds with probability at least $1-C(logn)^{-1}.$


Next for (ii), write
\begin{equation*}
 \max_{1\le j\le p}\max_{1\le q \le k\le n} \left|\sum_{i=q}^k \left(X_i^{(j)}-E\left[X_{i}^{(j)}\right]\right)\right|= \max_{1\le j \le p}\max_{1\le q \le k\le n} \left|\sum_{i=1}^n \left(X_i^{(j)}-E\left[X_{i}^{(j)}\right]\right)\xi_i^{(k)}\widetilde{\xi}_i^{(q)}\right|
\end{equation*} and
$ X_i^{(j)}\xi_i^{(k)}\widetilde{\xi}_i^{(q)}$ is a independent random variable. Similarly, due to the speciality of matrix $\Xi_{n,n}$ and $\Xi^{T}_{n,n},$ we obtain $\max_{1\le j \le p}\max_{1\le q\le  k\le n} 1/n\sum_{i=1}^n E\left[\left(X_i^{(j)}\xi_i^{(k)}\widetilde{\xi}_i^{(q)}\right)^2\right] = \max_{1\le j \le p} 1/n\sum_{i=1}^n E\left[ \left(X_i^{(j)}\right)^2\right] = \sigma^2$ and $ \sqrt{E\left[\max_{1\le i\le n}\max_{1\le q \le k\le n}\max_{1\le j\le p}\left|X_{i}^{(j)}\xi_i^{(k)}\widetilde{\xi}_i^{(q)}\right|^2\right]} \\ = \sqrt{E\left[\max_{1\le i\le n}\max_{1\le j\le p}\left|X_{i}^{(j)}\right|^2\right]} = B.$  Then applying Lemma \ref{concenine} yields 
\begin{equation*}
    \max_{1\le j \le p}\max_{1\le q \le  k \le n}\left|\frac{1}{n}\sum_{i=1}^n\left(X_i^{(j)}-E\left[X_{i}^{(j)}\right]\right)\xi_i^{(k)}\widetilde{\xi}_i^{(q)}\right|\lesssim \sqrt{\frac{\sigma^2log(n^2p)}{n}} + \frac{Blog(n^2p)}{n},
\end{equation*}
(ii) thus holds with probability at least $1-C(logn)^{-1}.$
\end{proof}

\subsection{Proofs for Section \ref{them1}}

To establish the prediction consistency of the Lasso estimator, we define some regularized events and provide some inequalities.

\begin{lem}[Regularized events $\mathbb{A}_1$ and $\mathbb{A}_2$]
\label{lemmaprobA3}Suppose that Assumption \ref{as1} holds and set $\lambda$ by \eqref{lambda}. Let $\mu_1 = \mu/A$ and define the events 
\begin{equation*}
\begin{aligned}
&\mathbb{A}_1 =\left\{
\max_{1\le j \le p}\frac{1}{n}\sum_{i=1}^n\left(X^{(j)}_i\right)^2\le C_2^2+\mu_1\lambda\right\}, \quad
\mathbb{A}_2 =\left\{
\min_{1\le j \le p}\frac{1}{n}\sum_{i=1}^n\left(X^{(j)}_i(t_0)\right)^2\ge C_3^2-\mu_1\lambda\right\},\\
&\text{In particular} \quad 
\mathbb{A}_2	\subseteq  \mathbb{A}_2^{\prime} = \left\{\min_{1\le j \le p} {\frac{1}{n}\sum_{i=1}^n\left(X^{(j)}_i\right)^2}\ge C_3^2-\mu_1\lambda\right\}, \text{then} \quad P(\mathbb{A}_1) \ge 1 -C(logn)^{-1},\\
& P(\mathbb{A}_2) \ge 1 -C(logn)^{-1}, P(\mathbb{A}_2^{\prime}) \ge 1 -C(logn)^{-1}, \text{and} \quad P(\mathbb{A}_1 \cap \mathbb{A}_2) \ge 1 -C(logn)^{-1}.
\end{aligned}
\end{equation*}
\end{lem}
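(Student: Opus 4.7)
The plan is to apply the concentration inequality in Lemma \ref{concenine} separately to the arrays of squared variables $\{(X_i^{(j)})^2\}$ and $\{(X_i^{(j)}(t_0))^2\}$, and then recombine through a union bound. All of the moment conditions and envelope conditions required by Lemma \ref{concenine} are available from Assumption \ref{as1}, so the argument reduces to checking the size of the deviation terms.

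For $\mathbb{A}_1$, I would first observe that by the Cauchy–Schwarz inequality together with Assumption \ref{as1}(iii),
\begin{equation*}
\max_{1\le j\le p} E\!\left[(X_i^{(j)})^2\right] \le \max_{1\le j\le p} \sqrt{E\!\left[(X_i^{(j)})^4\right]} \le C_2^2,
\end{equation*}
so it suffices to control the empirical deviation $\max_{j}|n^{-1}\sum_{i=1}^n\{(X_i^{(j)})^2 - E[(X_i^{(j)})^2]\}|$. Applying Lemma \ref{concenine} to the array $Z_{i}^{(j)}=(X_i^{(j)})^2$, the variance proxy is $\sigma^2 = \max_{j} n^{-1}\sum_{i}E[(X_i^{(j)})^4]\le C_2^4$ (again by Assumption \ref{as1}(iii)), and the envelope satisfies $B^2 = E[\max_{i,j}(X_i^{(j)})^4] \le E[M_{XX}^2]$ because $(X_i^{(j)})^4 \le M_{XX}^2$ pointwise. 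The concentration bound therefore gives, with probability at least $1-C(\log n)^{-1}$,
\begin{equation*}
\max_{1\le j\le p}\left|\frac{1}{n}\sum_{i=1}^n\left\{(X_i^{(j)})^2 - E[(X_i^{(j)})^2]\right\}\right|\lesssim C_2^2\sqrt{\frac{\log p}{n}}+\frac{\sqrt{EM_{XX}^2}\,\log p}{n}.
\end{equation*}
The first term is $O(\sqrt{\log p / n})$ and the second is $o_p(\sqrt{\log p / n})$ by Assumption \ref{as1}(vi). Since $\mu_1\lambda = (\mu/A)(A/\mu)\sqrt{\log p/n}$ is of the same $\sqrt{\log p/n}$ order, the constant $A$ in \eqref{lambda} is taken large enough that the right-hand side is bounded by $\mu_1\lambda$, and combining with $\max_j E[(X_i^{(j)})^2]\le C_2^2$ yields $\mathbb{A}_1$.

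For $\mathbb{A}_2$, I would repeat the same argument on the array $\widetilde Z_{i}^{(j)}=(X_i^{(j)}(t_0))^2 = (X_i^{(j)})^2 \bm{1}\{Q_i<t_0\}$. The variance proxy is again bounded by $C_2^4$ since the indicator is at most one, and the envelope is again controlled by $\sqrt{EM_{XX}^2}$ for the same reason. By Assumption \ref{as1}(iii) we have $\min_j E[(X_i^{(j)}(t_0))^2]\ge C_3^2$, and the concentration bound of Lemma \ref{concenine} combined with Assumption \ref{as1}(vi) gives $\min_{j} n^{-1}\sum_{i}(X_i^{(j)}(t_0))^2 \ge C_3^2-\mu_1\lambda$ on an event of probability at least $1-C(\log n)^{-1}$, which is $\mathbb{A}_2$. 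The containment $\mathbb{A}_2\subseteq \mathbb{A}_2'$ is immediate from $(X_i^{(j)}(t_0))^2 \le (X_i^{(j)})^2$, so the lower-bound carries over to the unthresholded sum with no further work. Finally $P(\mathbb{A}_1\cap\mathbb{A}_2)\ge 1-C(\log n)^{-1}$ follows from the previous two bounds by a union bound on the complements.

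The only mildly delicate point is calibrating constants so that the concentration bound fits cleanly under the slack $\mu_1\lambda$: the $C_2^2\sqrt{\log p/n}$ piece from the Gaussian part of Lemma \ref{concenine} has the correct rate but comes with a multiplicative constant, which is absorbed by choosing $A$ in \eqref{lambda} sufficiently large (equivalently, $\mu_1$ need only be interpreted up to a multiplicative constant coming from Lemma \ref{concenine}), while the envelope term is absorbed by Assumption \ref{as1}(vi). No further obstacle arises because all other ingredients, such as $\log(p\vee n)\asymp \log p$ under $p>n$, are stated in the preamble to the appendix.
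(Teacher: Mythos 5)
Your proposal is correct and follows essentially the same route as the paper: apply Lemma \ref{concenine} to the squared arrays $(X_i^{(j)})^2$ and $(X_i^{(j)}(t_0))^2$ with variance proxy bounded by $C_2^4$ and envelope controlled by $\sqrt{EM_{XX}^2}$ via Assumption \ref{as1}(vi), then conclude by a union bound. The only cosmetic difference is that you obtain $P(\mathbb{A}_2')$ from the containment $\mathbb{A}_2\subseteq\mathbb{A}_2'$ rather than rerunning the concentration argument, which is if anything slightly cleaner.
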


\begin{proof}[Proof of Lemma \ref{lemmaprobA3}]

Under Assumption \ref{as1}, let $\sigma^2 = \max_{1\le j\le p}1/n\sum_{i=1}^nE\left[\left(X^{(j)}_i\right)^4\right] \leq C_2^4,$ which is bounded,
and $B = \sqrt{M_{XX}^2}$; by Lemma \ref{concenine}, 
\begin{equation*}
    \max_{1\le j \le p}\left|\frac{1}{n}\sum_{i=1}^n\left(\left(X^{(j)}_i\right)^2-E\left[\left(X^{(j)}_i\right)^2\right]\right)\right|\lesssim \sqrt{\frac{log(p)}{n}} 
\end{equation*} 

\noindent holds with probability at least $1-C(logn)^{-1}.$ 
Thus, with probability at least $1- C(logn)^{-1},$
\begin{equation*}
    \max_{1\le j \le p}\frac{1}{n}\sum_{i=1}^n\left(X^{(j)}_i\right)^2 \lesssim \max_{1\le j \le p}\frac{1}{n}\sum_{i=1}^n E\left[\left(X^{(j)}_i\right)^2\right] + \sqrt{\frac{log(p)}{n}} \lesssim C_2^2 + \sqrt{\frac{log(p)}{n}} = C_2^2 + \mu_1\lambda,
\end{equation*} which implies that $\mathbb{A}_1$ holds.

Next, consider $\mathbb{A}_2$. Similarly, under Assumption \ref{as1}, let $\sigma^2 = \max_{1\le j\le p}1/n\sum_{i=1}^nE\left[\left(X^{(j)}_i(t_0)\right)^4\right] \\ \leq C_2^4,$ which is bounded, 
and $B = \sqrt{M_{Xt_0}^2} \le \sqrt{M_{XX}^2};$ by Lemma \ref{concenine}, 

\begin{equation*}
    \max_{1\le j \le p}\left|\frac{1}{n}\sum_{i=1}^n\left(\left(X^{(j)}_i(t_0)\right)^2-E\left[\left(X^{(j)}_i(t_0)\right)^2\right]\right)\right|\lesssim \sqrt{\frac{log(p)}{n}}
\end{equation*}
\noindent holds with probability at least $1-C(logn)^{-1}.$ 
Thus, with probability at least $1- C(logn)^{-1},$ \begin{equation*}
    \min_{1\le j \le p}\frac{1}{n}\sum_{i=1}^n\left(X^{(j)}_i(t_0)\right)^2 \gtrsim \max_{1\le j \le p}\frac{1}{n}\sum_{i=1}^n E\left[\left(X^{(j)}_i(t_0)\right)^2\right] - \sqrt{\frac{log(p)}{n}} \gtrsim C_3^2 - \sqrt{\frac{log(p)}{n}} = C_3^2 - C^{-1}\mu\lambda,
\end{equation*} which implies that $\mathbb{A}_2$ holds. By the same steps, we can obtain that $\mathbb{A}_2^{\prime}$ holds with probability at least $1- C(logn)^{-1}.$

Since $P(\mathbb{A}_1 \cap \mathbb{A}_2) \ge 1 - P(\mathbb{A}_1 ^c) - P(\mathbb{A}_2 ^c),$ we prove the lemma.
\end{proof}

\begin{lem}[Regularized events $\mathbb{A}_3$ and $\mathbb{A}_4$]
\label{lemmaprobAB}Suppose that Assumption \ref{as1} hold and set $\lambda$ by \eqref{lambda}. Let $\mu_2 = 2\mu_1/C_3$ and define 
\begin{equation*}
\begin{aligned}
&\mathbb{A}_3 :=\left\{\max_{1\le j \le p} \frac{1}{\left\Vert X^{(j)}\right\Vert _{n}}\left|\frac{1}{n}\sum_{i=1}^{n}U_{i}X_{i}^{(j)}\right|\le\frac{\mu_2\lambda}{2}
\right\} , \\
&\mathbb{A}_4 :=\left\{\max_{1\le j \le p}  \sup_{\tau \in \mathbb{T}%
}\frac{1}{\left\Vert X^{(j)}(\tau)\right\Vert _{n}}\left|\frac{1}{n}\sum_{i=1}^{n}U_{i}X_{i}^{(j)}\bm{1}\{Q_{i}<\tau \}\right|\le\frac{\mu_2\lambda}{2}\right\}, \\
&\text{Then},\quad P(\mathbb{A}_3) \ge 1 -C(logn)^{-1},
P(\mathbb{A}_4) \ge 1 -C(logn)^{-1}, \text{and} \quad P(\mathbb{A}_3 \cap \mathbb{A}_4) \ge 1 -C(logn)^{-1}.
\end{aligned} 
\end{equation*}
\end{lem}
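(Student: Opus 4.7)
The plan is to bound the numerators and denominators in each event separately. For both events, the denominators $\|X^{(j)}\|_n$ and $\|X^{(j)}(\tau)\|_n$ are controlled uniformly from below, and the numerators (which are mean-zero sums) are controlled uniformly from above by the concentration inequalities already established.

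For $\mathbb{A}_3$, first note that $E[U_i X_i^{(j)}] = 0$ by Assumption \ref{as1}(iv) (iterated expectations). I would apply Lemma \ref{concenine} to $\{U_i X_i^{(j)}\}_{i=1}^n$: by Cauchy–Schwarz together with Assumption \ref{as1}(iii)--(iv), $\sigma^2 = \max_{1\le j\le p} n^{-1}\sum_i E[(U_i X_i^{(j)})^2]$ is bounded by a constant, while $B = \sqrt{E M_{UX}^2}$ satisfies $B\sqrt{\log p}/\sqrt{n} = o_p(1)$ by Assumption \ref{as1}(v). Hence with probability at least $1 - C(\log n)^{-1}$,
\begin{equation*}
\max_{1\le j\le p}\Bigl|\tfrac{1}{n}\sum_{i=1}^n U_i X_i^{(j)}\Bigr| \lesssim \sqrt{\tfrac{\log p}{n}} = \mu_1 \lambda.
\end{equation*}
For the denominator, since $\tau \ge t_0$ implies $\{Q_i<t_0\}\subseteq\{Q_i<\tau\}$, we have $\|X^{(j)}\|_n \ge \|X^{(j)}(t_0)\|_n$ for every $j$, and by $\mathbb{A}_2$ from Lemma \ref{lemmaprobA3} the latter is bounded below by $\sqrt{C_3^2-\mu_1\lambda}$ with probability at least $1-C(\log n)^{-1}$. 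Taking the constant $A$ in $\lambda$ large enough, the ratio is at most $\mu_2\lambda/2$.

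For $\mathbb{A}_4$, the main obstacle is the $\sup_{\tau\in\mathbb{T}}$. The key observation is that as $\tau$ varies continuously, the indicator $\bm{1}\{Q_i<\tau\}$ only changes at the order statistics of $\{Q_i\}$, so the supremum is effectively a maximum over at most $n$ values of $\tau$. After sorting the sample by $Q_{(1)}\le\dots\le Q_{(n)}$, the numerator becomes a partial sum $n^{-1}\sum_{i=1}^k U_{(i)} X_{(i)}^{(j)}$ for $k=1,\dots,n$, which is exactly the object controlled by Lemma \ref{conpart}(i). With $\sigma^2$ and $B$ as before, this gives
\begin{equation*}
\max_{1\le j\le p}\sup_{\tau\in\mathbb{T}}\Bigl|\tfrac{1}{n}\sum_{i=1}^n U_i X_i^{(j)}\bm{1}\{Q_i<\tau\}\Bigr| \lesssim \sqrt{\tfrac{\log(pn)}{n}} \lesssim \sqrt{\tfrac{\log p}{n}} = \mu_1\lambda,
\end{equation*}
with probability at least $1-C(\log n)^{-1}$, since $p>n$. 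For the denominator, $\|X^{(j)}(\tau)\|_n \ge \|X^{(j)}(t_0)\|_n$ uniformly in $\tau\in\mathbb{T}$ by the same inclusion argument, and Lemma \ref{lemmaprobA3} gives $\|X^{(j)}(t_0)\|_n \ge \sqrt{C_3^2 - \mu_1\lambda}$ with high probability. With $A$ large enough, the ratio is at most $\mu_2\lambda/2$.

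Finally, the joint statement $P(\mathbb{A}_3\cap\mathbb{A}_4)\ge 1-C(\log n)^{-1}$ follows from a union bound over the three low-probability events (failure of $\mathbb{A}_3$, $\mathbb{A}_4$, and $\mathbb{A}_2$). The essential novelty compared to the proof of Lemma \ref{lemmaprobA3} is the use of the partial-sum concentration inequality (Lemma \ref{conpart}) to handle the supremum over $\tau$, which converts an uncountable supremum into a discrete maximum at the cost of only an extra $\log n$ factor that is absorbed into $\log p$.
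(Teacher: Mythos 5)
Your proposal is correct and follows essentially the same route as the paper's own proof: Lemma \ref{concenine} applied to the mean-zero products $U_iX_i^{(j)}$ for $\mathbb{A}_3$, sorting by $Q$ to reduce the supremum over $\tau$ to a maximum over partial sums handled by Lemma \ref{conpart}(i) for $\mathbb{A}_4$, and lower-bounding both denominators through $\|X^{(j)}(t_0)\|_n$ via the event $\mathbb{A}_2$ (the paper uses the equivalent $\mathbb{A}_2'$). The only cosmetic difference is that you make the choice of the constant $A$ explicit, which the paper leaves implicit.
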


\begin{proof}[Proof of Lemma \ref{lemmaprobAB}]

Under Assumption \ref{as1}, let $\sigma^2 = \max_{1\le j\le p}1/n\sum_{i=1}^nE\left[\left(U_iX^{(j)}_i\right)^2\right],$ which is bounded, and $B = \sqrt{M_{UX}^2},$ by Lemma \ref{concenine}, 
\begin{equation*}
    \max_{1\le j \le p}\left|\frac{1}{n}\sum_{i=1}^nU_iX^{(j)}_i\right|\lesssim \sqrt{\frac{log(p)}{n}} ,
\end{equation*}
\noindent holds with probability at least $1-C(logn)^{-1},$ and 
conditional on  $\mathbb{A}_2^{\prime},$ 
\begin{equation*}
    \max_{1\le j \le p}\left|\frac{1}{\left\Vert X^{(j)}\right\Vert _{n}}\frac{1}{n}\sum_{i=1}^nU_iX^{(j)}_i\right| \lesssim \frac{1}{\min_{1\le j \le p}\left\Vert X^{(j)}\right\Vert _{n}}\sqrt{\frac{log(p)}{n}} \le \frac{\mu_1\lambda}{C_3}.
\end{equation*}
We thus obtain that $\mathbb{A}_3$ holds with probability at least $1-C(logn)^{-1}.$ 

Next, consider the event $\mathbb{A}_4$. 
Since we have $n$ observations, sort $\{X_i,U_i,Q_i\}_{i=1}^{n}$ by $(Q_1, \dots,Q_n)$ in ascending order. Given the sorted $Q_i,$ the supremum over $\tau$ is achieved at one of the points $Q_{(i)}.$ Thus, for $j = 1, \dots,p$,
\begin{equation*}
\begin{aligned}
&\mathbb{P}\left\{\max_{1\le j \le p}\sup_{\tau\in\mathbb{T}}\frac{1}{\left\Vert X^{(j)}(\tau)\right\Vert _{n}}\left|\frac{1}{n}\sum_{i=1}^n U_i X_i^{(j)}(\tau)\right|\le \frac{\mu_2\lambda}{2} \right\}\\
\ge&\mathbb{P}\left\{\frac{1}{\min_{1\le j \le p}\left\Vert X^{(j)}(t_0)\right\Vert _{n}}\max_{1\le j \le p}\sup_{\tau\in\mathbb{T}}\left|\frac{1}{n}\sum_{i=1}^n U_i X_i^{(j)}(\tau)\right|\le \frac{\mu_2\lambda}{2} \right\}\\
=& \mathbb{P}\left\{\max_{1\le j \le p}\max_{1\le k\le n} \left|\frac{1}{n}\sum_{i=1}^k U_i X_i^{(j)}\right|\le \frac{\mu_2\lambda}{2}{\min_{1\le j \le p}\left\Vert X^{(j)}(t_0)\right\Vert _{n}} \right\}\\
\ge & \mathbb{P}\left\{\max_{1\le j \le p}\max_{1\le k\le n} \left|\frac{1}{n}\sum_{i=1}^k U_i X_i^{(j)}\right|\le \sqrt{\frac{log(p)}{n}}\right\}.
\end{aligned}
\end{equation*}


\noindent Under Assumption \ref{as1}, let $\sigma^2 = \max_{1\le j \le p}1/n\sum_{i=1}^n E\left[\left(U_i X_i^{(j)}\right)^2\right],$ which is bounded, and $B \le \sqrt{M_{UX}^2},$ by Lemma \ref{conpart} (i), with probability at least $1-C(logn)^{-1},$
\begin{equation*}
    \max_{1\le j \le p}\max_{1\le k \le n}\left|\frac{1}{n}\sum_{i=1}^kU_iX^{(j)}_i\right|\lesssim \sqrt{\frac{log(p)}{n}},
\end{equation*}  which implies $\mathbb{A}_4$ holds.
\end{proof}

We will follow Appendix C in \cite{lee2016} to establish our oracle inequalities. 
Define $J_{0}=J(\alpha _{0})$, $\widehat{\bm{D}}=\widehat{\bm{D}}(\widehat{\tau})$, $\bm{D}=\bm{D}(\tau_0)$  and $%
R_{n}=R_{n}(\alpha _{0},\tau _{0}),$ where%
\begin{equation*}
R_{n}(\alpha ,\tau ):=2n^{-1}\sum_{i=1}^{n}U_{i}X_{i}^{\prime }\delta
\left\{ 1(Q_{i}<\widehat{\tau })-1(Q_{i}<\tau )\right\} .
\end{equation*}

\begin{lem}
\label{lemma1} Conditional on the events $\mathbb{A}_1$, $\mathbb{A}_2$, $\mathbb{A}_3$ and $\mathbb{A}_4$, for $0<\mu<1,$ we
have
\begin{equation}\label{l5basic}
\begin{split}
&\left\Vert \widehat{f}-f_{0}\right\Vert _{n}^{2} +(1-\mu)\lambda\left| \widehat{\bm{D}}(\widehat{\alpha }-\alpha _{0})\right|_{1}
\leq  2\lambda \left| \left[\widehat{\bm{D}}(\widehat{\alpha }-\alpha
_{0})\right]_{J_{0}}\right|_{1}+\lambda \left| \left|\widehat{\bm{D}}{\alpha _{0}}%
\right|_{1}-\left| \bm{D}{\alpha _{0}}\right|
_{1}\right| +R_{n},
\end{split}%
\end{equation}%

\begin{equation}
\left\Vert \widehat{f}-f_{0}\right\Vert _{n}^{2}+(1-\mu)\lambda\left| \widehat{\bm{D}}(\widehat{\alpha }-\alpha _{0})\right|_{1}\leq 2\lambda \left| \widehat{\bm{D}}(\widehat{%
\alpha }-\alpha _{0})_{J_{0}}\right| _{1}+\left\Vert f_{(\alpha _{0},%
\widehat{\tau })}-f_{0}\right\Vert _{n}^{2}.  \label{imp-ineq-2-lem}
\end{equation}
\end{lem}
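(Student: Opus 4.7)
The plan is to derive both inequalities from the joint optimality of $(\widehat{\alpha},\widehat{\tau})$ in the penalized criterion (\ref{joint-max}), using the standard Lasso ``basic inequality'' template with careful bookkeeping of the threshold-mismatch term. The starting identity, obtained by expanding squares and using $Y_i = f_0(X_i,Q_i) + U_i$, is $S_n(\alpha,\tau) - S_n(\alpha_0,\tau_0) = \|f_{(\alpha,\tau)} - f_0\|_n^2 - 2n^{-1}\sum_i U_i\bigl(f_{(\alpha,\tau)} - f_0\bigr)$.

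For (\ref{l5basic}), I would instantiate this at $(\widehat{\alpha},\widehat{\tau})$ versus the infeasible $(\alpha_0,\tau_0)$. The optimality $S_n(\widehat{\alpha},\widehat{\tau}) + \lambda|\widehat{\bm{D}}\widehat{\alpha}|_1 \le S_n(\alpha_0,\tau_0) + \lambda|\bm{D}\alpha_0|_1$ combined with $f_{(\alpha_0,\tau_0)} = f_0$ yields $\|\widehat{f}-f_0\|_n^2 + \lambda|\widehat{\bm{D}}\widehat{\alpha}|_1 \le 2n^{-1}\sum_i U_i(\widehat{f}-f_0) + \lambda|\bm{D}\alpha_0|_1$. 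The decomposition $\widehat{f} - f_0 = \bm{X}(\widehat{\tau})(\widehat{\alpha}-\alpha_0) + X'\delta_0\{\bm{1}(Q<\widehat{\tau}) - \bm{1}(Q<\tau_0)\}$ then splits the noise cross-term as $2n^{-1}\sum_i U_i \bm{X}_i(\widehat{\tau})'(\widehat{\alpha}-\alpha_0) + R_n(\alpha_0,\tau_0)$, producing the $R_n$ piece on the right. For (\ref{imp-ineq-2-lem}), I would compare instead with $(\alpha_0,\widehat{\tau})$: both candidates now share the weight $\widehat{\bm{D}}$ and $\widehat{f} - f_{(\alpha_0,\widehat{\tau})} = \bm{X}(\widehat{\tau})(\widehat{\alpha}-\alpha_0)$ holds exactly, so the analogous step replaces $R_n$ by the deterministic $\|f_{(\alpha_0,\widehat{\tau})} - f_0\|_n^2$.

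Common to both inequalities, on $\mathbb{A}_3 \cap \mathbb{A}_4$ the remaining cross-term satisfies $|2n^{-1}\sum_i U_i \bm{X}_i(\widehat{\tau})'(\widehat{\alpha}-\alpha_0)| \le \mu\lambda |\widehat{\bm{D}}(\widehat{\alpha}-\alpha_0)|_1$, by combining the coordinatewise bound $|n^{-1}\sum_i U_i X_i^{(j)}| \le (\mu_2\lambda/2)\|X^{(j)}\|_n$ with $\sup_{\tau\in\mathbb{T}} |n^{-1}\sum_i U_i X_i^{(j)}\bm{1}\{Q_i<\tau\}| \le (\mu_2\lambda/2)\|X^{(j)}(\tau)\|_n$ evaluated at $\tau = \widehat{\tau}$ (the supremum absorbs the randomness of $\widehat{\tau}\in\mathbb{T}$), after choosing $A$ large enough that $\mu_2\le\mu$. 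Finally, the standard Lasso split $|\widehat{\bm{D}}\widehat{\alpha}|_1 \ge |\widehat{\bm{D}}\alpha_0|_1 - |[\widehat{\bm{D}}(\widehat{\alpha}-\alpha_0)]_{J_0}|_1 + |[\widehat{\bm{D}}(\widehat{\alpha}-\alpha_0)]_{J_0^c}|_1$ (which uses $\alpha_{0,J_0^c} = 0$), followed by rearrangement and the addition of $2\lambda|[\widehat{\bm{D}}(\widehat{\alpha}-\alpha_0)]_{J_0}|_1$ to both sides, merges the on- and off-support weighted norms on the left into $(1-\mu)\lambda|\widehat{\bm{D}}(\widehat{\alpha}-\alpha_0)|_1$. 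In (\ref{l5basic}), the penalty gap from using $\bm{D}$ rather than $\widehat{\bm{D}}$ on $\alpha_0$ is absorbed as $\lambda\bigl||\widehat{\bm{D}}\alpha_0|_1 - |\bm{D}\alpha_0|_1\bigr|$ via the reverse triangle inequality.

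The main (mild) obstacle is keeping the two comparisons straight: the $(\alpha_0,\tau_0)$ benchmark is infeasible and spawns the noise-times-jump term $R_n$, while the $(\alpha_0,\widehat{\tau})$-profile benchmark spawns the deterministic squared error $\|f_{(\alpha_0,\widehat{\tau})} - f_0\|_n^2$ in its place. Beyond that, controlling the noise cross-term uniformly in $\widehat{\tau}$ through $\mathbb{A}_4$ and calibrating the constants so that $\mu_2 \le \mu$ are the only non-routine steps; everything else is standard basic-inequality bookkeeping.
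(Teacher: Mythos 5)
Your proof is correct and is essentially the argument the paper relies on: the paper omits the proof, citing Lemma 5 of \cite{lee2016}, which proceeds exactly as you do — the basic inequality against the infeasible benchmark $(\alpha_0,\tau_0)$ producing $R_n$ and the penalty-gap term $\lambda\left|\left|\widehat{\bm{D}}\alpha_0\right|_1-\left|\bm{D}\alpha_0\right|_1\right|$ for \eqref{l5basic}, the profile comparison against $(\alpha_0,\widehat{\tau})$ producing $\left\Vert f_{(\alpha_0,\widehat{\tau})}-f_0\right\Vert_n^2$ for \eqref{imp-ineq-2-lem}, the noise cross-term controlled uniformly in $\tau$ via $\mathbb{A}_3\cap\mathbb{A}_4$, and the standard support split using $\alpha_{0,J_0^c}=0$. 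Your remark that one needs $\mu_2\le\mu$ (i.e.\ $A$ large enough) to land on the $(1-\mu)$ coefficient is a legitimate constant-bookkeeping point that the paper glosses over, and you handle it correctly.
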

Lemma \ref{lemma1} directly follows from Lemma 5 in \cite{lee2016}, so the proof is omitted.

\begin{proof}[Proof of Lemma \ref{lemmaf}]
Conditional on the events  $\mathbb{A}_1$,  $\mathbb{A}_2$, $\mathbb{A}_3$, and $\mathbb{A}_4$, the three terms on the right-hand side of \eqref{l5basic} can be bounded as follows by using Hölder's inequality:
\begin{align}
\begin{split}
\label{rough-ineq3}|R_{n}| \leq 2\mu_2 \lambda \sum_{j=1}^{p}\left\Vert X^{(j)}\right\Vert
_{n}\left|\delta _{0}^{(j)}\right| \leq 2\mu_2  \left|\delta_0 \right|_{1} \lambda\sqrt{C_2^2+\mu_1\lambda}.
\end{split},\\
\begin{split}
\label{dc1}\left| \widehat{\bm{D}}(\widehat{\alpha }-\alpha
_{0})_{J_{0}}\right|_{1}\leq\left\Vert \widehat{\bm{D}}\right\Vert _{\infty} \left| (\widehat{\alpha }-\alpha
_{0})_{J_{0}}\right|_{1}\le\left| (\widehat{\alpha }-\alpha
_{0})_{J_{0}}\right|_{1} \sqrt{C_2^2+\mu_1\lambda}, \\
\end{split}\\
\begin{split}
 \label{dc2}\left| \left|\widehat{\bm{D}}{\alpha _{0}}%
\right|_{1}-\left| \bm{D}{\alpha _{0}}\right|
_{1}\right|\leq \left| (\widehat{\bm{D}}-\bm{D}){\alpha _{0}}\right|_{1}\leq \left\Vert \widehat{\bm{D}}-\bm{D}\right\Vert _{\infty}\left|{\alpha _{0}}\right|_{1}\leq 2\left|\alpha _{0}\right| _{1}\sqrt{C_2^2+ \mu_1\lambda}
\end{split}\end{align}
Combining \eqref{rough-ineq3}, \eqref{dc1} and \eqref{dc2} with \eqref{l5basic} yields 
\begin{equation*}
\begin{aligned}
\left\Vert \widehat{f}-f_{0}\right\Vert _{n}^{2}\le &\Big( 2\left\Vert (\widehat{\alpha }-\alpha
_{0})_{J_{0}}\right\Vert _{1}+2\left\Vert{\alpha _{0}}\right\Vert _{1}+ 2\mu_2 \left\Vert\delta_0 \right\Vert_{1} \Big)\lambda\left(C_2^2+\mu_1\lambda\right)^{\frac{1}{2}}\\
\le&(6+2\mu_2)C_1 \left(C_2^2+\mu_1\lambda\right)^{\frac{1}{2}}s_0 \lambda.
\end{aligned}
\end{equation*}
\end{proof}

\subsection{Proofs for Section \ref{nothreshold}}

Our first result is a preliminary lemma that can be used to prove the adaptive restricted eigenvalue condition.

\begin{lem}\label{tau22}
Suppose that Assumption \ref{as1} hold, with probability at least $1 - C(logn)^{-1},$ we have
\begin{equation*}\left\Vert\frac{1}{n}\sum_{i=1}^n{X}_i{X}_i'-\frac{1}{n}\sum_{i=1}^nE\left[{X}_i{X}_i'\right]\right\Vert_\infty = O_P\left( \sqrt{\frac{log(p)}{n}}\right), 
\end{equation*}

\begin{equation*}\sup_{ \tau \in \mathbb{T}}\left\Vert\frac{1}{n}\sum_{i=1}^n{X}_i(\tau){X}_i(\tau)'-\frac{1}{n}\sum_{i=1}^n E\left[{X}_i(\tau){X}_i(\tau)'\right]\right\Vert_\infty = O_P\left( \sqrt{\frac{log(p)}{n}}\right). \end{equation*}
\end{lem}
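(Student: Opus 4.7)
\textbf{Proof proposal for Lemma \ref{tau22}.}

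The plan is to reduce both statements to the concentration inequalities already stated in Lemmas \ref{concenine} and \ref{conpart}. For the first bound, note that $\left\Vert n^{-1}\sum_{i=1}^n X_i X_i' - n^{-1}\sum_{i=1}^n E[X_i X_i']\right\Vert_\infty = \max_{1\le j,l\le p}\left|n^{-1}\sum_{i=1}^n \left(X_i^{(j)}X_i^{(l)} - E[X_i^{(j)}X_i^{(l)}]\right)\right|$, i.e., a maximum over the $p^2$ independent (in $i$) variables $Z_i^{(j,l)} := X_i^{(j)}X_i^{(l)}$. By the Cauchy--Schwarz inequality and Assumption \ref{as1}(iii), $\max_{j,l} n^{-1}\sum_i E[(Z_i^{(j,l)})^2] \le C_2^4$, and by Assumption \ref{as1}(vi), $B^2 := E\left[\max_{i,j,l}|Z_i^{(j,l)}|^2\right] = EM_{XX}^2$ satisfies $\sqrt{B^2 \log(p^2)}/\sqrt{n}\cdot (1/\sqrt{n}) = o_p(1)$, so the leading term $\sqrt{\sigma^2 \log(p^2)/n} = O_p(\sqrt{\log p/n})$ dominates the $B\log(p^2)/n$ term. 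Applying Lemma \ref{concenine} to the $p^2$-dimensional array $\{Z_i^{(j,l)}\}$ yields the first claim.

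For the uniform-in-$\tau$ bound, the key observation is that $X_i(\tau)X_i(\tau)' = X_i X_i'\,\bm{1}\{Q_i<\tau\}$, so the summand's dependence on $\tau$ is only through the indicator. I would sort the sample $\{(X_i,Q_i)\}$ in ascending order of $Q_i$ to obtain $\{(X_{(i)},Q_{(i)})\}$, and then observe that for each $\tau \in \mathbb{T}$ there exists $k(\tau) \in \{0,1,\ldots,n\}$ such that $\sum_{i=1}^n X_i^{(j)}X_i^{(l)}\bm{1}\{Q_i<\tau\} = \sum_{i=1}^{k(\tau)} X_{(i)}^{(j)}X_{(i)}^{(l)}$. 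The supremum over $\tau\in\mathbb{T}$ is therefore achieved at one of the data points, so the left-hand side equals $\max_{1\le j,l\le p}\max_{0\le k\le n}\left|n^{-1}\sum_{i=1}^{k}\left(X_{(i)}^{(j)}X_{(i)}^{(l)} - E\left[X_{(i)}^{(j)}X_{(i)}^{(l)}\mid Q_{(1)},\ldots,Q_{(n)}\right]\right)\right|$, after centering by the (conditional) mean and using the continuity of $Q_i$ to handle the conditional expectation. Applying Lemma \ref{conpart}(i) to the $p^2$-dimensional reordered array, conditionally on the order statistics of $Q$, gives the rate $\sqrt{\log(p^2 n)/n} + B\log(p^2 n)/n$, which reduces to $O_p(\sqrt{\log p/n})$ under the convention $p>n$ and Assumption \ref{as1}(vi).

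The main technical obstacle is the step in which I pass from the population expectation $n^{-1}\sum_i E[X_i(\tau)X_i(\tau)']$ to the conditional expectation used after sorting. This requires an independent uniform control of the deterministic (in $\tau$) bias $n^{-1}\sum_i E[X_i X_i'\bm{1}\{Q_i<\tau\}]$ over $\tau$, which is smooth in $\tau$ by Assumption \ref{as1}(iii)--(iv) (continuity and boundedness of $E[X_i^{(j)}X_i^{(l)}\mid Q_i=\tau]$ together with the density of $Q_i$). Equivalently, one may sidestep the conditioning by applying Lemma \ref{conpart}(i) directly to the unsorted array $\{X_i^{(j)}X_i^{(l)}\bm{1}\{Q_i<\tau\}\}$, after recognising that the supremum over $\tau$ can be taken over the $n+1$ values $\{-\infty\}\cup\{Q_{(1)},\ldots,Q_{(n)}\}$, which again introduces a factor of $\log n$ that is absorbed into $\log p$ under $p>n$. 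Either route yields the stated $O_p(\sqrt{\log p/n})$ rate, uniformly in $\tau\in\mathbb{T}$.
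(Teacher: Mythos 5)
Your proposal follows essentially the same route as the paper's proof: the first bound is obtained by applying Lemma \ref{concenine} to the $p^2$-dimensional array $X_i^{(j)}X_i^{(l)}$ with the moment bounds from Assumption \ref{as1}(iii) and (vi), and the second by sorting on $Q_i$, noting the supremum over $\tau$ is attained at the order statistics, and invoking the partial-sum inequality of Lemma \ref{conpart}(i). If anything you are more careful than the paper, which silently replaces the centering term $E[X_i^{(j)}X_i^{(l)}\bm{1}\{Q_i<\tau\}]$ by $\bm{1}\{Q_i<\tau\}E[X_i^{(j)}X_i^{(l)}]$ despite allowing dependence between $X$ and $Q$ — the bias-control step you flag is exactly what is needed to make that reduction rigorous.
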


\begin{proof}[Proof of Lemma \ref{tau22}]
Under Assumption \ref{as1}, let $\sigma^2 = \max_{1\le j,l\le p}1/n\sum_{i=1}^nE\left[\left(X_i^{(j)} X_i^{(l)}\right)^2\right],$ which is bounded, and $B = \sqrt{M_{XX}^2},$ by Lemma \ref{concenine}, with probability at least $1-C(logn)^{-1},$
\begin{equation}
    \max_{1\le j,l \le p}\left|\frac{1}{n}\sum_{i=1}^n\left(X_i^{(j)} X_i^{(l)}-E\left[( X_i^{(j)} X_i^{(l)}\right]\right)\right|\lesssim \sqrt{\frac{log(p)}{n}}, \label{meanXX}
\end{equation} 
implying that
\begin{equation*}
    \max_{1\le j,l \le p}\left|\frac{1}{n}\sum_{i=1}^n\left(X_i^{(j)} X_i^{(l)}\right)-\frac{1}{n}\sum_{i=1}^nE\left[ X_i^{(j)} X_i^{(l)}\right]\right| = O_P\left( \sqrt{\frac{log(p)}{n}}\right).
\end{equation*}

Next, sort $(X_i,U_i,Q_i)_{i=1}^n$ by $(Q_1,...,Q_n)$ in ascending order, we have
\begin{align}\label{probeg}\begin{split}
&\mathbb{P}\left\{\max_{1\le j ,l\le p}\sup_{\tau\in\mathbb{T}}\left|\frac{1}{n}\sum_{i=1}^n \left( X_i^{(j)} X_i^{(l)}1\left( Q_{i}<\tau \right)  -1\left( Q_{i}<\tau \right)E\left[ X_i^{(j)} X_i^{(l)}\right]\right)\right|\le t \right\}\\
\ge&\mathbb{P}\left\{\max_{1\le j,l \le p}\max_{1\le k\le n} \left|\frac{1}{n}\sum_{i=1}^k\left(X_i^{(j)} X_i^{(l)}-E\left[X_i^{(j)} X_i^{(l)}\right]\right)\right|\le t\right\}.    
\end{split}
\end{align}

\noindent Under Assumption \ref{as1}, let $\sigma^2 = \max_{1\le j,l\le p}1/n\sum_{i=1}^nE\left[\left(X_i^{(j)} X_i^{(l)}\right)^2\right],$ which is bounded, and $B \leq \sqrt{M_{XX}^2},$ by Lemma \ref{conpart} (i), with probability at least $1-C(logn)^{-1},$
\begin{equation*}
    \max_{1\le j,l \le p}\max_{1\le k\le n} \left|\frac{1}{n}\sum_{i=1}^k\left(X_i^{(j)} X_i^{(l)}-E\left[X_i^{(j)} X_i^{(l)}\right]\right)\right|\lesssim \sqrt{\frac{log(p)}{n}},
\end{equation*} implying that
\begin{equation*}
\begin{aligned}
&\sup_{ \tau \in \mathbb{T}}\max_{1\le j,l\le p}
\left|\frac{1}{n}\sum_{i=1}^n  X_i^{(j)} X_i^{(l)}1\left( Q_{i}<\tau \right)-\frac{1}{n}\sum_{i=1}^nE\left[ X_i^{(j)} X_i^{(l)}1\left( Q_{i}<\tau \right)\right]\right|=O_p\left(\sqrt{\frac{\log{p}}{n}}\right).
\end{aligned}
\end{equation*}
\end{proof}

\noindent Now we consider the empirical UARE condition. Define

\begin{equation*}
\widehat{\kappa}(s_0,c_0, \mathbb{T},\widehat{\bm{\Sigma}}) = \min_{\tau \in \mathbb{T}} \quad \min_{J_0 \subset \{1,..., 2p \}, |J_0 | \le s_0} \quad \min_{\gamma \neq 0, |\gamma_{J_0^c}|_1 \le c_0 |\gamma_{J_0}|_1} \frac{ (\gamma'1/n\bm{X}(\tau)'\bm{X}(\tau)\gamma)^{1/2}}{\|\gamma_{J_0}\|_2},
\end{equation*} 
and  recall Assumption \ref{as2} \eqref{recond}
\begin{equation*}
\kappa(s_0,c_0, \mathbb{T},\bm{\Sigma}) =  \min_{\tau \in \mathbb{T}} \quad \min_{J_0 \subset \{1,..., 2p \}, |J_0 | \le s_0} \quad \min_{\gamma \neq 0, |\gamma_{J_0^c} |_1 \le c_0 |\gamma_{J_0}|_1} \frac{ (\gamma'E\left[1/n\sum_{i=1}^n\bm{X}_i(\tau)\bm{X}_i(\tau)'\right]\gamma)^{1/2}}{\|\gamma_{J_0}\|_2}
>0.
\end{equation*}

\begin{lem}\label{lemeg}
Suppose that Assumptions \ref{as1}-\ref{as2} hold, let 
\begin{equation*}
\mathbb{A}_5 :=\left\{\frac{\kappa(c_0, \mathbb{T}, \bm{\Sigma})^2}{2}<{\widehat{\kappa}( c_0, \mathbb{T},\widehat{\bm{\Sigma}})}^2
\right\}, 
\end{equation*}
$\mathbb{A}_5$ holds with probability at least $1-C(logn)^{-1}.$

\end{lem}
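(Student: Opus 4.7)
\textbf{Proof proposal for Lemma \ref{lemeg}.}

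The plan is to reduce the empirical eigenvalue bound to a uniform $\ell_\infty$-bound on $\widehat{\bm{\Sigma}}(\tau)-\bm{\Sigma}(\tau)$ via the standard cone argument, and then invoke Lemma \ref{tau22} to control that sup-norm deviation. For any $\tau\in\mathbb{T}$, any index set $J_0$ with $|J_0|\le s_0$, and any $\gamma\ne 0$ satisfying $|\gamma_{J_0^c}|_1\le c_0|\gamma_{J_0}|_1$, I decompose
\begin{equation*}
\gamma'\widehat{\bm{\Sigma}}(\tau)\gamma \;=\; \gamma'\bm{\Sigma}(\tau)\gamma \;+\; \gamma'\bigl[\widehat{\bm{\Sigma}}(\tau)-\bm{\Sigma}(\tau)\bigr]\gamma,
\end{equation*}
and bound the second term by H\"older as
$\bigl|\gamma'[\widehat{\bm{\Sigma}}(\tau)-\bm{\Sigma}(\tau)]\gamma\bigr|\le \|\widehat{\bm{\Sigma}}(\tau)-\bm{\Sigma}(\tau)\|_\infty\cdot|\gamma|_1^2$.

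Next I exploit the cone condition together with Cauchy--Schwarz to obtain
$|\gamma|_1 \le (1+c_0)|\gamma_{J_0}|_1 \le (1+c_0)\sqrt{s_0}\,|\gamma_{J_0}|_2$,
so that dividing through by $|\gamma_{J_0}|_2^2$ yields
\begin{equation*}
\frac{\gamma'\widehat{\bm{\Sigma}}(\tau)\gamma}{|\gamma_{J_0}|_2^2}
\;\ge\; \frac{\gamma'\bm{\Sigma}(\tau)\gamma}{|\gamma_{J_0}|_2^2} \;-\; (1+c_0)^2\, s_0\,\bigl\|\widehat{\bm{\Sigma}}(\tau)-\bm{\Sigma}(\tau)\bigr\|_\infty.
\end{equation*}
Taking the minimum over $(\tau,J_0,\gamma)$ in the cone gives
$\widehat{\kappa}^{\,2} \ge \kappa^{2} - (1+c_0)^2 s_0 \sup_{\tau\in\mathbb{T}}\|\widehat{\bm{\Sigma}}(\tau)-\bm{\Sigma}(\tau)\|_\infty$.

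The final step is to upper-bound the sup-norm deviation uniformly in $\tau$. Using the $2\times 2$ block representation of $\bm{\Sigma}(\tau)$ and its empirical counterpart, the entrywise sup-norm deviation is the maximum of the deviations of the three blocks $\widehat{\bm{M}}-\bm{M}$, $\widehat{\bm{M}}(\tau)-\bm{M}(\tau)$, each of which is $O_P(\sqrt{\log p/n})$ uniformly in $\tau\in\mathbb{T}$ by Lemma \ref{tau22} (the $Q$-component only enters through indicator functions, which is exactly what Lemma \ref{tau22} handles via the partial-sum concentration in Lemma \ref{conpart}). Hence $\sup_{\tau\in\mathbb{T}}\|\widehat{\bm{\Sigma}}(\tau)-\bm{\Sigma}(\tau)\|_\infty = O_P(\sqrt{\log p/n})$ on an event of probability at least $1-C(\log n)^{-1}$.

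Combining these facts yields $\widehat{\kappa}^{\,2}\ge \kappa^{2} - (1+c_0)^2 s_0\cdot O_P(\sqrt{\log p/n})$. The sparsity scaling in Assumption \ref{as1}(i) implies $s_0^2\log p/n=o(1)$, and in particular $s_0\sqrt{\log p/n}=o(1)$, so the perturbation term is $o_p(1)$; on the intersection of the high-probability event from Lemma \ref{tau22} with the event $\{(1+c_0)^2 s_0\sup_\tau\|\widehat{\bm{\Sigma}}(\tau)-\bm{\Sigma}(\tau)\|_\infty \le \kappa^2/2\}$, we obtain $\widehat{\kappa}^{\,2}\ge \kappa^2/2$. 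The main (only) obstacle is justifying the uniform-in-$\tau$ sup-norm bound, which is why I route everything through the block decomposition and Lemma \ref{tau22} rather than attempting a direct covering-number argument over $\mathbb{T}$.
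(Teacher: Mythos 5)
Your proposal is correct and follows essentially the same route as the paper: the cone/H\"older reduction to $\widehat{\kappa}^{\,2}\ge \kappa^{2}-(1+c_0)^2 s_0\sup_{\tau}\|\widehat{\bm{\Sigma}}(\tau)-\bm{\Sigma}(\tau)\|_\infty$ followed by the uniform sup-norm bound from Lemma \ref{tau22}. The only minor slip is attributing $s_0^2\log p/n=o(1)$ to Assumption \ref{as1}(i), which actually controls $s_0^2|\delta_0|_1^2\log p/n$; the paper implicitly relies on $s_0\sqrt{\log p/n}\to 0$ in the same way, so this does not change the argument.
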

\begin{proof}[Proof of Lemma \ref{lemeg}]
Write 
\begin{equation*}
\begin{aligned}
\left\vert \gamma'\frac{1}{n}\bm{X}(\tau)'\bm{X}(\tau)\gamma\right\vert=\left\vert\gamma'\left(\frac{1}{n}\bm{X}(\tau)'\bm{X}(\tau)-E\left[\frac{1}{n}\sum_{i=1}^n\bm{X}_i(\tau)\bm{X}_i(\tau)'\right]+E\left[\frac{1}{n}\sum_{i=1}^n\bm{X}_i(\tau)\bm{X}_i(\tau)'\right]\right)\gamma\right\vert\\\ge\left\vert\gamma'E\left[\frac{1}{n}\sum_{i=1}^n\bm{X}_i(\tau)\bm{X}_i(\tau)'\right]\gamma \right\vert-\left\vert\gamma'\left(\frac{1}{n}\bm{X}(\tau)'\bm{X}(\tau)-E\left[\frac{1}{n}\sum_{i=1}^n\bm{X}_i(\tau)\bm{X}_i(\tau)'\right]\right)\gamma\right\vert.
\end{aligned}
\end{equation*}
By Holders' inequality,
\begin{equation}
\begin{aligned}
\left\vert\gamma'\left(\frac{1}{n}\bm{X}(\tau)'\bm{X}(\tau)-E\left[\frac{1}{n}\sum_{i=1}^n\bm{X}_i(\tau)\bm{X}_i(\tau)'\right]\right)\gamma\right\vert\le\left|\gamma\right|_1^2\left\Vert\frac{1}{n}\bm{X}(\tau)'\bm{X}(\tau)-E\left[\frac{1}{n}\sum_{i=1}^n\bm{X}_i(\tau)'\bm{X}_i(\tau)\right]\right\Vert_\infty.\label{lemmaholder}
\end{aligned}
\end{equation}
According to the defined restriction set, we have
\begin{equation*}
\begin{aligned}
\left|\gamma\right|_1\le\left|\gamma_{J_0}\right|_1+\left|\gamma_{J_0^c}\right|_1\le(1+c_0)\left|\gamma_{J_0}\right|_1\le(1+c_0)\sqrt{s_0}\left|\gamma_{J_0}\right|_2,
\end{aligned}
\end{equation*}
thus $\frac{\left|\gamma\right|_1}{\left|\gamma_{J_0}\right|_2}\le(1+c_0)\sqrt{s_0}.$
Dividing (\ref{lemmaholder}) by $\left|\gamma_{J_0}\right|_2^2$ yields

\begin{equation*}
\begin{aligned}
&\left\vert\frac{\left\vert \gamma'\frac{1}{n}\bm{X}(\tau)'\bm{X}(\tau)\gamma\right\vert}{\left|\gamma_{J_0}\right|_2^2}-\frac{\left\vert\gamma'E\left[1/n\sum_{i=1}^n\bm{X}_i(\tau)\bm{X}_i(\tau)'\right]\gamma \right\vert}{\left|\gamma_{J_0}\right|_2^2}\right\vert\\
&\le\frac{\left|\gamma\right|_1^2}{\left|\gamma_{J_0}\right|_2^2}\left\Vert\frac{1}{n}\bm{X}(\tau)'\bm{X}(\tau)-E\left[\frac{1}{n}\sum_{i=1}^n \bm{X}_i(\tau)'\bm{X}_i(\tau)\right]\right\Vert_\infty\\
&\le(1+c_0)^2s_0\left\Vert\frac{1}{n}\bm{X}(\tau)'\bm{X}(\tau)-E\left[\bm{X}_i(\tau)\bm{X}_i(\tau)'\right]\right\Vert_\infty.
\end{aligned}
\end{equation*}
Meanwhile,
\begin{equation*}
\begin{aligned}
&\frac{\left\vert\gamma'E\left[1/n\sum_{i=1}^n\bm{X}_i(\tau)\bm{X}_i(\tau)'\right]\gamma \right\vert}{\left|\gamma_{J_0}\right|_2^2}-\frac{\left\vert \gamma'1/n\bm{X}(\tau)'\bm{X}(\tau)\gamma\right\vert}{\left|\gamma_{J_0}\right|_2^2}\\
&\le\left\vert\frac{\left\vert \gamma'1/n\bm{X}(\tau)'\bm{X}(\tau)\gamma\right\vert}{\left|\gamma_{J_0}\right|_2^2}-\frac{\left\vert\gamma'E\left[1/n\sum_{i=1}^n\bm{X}_i(\tau)\bm{X}_i(\tau)'\right]\gamma \right\vert}{\left|\gamma_{J_0}\right\Vert_2^2}\right|.
\end{aligned}
\end{equation*}
We thus obtain
\begin{equation*}
\begin{aligned}
\frac{\left\vert \gamma'1/n\bm{X}(\tau)'\bm{X}(\tau)\gamma\right\vert}{\left|\gamma_{J_0}\right|_2^2}&\ge\frac{\left\vert\gamma'E\left[1/n\sum_{i=1}^n\bm{X}_i(\tau)\bm{X}_i(\tau)'\right]\gamma \right\vert}{\left|\gamma_{J_0}\right|_2^2}\\
&-(1+c_0)^2s_0\left\Vert\frac{1}{n}\bm{X}(\tau)'\bm{X}(\tau)-E\left[\frac{1}{n}\sum_{i=1}^n\bm{X}_i(\tau)'\bm{X}_i(\tau)\right]\right\Vert_\infty.
\end{aligned}
\end{equation*}
Minimizing the right hand side over $\tau\in \mathbb{T}$ yields
\begin{equation*}
\begin{aligned}
\frac{\left\vert \gamma'1/n\bm{X}(\tau)'\bm{X}(\tau)\gamma\right\vert}{\left|\gamma_{J_0}\right|_2^2}&\ge\min_{ \tau \in \mathbb{T}}\frac{\left\vert\gamma'E\left[1/n\sum_{i=1}^n\bm{X}_i(\tau)\bm{X}_i(\tau)'\right]\gamma \right\vert}{\left|\gamma_{J_0}\right|_2^2}\\
&-(1+c_0)^2s_0\sup_{ \tau \in \mathbb{T}}\left\Vert\frac{1}{n}\bm{X}(\tau)'\bm{X}(\tau)-E\left[\frac{1}{n}\sum_{i=1}^n\bm{X}_i(\tau)'\bm{X}_i(\tau)\right]\right\Vert_\infty.
\end{aligned}
\end{equation*}
Then minimizing the right hand side over $\left\{ \gamma \ne 0, |\gamma_{J_0^c}|_1 \le c_0 |\gamma_{J_0}|_1\right\}$ yileds
\begin{equation*}
\begin{aligned}
\frac{\left\vert \gamma'1/n\bm{X}(\tau)'\bm{X}(\tau)\gamma\right\vert}{\left|\gamma_{J_0}\right|_2^2}\ge\kappa(c_0, \mathbb{T},{\Sigma})^2-(1+c_0)^2s_0\sup_{ \tau \in \mathbb{T}}\left\Vert\frac{1}{n}\bm{X}(\tau)'\bm{X}(\tau)-E\left[\frac{1}{n}\sum_{i=1}^n\bm{X}_i(\tau)'\bm{X}_i(\tau)\right]\right\Vert_\infty.
\end{aligned}
\end{equation*}
The above inequality holds for all  $\tau\in \mathbb{T}$ and $\left\{ \gamma \ne 0, |\gamma_{J_0^c}|_1 \le c_0 |\gamma_{J_0}|_1\right\}$, so we minimize the left hand side over $\tau\in \mathbb{T}$ and $\left\{ \gamma \ne 0, |\gamma_{J_0^c}|_1 \le c_0 |\gamma_{J_0}|_1\right\}$ and derive
\begin{equation}
\begin{aligned}
\widehat{\kappa}( c_0, \mathbb{T},\widehat{\bm{\Sigma}})^2\ge\kappa(c_0, \mathbb{T},\bm{\Sigma})^2-(1+c_0)^2s_0\sup_{ \tau \in \mathbb{T}}\left\Vert\frac{1}{n}\bm{X}(\tau)'\bm{X}(\tau)-E\left[\bm{X}_i(\tau)'\bm{X}_i(\tau)\right]\right\Vert_\infty\label{egkappa}.
\end{aligned}
\end{equation}
By Lemma \ref{tau22}, with probability at least $1-C(logn)^{-1},$
\begin{equation}
\begin{aligned}
 (1+c_0)^2s_0\sup_{ \tau \in \mathbb{T}}\left\Vert\frac{1}{n}\bm{X}(\tau)'\bm{X}(\tau)-\frac{1}{n}\sum_{i=1}^nE\left[\bm{X}_i(\tau)'\bm{X}_i(\tau)\right]\right\Vert_\infty  \lesssim \left(\sqrt{\frac{\log{p}}{n}}\right),\label{egcol}
\end{aligned}
\end{equation}
thus, with probability at least $1-C(logn)^{-1},$ $(1+c_0)^2s_0\sup_{ \tau \in \mathbb{T}} ||1/n\bm{X}(\tau)'\bm{X}(\tau)- \\ E\left[1/n\sum_{i=1}^n\bm{X}_i(\tau)\bm{X}_i(\tau)'\right]||_\infty\le\frac{\kappa( c_0, \mathbb{T},\Sigma)^2}{2},$ we prove the lemma.
\end{proof}

\begin{lem}\label{thm-case1}
Suppose that $\delta _{0}=0$ and that Assumptions \ref{as1} amd \ref{as2} hold with $\kappa =\kappa\left(\frac{1+\mu}{1-\mu},\mathbb{T}, \bm{\Sigma}\right)$ for $\mu\in(0,1)$.
Let $(\widehat{\alpha },\widehat{\tau })$ be the Lasso estimator defined by \eqref{joint-max} with $\lambda=\frac{C}{\mu}\frac{\sqrt{\log{p}}}{\sqrt{ n}}$.
Then, conditional on events $ \mathbb{A}_1$, $ \mathbb{A}_2$,  $ \mathbb{A}_3$, $ \mathbb{A}_4$ and $ \mathbb{A}_5$, we have
\begin{equation*}
\begin{aligned}
\left\Vert \widehat{f}-f_{0}\right\Vert _{n} &\leq \frac{2\sqrt{2}}{\kappa}\left( \sqrt{C_2^2+\mu_1\lambda}\right)\sqrt{s_0}\lambda, \\
\left| \widehat{\alpha }-\alpha _{0}\right| _{1} &\leq \frac{4\sqrt{2}}{\left( 1-\mu \right)\kappa ^{2}}\frac{C_2^2+\mu_1\lambda}{\sqrt{C_3^2-\mu_1\lambda}}{s_0\lambda }.
\end{aligned}
\end{equation*}
\end{lem}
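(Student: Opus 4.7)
\textbf{Proof proposal for Lemma \ref{thm-case1}.}

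The plan is to feed the two simplifications that $\delta_0=0$ provides into the basic inequality of Lemma \ref{lemma1}, extract a cone condition on $\gamma:=\widehat\alpha-\alpha_0$, and then close the loop with the empirical restricted‐eigenvalue bound from Lemma \ref{lemeg}.

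\emph{Step 1: Simplify the basic inequality.} I will start from \eqref{l5basic} in Lemma \ref{lemma1}. Because $\delta_0=0$, the remainder term $R_n=2n^{-1}\sum_i U_iX_i^\prime\delta_0[\mathbf 1\{Q_i<\widehat\tau\}-\mathbf 1\{Q_i<\tau_0\}]$ vanishes identically. Moreover $\alpha_0=(\beta_0^\prime,0^\prime)^\prime$ has its nonzero coordinates only in the first $p$ slots, where the diagonals of both $\widehat{\bm D}=\bm D(\widehat\tau)$ and $\bm D=\bm D(\tau_0)$ equal $\|X^{(j)}\|_n$ and therefore agree. Hence $||\widehat{\bm D}\alpha_0|_1-|\bm D\alpha_0|_1|=0$, and \eqref{l5basic} collapses to
\begin{equation*}
\|\widehat f-f_0\|_n^2+(1-\mu)\lambda\,|\widehat{\bm D}\gamma|_1\;\le\;2\lambda\,|\widehat{\bm D}\gamma_{J_0}|_1.
\end{equation*}

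\emph{Step 2: Cone condition and an exact identity for the prediction norm.} With $\delta_0=0$, $f_0(X_i,Q_i)=X_i^\prime\beta_0=\bm X_i(\widehat\tau)^\prime\alpha_0$, and thus $\widehat f(X_i,Q_i)-f_0(X_i,Q_i)=\bm X_i(\widehat\tau)^\prime\gamma$, which gives the crucial identity $\|\widehat f-f_0\|_n^2=\gamma^\prime\widehat{\bm\Sigma}(\widehat\tau)\gamma$. Splitting the weighted $\ell_1$ norm and rearranging the inequality of Step 1 yields
\begin{equation*}
(1-\mu)\,|\widehat{\bm D}\gamma_{J_0^c}|_1\;\le\;(1+\mu)\,|\widehat{\bm D}\gamma_{J_0}|_1.
\end{equation*}
On $\mathbb A_1\cap\mathbb A_2$ every diagonal of $\widehat{\bm D}$ lies in $\bigl[\sqrt{C_3^2-\mu_1\lambda},\,\sqrt{C_2^2+\mu_1\lambda}\bigr]$ (using the monotonicity $\|X^{(j)}(\widehat\tau)\|_n\ge\|X^{(j)}(t_0)\|_n$ for $\widehat\tau\in\mathbb T$), so $\gamma$ satisfies $|\gamma_{J_0^c}|_1\le\frac{1+\mu}{1-\mu}|\gamma_{J_0}|_1$, hence also $|\gamma_{J_0^c}|_1\le\frac{1+\mu}{1-\mu}\sqrt{s_0}|\gamma_{J_0}|_2$ by Cauchy--Schwarz, which places $\gamma$ inside the restricted cone used in Lemma \ref{lemeg}.

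\emph{Step 3: Prediction bound.} Conditional on $\mathbb A_5$ we have $\widehat\kappa\ge\kappa/\sqrt2$, and the RE inequality applied to $\gamma$ gives $|\gamma_{J_0}|_2\le\|\widehat f-f_0\|_n/\widehat\kappa\le\sqrt 2\|\widehat f-f_0\|_n/\kappa$. Combining with Step 1 and the crude bound $|\widehat{\bm D}\gamma_{J_0}|_1\le\sqrt{C_2^2+\mu_1\lambda}\sqrt{s_0}|\gamma_{J_0}|_2$,
\begin{equation*}
\|\widehat f-f_0\|_n^2\;\le\;2\lambda\sqrt{C_2^2+\mu_1\lambda}\sqrt{s_0}\,|\gamma_{J_0}|_2\;\le\;\frac{2\sqrt 2\,\lambda\sqrt{C_2^2+\mu_1\lambda}\sqrt{s_0}}{\kappa}\,\|\widehat f-f_0\|_n,
\end{equation*}
which, after dividing by $\|\widehat f-f_0\|_n$, gives the first display.

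\emph{Step 4: $\ell_1$ estimation error.} Feed the prediction bound back into $|\gamma_{J_0}|_2\le\sqrt 2\|\widehat f-f_0\|_n/\kappa$ to obtain $|\gamma_{J_0}|_1\le\sqrt{s_0}|\gamma_{J_0}|_2\le 4\sqrt{C_2^2+\mu_1\lambda}\,s_0\lambda/\kappa^2$. The cone condition in its unweighted form, $|\gamma_{J_0^c}|_1\le\tfrac{1+\mu}{1-\mu}\tfrac{\sqrt{C_2^2+\mu_1\lambda}}{\sqrt{C_3^2-\mu_1\lambda}}|\gamma_{J_0}|_1$, then yields
\begin{equation*}
|\gamma|_1\;\le\;\Bigl(1+\tfrac{1+\mu}{1-\mu}\tfrac{\sqrt{C_2^2+\mu_1\lambda}}{\sqrt{C_3^2-\mu_1\lambda}}\Bigr)\,|\gamma_{J_0}|_1,
\end{equation*}
and bounding the bracket by $\frac{2}{1-\mu}\frac{\sqrt{C_2^2+\mu_1\lambda}}{\sqrt{C_3^2-\mu_1\lambda}}$ produces the second display up to the stated constant.

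\emph{Main obstacle.} The one subtle point is the translation of the \emph{weighted} cone from Step 1 into the unweighted cone required by Lemma \ref{lemeg}; the ratio $\sqrt{C_2^2+\mu_1\lambda}/\sqrt{C_3^2-\mu_1\lambda}$ must be tracked carefully, since it does not appear in the prediction bound but does appear in the $\ell_1$ estimation bound (reflecting the non‐normalization of the columns of $\bm X(\widehat\tau)$). Everything else is a mechanical combination of the union of the regularized events $\bigcap_{k=1}^5\mathbb A_k$, which still occurs with probability at least $1-C(\log n)^{-1}$.
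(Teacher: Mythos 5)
Your overall architecture is the paper's: kill $R_n$ and the weight-difference term using $\delta_0=0$, extract a cone condition from the basic inequality \eqref{l5basic}, invoke the empirical restricted eigenvalue via $\mathbb{A}_5$, and then convert the weighted $\ell_1$ bound to an unweighted one using $\min(\widehat{\bm D})\geq\sqrt{C_3^2-\mu_1\lambda}$ on $\mathbb{A}_2$. Steps 1 and the identity $\|\widehat f-f_0\|_n^2=\gamma'\widehat{\bm\Sigma}(\widehat\tau)\gamma$ are correct and match \eqref{l3*1}.

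There is, however, a genuine gap at the cone/RE interface in Steps 2--3. The basic inequality delivers only the \emph{weighted} cone $\left|\bigl(\widehat{\bm D}\gamma\bigr)_{J_0^c}\right|_1\leq\frac{1+\mu}{1-\mu}\left|\bigl(\widehat{\bm D}\gamma\bigr)_{J_0}\right|_1$. Stripping the weights, whose diagonal entries range over $\bigl[\sqrt{C_3^2-\mu_1\lambda},\sqrt{C_2^2+\mu_1\lambda}\bigr]$, gives $|\gamma_{J_0^c}|_1\leq\frac{1+\mu}{1-\mu}\cdot\frac{\sqrt{C_2^2+\mu_1\lambda}}{\sqrt{C_3^2-\mu_1\lambda}}|\gamma_{J_0}|_1$, \emph{not} $|\gamma_{J_0^c}|_1\leq\frac{1+\mu}{1-\mu}|\gamma_{J_0}|_1$ as you assert in Step 2. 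Consequently the unweighted error $\gamma$ need not lie in the cone with constant $c_0=\frac{1+\mu}{1-\mu}$ for which the lemma assumes $\kappa\left(s_0,\frac{1+\mu}{1-\mu},\mathbb{T},\bm\Sigma\right)>0$, so the inequality $|\gamma_{J_0}|_2\leq\sqrt{2}\,\|\widehat f-f_0\|_n/\kappa$ in Step 3 is not licensed as written. You flag exactly this translation as the ``main obstacle'' but never resolve it: to proceed your way you would have to enlarge the cone constant in the hypothesis to $\frac{1+\mu}{1-\mu}\cdot\frac{\sqrt{C_2^2+\mu_1\lambda}}{\sqrt{C_3^2-\mu_1\lambda}}$. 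The paper instead applies $\widehat\kappa$ to the weighted vector $\widehat{\bm D}\gamma$ itself --- which does satisfy the cone condition with constant exactly $\frac{1+\mu}{1-\mu}$ --- and pays for the weights only when relating the quadratic form $n^{-1}\bigl|\bm X(\widehat\tau)\widehat{\bm D}\gamma\bigr|_2^2$ back to $\|\widehat f-f_0\|_n^2$ via $\max(\widehat{\bm D})^2$ (see \eqref{l3*2}). You should adopt that ordering. A secondary point: carrying the weight ratio through your Step 4 yields the constant $8$ in the $\ell_1$ bound rather than the stated $4\sqrt2$, which you acknowledge but which means the displayed inequality is not literally reproduced.
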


\begin{proof}[Proof of Lemma \ref{thm-case1}]
Following the proof of Lemma 9 in \cite{lee2016}, conditional on events $ \mathbb{A}_1$, $ \mathbb{A}_2$,  $ \mathbb{A}_3$, $ \mathbb{A}_4$ and $ \mathbb{A}_5,$  we have 

\begin{equation}
\begin{aligned}
\left\Vert \widehat{f}-f_{0}\right\Vert _{n}^{2} +(1-\mu)\lambda \left| \widehat{\bm{D}}(\widehat{\alpha }-\alpha _{0})\right| _{1}\leq 2 \lambda \left|\widehat{\bm{D}}(\widehat{\alpha }-\alpha _{0})_{J_{0}}\right| _{1},
\label{l3*1}
\end{aligned}
\end{equation}
which implies that
\begin{equation}
\begin{aligned}\label{l8spa}
\left\| \widehat{\bm{D}}(\widehat{\alpha }-\alpha
_{0})_{J_{0}^{c}}\right| _{1}\leq \frac{1+\mu}{1-\mu }\left|
\widehat{\bm{D}}(\widehat{\alpha }-\alpha _{0})_{J_{0}}\right| _{1}.
\end{aligned}
\end{equation}
And we have,

\begin{equation}
\begin{split}
&\kappa^{2}  \left\Vert \widehat{\bm{D}}(\widehat{\alpha }-\alpha
_{0})_{J_{0}}\right\Vert _{2}^{2}  \leq 2 \widehat{\kappa}^2\left| \widehat{\bm{D}}(\widehat{\alpha }-\alpha
_{0})_{J_{0}}\right| _{2}^{2} \leq \frac{2}{n}\left|\bm{X}(\widehat{\tau})
\widehat{\bm{D}}(\widehat{\alpha }-\alpha _{0})\right|_{2}^{2} \\
& =\frac{2}{n}(\widehat{\alpha }-\alpha _{0})^{\prime }\widehat{\bm{D}}
\bm{X}(\widehat{\tau})^{\prime }\bm{X}(\widehat{\tau})\widehat{\bm{D}}(%
\widehat{\alpha }-\alpha _{0})  \leq \frac{2\max (\widehat{\bm{D}})^{2}}{n}(\widehat{\alpha }-\alpha
_{0})^{\prime }\bm{X}(\widehat{\tau})^{\prime }\bm{X}(\widehat{\tau})(\widehat{\alpha }-\alpha _{0})\\ 
&=2\max (\widehat{\bm{D}})^{2}  \left| \widehat{f}-f_{0}\right| _{n}^{2}.
\end{split}
\label{l3*2}
\end{equation}%

\noindent Combining \eqref{l3*1} with \eqref{l3*2} yields
\begin{equation*}
\begin{aligned}
\left\Vert \widehat{f}-f_{0}\right\Vert _{n}^{2}& \le2
\lambda \left| \widehat{\bm{D}}(\widehat{\alpha }-\alpha
_{0})_{J_{0}}\right| _{1} \leq  2  \lambda \sqrt{s_0 }\left|\widehat{\bm{D}}(\widehat{\alpha }-\alpha
_{0})_{J_{0}}\right| _{2} \leq \frac{2\sqrt{2} \lambda }{\kappa}\sqrt{s_0 }\max (\widehat{\bm{D}}) \left\Vert \widehat{f}-f_{0}\right\Vert _{n},
\end{aligned}
\end{equation*}
then conditional on $ \mathbb{A}_1$, we obtain
\begin{equation*}
\begin{aligned}
\left\Vert \widehat{f}-f_{0}\right\Vert _{n}
\leq \frac{2\sqrt{2}}{\kappa}\left( \sqrt{C_2^2+\mu_1\lambda}\right)\sqrt{s_0}\lambda.
\end{aligned}
\end{equation*}
Next, conditional on $ \mathbb{A}_1$, $\mathbb{A}_3$, $ \mathbb{A}_4$ and $ \mathbb{A}_5$, by \eqref{l8spa} and \eqref{l3*2},
\begin{align}\label{alphahat-derivation}
\begin{split}
&\left|\widehat{\bm{D}}\left( \widehat{\alpha }-\alpha _{0}\right)\right| _{1} =\left| \widehat{\bm{D}}(\widehat{\alpha }-\alpha_{0})_{J_{0}}\right| _{1}+\left| \widehat{\bm{D}}(\widehat{\alpha }-\alpha _{0})_{J_{0}^{c}}\right| _{1}  \leq 2 \left( 1-\mu \right) ^{-1}\sqrt{s_0 }\left| \widehat{\bm{D}}(\widehat{\alpha }-\alpha _{0})_{J_{0}}\right|_{2} \\
& \leq \frac{2}{\kappa \left( 1-\mu \right) } \sqrt{s_0 } \max (\widehat{\bm{D}})\left\Vert \widehat{f}-f_{0}\right\Vert _{n} 
\leq \frac{4\sqrt{2}\lambda }{\left( 1-\mu \right)\kappa ^{2}}{s_0}\max (\widehat{\bm{D}})
^2  \leq \frac{4\sqrt{2}\lambda }{\left( 1-\mu \right)\kappa ^{2}}{s_0} (C_2^2+\mu_1\lambda),
\end{split}
\end{align}
conditional on $ \mathbb{A}_2^{\prime},$
\begin{equation}
\begin{aligned}
\left| \widehat{\bm{D}}\left( \widehat{\alpha }-\alpha _{0}\right)
\right| _{1}\geq \min (\widehat{\bm{D}})\left|  \widehat{\alpha }-\alpha _{0} \right| _{1}\geq \sqrt{C_3^2-\mu_1\lambda}\left|  \widehat{\alpha }-\alpha _{0} \right| _{1},
\end{aligned}
\end{equation}
we thus have
\begin{equation}
\begin{aligned}\label{min-eig-D}
\left| \widehat{\alpha }-\alpha _{0} \right| _{1}\le
 \frac{4\sqrt{2}}{\left( 1-\mu \right)\kappa ^{2}}\frac{C_2^2+\mu_1\lambda}{\sqrt{C_3^2-\mu_1\lambda}}{s_0\lambda }.
\end{aligned}
\end{equation}
\end{proof}

\begin{proof}[Proof of Theorem \ref{main-thm-case1}]
The proof follows immediately
from combining Assumptions \ref{as1} and \ref{as2} with
Lemma \ref{thm-case1}. Specifically, $P(\mathbb{A}_1 \cap \mathbb{A}_2 \cap \mathbb{A}_3 \cap \mathbb{A}_4 \cap \mathbb{A}_5) \ge 1 -C(logn)^{-1}.$
\end{proof}
 
\subsection{Proofs for Section \ref{fixedthreshold}}\label{fixedeffectprf}

The following lemma provides an upper bound for $\left\vert \widehat{\tau}-\tau _{0}\right\vert.$ 

\begin{lem}\label{lem-claim2}
Suppose that Assumption \ref{A-discontinuity} holds.
Let 
\begin{align*}
\eta^\ast = \max \left\{ \min_{i}\left\vert Q_{i}-\tau_0\right\vert, \frac{1}{C_4} \Big( 2 C_{1}(3+\mu_2)\left(C_2^2+\mu_1\lambda\right)^{\frac{1}{2}}s_0\lambda\Big)\right\},
\end{align*}
then, conditional on the events $\mathbb{A}_1$, $\mathbb{A}_2$, $\mathbb{A}_3$ and $\mathbb{A}_4,$ we have 
$$\left\vert \widehat{\tau}-\tau _{0}\right\vert \leq \eta^\ast.$$
\end{lem}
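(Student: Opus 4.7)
The plan is to prove the bound by contradiction, pairing the prediction-norm upper bound of Lemma \ref{lemmaf} with the identifiability lower bound of Assumption \ref{A-discontinuity}. The value of $\eta^\ast$ is precisely calibrated so that these two bounds meet: the numerator $2C_1(3+\mu_2)\sqrt{C_2^2+\mu_1\lambda}\,s_0\lambda$ coincides (up to the factor $C_4$) with the upper bound on $\|\widehat f - f_0\|_n^2$ produced by Lemma \ref{lemmaf}, after rewriting $(6+2\mu_2)=2(3+\mu_2)$.

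First I would invoke Lemma \ref{lemmaf} on the event $\mathbb{A}_1\cap\mathbb{A}_2\cap\mathbb{A}_3\cap\mathbb{A}_4$ to obtain
$$\left\Vert \widehat f - f_0\right\Vert_n^2 \;\leq\; (6+2\mu_2)C_1\sqrt{C_2^2+\mu_1\lambda}\,s_0\lambda \;=\; 2C_1(3+\mu_2)\sqrt{C_2^2+\mu_1\lambda}\,s_0\lambda.$$
Suppose for contradiction that $|\widehat\tau - \tau_0| > \eta^\ast$. Then in particular $|\widehat\tau - \tau_0|>\min_i|Q_i-\tau_0|$, so the indicators $\bm{1}\{Q_i<\widehat\tau\}$ and $\bm{1}\{Q_i<\tau_0\}$ disagree on at least one observation (the design can actually distinguish $\widehat\tau$ from $\tau_0$); and $|\widehat\tau - \tau_0| > \frac{1}{C_4}\bigl(2C_1(3+\mu_2)\sqrt{C_2^2+\mu_1\lambda}\,s_0\lambda\bigr)$. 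Applying Assumption \ref{A-discontinuity} with $\tau=\widehat\tau$, $\alpha=\widehat\alpha$, and $\eta=\eta^\ast<|\widehat\tau-\tau_0|$ yields
$$\left\Vert \widehat f - f_0\right\Vert_n^2 \;=\; \left\Vert f_{(\widehat\alpha,\widehat\tau)}-f_0\right\Vert_n^2 \;>\; C_4\eta^\ast \;\geq\; 2C_1(3+\mu_2)\sqrt{C_2^2+\mu_1\lambda}\,s_0\lambda,$$
contradicting the upper bound from Lemma \ref{lemmaf}. Hence $|\widehat\tau-\tau_0|\leq\eta^\ast$.

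The main obstacle is that Assumption \ref{A-discontinuity} is written only for $\alpha$ satisfying $\mathcal{M}(\alpha)\leq s_0$, whereas the Lasso solution $\widehat\alpha$ is not guaranteed to be exactly $s_0$-sparse. If direct application is unavailable, the clean workaround is to use $\alpha_0$ in place of $\widehat\alpha$ (which is $s_0$-sparse by Assumption \ref{as1}(i)): the assumption then gives $\|f_{(\alpha_0,\widehat\tau)}-f_0\|_n^2>C_4\eta^\ast$, and one controls this from above by combining $\|\widehat f-f_0\|_n$ with $\|(\widehat\alpha-\alpha_0)'\bm{X}(\widehat\tau)\|_n$ through the triangle inequality, using the $\ell_1$-bound derivable from inequality \eqref{imp-ineq-2-lem} in Lemma \ref{lemma1}. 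The $\min_i|Q_i-\tau_0|$ term in the definition of $\eta^\ast$ is intrinsic: any two candidates lying in the same open interval between consecutive order statistics of $\{Q_i\}$ generate identical residual vectors and so cannot be separated by the empirical criterion, which is exactly the resolution cap in the random design setting discussed in the footnote.
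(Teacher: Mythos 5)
Your proof is correct and takes essentially the same route as the paper: a contradiction between the prediction-error upper bound and the identifiability lower bound of Assumption \ref{A-discontinuity}, with $\eta^\ast$ calibrated exactly so that $C_4\eta^\ast$ dominates the bound $2C_1(3+\mu_2)\sqrt{C_2^2+\mu_1\lambda}\,s_0\lambda$. The only cosmetic difference is that the paper re-derives that upper bound inside the argument by lower-bounding the objective difference $\bigl[\widehat{S}_n+\lambda|\widehat{\bm{D}}\widehat{\alpha}|_1\bigr]-\bigl[S_n(\alpha_0,\tau_0)+\lambda|\bm{D}\alpha_0|_1\bigr]\le 0$ (which is precisely how Lemma \ref{lemmaf} is proved), whereas you invoke Lemma \ref{lemmaf} directly; note also that the paper, like your main argument, applies Assumption \ref{A-discontinuity} with $\alpha=\widehat{\alpha}$ without verifying its sparsity, so the caveat you flag is shared by, not worse than, the paper's own proof.
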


\begin{proof}[Proof of Lemma  \ref{lem-claim2}]
Following the proof of Lemma 11 in \cite{lee2016}, conditional on the events $\mathbb{A}_1,$ $\mathbb{A}_2$, $\mathbb{A}_3$ and $\mathbb{A}_4,$ we have
\begin{equation}
\begin{aligned}\label{pf-lem-claim2-a}
\widehat{S}_{n}-S_{n}(\alpha_0 ,\tau_0 ) \geq
\left\Vert \widehat{f}-f_{0}\right\Vert _{n}^{2}
- \mu \lambda \left| \widehat{\bm{D}}(\widehat{\alpha }-\alpha_0 )\right| _{1}  - R_n,
\end{aligned} 
\end{equation} and
\begin{align}\label{contra-ineq}
\begin{split}
&\lefteqn{\left[ \widehat{S}_{n} + \lambda \left| \widehat{\bm{D}}\widehat{\alpha }\right| _{1} \right] - \left[ S_{n}(\alpha_0 ,\tau_0 ) + \lambda \left|\bm{D}\alpha_0 \right| _{1} \right] } \\
&\geq
\left\Vert \widehat{f}-f_{0}\right\Vert _{n}^{2}
- 2 \lambda \left| \widehat{\bm{D}}(\widehat{\alpha }-\alpha_0 )_{J_0} \right| _{1}
-\lambda \left\vert \left| \bm{D}\alpha_0 \right| _{1} - \left| \widehat{\bm{D}}\widehat{\alpha }\right| _{1} \right\vert
 - R_n \\
&\geq \left\Vert \widehat{f}-f_{0}\right\Vert _{n}^{2}
- \Big(6\lambda \sqrt{C_2^2+\mu_1\lambda}C_1 s_0 + 2\mu_2 \lambda \sqrt{C_2^2+\mu_1\lambda} C_1 s_0\Big) \\
&\geq
\left\Vert \widehat{f}-f_{0}\right\Vert _{n}^{2}
-\Big( 2 C_{1}(3+\mu_2)\left(C_2^2+\mu_1\lambda\right)^{\frac{1}{2}}s_0\lambda\Big)
\geq0,
\end{split}
\end{align}
where the second inequality in \eqref{contra-ineq} is obtained by applying \eqref{rough-ineq3}, \eqref{dc1}, and \eqref{dc2} to bound the last three terms, and the last inequality follows from Lemma \ref{lemmaf}.

\noindent Now suppose that $\left\vert \widehat{\tau}-\tau _{0}\right\vert> \eta^\ast$,
then Assumption \ref{A-discontinuity} and \eqref{contra-ineq} together imply that
\begin{align*}
\left[ \widehat{S}_{n} + \lambda \left| \widehat{\bm{D}}\widehat{\alpha }\right| _{1} \right] - \left[ S_{n}(\alpha_0 ,\tau_0 ) + \lambda \left| \bm{D}\alpha_0 \right| _{1} \right]
\geq\left\Vert \widehat{f}-f_{0}\right\Vert _{n}^{2}
-C_4 \eta^\ast  > 0,
\end{align*}
which  leads to contradiction as $\widehat{\tau}$ is the minimizer of \eqref{joint-max}.
Therefore, we have proved the lemma.
\end{proof}

\begin{lem}
\label{as4proof}
Suppose that Assumption \ref{as1} holds, then for any $\eta > C logp/n > 0,$ with $C > 0,$ there exists a finite constant $C_5, $ such that with probability at least $1-C(logn)^{-1},$
\begin{align}
\begin{split}\label{as41}
\sup_{1 \le j,l\le p}\sup_{\left\vert \tau -\tau _{0}\right\vert <\eta } \frac{1}{n}\sum_{i=1}^{n}\left\vert X_{i}^{\left( j\right) } X_{i}^{\left( l\right) }\right\vert \left\vert 1\left( Q_{i}<\tau _{0}\right) -1\left( Q_{i}<\tau
\right)\right\vert\leq C_5\eta, 
\end{split}\\
\begin{split}\label{as42}
&\sup_{\left\vert \tau -\tau
_{0}\right\vert < \eta }\left\vert \frac{1}{n}\sum_{i=1}^{n}U_{i}X_{i}^{%
\prime }\delta _{0}\left[ 1\left( Q_{i}<\tau _{0}\right) -1\left( Q_{i}<\tau
\right) \right] \right\vert\le\frac{\lambda\sqrt{\eta}}{2}.
\end{split}
\end{align}
\end{lem}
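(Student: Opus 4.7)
Both bounds rest on the observation that for any $|\tau-\tau_0|<\eta$,
\[
\bigl|\mathbf{1}\{Q_i<\tau_0\}-\mathbf{1}\{Q_i<\tau\}\bigr|\;\le\;\mathbf{1}\{|Q_i-\tau_0|<\eta\},
\]
which removes the $\tau$-supremum from inside the summation and reduces each statement to controlling an average over the window $\{|Q_i-\tau_0|<\eta\}$.

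For \eqref{as41}, after applying the envelope inequality it suffices to bound $\max_{1\le j,l\le p}\tfrac{1}{n}\sum_{i=1}^n|X_i^{(j)}X_i^{(l)}|\mathbf{1}\{|Q_i-\tau_0|<\eta\}$. I would split this into mean plus deviation. The mean is bounded by $C\eta$: Assumption \ref{as1}(iii) applied with $j=l$ gives $E[(X^{(j)})^2\mid Q=q]\le C$ for $q$ in a neighborhood of $\tau_0$, conditional Cauchy--Schwarz extends this to $E[|X^{(j)}X^{(l)}|\mid Q=q]\le C$, and integration against the (bounded) density of $Q$ on a window of length $2\eta$ yields an $O(\eta)$ bound. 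For the deviation I would invoke Lemma \ref{concenine} with $p^2$ pairs, $\sigma^2\le\max_{j,l}E[(X^{(j)}X^{(l)})^2]\le C_2^4$ (Assumption \ref{as1}(iii)) and envelope $B=\sqrt{EM_{XX}^2}$; the resulting deviation is of order $\sqrt{\log p/n}+M_{XX}\log p/n$, which is $o(\eta)$ under the hypothesis $\eta>C\log p/n$ together with Assumption \ref{as1}(vi).

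For \eqref{as42}, I split $\tau\in[\tau_0-\eta,\tau_0+\eta)$ into the halves $\tau\ge\tau_0$ and $\tau<\tau_0$. In each half, sorting the observations by $Q_i$, the quantity $\tfrac{1}{n}\sum_i U_iX_i'\delta_0[\mathbf{1}\{Q_i<\tau_0\}-\mathbf{1}\{Q_i<\tau\}]$ is a signed partial sum of $U_iX_i'\delta_0$ over consecutive indices, so the supremum over $\tau$ becomes a maximum over partial sums of the sequence $\{U_iX_i'\delta_0\mathbf{1}\{|Q_i-\tau_0|<\eta\}\}$. Conditional on $(X_i,Q_i)_{i=1}^n$ the summands are independent and mean zero (Assumption \ref{as1}(iv)), so I would apply Lemma \ref{conpart}(i) (equivalently, a Bernstein-type bound). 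The envelope input is $M_{UX}|\delta_0|_1$ and is handled by Assumption \ref{as1}(v); the crucial variance input is
\[
\sum_{i:\,|Q_i-\tau_0|<\eta} E[U_i^2\mid X_i,Q_i]\,(X_i'\delta_0)^2,
\]
which Assumption \ref{A-secondmoments} controls at the order $n\eta$ once the conditional variance of $U_i$ is bounded via Assumption \ref{as1}(iv). The partial-sum concentration then yields $\sup|\cdot|=O_p(\sqrt{\eta\log n/n})+o_p(\lambda\sqrt\eta)$, dominated by $\lambda\sqrt\eta/2=\tfrac{A}{2\mu}\sqrt{\eta\log p/n}$ for $A$ large enough.

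\textbf{Main obstacle.} The delicate step is the variance computation in \eqref{as42}. A generic H\"older bound would give $\sigma^2=O(\eta|\delta_0|_1^2)$, producing a spurious factor of $|\delta_0|_1$ that blows up in the high-dimensional regime. Assumption \ref{A-secondmoments} precisely removes this factor by controlling $\sum(X_i'\delta_0)^2\mathbf{1}\{|Q_i-\tau_0|<\eta\}$ at order $n\eta$, but to exploit it one must condition on $(X,Q)$, truncate $U_i$ to the boundedness required for a Bernstein step, and discharge the truncated tail using the fourth-moment bound $E[U_i^4]\le C_4$ from Assumption \ref{as1}(iv); the payoff is the matching $\sqrt\eta$ factor on the right-hand side of \eqref{as42}.
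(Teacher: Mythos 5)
Your treatment of \eqref{as41} coincides with the paper's: both replace $\left\vert \bm{1}\{Q_i<\tau_0\}-\bm{1}\{Q_i<\tau\}\right\vert$ by the window indicator $\bm{1}\{|Q_i-\tau_0|<\eta\}$, bound the expected contribution of the window by $C\eta$ via the conditional moment bound in Assumption \ref{as1}(iii) and the continuity of $Q$, and control the deviation with the concentration inequality, so there is nothing to flag there. For \eqref{as42} you take a genuinely different route. The paper applies Cauchy--Schwarz to decouple the error from the regression function,
\[
\left\vert\tfrac1n\sum_{i}U_iX_i'\delta_0\left[\bm{1}\{Q_i<\tau_0\}-\bm{1}\{Q_i<\tau\}\right]\right\vert\le\left(\tfrac1n\sum_{i\in W}U_i^2\right)^{1/2}\sqrt{2\eta}\,h_n(\eta),
\]
with $W$ the sorted index window, and then concentrates $\tfrac1n\sum_{i\in W}U_i^2$ with Lemma \ref{conpart}; you instead keep the product $U_iX_i'\delta_0$, use that it is conditionally mean zero, and apply a maximal partial-sum inequality to the centered sum, feeding Assumption \ref{A-secondmoments} in through the variance proxy. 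Your route is the one that actually exploits cancellation in $U_i$: it yields a bound of order $\sqrt{\eta\log p/n}=O(\lambda\sqrt{\eta})$ uniformly over the window, whereas the paper's Cauchy--Schwarz bound is of order $\sqrt{\eta}\cdot\sqrt{\eta}\,h_n(\eta)\asymp\eta$ once $\tfrac1n\sum_{i\in W}U_i^2$ settles at its mean $\asymp\eta$, which closes against the target $\lambda\sqrt{\eta}/2$ only when $\eta\lesssim\lambda^2$. Over the range of $\eta$ for which the lemma is invoked, your decomposition is the more robust one.

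The one step you should not wave at is the variance computation you yourself single out. Assumption \ref{as1}(iv) bounds $E[U_i^4]$ and $E[X_i^{(j)}X_i^{(l)}U_i^2\mid Q_i=\tau]$; it does not bound $E[U_i^2\mid X_i,Q_i]$, so the claimed inequality $\sum_{i\in W}E[U_i^2\mid X_i,Q_i](X_i'\delta_0)^2\lesssim n\eta$ does not follow from the stated assumptions together with Assumption \ref{A-secondmoments} without an additional argument. Using only the conditional-on-$Q$ moment bound gives $\sum_{i\in W}E[(X_i'\delta_0)^2U_i^2]\lesssim n\eta|\delta_0|_1^2$, which is exactly the spurious $|\delta_0|_1$ factor you are trying to avoid; and the truncation you sketch is delicate, because the level $T$ needed to make the tail $E[U_i^2\bm{1}\{|U_i|>T\}]\le C_4/T^2$ negligible against $\lambda\sqrt{\eta}$ grows like $\sqrt{\eta}/\lambda$ and re-enters the Bernstein variance and envelope. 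You have correctly located the crux, but as written the plan does not yet discharge it; an explicit bound on $E[U_i^2\mid X_i,Q_i]$ (or a more careful peeling of the truncation) is needed to make \eqref{as42} rigorous along your route.
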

\begin{proof}[Proof of Lemma \ref{as4proof}]

Under Assumption \ref{as1}, $Q_i$ is continuously distributed, and \\ $E\left( X_{i}^{\left( j\right) } X_{i}^{\left( l\right) }\vert Q_i=\tau\right)$ is continuous and bounded in a neighborhood of $\tau_0$ for all $1\le  j, l\le p$, and by  \eqref{meanXX}, \eqref{as41} holds immediately.

To show that \eqref{as42} holds, we sort $\{X_i,U_i,Q_i\}_{i=1}^n$  by $(Q_1,..., Q_n)$ in ascending order, and obtain
\begin{equation*}
\begin{aligned}
&\mathbb{P}\left(\sup_{\left\vert \tau -\tau_{0}\right\vert < \eta }\left\vert \frac{1}{n}\sum_{i=1}^{n}U_{i}X_{i}^{\prime }\delta _{0}\left[ 1\left( Q_{i}<\tau _{0}\right) -1\left( Q_{i}<\tau\right) \right] \right\vert \le \frac{\lambda\sqrt{\eta}}{2}\right)\\
&\ge \mathbb{P}\left(\left\vert \frac{1}{n}\sum_{i= \min\{1, [n\left( \tau
_{0}-\eta \right)]\}}^{\max\{[n\left( \tau _{0}+\eta \right)],n\}}
U_{i}^2 \right\vert^{1/2} \le \frac{\lambda}{2\sqrt{2}h_n(\eta)}\right) \\
& = \mathbb{P}\left(\sup_{\left\vert \tau -\tau_{0}\right\vert < \eta }\left\vert \frac{1}{n}\sum_{i=1}^{n}U_{i}^2\left[ 1\left( Q_{i}<\tau _{0}\right) -1\left( Q_{i}<\tau\right) \right] \right\vert^{1/2} \le \frac{\lambda}{2\sqrt{2}h_n(\eta)}\right),
\end{aligned}
\end{equation*}
under Assumption \ref{as1} (iv), with $\eta > C logp/n > 0,$ by lemma \ref{conpart}, \eqref{as42} holds with probability at least $1-C(logn)^{-1}.$
\end{proof}

We now provide a lemma for bounding the prediction loss and the $l_1$ estimation loss for $\alpha_0$. To do so, we define the following $G_1,$ $G_2$ and $G_3$:
\begin{equation*}
\begin{aligned}
G_1=&\sqrt{c_{\tau}}+\left( 2 \sqrt{C_3^2-\mu_1\lambda} \right)^{-1}C_5\Vert\delta_0\Vert_1  c_{\tau}, \quad G_2=  \frac{12\left(C_2^2+\mu_1\lambda\right)}{\kappa^2},\\
G_3=&    \frac{2\sqrt{2}\left(C_2^2+\mu_1\lambda\right)^{\frac{1}{2}}\sqrt{C_5C_1}}{\kappa}\left(c_{\alpha}c_{\tau}\right)^{1/2}.\\
\end{aligned}
\end{equation*}

\begin{lem}
\label{lem-claim3} Suppose that $\left\vert \widehat{\tau}-\tau _{0}\right\vert
\leq c_{\tau }$ and $| \widehat{\alpha}-\alpha _{0}| _{1}\leq
c_{\alpha }$ for some $(c_{\tau },c_{\alpha })$. Suppose that Assumptions \ref{as2} and \ref{A-smoothness} hold with $%
\mathbb{S} = \left\{ \left\vert \tau -\tau _{0}\right\vert \leq c_{\tau
}\right\} $, $\kappa =\kappa \left( s_0, \frac{2+\mu}{1-\mu },\mathbb{S},\bm{\Sigma}\right)$ for $0<\mu <1$. Then, conditional on $\mathbb{A}_1,$ $\mathbb{A}_2,$ $\mathbb{A}_3,$ $\mathbb{A}_4,$ and $\mathbb{A}_5,$ we have
\begin{equation*}
\begin{aligned}
&\left\Vert \widehat{f}-f_{0}\right\Vert _{n}^{2}\leq 3\lambda\cdot\left\{G_1\vee G_2 \lambda  s_0 \vee  G_3\sqrt{s_0|\delta_0|_1}\right\}, \\
&\left| \widehat{\alpha}-\alpha_{0}\right| _{1}\leq\frac{3}{(1-\mu)\sqrt{C_3^2-\mu_1\lambda}}\cdot \left\{G_1\vee G_2 \lambda  s_0 \vee  G_3\sqrt{s_0|\delta_0|_1}\right\}.\\
\end{aligned}
\end{equation*}
\end{lem}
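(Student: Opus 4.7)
The plan is to adapt the oracle-inequality argument of Lemma \ref{thm-case1}, tracking the two extra perturbations that arise because $\widehat{\tau}\ne\tau_{0}$. Throughout, I condition on $\mathbb{A}_{1}\cap\cdots\cap\mathbb{A}_{5}$ and start from the basic inequality \eqref{l5basic}. The two threshold-induced remainders on its right-hand side, namely $R_{n}$ and $\bigl||\widehat{\bm{D}}\alpha_{0}|_{1}-|\bm{D}\alpha_{0}|_{1}\bigr|$, will be absorbed into a single perturbation $\lambda G_{1}$. The first is at most $\lambda\sqrt{c_{\tau}}$ by \eqref{xud} of Assumption \ref{A-smoothness} with $\eta=c_{\tau}$. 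For the second, $\widehat{\bm{D}}$ and $\bm{D}$ coincide on their first $p$ diagonal entries, so I only need to compare the last $p$ entries; for each $j$, the identity $\widehat{D}_{p+j,p+j}^{2}-D_{p+j,p+j}^{2}=n^{-1}\sum_{i}(X_{i}^{(j)})^{2}[\bm{1}\{Q_{i}<\widehat{\tau}\}-\bm{1}\{Q_{i}<\tau_{0}\}]$ combined with \eqref{xx}, $|a-b|=|a^{2}-b^{2}|/(a+b)$, and $D_{p+j,p+j}\ge\sqrt{C_{3}^{2}-\mu_{1}\lambda}$ on $\mathbb{A}_{2}$ yields $\bigl||\widehat{\bm{D}}\alpha_{0}|_{1}-|\bm{D}\alpha_{0}|_{1}\bigr|\le C_{5}c_{\tau}|\delta_{0}|_{1}/(2\sqrt{C_{3}^{2}-\mu_{1}\lambda})$. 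Summing gives exactly $\lambda G_{1}$.

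I would then case-split on whether the ``active set'' term dominates $\lambda G_{1}$. If $\lambda G_{1}\ge\lambda|[\widehat{\bm{D}}(\widehat{\alpha}-\alpha_{0})]_{J_{0}}|_{1}$, the modified \eqref{l5basic} collapses to $\|\widehat{f}-f_{0}\|_{n}^{2}+(1-\mu)\lambda|\widehat{\bm{D}}(\widehat{\alpha}-\alpha_{0})|_{1}\le 3\lambda G_{1}$, delivering the $G_{1}$ branch of both asserted bounds after dividing by $\min(\widehat{\bm{D}})\ge\sqrt{C_{3}^{2}-\mu_{1}\lambda}$. Otherwise, the right-hand side becomes $3\lambda|[\widehat{\bm{D}}(\widehat{\alpha}-\alpha_{0})]_{J_{0}}|_{1}$, which forces $\widehat{\bm{D}}(\widehat{\alpha}-\alpha_{0})$ into the cone with constant $c_{0}=\frac{2+\mu}{1-\mu}$, and Lemma \ref{lemeg} provides $\widehat{\kappa}^{2}\ge\kappa^{2}/2$ on $\mathbb{A}_{5}$.

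The new subtlety is that the restricted eigenvalue controls $\|\bm{X}(\widehat{\tau})(\widehat{\alpha}-\alpha_{0})\|_{n}$ rather than $\|\widehat{f}-f_{0}\|_{n}$. I would use the triangle inequality $\|\bm{X}(\widehat{\tau})(\widehat{\alpha}-\alpha_{0})\|_{n}\le\|\widehat{f}-f_{0}\|_{n}+\sqrt{E_{c_{\tau}}}$, with $E_{c_{\tau}}:=\|f_{(\alpha_{0},\widehat{\tau})}-f_{0}\|_{n}^{2}$, and insert it into the restricted eigenvalue step to obtain $|[\widehat{\bm{D}}(\widehat{\alpha}-\alpha_{0})]_{J_{0}}|_{1}\le\sqrt{s_{0}}\max(\widehat{\bm{D}})\widehat{\kappa}^{-1}(\|\widehat{f}-f_{0}\|_{n}+\sqrt{E_{c_{\tau}}})$. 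Substituting into Case B yields a quadratic inequality $A^{2}\le B(A+\sqrt{E_{c_{\tau}}})$ with $A=\|\widehat{f}-f_{0}\|_{n}$ and $B=3\sqrt{2}\lambda\sqrt{s_{0}}\max(\widehat{\bm{D}})/\kappa$, and a further split on $A\gtreqless\sqrt{E_{c_{\tau}}}$ produces the $G_{2}\lambda s_{0}$ branch in the first sub-case and a bound controlled by $B\sqrt{E_{c_{\tau}}}$ in the second. For the second sub-case I bound $E_{c_{\tau}}$ via the Cauchy--Schwarz decomposition $(X_{i}'\delta_{0})^{2}\le|\delta_{0}|_{1}\sum_{j}|\delta_{0}^{(j)}|(X_{i}^{(j)})^{2}$ combined with \eqref{xx}, and couple the resulting factor to the a priori constraint $|\widehat{\alpha}-\alpha_{0}|_{1}\le c_{\alpha}$ (via $|(\widehat{\alpha}-\alpha_{0})_{J_{0}}|_{2}^{2}\le 2C_{1}c_{\alpha}$) to produce the $\sqrt{s_{0}|\delta_{0}|_{1}c_{\alpha}c_{\tau}}$ scaling embedded in $G_{3}$.

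Taking the maximum over the three candidate branches gives the prediction bound, and one last application of $\min(\widehat{\bm{D}})\ge\sqrt{C_{3}^{2}-\mu_{1}\lambda}$ on $\mathbb{A}_{2}$ converts the $\widehat{\bm{D}}$-weighted $\ell_{1}$ bound on $\widehat{\alpha}-\alpha_{0}$ into the unweighted one, yielding both displays. The main obstacle is the Cauchy--Schwarz bookkeeping for $E_{c_{\tau}}$ in the $G_{3}$ branch, where the correct $\sqrt{s_{0}|\delta_{0}|_{1}c_{\alpha}c_{\tau}}$ scaling is not automatic: one must carefully allocate one copy of $X_{i}'\delta_{0}$ to $|\delta_{0}|_{1}$ and the other to \eqref{xx}, and trade the residual factor against $c_{\alpha}$; the remaining steps are mechanical adaptations of the no-threshold-effect proof of Lemma \ref{thm-case1}.
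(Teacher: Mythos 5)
Your overall architecture matches the paper's: you bound $R_n$ by $\lambda\sqrt{c_\tau}$ via \eqref{xud}, bound $\bigl||\widehat{\bm{D}}\alpha_0|_1-|\bm{D}\alpha_0|_1\bigr|$ by $(2\sqrt{C_3^2-\mu_1\lambda})^{-1}C_5|\delta_0|_1c_\tau$ via \eqref{xx}, split on whether $\lambda|[\widehat{\bm{D}}(\widehat\alpha-\alpha_0)]_{J_0}|_1$ dominates $\lambda G_1$, and in the dominant case invoke the cone condition with constant $\tfrac{2+\mu}{1-\mu}$ and Lemma \ref{lemeg}. All of that is exactly what the paper does.

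The gap is in how you handle the mismatch between $\|\bm{X}(\widehat\tau)(\widehat\alpha-\alpha_0)\|_n$ and $\|\widehat f-f_0\|_n$, and it is fatal to the $G_3$ branch. You first separate the two via the triangle inequality, isolating $E_{c_\tau}=\|f_{(\alpha_0,\widehat\tau)}-f_0\|_n^2$ as a standalone quantity. But $f_{(\alpha_0,\widehat\tau)}-f_0 = X'\delta_0[\bm{1}\{Q<\widehat\tau\}-\bm{1}\{Q<\tau_0\}]$ depends only on $\delta_0$, $\widehat\tau$, $\tau_0$ and the data — not on $\widehat\alpha$ — so no legitimate bound on $E_{c_\tau}$ can involve $c_\alpha$; your Cauchy--Schwarz step yields $E_{c_\tau}\le C_5|\delta_0|_1^2c_\tau$, and the resulting branch scales as $\lambda\sqrt{s_0}\,|\delta_0|_1\sqrt{c_\tau}$ rather than $\lambda\sqrt{s_0|\delta_0|_1\,c_\alpha c_\tau}$. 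The proposed "coupling" via $|(\widehat\alpha-\alpha_0)_{J_0}|_2^2\le 2C_1c_\alpha$ bounds a quantity that never appears in your expression for $E_{c_\tau}$, so it cannot rescue the scaling; your own admission that the scaling "is not automatic" is precisely where the argument breaks. This matters because the lemma's $G_3$ term must shrink with $c_\alpha$ for the iteration in Lemma \ref{main-thm-fixed-threshold} to tighten the bounds down to $G_2\lambda s_0$; a $c_\alpha$-free third branch proves a different (and insufficient) statement. The paper avoids this by never separating the correction term off: it expands $\frac{1}{n}|\bm{X}(\widehat\tau)\widehat{\bm{D}}(\widehat\alpha-\alpha_0)|_2^2$ directly and bounds the \emph{cross term} between $\bm{X}_i(\widehat\tau)'(\widehat\alpha-\alpha_0)$ and $X_i'\delta_0[\bm{1}\{Q_i<\tau_0\}-\bm{1}\{Q_i<\widehat\tau\}]$ by H\"older, which is where the factor $2c_\alpha|\delta_0|_1\sup_{j}\frac{1}{n}\sum_i|X_i^{(j)}|^2|\bm{1}\{Q_i<\tau_0\}-\bm{1}\{Q_i<\widehat\tau\}|\le 2C_5|\delta_0|_1c_\alpha c_\tau$ enters; the $a+b\le 2a\vee 2b$ split then produces the $G_2$ and $G_3$ branches with the correct $\sqrt{c_\alpha c_\tau}$ dependence. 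You need to replace your triangle-inequality decomposition with this cross-term bound.
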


\begin{proof}[Proof of Lemma \ref{lem-claim3}]
The proof follows that of Lemma 12 in \cite{lee2016}. We have
\begin{eqnarray}\label{rnbound}
\left\vert R_{n}\right\vert=\left\vert
2n^{-1}\sum_{i=1}^{n}U_{i}X_{i}^{\prime }\delta _{0}\left\{ 1(Q_{i}<\widehat{%
\tau })-1(Q_{i}<\tau _{0})\right\} \right\vert\leq \lambda \sqrt{c_{\tau }},
\end{eqnarray} by \eqref{as42}.
Conditioning on $\mathbb{A}_4$, the triangular inequality implies that

\begin{align}\label{Dhat-D}
\begin{split}
 &\left\vert \left| \widehat{\bm{D}}{\alpha _{0}}\right|
_{1}-\left| \bm{D}{\alpha _{0}}\right| _{1}\right\vert
\le\left\vert \sum_{j=1}^{p}\left( \left\Vert X^{\left( j\right) }\left( \widehat{\tau}\right)
\right\Vert _{n}-\left\Vert X^{\left( j\right)}\left( \tau _{0}\right)
\right\Vert _{n}\right) \left\vert \delta _{0}^{\left(
j\right) }\right\vert \right\vert  \\
& \leq \sum_{j=1}^{p}
\left(2 \left\Vert X^{\left( j\right) }\left( t_0 \right)\right\Vert _{n}\right)^{-1}
\left\vert\left\Vert X^{\left( j\right) }\left( \widehat{\tau}\right)
\right\Vert _{n}^2-\left\Vert X^{\left( j\right)}\left( \tau _{0}\right)
\right\Vert _{n}^2\right\vert \left\vert \delta _{0}^{\left(
j\right) }\right\vert   \\
& \leq \sum_{j=1}^{p}
\left(2 \left\Vert X^{\left( j\right) }\left( t_0 \right)\right\Vert _{n}\right)^{-1}
\left\vert \delta _{0}^{\left(
j\right) }\right\vert \frac{1}{n}\sum_{i=1}^{n}\left\vert X_{i}^{\left(
j\right) }\right\vert ^{2}\left\vert \bm{1}\left\{ Q_{i}<\widehat{\tau}\right\}
-\bm{1}\left\{ Q_{i}<\tau _{0}\right\} \right\vert   \\
& \leq \left( 2 \sqrt{C_3^2-\mu_1\lambda} \right)^{-1}|\delta_0|_1 C_5c_{\tau}, 
\end{split}
\end{align}
where the last inequality is by Assumption \ref{A-smoothness}.
We now consider two cases: \\(i) $\left|\widehat{\bm{D}}(\widehat{\alpha }-\alpha _{0})_{J_{0}}\right| _{1}>\sqrt{c_{\tau}}+ \left( 2 \sqrt{C_3^2-\mu_1\lambda} \right)^{-1}C_5|\delta_0|_1 c_{\tau}$ and \\(ii) $\left| \widehat{\bm{D}}(\widehat{\alpha }-\alpha _{0})_{J_{0}}\right| _{1}\leq\sqrt{c_{\tau}}+ \left( 2 \sqrt{C_3^2-\mu_1\lambda} \right)^{-1}C_5|\delta_0| c_{\tau}.$

\noindent \textbf{Case (i)}:
Combining \eqref{rnbound} and \eqref{Dhat-D} yields
\begin{equation*}
\begin{aligned}
\lambda \left\vert \left| \widehat{\bm{D}}\alpha_0\right|_1 - \left| \bm{D}\alpha_0 \right|_1\right\vert +R_n &< \lambda\left( 2 \sqrt{C_3^2-\mu_1\lambda} \right)^{-1}|\delta_0|_1 C_5(c_{\tau}+\lambda\sqrt{c_{\tau}})+ \lambda \sqrt{c_{\tau}}\\&<\lambda \left| \widehat{\bm{D}}\left(\widehat{\alpha}-\alpha_0\right)_{J_0} \right|_1.
\end{aligned}
\end{equation*}
Along with \eqref{l5basic}, we have

\begin{equation}
\begin{aligned}
\left| \widehat{f}-f_{0}\right| _{n}^{2}+\left( 1-\mu \right)
\lambda \left\Vert \widehat{\bm{D}}(\widehat{\alpha }-\alpha
_{0})\right\Vert _{1}& \leq 3\lambda \left| \widehat{\bm{D}}(%
\widehat{\alpha }-\alpha _{0})_{J_{0}}\right| _{1}, \label{lem7-eq1}
\end{aligned}
\end{equation}
which implies
\begin{equation*}
\begin{aligned}
\left( 1-\mu \right) \left| \widehat{\bm{D}}(\widehat{\alpha }-\alpha
_{0})\right| _{1}& \leq 3\left| \widehat{\bm{D}}(%
\widehat{\alpha }-\alpha _{0})_{J_{0}}\right| _{1}.
\end{aligned}
\end{equation*}
Then, subtracting $\left( 1-\mu \right)
  \left|\widehat{\bm{D}}(%
\widehat{\alpha }-\alpha _{0})_{J_{0}}\right| _{1}$ from both sides yields
\begin{equation}
\begin{aligned}\label{jjcmu}
\left|\widehat{\bm{D}}(\widehat{\alpha }-\alpha _{0})_{J_{0}^c}\right| _{1} \leq \frac{2+\mu}{1-\mu} \left| \widehat{\bm{D}}(\widehat{\alpha }-\alpha _{0})_{J_{0}}\right| _{1}.
\end{aligned}
\end{equation}
In this case, we are applying Assumption \ref{as2} with adaptive restricted eigenvalue condition $\kappa\left(s_0,  \frac{2+\mu}{1-\mu} , \mathbb{S}, \bm{\Sigma}\right)$. Since $|\widehat\tau - \tau_0| \leq c_\tau$, Assumption \ref{as2} only requires to hold with $\mathbb{S}$ in the $c_\tau$ neighborhood of $\tau
_{0}$. As $\delta_0 \neq 0$, \eqref{l3*2} now includes an extra term
\begin{equation*}
\begin{aligned}
&\kappa^{2} \left|\widehat{\bm{D}}(\widehat{\alpha }-\alpha
_{0})_{J_{0}}\right| _{2}^{2} \leq 2\widehat{\kappa}\left(\frac{2+\mu}{1-\mu },\mathbb{S},\widehat{\Sigma}\right)^{2} \left| \widehat{\bm{D}}(\widehat{\alpha }-\alpha
_{0})_{J_{0}}\right| _{2}^{2} \leq \frac{2}{n}\left|\bm{X}(\widehat{\tau})
\widehat{\bm{D}}(\widehat{\alpha }-\alpha _{0})\right|_{2}^{2} \\
&\leq  2\left| \widehat{\bm{D}}\right|
_{\infty}^{2} \left( \left\Vert \widehat{f}-f_{0}\right\Vert
_{n}^{2} + 2 c_{\alpha} \left|\delta_0\right|_1  \sup_{j}\Pn  \left\vert{X}_i^{(j)}\right\vert^2 \left\vert 1(Q_i<\tau_0) -1(Q_i<\widehat{\tau})  \right\vert \right) \\
&\leq 2\left(C_2^2+\mu_1\lambda\right)\bigg(\left\Vert \widehat{f}-f_0\right\Vert_n^2 + 2C_5|\delta_0|_1c_{\alpha}c_{\tau} )\bigg),
\end{aligned}
\end{equation*}
where the last inequality is due to conditioning on events $\mathbb{A}_1$ and Assumption \ref{A-smoothness}. 
Combining this result with \eqref{lem7-eq1} yields
\begin{equation*}
\begin{aligned}
\left\Vert \widehat{f}-f_0 \right\Vert_n^2
&\leq 3 \lambda \left|\widehat{\bm{D}}\left( \widehat{\alpha}-\alpha_0\right)_{J_0}\right|_1 \leq 3 \lambda \sqrt{s_0}\left| \widehat{\bm{D}}\left( \widehat{\alpha}-\alpha_0\right)_{J_0}\right|_2\\
& \leq 3 \lambda \sqrt{s_0} \left(  2 \kappa^{-2}\left(C_2^2+\mu_1\lambda\right)\bigg(\left\Vert \widehat{f}-f_0\right\Vert_n^2 + 2C_5 |\delta_0|_1c_{\alpha}c_{\tau} )\bigg) \right)^{1/2}.
\end{aligned}
\end{equation*}
Applying $a+b \leq 2a \vee 2b$, we get the upper bound of $\fhatnorm$ on $\mathbb{A}_1,$ $\mathbb{A}_2,$ $\mathbb{A}_3,$ $\mathbb{A}_4,$ and $\mathbb{A}_5,$
\begin{equation}
\begin{aligned}\label{result1-fhatnorm}
\left\Vert \widehat{f}-f_0 \right\Vert_n^2 \leq  \frac{36\left(C_2^2+\mu_1\lambda\right)}{\kappa^2} \lambda^2 s_0 \vee   \frac{6 \sqrt{2}\left(C_2^2+\mu_1\lambda\right)^{\frac{1}{2}}\sqrt{C_5C_1}}{\kappa} \lambda \sqrt{s_0|\delta_0|_1}\left(c_{\alpha}c_{\tau}\right)^{1/2}.
\end{aligned}
\end{equation}

We next derive the upper bound for $\left\Vert \widehat{\alpha}-\alpha_0 \right\Vert_1$, using \eqref{jjcmu},

\begin{equation*}
\begin{aligned}
\min( \widehat{\bm{D}})\left| \widehat{\alpha}-\alpha_0 \right|_1
& \leq
 \frac{3}{1-\mu}
  \sqrt{s_0} \left(  2 \kappa^{-2}\left(C_2^2+\mu_1\lambda\right)\bigg(\left\Vert \widehat{f}-f_0\right\Vert_n^2 + 2c_{\alpha}c_{\tau}C_5|\delta_0|_1\bigg)  \right)^{1/2}\\
&=\frac{3\sqrt{2}}{(1-\mu)\kappa} \sqrt{s_0} \left(\left(C_2^2+\mu_1\lambda\right)\bigg(\left\Vert \widehat{f}-f_0\right\Vert_n^2 + 2C_5|\delta_0|_1c_{\alpha}c_{\tau}\bigg)  \right)^{1/2},
\end{aligned}
\end{equation*}
where the last inequality is due to conditioning on $\mathbb{A}_3.$
Then using the inequality that $a+b \leq 2a \vee 2b$ with \eqref{min-eig-D} and \eqref{result1-fhatnorm} yields
\begin{align*}
\left| \widehat{\alpha}-\alpha_0\right|_1 &\leq \frac{36}{(1-\mu)\kappa^2}  \frac{\left(C_2^2+\mu_1\lambda\right)}{\sqrt{C_3^2-\mu_1\lambda}} \lambda s_0 \vee \frac{6\sqrt{2}
}{(1-\mu)\kappa} \frac{\sqrt{C_2^2+\mu_1\lambda}\sqrt{C_5}}{\sqrt{C_3^2-\mu_1\lambda}}
\sqrt{s_0|\delta_0|_1}\left(c_{\alpha}c_{\tau}\right)^{1/2}.
\end{align*}

\noindent \textbf{Case (ii)}: In this case, \eqref{l5basic} shows
\begin{equation*}
\begin{aligned}
\left\Vert \widehat{f}-f_{0}\right\Vert _{n}^{2}& \leq 3\lambda\left(\sqrt{c_{\tau}}+ \left( 2 \sqrt{C_3^2-\mu_1\lambda} \right)^{-1}C_5|\delta_0|_1 c_{\tau}\right), \\
\left| \widehat{\alpha}-\alpha_0\right|_1& \leq \frac{3}{(1-\mu)\sqrt{C_3^2-\mu_1\lambda}}\left(\sqrt{c_{\tau}}+ \left( 2 \sqrt{C_3^2-\mu_1\lambda} \right)^{-1}C_5|\delta_0|_1 c_{\tau}\right),
\end{aligned}
\end{equation*}
which provides the result.
\end{proof}

We further tighten the bound for $\left\vert \widehat{\tau}-\tau _{0}\right\vert$ in the following lemma using Lemmas \ref{lem-claim2} and \ref{lem-claim3}.

\begin{lem}
\label{lem-claim4} Suppose that $\left\vert \widehat{\tau}-\tau _{0}\right\vert
\leq c_{\tau }$ and $\left| \widehat{\alpha}-\alpha_0\right|_1 \leq
c_{\alpha }$ for some $(c_{\tau },c_{\alpha })$. Let $\widetilde{\eta}=C_4^{-1} \lambda \left( \left( 1+\mu \right) \sqrt{C_2^2+\mu\lambda}c_{\alpha }+ G_1\right). $
If Assumption \ref{A-discontinuity}
holds, then conditional on the events $\mathbb{A}_1,$ $\mathbb{A}_2$, $\mathbb{A}_3$, and $\mathbb{A}_4,$ we have,
\begin{equation*}
\left\vert \widehat{\tau}-\tau _{0}\right\vert \leq \widetilde{\eta}.
\end{equation*}
\end{lem}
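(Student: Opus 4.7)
The plan is to argue by contradiction, mirroring the structure of the proof of Lemma \ref{lem-claim2}, but replacing the crude bound $|\widehat{\alpha}-\alpha_0|_1 \le 2C_1 s_0$ used there by the sharper hypotheses $|\widehat{\alpha}-\alpha_0|_1 \le c_\alpha$ and $|\widehat{\tau}-\tau_0| \le c_\tau$, combined with Assumption \ref{A-smoothness}. Suppose for contradiction that $|\widehat{\tau}-\tau_0| > \widetilde{\eta}$. Since $(\widehat{\alpha},\widehat{\tau})$ minimizes the penalized objective in \eqref{joint-max},
\begin{equation*}
[\widehat{S}_n + \lambda|\widehat{\bm{D}}\widehat{\alpha}|_1] - [S_n(\alpha_0,\tau_0) + \lambda|\bm{D}\alpha_0|_1] \le 0.
\end{equation*}

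I would then produce a sharper lower bound on the left-hand side. Combining \eqref{pf-lem-claim2-a} (which holds on $\mathbb{A}_3 \cap \mathbb{A}_4$) with the triangle inequality $|\widehat{\bm{D}}\widehat{\alpha}|_1 \ge |\widehat{\bm{D}}\alpha_0|_1 - |\widehat{\bm{D}}(\widehat{\alpha}-\alpha_0)|_1$ and $|\widehat{\bm{D}}\alpha_0|_1 - |\bm{D}\alpha_0|_1 \ge -\bigl||\widehat{\bm{D}}\alpha_0|_1 - |\bm{D}\alpha_0|_1\bigr|$ yields
\begin{equation*}
\text{LHS} \ge \left\Vert \widehat{f}-f_0\right\Vert_n^2 - (1+\mu)\lambda\,|\widehat{\bm{D}}(\widehat{\alpha}-\alpha_0)|_1 - \lambda\bigl||\widehat{\bm{D}}\alpha_0|_1 - |\bm{D}\alpha_0|_1\bigr| - R_n.
\end{equation*}
The factor $(1+\mu)$ emerges naturally by adding the $-\mu\lambda|\widehat{\bm{D}}(\widehat{\alpha}-\alpha_0)|_1$ contribution already present in \eqref{pf-lem-claim2-a} to the $-\lambda|\widehat{\bm{D}}(\widehat{\alpha}-\alpha_0)|_1$ term produced by the triangle step.

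Next I would feed in three targeted estimates. On $\mathbb{A}_1$, using $|\widehat{\alpha}-\alpha_0|_1 \le c_\alpha$, we get $|\widehat{\bm{D}}(\widehat{\alpha}-\alpha_0)|_1 \le \sqrt{C_2^2+\mu_1\lambda}\,c_\alpha$. The inequality \eqref{Dhat-D} (invoking $\mathbb{A}_2$, Assumption \ref{A-smoothness}\eqref{xx}, and $|\widehat{\tau}-\tau_0| \le c_\tau$) gives $\bigl||\widehat{\bm{D}}\alpha_0|_1-|\bm{D}\alpha_0|_1\bigr| \le \bigl(2\sqrt{C_3^2-\mu_1\lambda}\bigr)^{-1}C_5|\delta_0|_1 c_\tau$. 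Finally, Assumption \ref{A-smoothness}\eqref{xud} yields $|R_n| \le \lambda\sqrt{c_\tau}$. The last two contributions package exactly into $\lambda G_1$, producing
\begin{equation*}
\text{LHS} \ge \left\Vert \widehat{f}-f_0\right\Vert_n^2 - \lambda\bigl((1+\mu)\sqrt{C_2^2+\mu_1\lambda}\,c_\alpha + G_1\bigr) = \left\Vert \widehat{f}-f_0\right\Vert_n^2 - C_4\widetilde{\eta},
\end{equation*}
by the definition of $\widetilde{\eta}$. Since the LHS is $\le 0$, this forces $\|\widehat{f}-f_0\|_n^2 \le C_4\widetilde{\eta}$, whereas Assumption \ref{A-discontinuity} applied with $\eta = \widetilde{\eta}$ and the supposition $|\widehat{\tau}-\tau_0| > \widetilde{\eta}$ delivers $\|\widehat{f}-f_0\|_n^2 > C_4\widetilde{\eta}$, a contradiction.

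The main obstacle is not any single analytic step but the bookkeeping: ensuring that the $(1+\mu)$ factor comes out cleanly from the decomposition, that the two small perturbation terms are absorbed by Assumption \ref{A-smoothness} in a way that matches the definition of $G_1$ from Lemma \ref{lem-claim3}, and that the sparsity prerequisite $\mathcal{M}(\widehat{\alpha}) \le s_0$ required to invoke Assumption \ref{A-discontinuity} is available under the restricted parameter space \(\mathcal{B}\) in which the Lasso problem is solved.
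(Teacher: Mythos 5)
Your proposal follows essentially the same route as the paper's own proof: a contradiction argument starting from the minimizer inequality, the lower bound built from \eqref{pf-lem-claim2-a} plus the triangle inequality on the penalty (producing the $(1+\mu)$ factor), the three estimates via $\mathbb{A}_1$, \eqref{Dhat-D}, and Assumption \ref{A-smoothness}\eqref{xud} packaging into $\lambda G_1$, and the final appeal to Assumption \ref{A-discontinuity}. The bookkeeping you describe matches the paper's computation exactly, so the argument is correct as written.
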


\begin{proof}[Proof of Lemma \ref{lem-claim4}]
The proof follows that of Lemma 13 in \cite{lee2016}.
Conditioning on $\mathbb{A}_1,$ $\mathbb{A}_2,$ $\mathbb{A}_3,$ $\mathbb{A}_4$ and \eqref{as42}, we derive
\begin{equation*}
\left\vert \frac{2}{n}\sum_{i=1}^{n}\left[ U_{i}X_{i}^{\prime }\left( \widehat{%
\beta}-\beta _{0}\right) +U_{i}X_{i}^{\prime }1\left( Q_{i}<\widehat{\tau}%
\right) \left( \widehat{\delta}-\delta _{0}\right) \right] \right\vert \leq \mu \lambda \left(\sqrt{C_2^2+\mu_1\lambda}\right)c_{\alpha },
\end{equation*}
and

\begin{equation*}
\left\vert \frac{2}{n}\sum_{i=1}^{n}U_{i}X_{i}^{\prime }\delta _{0}\left[
1\left( Q_{i}<\widehat{\tau}\right) -1\left( Q_{i}<\tau _{0}\right) \right]
\right\vert \leq \lambda \sqrt{c_{\tau }}.
\end{equation*}

\noindent Suppose $\widetilde{\eta}<\left\vert \widehat{\tau}-\tau _{0}\right\vert \le c_{\tau }.$
As in \eqref{pf-lem-claim2-a},
\begin{equation*}
\widehat{S}_{n}-S_{n}(\alpha _{0},\tau _{0})\geq \left\Vert \widehat{f}%
-f_{0}\right\Vert _{n}^{2}-\mu \lambda \left(\sqrt{C_2^2+\mu_1\lambda}c_{\alpha }\right)-\lambda \sqrt{c_{\tau }}.
\end{equation*}%
Additionally,
\begin{equation*}
\begin{aligned}
\lefteqn{\left[ \widehat{S}_{n}+\lambda \left| \widehat{\bm{D}}
\widehat{\alpha }\right| _{1}\right] -\left[ S_{n}(\alpha _{0},\tau
_{0})+\lambda \left| \bm{D}\alpha _{0}\right| _{1}\right] }\\
&\geq \left\Vert \widehat{f}-f_{0}\right\Vert _{n}^{2}-\mu \lambda \left(\sqrt{C_2^2+\mu_1\lambda}c_{\alpha }\right)-2|\delta_0|_1 \lambda \sqrt{c_{\tau }} -\lambda \left(  \left|\widehat{\bm{D}}(\widehat{\alpha}  -\alpha_0 )\right|_1
+\left| (\widehat{\bm{D}}-  \bm{D})\alpha_0 \right|_1    \right) \\
&>C_4\widetilde{\eta}-\left( \left( 1+\mu \right) \left(\sqrt{C_2^2+\mu_1\lambda}c_{\alpha }\right)+G_1\right)\lambda,
\end{aligned}
\end{equation*}
where the last inequality is due to Assumption \ref{A-discontinuity}, Hölder's inequality and \eqref{Dhat-D}.

Since $C_4\widetilde{\eta}=\left( \left( 1+\mu \right) \sqrt{C_2^2+\mu_1\lambda}c_{\alpha }+ G_1\right)  \lambda $ by definitation, similarly to the proof of Lemma \ref{lem-claim2}, contradiction yields the result.
\end{proof}

There are three different bounds for $|\alpha-\alpha_0|_1$ in Lemma \ref{lem-claim3} and the two terms $G_1$ and $G_3$ are functions of $c_{\tau }$ and $c_{\alpha }$. We thus apply Lemmas \ref{lem-claim3} and \ref{lem-claim4} iteratively to tighten up the bounds. We start the iteration with $c_{\tau }^{\left( 0\right) }=\frac{2C_{1}(3+\mu_2)\left(C_2^2+\mu_1\lambda\right)^{\frac{1}{2}}}{C_4}  s_0\lambda$ from the results of Lemma \ref{lem-claim2} and  $c_{\alpha }^{\left( 0\right) }=\frac{  \Big( 2C_1(3+\mu_2) \Big)\left(C_2^2+\mu_1\lambda\right)^{\frac{1}{2}}}{\left( 1-\mu \right)\left(C_3^2-\mu_1\lambda\right)^{\frac{1}{2}}}s_0$ from \eqref{lem2-conc} in Lemma \ref{lemmaf}.

\begin{lem}
\label{main-thm-fixed-threshold} Suppose that Assumptions \ref{as1} to \ref{A-smoothness} hold with $\mathbb{S} = \left\{ \left\vert \tau -\tau _{0}\right\vert \leq
\eta ^{\ast }\right\} $, $\kappa =\kappa  \left(s_0,\frac{2+\mu}{1-\mu },\mathbb{S},\bm{\Sigma}\right)$
for $0<\mu <1.$
Let $(\widehat{\alpha },\widehat{\tau })$ be the Lasso estimator defined by \eqref{joint-max} with $\lambda$ given by \eqref{lambda}.
 In addition, there exists a sequence of constants $\eta
_{1},...,\eta _{m^{\ast }}$ for some finite $m^{\ast }$.
With probability at least $ 1-C(logn)^{-1},$ we have

\begin{equation*}
\begin{aligned}
&\left\Vert \widehat{f}-f_{0}\right\Vert _{n}^2 \leq 3 G_2 \lambda^2 s_0 , \\
&\left| \widehat{\alpha }-\alpha _{0}\right| _{1} \leq \frac{3}{(1-\mu)\sqrt{C_3^2-\mu_1\lambda}}  G_2 \lambda s_0, \\
&\left\vert \widehat{\tau}-\tau _{0}\right\vert \leq \left( \frac{3\left( 1+\mu \right) \sqrt{\left(C_2^2+\mu_1\lambda\right)}}{(1-\mu )\sqrt{\left(C_3^2-\mu_1\lambda\right)}}+1\right)  \frac{1}{C_4}G_2 \lambda^2 s_0.
\end{aligned}
\end{equation*}
\end{lem}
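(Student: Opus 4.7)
My plan is to run the coupled refinement suggested in the paragraph before the statement: alternate Lemma \ref{lem-claim3} (which tightens $\|\widehat f-f_0\|_n$ and $|\widehat\alpha-\alpha_0|_1$ given current radii $c_\tau,c_\alpha$) with Lemma \ref{lem-claim4} (which tightens $|\widehat\tau-\tau_0|$ given the current $c_\alpha$). The whole argument is carried out on the intersection $\mathbb A_1\cap\cdots\cap\mathbb A_5$, which by Lemmas \ref{lemmaprobA3}, \ref{lemmaprobAB} and \ref{lemeg} has probability at least $1-C(\log n)^{-1}$, and with Assumption \ref{A-smoothness} invoked at each iterate's $\eta$.

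First I would fix the initialization. From Lemma \ref{lem-claim2} set $c_\tau^{(0)}=\eta^\ast=\frac{2C_1(3+\mu_2)\sqrt{C_2^2+\mu_1\lambda}}{C_4}\,s_0\lambda$, and from the crude prediction bound of Lemma \ref{lemmaf} combined with the basic inequality $|\widehat\alpha-\alpha_0|_1\le\min(\widehat{\bm D})^{-1}|\widehat{\bm D}(\widehat\alpha-\alpha_0)|_1$ and \eqref{l5basic}, set $c_\alpha^{(0)}=\frac{2C_1(3+\mu_2)\sqrt{C_2^2+\mu_1\lambda}}{(1-\mu)\sqrt{C_3^2-\mu_1\lambda}}\,s_0$. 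This places $\widehat\tau$ inside the neighborhood $\mathbb S=\{|\tau-\tau_0|\le\eta^\ast\}$ on which the UARE constant $\kappa=\kappa(s_0,\tfrac{2+\mu}{1-\mu},\mathbb S,\bm\Sigma)$ in the statement is assumed positive, so Lemma \ref{lem-claim3} becomes applicable.

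The iteration itself is the heart of the argument. At step $m$, given $(c_\tau^{(m)},c_\alpha^{(m)})$, Lemma \ref{lem-claim3} yields a maximum of three candidate upper bounds involving $G_1(c_\tau^{(m)})$, $G_2\lambda s_0$, and $G_3\sqrt{s_0|\delta_0|_1}(c_\tau^{(m)}c_\alpha^{(m)})^{1/2}$. The term $G_2\lambda s_0$ is the fixed point we want, since $G_2$ is free of $(c_\tau,c_\alpha)$. The terms $G_1$ and $G_3$ both shrink monotonically when we substitute the freshly improved radii, because $G_1(c_\tau)\lesssim\sqrt{c_\tau}+c_\tau$ and $G_3(c_\tau c_\alpha)^{1/2}$ contracts like a geometric factor in $c_\tau^{(m)}$. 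Concretely, updating $c_\alpha^{(m+1)}$ by the Lemma \ref{lem-claim3} bound and feeding it into Lemma \ref{lem-claim4} produces $c_\tau^{(m+1)}\lesssim \lambda\cdot(c_\alpha^{(m+1)}+G_1(c_\tau^{(m)}))$, which—using Assumption \ref{as1}(i) together with the sparsity/rate restriction $s_0^2|\delta_0|_1^2\log p/n=o(1)$—is strictly contractive once $m$ is large. Hence after finitely many iterations $m^\ast$ the $G_1$ and $G_3$ terms fall below $G_2\lambda s_0$, and the recursion terminates with prediction error $\le 3G_2\lambda^2 s_0$ and $\ell_1$ error $\le\frac{3}{(1-\mu)\sqrt{C_3^2-\mu_1\lambda}}G_2\lambda s_0$.

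Finally, substituting this stabilized $c_\alpha^{(m^\ast)}=\frac{3}{(1-\mu)\sqrt{C_3^2-\mu_1\lambda}}G_2\lambda s_0$ into Lemma \ref{lem-claim4} (whose $\widetilde\eta$ is $C_4^{-1}\lambda((1+\mu)\sqrt{C_2^2+\mu_1\lambda}\,c_\alpha+G_1)$) and noting that $G_1$ at this stage is of strictly smaller order than $c_\alpha$, one collects the announced bound for $|\widehat\tau-\tau_0|$. The main obstacle is the bookkeeping in the iterative step: one has to verify that the ``$G_1\vee G_3$'' branches in Lemma \ref{lem-claim3} are strictly dominated by $G_2\lambda s_0$ after at most $m^\ast$ rounds, uniformly in the sparsity regime permitted by Assumption \ref{as1}(i). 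Everything else is arithmetic on the constants $C_1,\dots,C_5,\mu,\mu_1,\mu_2,\kappa$ already assembled in the previous lemmas.
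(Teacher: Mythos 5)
Your proposal follows essentially the same route as the paper: initialize $(c_\tau^{(0)},c_\alpha^{(0)})$ from Lemma \ref{lem-claim2} and Lemma \ref{lemmaf}, alternate Lemmas \ref{lem-claim3} and \ref{lem-claim4} to shrink the radii, and stop once the maximum in Lemma \ref{lem-claim3} is attained by the $(c_\tau,c_\alpha)$-free branch $G_2\lambda s_0$, then substitute the stabilized $c_\alpha$ into Lemma \ref{lem-claim4} for the $\widehat\tau$ bound. The one step you flag as "bookkeeping" — that the $G_1\vee G_3$ branches are dominated after finitely many rounds — is exactly where the paper does real work, arguing by contradiction that if the maximum never switched, the fixed points of the recursion would satisfy $c_\tau^{\infty}=O_p(\lambda^2)$ and hence fall below the lower bound $c_\tau^{(m)}\gtrsim G_2\lambda^2 s_0/C_4$ established along the iteration, rather than by a direct contraction-rate estimate.
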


\begin{proof}[Proof of Lemma \ref{main-thm-fixed-threshold}]
The proof follows that of Lemma 14 in \cite{lee2016}. The iteration to implement is as follows:\\
\noindent \textbf{Step 1}: Starting values $ c_{\tau }^{\left( 0\right) }=\frac{2C_{1}(3+\mu_2)\left(C_2^2+\mu_1\lambda\right)^{\frac{1}{2}}}{C_4}  s_0\lambda$ and $c_{\alpha }^{\left(0\right) }=\frac{  \Big( 2C_1(3+\mu_2) \Big)\left(C_2^2+\mu_1\lambda\right)^{\frac{1}{2}}}{\left( 1-\mu_1 \right)\left(C_3^2-\mu\lambda\right)^{\frac{1}{2}}}s_0 .$\\
\noindent \textbf{Step 2}: When $m\ge1,$ define
\begin{align*}
&G_1^{\left( m-1\right) }= \sqrt{c_{\tau}^{\left( m-1\right) }}+ \left( 2 \sqrt{C_3^2-\mu_1\lambda} \right)^{-1}C_5|\delta_0|_1 c_{\tau}^{\left( m-1\right) },\\
&G_3^{\left( m-1\right) }=    \frac{2\sqrt{2}\left(C_2^2+\mu_1\lambda\right)^{\frac{1}{2}}\sqrt{C_5C_1}}{\kappa}\sqrt{c_{\alpha}^{\left( m-1\right) } c_{\tau}^{\left( m-1\right) }} ,\\
&c_{\alpha}^{\left( m\right) }=\frac{3}{(1-\mu)\sqrt{C_3^2-\mu_1\lambda}}\cdot \left\{G_1^{\left( m-1\right) }\vee G_2 \lambda  s_0 \vee  G_3^{\left( m-1\right) }\sqrt{s_0|\delta_0|_1}\right\},\\
&c_{\tau}^{\left( m\right) }=\frac{\lambda}{C_4} \left( \left( 1+\mu \right) \sqrt{C_2^2+\mu_1\lambda}c_{\alpha}^{\left( m\right) }+G_1^{\left( m-1\right) }\right).
\end{align*}
\noindent \textbf{Step 3}: We stop the iteration if $\left\{G_1^{\left( m\right) }\vee G_2 \lambda  s_0 \vee  G_3^{\left( m\right) }\sqrt{s_0|\delta_0|_1}\right\}$ keeps the same.

\noindent Suppose the rule in step 3 is met when $\left\{G_1^{\left( m\right) }\vee G_2 \lambda  s_0 \vee  G_3^{\left( m\right) }\sqrt{s_0\Vert\delta_0\Vert_1}\right\}=G_2\lambda s_0$, then the bound is reached within $m^{\ast}$, a finite number of iterative applications.
We have 

\begin{align}\begin{split}\label{check1}
    c_{\tau}^{\left( m\right) }=& \frac{\lambda}{C_4} \left( \left( 1+\mu \right) \sqrt{C_2^2+\mu_1\lambda}c_{\alpha}^{\left( m\right) }+G_1^{\left( m-1\right) }\right)
    \ge \frac{\lambda}{C_4} \left( \frac{3\left( 1+\mu \right) \sqrt{C_2^2+\mu_1\lambda}}{(1-\mu)\sqrt{C_3^2-\mu_1\lambda}}  G_2 \lambda  s_0+ G_1^{\left( m-1\right) }\right)\\
        \ge &\frac{1}{C_4} \left( \frac{3\left( 1+\mu \right) \sqrt{C_2^2+\mu_1\lambda}}{(1-\mu)\sqrt{C_3^2-\mu_1\lambda}} + \frac{G_1^{\left( m-1\right) }}{ G_2 \lambda  s_0}\right) G_2 \lambda^2  s_0
   > \frac{1}{C_4} \left(\frac{3 \left( 1+\mu \right) \sqrt{C_2^2+\mu_1\lambda}}{(1-\mu)\sqrt{C_3^2-\mu_1\lambda}}  \right) G_2 \lambda^2  s_0,
\end{split}\end{align}
as $  G_1^{\left( m-1\right)} > 0,$  $G_2 \lambda  s_0 >0$  and $c_{\alpha}^{\left( m\right) }\ge\frac{3}{(1-\mu)\sqrt{C_3^2-\mu\lambda}} G_2 \lambda  s_0.$

Note that \eqref{check1} shows $c_{\tau }^{(m) }\ge Cs_0\frac{\log p}{2n},$ which is a necessary condition to apply Lemma \ref{lem-claim2} through Lemma \ref{lem-claim4}.
Then $c_{\alpha }^{\left( m^{\ast}+1\right) }$ is the bound for $\left| \widehat{\alpha}-\alpha _{0}\right| _{1}.$ Then,

\begin{equation*}
\begin{aligned}
&c_{\tau}^{\left( m^{\ast}+1\right) }=\frac{\lambda}{C_4} \left( \left( 1+\mu \right) \sqrt{C_2^2+\mu_1\lambda}c_{\alpha}^{\left( m^{\ast}+1\right) }+G_1^{\left(m^{\ast}\right) }\right)\\
& \le \frac{\lambda}{C_4} \left( \frac{3\left( 1+\mu \right) \sqrt{C_2^2+\mu_1\lambda}}{(1-\mu)\sqrt{C_3^2-\mu_1\lambda}} G_2 \lambda  s_0+ G_2 \lambda  s_0\right)
= \left( \frac{3\left( 1+\mu \right) \sqrt{\left(C_2^2+\mu_1\lambda\right)}}{(1-\mu )\sqrt{\left(C_3^2-\mu_1\lambda\right)}}+1\right)  \frac{G_2 }{C_4}\lambda^2 s_0,
\end{aligned}
\end{equation*}
which is the bound for $\vert\widehat{\tau}-\tau_0\vert.$

Next, we turn to prove the existence of $m^{\ast}$. First, by induction, we can show that $G_1^{\left( m-1\right) }$, $G_1^{\left( m-1\right) }$, $c_{\alpha}^{\left( m\right) }$ and $c_{\tau}^{\left( m\right) }$ are  decreasing as $m$ increases. We start the iteration with $ c_{\tau }^{\left( 0\right) } $ and $c_{\alpha }^{\left(0\right) } $ in step 1.
In step 2, as long as $n,$ $p,$ $s_0$ and $|\delta_0|_1$ are large enough, we obtain \footnote{$\widetilde{C}$ is positive, finite and varies for each term.}
\begin{equation*}
\begin{aligned}
G_1^{\left(0\right) }=&  \sqrt{c_{\tau}^{\left(0\right) }}+ \left( 2 \sqrt{C_3^2-\mu_1\lambda} \right)^{-1}C_5|\delta_0|_1c_{\tau}^{\left(0\right) } =\widetilde{C}\sqrt{s_0\lambda}+\widetilde{C}|\delta_0|_1s_0\lambda,\\
G_3^{\left(0\right) }=&   \frac{2\sqrt{2}\left(C_2^2+\mu_1\lambda\right)^{\frac{1}{2}}\sqrt{C_5C_1}}{\kappa}\sqrt{c_{\alpha}^{\left( 0\right) }}\sqrt{c_{\tau}^{\left(0\right) }}=\widetilde{C}\sqrt{s_0^2\lambda},\end{aligned}
\end{equation*}
Then, as $|\delta_0|_1s_0\lambda=o_p(1),$
\begin{equation*}
\left\{G_1^{\left(0\right) }\vee G_2 \lambda  s_0 \vee  G_3^{\left( 0\right) }\sqrt{s_0|\delta_0|_1}\right\}=G_3^{\left( 0\right) }\sqrt{s_0|\delta_0|_1}.
\end{equation*}
We derive
\begin{align*}
c_{\alpha}^{\left( 1\right) }=&\frac{3}{(1-\mu)\sqrt{C_3^2-\mu_1\lambda}}\cdot \left\{G_1^{\left( 0\right) }\vee G_2 \lambda  s_0 \vee  G_3^{\left( 0\right) }\sqrt{s_0\Vert\delta_0\Vert_1}\right\}=\widetilde{C}s_0\sqrt{s_0\Vert\delta_0\Vert_1\lambda},\\
c_{\tau}^{\left( 1\right) }=&\frac{\lambda}{C_4} \left( \left( 1+\mu \right) \sqrt{C_2^2+\mu_1\lambda}c_{\alpha}^{\left( 1\right) }+G_1^{\left( 0\right) }\right)=\widetilde{C}s_0\lambda\sqrt{s_0\Vert\delta_0\Vert_1\lambda}+\widetilde{C}\lambda\sqrt{s_0\lambda}+\widetilde{C}\Vert\delta_0\Vert_1s_0\lambda^2.
\end{align*}
Thus, we have $c_{\alpha}^{\left( 0\right) }>c_{\alpha}^{\left(  1\right) } \text{ and }c_{\tau}^{\left( 0\right) }>c_{\tau}^{\left(1\right) }.$
If we assume $c_{\alpha}^{\left( m\right) }>c_{\alpha}^{\left( m+1\right) } \text{ and }c_{\tau}^{\left( m\right) }>c_{\tau}^{\left( m+1\right) },$ it is easy to show
$G_1^{\left( m\right) }>G_1^{\left( m+1\right) } \text{ and } G_3^{\left( m\right) }>G_3^{\left( m+1\right) },$
then $c_{\alpha}^{\left( m+1\right) }>c_{\alpha}^{\left( m+2\right) } \text{ and }c_{\tau}^{\left( m+1\right) }>c_{\tau}^{\left( m+2\right) },$ which means that applying the iteration can tighten up the bounds.

We then use proof by contradiction method to show that there exists an $m^{\ast}$ such that $\left\{G_1^{\left( m^{\ast}\right) }\vee G_2 \lambda  s_0 \vee  G_3^{\left( m^{\ast}\right) }\sqrt{s_0|\delta_0|_1}\right\}=G_2 \lambda  s_0.$
Suppose for all $m>1,$ $\left\{G_1^{\left( m \right) }\vee  G_3^{\left( m \right) }\sqrt{s_0|\delta_0|_1}\right\} \\ > G_2 \lambda  s_0.$
As $G_1^{\left( m-1\right) }$ and $G_3^{\left( m-1\right) }$ are decreasing as $m$ increases, and $\left\{G_1^{\left( m \right) }\vee  G_3^{\left( m \right) }\sqrt{s_0|\delta_0|_1}\right\} $ is bounded, we consider the following two cases:

\noindent \textbf{Case (1)}: For sufficiently large $m,$ assume
$G_1^{\left( m \right) }\le G_3^{\left( m \right) }\sqrt{s_0|\delta_0|_1}.$ 
Let $G_3^{\left( m \right) }$ converge to $G_3^{\left(\infty\right) }$ and $G_3^{\left(\infty\right) }>G_2 \lambda  s_0.$ We have

\begin{equation*}
c_{\alpha }^{\left( \infty\right) } =\frac{3}{(1-\mu)\sqrt{C_3^2-\mu_1\lambda}} G_3\sqrt{s_0\Vert\delta_0\Vert_1} =: H_{1}\sqrt{s_0\Vert\delta_0\Vert_1}\sqrt{%
c_{\alpha }^{\left( \infty\right) }}\sqrt{c_{\tau}^{\left( \infty\right) }},
\end{equation*}
where $H_{1}= \frac{6\sqrt{2}\left(C_2^2+\mu_1\lambda\right)^{\frac{1}{2}}\sqrt{C_5C_1}}{(1-\mu)\sqrt{C_3^2-\mu_1\lambda} \kappa},$ then
$c_{\alpha }^{(\infty)}=H_{1}^2s_0  |\delta_0|_1c_{\tau }^{\infty};$  and

\begin{equation*}
\begin{aligned}
c_{\tau }^{\infty}= &C_4^{-1}\lambda \left(\left( 1+\mu \right)
\sqrt{\left(C_2^2+\mu_1\lambda\right)}c_{\alpha }^{\infty}+
\sqrt{c_{\tau }^{\infty}}+ \left( 2 \sqrt{C_3^2-\mu_1\lambda} \right)^{-1}C_5|\delta_0|_1c_{\tau }^{\infty}\right)\\
=&C_4^{-1}\left( 1+\mu \right)\sqrt{\left(C_2^2+\mu_1\lambda\right)}\lambda c_{\alpha }^{\infty}+C_4^{-1}\lambda \sqrt{c_{\tau }^{\infty}}+C_4^{-1} \left( 2 \sqrt{C_3^2-\mu_1\lambda} \right)^{-1}C_5|\delta_0|_1 \lambda c_{\tau }^{\infty}\\
=&:H_2 \lambda c_{\alpha }^{\infty}+H_3\lambda\sqrt{c_{\tau }^{\infty}}+H_4 |\delta_0|_1\lambda c_{\tau }^{\infty},
\end{aligned}
\end{equation*}
where $H_{2}=C_4^{-1}\left( 1+\mu \right)\sqrt{\left(C_2^2+\mu_1\lambda\right)}, $
$H_{3}=C_4^{-1}$ and 
$H_{4}=C_4^{-1} \left( 2 \sqrt{C_3^2-\mu_1\lambda} \right)^{-1}C_5.$

To solve the above equation system, as $n$, $p$ are sufficiently large, $ \sqrt{C_3^2-\mu_1\lambda}$  and $\sqrt{C_2^2+\mu_1\lambda}$ converge to constants; $s_0\Vert\delta\Vert_1\lambda$ and $\Vert\delta_0\Vert_1\lambda$ converge to 0. Therefore,
\begin{equation*}
\begin{aligned}
c_{\tau }^{\infty }=&\left( \frac{H_1^2H_2s_0|\delta|_1\lambda^2+H_3\lambda}{1-H_1^2H_2s_0|\delta|_1\lambda-H_4\lambda|\delta|_1}\right) ^{2} =O_p(\lambda^2),\\
c_{\alpha }^{\left( \infty\right) }=&H_{1}^2s_0  |\delta_0|_1c_{\tau }^{\infty}=O_p(s_0\Vert\delta_0\Vert_1\lambda^2).
\end{aligned}
\end{equation*}
Then,
\begin{equation*}
G_3^{\left(\infty\right) } \sqrt{s_0|\delta_0|} =\frac{(1-\mu)\sqrt{C_3^2-\mu_1\lambda}}{3}c_{\alpha }^{\left( \infty\right) }=O_p(s_0|\delta_0|_1\lambda^2).\end{equation*}
Obviously, it leads to contradiction, because $c_{\tau }^{\infty}<s_0\lambda^2$ and $G_3^{\left(\infty\right) } \sqrt{s_0|\delta_0|_1}<G_2 \lambda  s_0.$

\noindent \textbf{Case (2)}: For sufficiently large $m,$ assume
$G_1^{\left( m \right) }>G_3^{\left( m \right) }\sqrt{s_0|\delta_0|_1}.$
Let $G_1^{\left( m \right) }$ converge to $G_1^{\left(\infty\right) }$ and $G_1^{\left(\infty\right) }>G_2 \lambda  s_0.$
We have
\begin{equation*}
c_{\alpha }^{\left( \infty\right) }= G_1\frac{3}{(1-\mu)\sqrt{C_3^2-\mu_1\lambda}},
\end{equation*}

\begin{equation*}
\begin{aligned}
 &c_{\tau }^{\left( \infty\right) } = C_4^{-1}\lambda \left(\left( 1+\mu \right)
\sqrt{\left(C_2^2+\mu_1\lambda\right)}c_{\alpha }^{\left( \infty\right) }+G_1^{\left(\infty\right) }\right) \\
&= C_4^{-1}\lambda \left(\left( 1+\mu \right)
\sqrt{\left(C_2^2+\mu_1\lambda\right)}\frac{3}{(1-\mu)\sqrt{C_3^2-\mu_1\lambda}}+1\right)G_1^{\left(\infty\right) } \\
&= C_4^{-1} \left( \frac{3\left( 1+\mu \right) \sqrt{\left(C_2^2+\mu_1\lambda\right)}}{\left( 1-\mu
\right)\sqrt{C_3^2-\mu_1\lambda}}+1\right) \lambda \sqrt{c_{\tau }^{\left( \infty\right) }} \\
&+C_4^{-1} \left( \frac{3\left( 1+\mu \right) \sqrt{\left(C_2^2+\mu_1\lambda\right)}}{\left( 1-\mu
\right)\sqrt{C_3^2-\mu_1\lambda}}+1\right) \left( 2\sqrt{C_3^2-\mu_1\lambda}\right) ^{-1}C_5\Vert\delta_0\Vert_1 \lambda c_{\tau }^{\left(\infty\right) } \\
&=: H_{5}\lambda\sqrt{c_{\tau }^{\left( \infty\right) }}+H_{6}|\delta_0|_1 \lambda c_{\tau }^{\left(
\infty\right) },  
\end{aligned}
\end{equation*}
where $H_{5}$ and $H_{6}$ are defined accordingly. 
Furthermore, as $n$, $p$ are sufficiently large, $ \sqrt{C_3^2-\mu_1\lambda}$  and $\sqrt{C_2^2+\mu_1\lambda}$ converge to constants , $|\delta_0|_1\lambda$ converges to 0. Therefore,
\begin{equation*}
c_{\tau }^{\infty } =\left( \frac{H_{5}\lambda}{1-H_{6}|\delta_0|_1 \lambda }\right) ^{2} =O_p(\lambda^2).
\end{equation*}
Then \begin{equation*}
\begin{aligned}
G_1^{\left(\infty\right) }&=  \left(1+\left( 2 \sqrt{C_3^2-\mu_1\lambda} \right)^{-1}\lambda \Vert\delta_0\Vert_1 C_5\right) \sqrt{c_{\tau }^{\left( \infty\right) }}+ \left( 2\sqrt{C_3^2-\mu_1\lambda}\right) ^{-1}C_5\Vert\delta_0\Vert_1c_{\tau }^{\left(\infty\right) }\\
&=O_p(\lambda+\lambda^2),\end{aligned}\end{equation*}
which  leads to the contradiction because $c_{\tau }^{\infty}<s_0\lambda^2$ and $G_1^{\left(\infty\right) }<G_2 \lambda  s_0.$

Finally, Lemma \ref{lem-claim3} yields $\left\Vert \widehat{f}-f_{0}\right\Vert _{n}^{2}\leq3G_2\lambda^2s_0. $
\end{proof}

\begin{proof}[Proof of Theorem \ref{thmftau}]
The proof follows immediately from Lemma \ref{main-thm-fixed-threshold} under Assumptions \ref{as1} to \ref{A-secondmoments}. Specially, $P(\mathbb{A}_1 \cap \mathbb{A}_2 \cap \mathbb{A}_3 \cap \mathbb{A}_4 \cap \mathbb{A}_5) \ge 1 -C(logn)^{-1}.$
\end{proof}
\begin{lem}\label{indepen}
Suppose the assumptions of Theorem \ref{thmftau} hold, $\widehat{\alpha}$ and $\widehat{\tau}$ are asymptotically independent.
\end{lem}
\begin{proof} 
We have $\frac{n}{s_0logp}\left(\widehat{\tau}-\tau_0\right)=O_P(1)$ from Theorem \ref{thmftau}. We then set $b_n = \frac{n}{s_0logp}$ when there is a fixed threshold effect. 
Define $u=(u_1,u_2)'$ and assume $v>0,$ the objective function is written as follows:

\scalebox{0.7}{\parbox{0.1\linewidth}{\begin{equation}
\begin{aligned}\label{obj}
&\left[{S}_{n}\left(\alpha_{0}+\frac{u}{\sqrt{\frac{n}{s_0^2logp}}},\tau_{0}+\frac{v}{\frac{n}{s_0logp}}\right)+\lambda \left| \bm{D}\left(\tau_{0}+\frac{v}{\frac{n}{s_0logp}}\right)\left(\alpha_{0}+\frac{u}{\sqrt{\frac{n}{s_0^2logp}}}\right)\right|_{1}\right]
- \left[S_{n}(\alpha_{0},\tau_{0})+\lambda \left| \bm{D}\left(\tau_0\right)\alpha_{0}\right|_{1}\right]\\
&=\frac{1}{n}\sum_{i=1}^n\left[Y_i - \bm{X}_i\left(\tau_{0}+\frac{v}{\frac{n}{s_0logp}}\right)'\left(\alpha_{0}+\frac{u}{\sqrt{\frac{n}{s_0^2logp}}}\right)\right]^2-\frac{1}{n}\sum_{i=1}^nU_i^2\\
&+\lambda \left| \bm{D}\left(\tau_{0}+\frac{v}{\frac{n}{s_0logp}}\right)\left(\alpha_{0}+\frac{u}{\sqrt{\frac{n}{s_0^2logp}}}\right)\right|_{1} - \lambda \left| \bm{D}\left(\tau_0\right)\alpha_{0}\right|_{1}\\
&=\frac{1}{n}\sum_{i=1}^n\left[U_i - \left(\bm{X}_i\left(\tau_{0}+\frac{v}{\frac{n}{s_0logp}}\right)'\left(\alpha_{0}+\frac{u}{\sqrt{\frac{n}{s_0^2logp}}}\right)-\bm{X}_i(\tau_0)'\alpha_0\right)\right]^2-\frac{1}{n}\sum_{i=1}^nU_i^2\\
&+\lambda \left| \bm{D}\left(\tau_{0}+\frac{v}{\frac{n}{s_0logp}}\right)\left(\alpha_{0}+\frac{u}{\sqrt{\frac{n}{s_0^2logp}}}\right)\right|_{1} - \lambda \left| \bm{D}\left(\tau_0\right)\alpha_{0}\right|_{1}\\
&=\frac{1}{n}\sum_{i=1}^n\left[U_i - \left(X_i'\left(\beta_0+\frac{u_1}{\sqrt{\frac{n}{s_0^2logp}}}\right)+X_i'\left(\delta_0+\frac{u_2}{\sqrt{\frac{n}{s_0^2logp}}}\right)
\bm{1}\left(Q_i \leq\tau_{0}+\frac{v}{\frac{n}{s_0logp}}\right)-X_i'\beta_0-X_i'\delta_0\bm{1}\left(Q_i \leq\tau_{0}\right)\right)\right]^2\\
&-\frac{1}{n}\sum_{i=1}^nU_i^2 +\lambda \left| \bm{D}\left(\tau_{0}+\frac{v}{\frac{n}{s_0logp}}\right)\left(\alpha_{0}+\frac{u}{\sqrt{\frac{n}{s_0^2logp}}}\right)\right|_{1} - \lambda \left| \bm{D}\left(\tau_0\right)\alpha_{0}\right|_{1}\\
&=\frac{1}{n}\sum_{i=1}^n\left[U_i - \left(X_i'\left(\beta_0+\frac{u_1}{\sqrt{\frac{n}{s_0^2logp}}}\right)+X_i'\left(\delta_0+\frac{u_2}{\sqrt{\frac{n}{s_0^2logp}}}\right)
\bm{1}\left(Q_i \leq\tau_{0}+\frac{v}{\frac{n}{s_0logp}}\right)-X_i'\beta_0-X_i'\delta_0\bm{1}\left(Q_i \leq\tau_{0}\right)\right)\right]^2\\
&-\frac{1}{n}\sum_{i=1}^nU_i^2 +\lambda \left| \bm{D}\left(\tau_{0}+\frac{v}{\frac{n}{s_0logp}}\right)\left(\alpha_{0}+\frac{u}{\sqrt{\frac{n}{s_0^2logp}}}\right)\right|_{1} - \lambda \left| \bm{D}\left(\tau_0\right)\alpha_{0}\right|_{1}\\
&=\underbrace{\frac{1}{n}\sum_{i=1}^n\left[\left(\delta_0'X_iX_i'\delta_0-2X_i'\delta_0U_i+2\frac{u_1'}{\sqrt{\frac{n}{s_0^2logp}}}X_iX_i'\delta_0+2\delta_0'X_iX_i'\frac{u_2}{\sqrt{\frac{n}{s_0^2logp}}}\right)\bm{1}\left(\tau_{0} < Q_i \leq\tau_{0}+\frac{v}{\frac{n}{s_0logp}}\right)\right]}_{\mathcal{Q}_1(v)}\\
&+\underbrace{\frac{1}{n}\sum_{i=1}^n\left[\frac{u_1'X_iX_i'u_1}{\frac{n}{s_0^2logp}}-2\frac{u_1'X_iU_i}{\sqrt{\frac{n}{s_0^2logp}}}+\left(\frac{u_2'X_iX_i'u_2}{\frac{n}{s_0^2logp}}-2\frac{u_2'X_iU_i}{\sqrt{\frac{n}{s_0^2logp}}}+2\frac{u_1'X_iX_i'u_2}{\frac{n}{s_0^2logp}}\right)\bm{1}\left( Q_i \leq\tau_{0}+\frac{v}{\frac{n}{s_0logp}}\right)\right]}_{\mathcal{Q}_2(u)}\\
&+\underbrace{\lambda \left| \bm{D}\left(\tau_{0}+\frac{v}{\frac{n}{s_0logp}}\right)\left(\alpha_{0}+\frac{u}{\sqrt{\frac{n}{s_0^2logp}}}\right)\right|_{1} - \lambda \left| \bm{D}\left(\tau_0\right)\alpha_{0}\right|_{1}}_{\mathcal{Q}_3(u,v)}.\\
\end{aligned}
\end{equation}}}

We decompose the objective function into three components. Considering $\mathcal{Q}_3(u,v),$ We have the following:

\scalebox{0.70}{\parbox{0.1\linewidth}{
\begin{equation*}
\begin{aligned}
&\mathcal{Q}_3(u,v) = \lambda \left| \bm{D}\left(\tau_{0}+\frac{v}{\frac{n}{s_0logp}}\right)\left(\alpha_{0}+\frac{u}{\sqrt{\frac{n}{s_0^2logp}}}\right)\right|_{1} - \lambda \left| \bm{D}\left(\tau_0\right)\alpha_{0}\right|_{1}\\
&=\lambda \left| \bm{D}\left(\tau_{0}+\frac{v}{\frac{n}{s_0logp}}\right)\left(\alpha_{0}+\frac{u}{\sqrt{\frac{n}{s_0^2logp}}}\right) - \bm{D}\left(\tau_{0}+\frac{v}{\frac{n}{s_0logp}}\right)\alpha_{0} + \bm{D}\left(\tau_{0}+\frac{v}{\frac{n}{s_0logp}}\right)\alpha_{0} - \bm{D}\left(\tau_0\right)\alpha_{0} + \bm{D}\left(\tau_0\right)\alpha_{0}\right|_{1} - \lambda \left| \bm{D}\left(\tau_0\right)\alpha_{0}\right|_{1}\\
&=\lambda \left| \underbrace{\bm{D}\left(\tau_{0}+\frac{v}{\frac{n}{s_0logp}}\right)\frac{u}{\sqrt{\frac{n}{s_0^2logp}}}}_{\mathcal{Q}_{31}(u)} + \underbrace{\left(\bm{D}\left(\tau_{0}+\frac{v}{\frac{n}{s_0logp}}\right) - \bm{D}\left(\tau_0\right)\right)\alpha_{0}}_{\mathcal{Q}_{32}(v)} + \bm{D}\left(\tau_0\right)\alpha_{0}\right|_{1} - \lambda \left| \bm{D}\left(\tau_0\right)\alpha_{0}\right|_{1}.\\
\end{aligned}
\end{equation*}}}

\noindent Note that \begin{equation}
\begin{aligned}\label{penalyterm}
    &\argmin_{u,v}\left|\mathcal{Q}_{31}(u) + \mathcal{Q}_{32}(v) + \bm{D}\left(\tau_0\right)\alpha_{0} \right|\\
    & = \argmin_{u,v}\left(\mathcal{Q}_{31}(u) + \mathcal{Q}_{32}(v)\right)^2\\
    & = \argmin_{u,v}\left(\mathcal{Q}_{31}(u)'\mathcal{Q}_{31}(u) + \mathcal{Q}_{32}(v)' \mathcal{Q}_{32}(v)
    + 2\mathcal{Q}_{31}(u)'\mathcal{Q}_{32}(v) \right).
\end{aligned}
\end{equation}

Therefore, the asymptotic distribution of $\frac{n}{s_0logp}\left(\widehat{\tau}-\tau_0\right)$ depends on $\mathcal{Q}_{1}(v),$ $\mathcal{Q}_{32}(v)' \mathcal{Q}_{32}(v)$ and $\mathcal{Q}_{31}(u)'\mathcal{Q}_{32}(v)$ because $\frac{n}{s_0logp}\left(\widehat{\tau}-\tau_0\right)=\argmin_v\frac{n}{s_0logp}(\mathcal{Q}_{1}(v)+\mathcal{Q}_{32}(v)' \mathcal{Q}_{32}(v)
+ \\ 2\mathcal{Q}_{31}(u)'\mathcal{Q}_{32}(v) )$ by \eqref{obj} and \eqref{penalyterm}. 
We can show that $$\frac{n}{s_0logp}\lambda\mathcal{Q}_{32}(v)' \mathcal{Q}_{32}(v) = o_p(1) \quad \text{and} \quad \frac{n}{s_0logp}\lambda\mathcal{Q}_{31}(u)'\mathcal{Q}_{32}(v) = o_p(1)$$ by \ref{as42} and $\lambda=\frac{C}{\mu}\frac{\sqrt{\log{p}}}{\sqrt{n}}=o(1).$

We thus obtain that the asymptotic distribution of $\widehat{\tau}$ is independent of that of $\widehat{\alpha}.$ 
\end{proof}

\subsection{Proofs for the Asymptotic Properties of Nodewise Regression Estimator}
The proof is similar to that of Lemma A.9 in the appendix of \cite{canerkock2018}. Define the following events:



\begin{equation*}\mathbb{A}_{node}=\left\{\max_{j+p\in H}\sup_{\tau\in\mathbb{T} }\left| X^{(-j)}(\tau)'\upsilon^{(j)}/n\right|_{\infty}\le\frac{\mu\lambda_{node}}{2}\right\},\end{equation*}

\begin{equation*}\mathbb{A}_{EV}^{(j)}=\left\{\frac{\kappa\left(s_j,c_0, \mathbb{T},M_{-j,-j}\right)^2}{2}\le{\widehat{\kappa}\left(s_j, c_0,\mathbb{T},\widehat{M}_{-j,-j}\right)}^2\right\},\end{equation*}

\begin{equation*}\mathbb{B}_{node}=\left\{\max_{j\in H \text{or} j+p\in H}\sup_{\tau\in\mathbb{T}}\left|\widetilde{X}^{(-j)}(\tau)'\widetilde{\upsilon}^{(j)}/n\right|_{\infty}\le\frac{\mu\lambda_{node}}{2}\right\},\end{equation*}

\begin{equation*}\mathbb{B}_{EV}^{(j)}=\left\{\frac{\kappa\left(s_j,c_0, \mathbb{T},N_{-j,-j}\right)^2}{2}\le{\widehat{\kappa}\left(s_j,c_0, \mathbb{T},\widehat{N}_{-j,-j}\right)}^2\right\}.\end{equation*}

\begin{lem}\label{lemmanodeprob}
Suppose that Assumptions \ref{as1}-\ref{asnode} hold and that  $\widehat{\delta}(\widehat{\tau})\ne0$ via \eqref{joint-max}. Set $\lambda_{node}=\frac{C}{\mu}\sqrt{\frac{\log{p}}{n}}.$ Then $$P\left(\mathbb{A}_{node}\cap \left(\cap_ {j+p\in H}\mathbb{A}_{EV}^{(j)}\right)\cap\mathbb{B}_{node}\cap\left(\cap _{j\in H \text{or} j+p\in H}\mathbb{B}_{EV}^{(j)}\right)\right) \geq 1-C(logn)^{-1}.$$
\end{lem}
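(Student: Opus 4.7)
The plan is to establish each of the four families of events separately with probability at least $1-C(\log n)^{-1}$, and then to close by a union bound. For $\mathbb{A}_{node}$, since $X_i^{(-j)}(\tau)$ has entries $X_i^{(l)}\bm{1}\{Q_i<\tau\}$ and the indicator jumps only at the observed $Q_i$, I would sort $\{X_i, Q_i, \upsilon_i^{(j)}\}$ by $Q_i$ in ascending order so that $\sup_{\tau\in\mathbb{T}}\left|\tfrac1n\sum_i X_i^{(l)}\upsilon_i^{(j)}\bm{1}\{Q_i<\tau\}\right|$ reduces to $\max_{1\leq k\leq n}\left|\tfrac1n\sum_{i=1}^k X_i^{(l)}\upsilon_i^{(j)}\right|$. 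I would then apply Lemma \ref{conpart}(i) with $\sigma^2=\max_{l}n^{-1}\sum_i E[(X_i^{(l)}\upsilon_i^{(j)})^2]$, bounded by Cauchy--Schwarz under Assumptions \ref{as1}(iii) and \ref{asnode}(ii), and $B=\sqrt{EM_{\upsilon X}^2}$, controlled by Assumption \ref{asnode}(iii). Plugging in $\lambda_{node}=C\mu^{-1}\sqrt{\log p/n}$ yields the bound $\mu\lambda_{node}/2$ after a union bound over $l=1,\dots,p$ and over the $j$ appearing in $H$.

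The treatment of $\mathbb{B}_{node}$ is parallel: I would use the identity $\bm{1}\{Q_i\geq\tau\}=1-\bm{1}\{Q_i<\tau\}$, so the supremum again reduces to a partial-sum deviation (now over $\sum_{i=k}^n$), bounded by Lemma \ref{conpart}(ii) under Assumption \ref{asnode}. For the restricted eigenvalue events $\mathbb{A}_{EV}^{(j)}$ and $\mathbb{B}_{EV}^{(j)}$, I would follow the strategy of Lemma \ref{lemeg}: recognize $\widehat{\bm{M}}_{-j,-j}(\tau)$ and $\widehat{\bm{N}}_{-j,-j}(\tau)$ as principal submatrices of $\widehat{\bm{M}}(\tau)$ and $\widehat{\bm{N}}(\tau)$, and invoke Lemma \ref{tau22} to obtain $\sup_{\tau\in\mathbb{T}}\|\widehat{\bm{M}}_{-j,-j}(\tau)-\bm{M}_{-j,-j}(\tau)\|_\infty=O_p(\sqrt{\log p/n})$ and similarly for $\bm{N}$. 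Inserting this into the inequality $\widehat{\kappa}^2\geq\kappa^2-(1+c_0)^2 s_j\sup_\tau\|\widehat{\bm{M}}_{-j,-j}(\tau)-\bm{M}_{-j,-j}(\tau)\|_\infty$, combined with the sparsity control $s_j\leq \bar s$ and Assumption \ref{asnd}(ii), gives the desired halving of the population restricted eigenvalue on an event of the required probability.

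The final step is a union bound. Because $|H|=h$ is controlled (either finite or growing subject to Assumption \ref{asnd}(ii)), the collection $\{\mathbb{A}_{EV}^{(j)}\}_{j+p\in H}\cup\{\mathbb{B}_{EV}^{(j)}\}_{j\in H\text{ or }j+p\in H}$ contains at most $2h$ events, each of which fails with probability $O((\log n)^{-1})$, and similarly for $\mathbb{A}_{node}$, $\mathbb{B}_{node}$ after absorbing $l$ and $j$ into the union bound (the $\log p$ factor is already accommodated by the choice of $\lambda_{node}$). Summing the failure probabilities yields the claimed $1-C(\log n)^{-1}$ lower bound.

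I expect the main obstacle to be the handling of the uniform sup over $\tau$ together with the $\tau$-dependence of the nodewise residuals $\upsilon^{(j)}(\tau)$ and $\widetilde{\upsilon}^{(j)}(\tau)$: naively, one cannot pull the sup inside a fixed-$\tau$ concentration inequality. The partial-sum reformulation is what rescues us, since after sorting by $Q_i$ the indicator structure freezes the relevant stochastic quantity into $n+1$ possible values, and the population projection coefficients enter only through bounded deterministic weights by Assumption \ref{asnode}(i). A secondary delicate point is verifying that the Cauchy--Schwarz-based moment bounds on $X_i^{(l)}\upsilon_i^{(j)}$ and the maximum moments $EM_{\upsilon X}^2$, $EM_{\widetilde\upsilon X}^2$ are uniform in $\tau$ when $\upsilon^{(j)}(\tau)$ varies with $\tau$; this is ensured by the uniform boundedness of $|\gamma_{0,j}|_\infty$ and $|\widetilde\gamma_{0,j}|_\infty$ in Assumption \ref{asnode}(i) together with the fourth-moment control on $X_i^{(j)}$ in Assumption \ref{as1}(iii).
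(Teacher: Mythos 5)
Your treatment of $\mathbb{A}_{node}$ and $\mathbb{B}_{node}$ matches the paper's: sort by $Q_i$ so the supremum over $\tau$ collapses to a maximum over partial sums, then apply Lemma \ref{conpart} (the paper uses the transpose of the triangular matrix $\Xi_{n,n}$ for $\mathbb{B}_{node}$, which is the same device as your $\bm{1}\{Q_i\ge\tau\}=1-\bm{1}\{Q_i<\tau\}$ reduction), and the $\log p$ in $\lambda_{node}$ absorbs the maximum over coordinates. Your identification of the $\tau$-dependence of $\upsilon^{(j)}$ as the delicate point, resolved by the partial-sum reformulation and the bounded projection coefficients, is also sound.

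The genuine gap is in your closing union bound over the eigenvalue events. You propose to treat $\{\mathbb{A}_{EV}^{(j)}\}_{j+p\in H}$ and $\{\mathbb{B}_{EV}^{(j)}\}$ as up to $2h$ separate events, each failing with probability $O((\log n)^{-1})$, and to ``sum the failure probabilities.'' That sum is of order $h(\log n)^{-1}$, which is \emph{not} $O((\log n)^{-1})$ in the regime the paper works in: $h$ is allowed to grow (subject only to Assumption \ref{asnd}(ii), e.g.\ $h^{3/2}s_0^2\bar s^2\log p/\sqrt{n}=o_p(1)$), so $h(\log n)^{-1}\to\infty$ and the bound collapses. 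The paper avoids this by observing that the entire family $\cap_{j+p\in H}\mathbb{A}_{EV}^{(j)}$ is implied by a \emph{single} master event: since $\|\widehat{\bm M}_{-j,-j}(\tau)-\bm M_{-j,-j}(\tau)\|_\infty\le\|\widehat{\bm M}(\tau)-\bm M(\tau)\|_\infty$, $s_j\le\bar s$, and $\kappa(\bar s,c_0,\mathbb{T},\bm M)\le\kappa(s_j,c_0,\mathbb{T},\bm M)$, one has
\begin{equation*}
\left\{(1+c_0)^2\bar s\,\sup_{\tau\in\mathbb{T}}\|\widehat{\bm M}(\tau)-\bm M(\tau)\|_\infty\le\tfrac{1}{2}\kappa(\bar s,c_0,\mathbb{T},\bm M)\right\}\subset\bigcap_{j+p\in H}\mathbb{A}_{EV}^{(j)},
\end{equation*}
and likewise for the $\mathbb{B}_{EV}^{(j)}$ with $\widehat{\bm N}$. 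Only this one event (plus its $\bm N$ counterpart and the two node events) needs a probability bound, so the intersection has probability at least $1-C(\log n)^{-1}$ regardless of how fast $h$ grows. You already have all the ingredients for this containment (the submatrix norm inequality and $s_j\le\bar s$ appear in your second paragraph); you just need to use them to replace the $2h$-fold union bound by a four-fold one.
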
  

\begin{proof}[Proof of Lemma \ref{lemmanodeprob}]
We start with $\mathbb{A}_{node},$ $\max_ {j+p\in H}\sup_{\tau\in \mathbb{T}} \Vert X^{(-j)}(\tau)'\upsilon^{(j)}/n\Vert_{\infty}\le\frac{\mu\lambda_{node}}{2}$
is equivalent to
$\max_{j+p\in H}\max_{1\le l\le p-1}\sup_{\tau\in\mathbb{T}}\frac{1}{n}\sum_{i=1}^n X^{(-j,l)}_i(\tau)\upsilon_i^{(j)} \le\frac{\mu\lambda_{node}}{2}.$  Then sort $\{X_i,\upsilon_i,Q_i\}_{i=1}^n$ by $(Q_1, \dots, Q_n)$ in ascending order, it is equivalent to
$\max_ {j+p\in H}\max_{1\le k\le n}\\ \max_{1\le l\le p-1}\frac{1}{n}\sum_{i=1}^k X^{(-j,l)}_i\upsilon_i^{(j)} \le\frac{\mu\lambda_{node}}{2}.$ Following directly from the proof of Lemma \ref{lemmaprobAB}, we obtain that $\mathbb{A}_{node}$ holds with probability at least $1-C(logn)^{-1}.$

Similarly, consider the transpose of $\Xi_{n,n}$ in \eqref{simmat} and let $\widetilde{\xi}_i^{(l)}$ be the element in the $i$-th row and $l-th$ column of the transpose of $\Xi_{n,n}$, we can then obtain that $\mathbb{B}_{node}$ holds with probability at least $1-C(logn)^{-1}.$

Next, for each  $j+p\in H,$ by Lemma \ref{lemeg}
\begin{equation*}
\begin{aligned}
     &(1+c_0)^2 s_j\sup_{\tau\in \mathbb{T}}\Vert\widehat{M}_{-j,-j}(\tau)-M_{-j,-j}(\tau) \Vert_{\infty}\le (1+c_0)^2\Bar{s}\sup_{\tau\in \mathbb{T}}\Vert\widehat{M}(\tau)-M(\tau) \Vert_{\infty}\\ 
     &\le\frac{\kappa(\Bar{s}, c_0, \mathbb{T},\bm{M})}{2}\le\frac{\kappa(s_j, c_0, \mathbb{T}, \bm{M})}{2}
\end{aligned}
\end{equation*}
implies that
$$\left\{(1+c_0)^2 s_j\sup_{\tau\in \mathbb{T}}\Vert\widehat{M}_{-j,-j}(\tau)-M_{-j,-j}(\tau) \Vert_{\infty}\le\frac{\kappa(s_j, c_0, \mathbb{T},\bm{M})}{2}\right\}\subset \mathbb{A}_{EV}^{(j)}.$$
Then we have, 

\begin{equation*}\left\{(1+c_0)^2\Bar{s}\sup_{\tau\in \mathbb{T}}\Vert\widehat{M}(\tau)-M(\tau) \Vert_{\infty}\le\frac{\kappa(\Bar{s}, c_0,\mathbb{T},\bm{M})}{2}\right\}\subset\cap_{j+p\in H}\mathbb{A}_{EV}^{(j)}.
\end{equation*}
We thus obtain, by Lemma \ref{lemeg}, that
$\cap_{j+p\in H}\mathbb{A}_{EV}^{(j)}$ holds with probability at least $1-C(logn)^{-1}$ provided that $\kappa(s_j, c_0,  \mathbb{T},M)>0.$
Similarly, we can show that $\cap_{j\in H \text{or} j+p\in H}\mathbb{B}_{EV}^{(j)}$ holds with probability at least $1-C(logn)^{-1}.$ 

Therefore,  by $P(A\cap B) \ge 1 - P(A ^c) - P(B_2 ^c),$ we derive
\begin{equation*}P\left(\mathbb{A}_{node}\cap \left(\cap_ {j+p\in H}\mathbb{A}_{EV}^{(j)}\right)\cap\mathbb{B}_{node}\cap\left(\cap _{j\in H \text{or} j+p\in H}\mathbb{B}_{EV}^{(j)}\right)\right) \geq 1-C(logn)^{-1}.
\end{equation*}
\end{proof}

\begin{proof}[Proof of Lemma \ref{lemmanode}]
Given $\forall \, \tau\in  \mathbb{T}$ and each ${j\in H \, \text{or}\, j+p\in H}$, \eqref{NodeCLObj} is a loss function for a linear model, the pointwise oracle inequalities for a linear model have been proved in Theorem 2.4 of \cite{geer2014}. Since the uniform oracle inequalities only involve the noise conditions $\mathbb{A}_{node}$ and $ \mathbb{B}_{node} $, and adaptive restricted eigenvalue conditions $\cap_ {j+p\in H}\mathbb{A}_{EV}^{(j)} $  and $ \cap _{j\in H \text{or} j+p\in H}\mathbb{B}_{EV}^{(j)},$ by Lemma \ref{lemmanodeprob}, we obtain that the following results hold uniformly in $ \mathbb{T}$ and $H$,
\begin{equation} 
\begin{aligned}\label{nodebase1}
\sup_{\tau\in \mathbb{T}}\max_ {j+p\in H}| X^{(-j)}(\tau)'{\gamma}_j(\tau)-X^{(-j)}(\tau)'\widehat{\gamma}_j(\tau)|_{n}
\leq \frac{C}{{\kappa( \Bar{s},c_0,  \mathbb{T}, \bm{M})}}  \sqrt{\Bar{s}},\lambda_{node} 
\end{aligned}
\end{equation}
\begin{equation}
\begin{aligned}\label{nodebase2}
\sup_{\tau\in \mathbb{T}}\max_ {j+p\in H}\left|{\gamma}_j(\tau)-\widehat{\gamma}_j(\tau)\right| _{1}
\le  \frac{C}{{\kappa( \Bar{s},c_0,  \mathbb{T},\bm{M})}^2}  ,\Bar{s},\lambda_{node}
\end{aligned}
\end{equation}
with probability at least $1-(logn)^{-1}.$

In line with the inequalities presented in Lemma A.9 in the Appendix of \cite{canerkock2018}, we can thus establish the following set of inequalities:
\begin{align}
\max_{j+p\in H}\sup_{\tau\in\mathbb{T}}\left|\widehat{A}_j(\tau) - A_j(\tau)\right|_1 &=O_p \left(\Bar{s}\sqrt{\frac{\log{p}}{n}}\right)\\
\max_{j+p\in H}\sup_{\tau\in\mathbb{T}}\left|\widehat{A}_j(\tau) - A_j(\tau)\right|_2&=O_p \left(\sqrt{\frac{\Bar{s}\log{p}}{n}}\right)\\
\max_{j+p\in H}\sup_{\tau\in\mathbb{T}}\left|\widehat{A}_j(\tau)\right|_1&=O_p \left(\sqrt{\Bar{s}}\right)\\
 \max_{j+p\in H}\sup_{\tau\in\mathbb{T}}\frac{1}{\widehat{z_j}(\tau)^2}&=O_p \left(1\right)\\
 \max_{j\in H \text{or} j+p\in H}\sup_{\tau\in\mathbb{T}} \left|\widehat{B}_j(\tau) - B_j(\tau)\right|_1 &=O_p \left(\Bar{s} \sqrt{\frac{\log{p}}{n}}\right)\\
\max_{j\in H \text{or} j+p\in H} \sup_{\tau\in\mathbb{T}}\left|\widehat{B}_j(\tau) - B_j(\tau)\right|_2&=O_p \left(\sqrt{\frac{\Bar{s} \log{p}}{n}}\right)\\
\max_{j\in H \text{or} j+p\in H} \sup_{\tau\in\mathbb{T}}\left|\widehat{B}_j(\tau)\right|_1&=O_p \left(\sqrt{\Bar{s}}\right)\\
 \max_{j\in H \text{or} j+p\in H}\sup_{\tau\in\mathbb{T}}\frac{1}{\widehat{\widetilde{z_j}}(\tau)^2}&=O_p \left(1\right)
\end{align}
Now consider \eqref{invformlu} and \eqref{invform}, 
\begin{equation*}
\begin{aligned}
&\max_{j\in H } \sup_{\tau\in\mathbb{T}}\left|\widehat{\Theta}_j(\tau)- {\Theta}_j(\tau)\right|_{1}\le\\
&\max_{j\in H \text{or} j+p\in H} \sup_{\tau\in\mathbb{T}}\max\left\{2\left|\widehat{B}_j(\tau) - B_j(\tau)\right|_1, 2\left|\widehat{B}_j(\tau) - B_j(\tau)\right|_1+\left|\widehat{A}_j(\tau) - A_j(\tau)\right|_1\right\},
\end{aligned}
\end{equation*}
\begin{equation*}
\begin{aligned}
&\max_{j\in H } \sup_{\tau\in\mathbb{T}}\left|\widehat{\Theta}_j(\tau)- {\Theta}_j(\tau)\right|_{2}\le\\
&\max_{j\in H \text{or} j+p\in H} \sup_{\tau\in\mathbb{T}}\max\left\{2\left|\widehat{B}_j(\tau) - B_j(\tau)\right|_2, 2\left|\widehat{B}_j(\tau) - B_j(\tau)\right|_2+\left|\widehat{A}_j(\tau) - A_j(\tau)\right|_2\right\},
\end{aligned}
\end{equation*}
\begin{equation*}
\begin{aligned}\max_{j\in H } \sup_{\tau\in\mathbb{T}}\left|\widehat{\Theta}_j(\tau)\right|_{1}\le\max_{j\in H \text{or} j+p\in H} \sup_{\tau\in\mathbb{T}}\max\left\{2\left|\widehat{B}_j(\tau)\right|_1, 2\left|\widehat{B}_j(\tau) \right|_1+\left|\widehat{A}_j(\tau)\right|_1\right\}.
\end{aligned}
\end{equation*}
We thus have proved the first 3 inequalities in Lemma \ref{lemmanode}.

Next, we will bound $\max_{j\in H}\sup_{\tau\in \mathbb{T}}\left|  \widehat{\Theta}_{j}(\tau)'\widehat{\bm{\Sigma}}(\tau) - {e}_j' \right|_{\infty}.$ We can show that $\widehat{\bm {A}}(\tau)$ is an approximate inverse matrix of $\widehat{\bm {M}}(\tau)$. Let $\widehat{A}_j(\tau)$ denote the $j$-th row of $\widehat{\bm {A}}(\tau),$ we then have $\widehat{A_j}(\tau)=\widehat{C_j}(\tau)/\widehat{z_j}(\tau)^2$. 
Denoting by $\widetilde{e}_j$ the $j$-th unit vector, the KKT conditions imply that
\begin{equation}
\left| \widehat{A_j}(\tau)'\widehat{\bm{M}}(\tau) - \widetilde{e}_j' \right|_{\infty} 
\leq
\left|\bm{\widehat{\Gamma}_j}(\tau )\right|\frac{\lambda_{node}}{\widehat{z_j}(\tau)^2}.
\end{equation}
Similarly, we have 
\begin{equation}
\left|  \widehat{B}_{j}(\tau)'\widehat{\bm{N}}(\tau) - \widetilde{e}_j' \right|_{\infty} 
\leq
\left|\bm{\widehat{\widetilde{\Gamma}}_j}(\tau )\right|\frac{\lambda_{node}}{\widehat{\widetilde{z_j}}(\tau)^2}.\label{2.8}
\end{equation}

Therefore, we obtain
\begin{equation*}
\begin{aligned}
&\max_{j\in H\cap j\le p  } \sup_{\tau\in\mathbb{T}}
\left|\widehat{\Theta}_{j}(\tau)'\widehat{\bm{\Sigma}}(\tau) -  {e}_j'\right|_{\infty}
=\max_{j\in H\cap j\le p  } \sup_{\tau\in\mathbb{T}}\left|\left[ \widehat{B}_{j}(\tau)\quad -\widehat{B}_{j}(\tau)\right]
{\begin{bmatrix} \begin{array}{cccc}
		\widehat{\bm{M}} &\widehat{\bm{M}}(\tau)\\
		\widehat{\bm{M}}(\tau)&	\widehat{\bm{M}}(\tau) \end{array} \end{bmatrix}}- {e}_j'\right|_{\infty}\\
&=\max_{j\in H\cap j\le p  } \sup_{\tau\in\mathbb{T}}\left|\left[ \widehat{B}_{j}(\tau)\widehat{\bm{N}}(\tau)\quad0\right]- {e}_j'\right|_{\infty} \le\max_{j\in H\cap j\le p  } \sup_{\tau\in\mathbb{T}}\left|\widehat{B}_{j}(\tau)'\widehat{\bm{N}}(\tau) - \widetilde{e}_j' \right|_{\infty} 
\leq\max_{j\in H\cap j\le p  } \sup_{\tau\in\mathbb{T}}\frac{\lambda_{node}}{\widehat{\widetilde{z_j}}(\tau)^2}.
\end{aligned}
\end{equation*}

\begin{equation*}
\begin{aligned}
&\max_{j+p\in H } \sup_{\tau\in\mathbb{T}}
\left|  \widehat{\Theta}_{j}(\tau)'\widehat{\bm{\Sigma}}(\tau) -  {e}_j' \right|_{\infty} 
=\max_{j+p\in H } \sup_{\tau\in\mathbb{T}}\left|\left[ -\widehat{B}_{j}(\tau)\quad\widehat{B}_{j}(\tau)+\widehat{A}_{j}(\tau)\right] 
{\begin{bmatrix} \begin{array}{cccc}
		\widehat{\bm{M}} &\widehat{\bm{M}}(\tau)\\
		\widehat{\bm{M}}(\tau)&	\widehat{\bm{M}}(\tau) \end{array} \end{bmatrix}}- {e}_j'\right|_{\infty}\\
&=\max_{j+p\in H } \sup_{\tau\in\mathbb{T}}\left|\left[ \widehat{A}_{j}(\tau)\widehat{\bm{M}}(\tau)-\widehat{B}_{j}(\tau)\widehat{\bm{N}}(\tau)\quad\widehat{A}_{j}(\tau)\widehat{\bm{M}}(\tau)\right]- \left[ 0 \quad\widetilde{e}'_j\right] \right|_{\infty}\\
&\le\max_{j+p\in H } \sup_{\tau\in\mathbb{T}}\max\left\{\left|  \widehat{A}_{j}(\tau)'\widehat{\bm{M}}(\tau) - \widetilde{e}'_j \right|_{\infty}+\left|\widehat{B}_{j}(\tau)'\widehat{\bm{N}}(\tau) - \widetilde{e}'_j \right|_{\infty} ,\left|  \widehat{A}_{j}(\tau)'\widehat{\bm{M}}(\tau) -\widetilde{e}'_j \right|_{\infty} \right\}\\
&\le \max_{j+p\in H } \sup_{\tau\in\mathbb{T}}\frac{\lambda_{node }}{\widehat{z_j}(\tau)^2}+  \frac{\lambda_{node }}{\widehat{\widetilde{z_j}}(\tau)^2}.\end{aligned}
\end{equation*}
\end{proof}

\subsection{Proofs of Theorem \ref{thm3}}

\subsubsection{No Threshold Effect}
We first prove the case with no threshold effect, i.e., the true model is linear. 

To show that the ratio
\begin{equation}
t =\frac{\sqrt{n}g'(\widehat{a}(\widehat{\tau})-\alpha_{0})}
{\sqrt{g'\widehat{\bm{\Theta}}(\widehat{\tau})\widehat{\bm{\Sigma}}_{xu}(\widehat{\tau}) \widehat{\bm{\Theta}}(\widehat{\tau})'g}}
\end{equation}
is asymptotically standard normal. First,  by \eqref{stat2}, we have $t = t_1 + t_2,$
where
$$t_1 = \frac{g'\widehat{\bm{\Theta}}(\widehat{\tau})\bm {X}(\widehat{\tau})'U/n^{1/2}}{\sqrt{g'\widehat{\bm{\Theta}}(\widehat{\tau})\widehat{\bm{\Sigma}}_{xu}(\widehat{\tau}) \widehat{\bm{\Theta}}(\widehat{\tau})'g}},\text{ and}\, \, t_2 = \frac{ g'\Delta(\widehat{\tau})}{\sqrt{g'\widehat{\bm{\Theta}}(\widehat{\tau})\widehat{\bm{\Sigma}}_{xu}(\widehat{\tau}) \widehat{\bm{\Theta}}(\widehat{\tau})'g}},$$ which suffices to show that $t_1$ is asymptotically standard normal and $t_2=o_p(1)$.

\begin{lem}\label{thml1notau}
Suppose that Assumptions \ref{as1}, \ref{as2}, \ref{asnode} and \ref{asnd} hold, conditional on events $\mathbb{A}_1,$ $\mathbb{A}_2,$ $\mathbb{A}_3$, $\mathbb{A}_4,$ $\mathbb{A}_5,$ 
we have  $g'\Delta(\widehat{\tau})=O_p\left(\frac{s_0\sqrt{h}\log{p} }{\sqrt{n}}\right).$  
\end{lem}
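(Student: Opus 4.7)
The plan is to decompose $g'\Delta(\widehat{\tau})$ via a Hölder-type inequality that isolates three factors: the support of $g$, the KKT ``residual'' $\widehat{\bm{\Theta}}(\widehat{\tau})\widehat{\bm{\Sigma}}(\widehat{\tau}) - I_{2p}$, and the $\ell_1$ estimation error $\widehat{\alpha}(\widehat{\tau})-\alpha_0$. Each of these factors can then be controlled by a result already established in the excerpt.

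First, I would write
\begin{equation*}
g'\Delta(\widehat{\tau}) \;=\; \sqrt{n}\sum_{j\in H} g_j \,\bigl(\widehat{\Theta}_j(\widehat{\tau})'\widehat{\bm{\Sigma}}(\widehat{\tau}) - e_j'\bigr)\bigl(\widehat{\alpha}(\widehat{\tau})-\alpha_0\bigr),
\end{equation*}
and apply Hölder's inequality inside the sum to obtain
\begin{equation*}
\bigl|g'\Delta(\widehat{\tau})\bigr| \;\le\; \sqrt{n}\,|g|_1\,\Bigl(\max_{j\in H}\bigl|\widehat{\Theta}_j(\widehat{\tau})'\widehat{\bm{\Sigma}}(\widehat{\tau})-e_j'\bigr|_\infty\Bigr)\bigl|\widehat{\alpha}(\widehat{\tau})-\alpha_0\bigr|_1.
\end{equation*}
The Cauchy--Schwarz inequality then gives $|g|_1 \le \sqrt{h}\,|g|_2 = \sqrt{h}$ since $g$ is supported on $H$ with $|H|=h$.

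Next, since $\widehat{\tau}\in\mathbb{T}$, I would upgrade the middle factor to its uniform version by bounding it by $\max_{j\in H}\sup_{\tau\in\mathbb{T}}|\widehat{\Theta}_j(\tau)'\widehat{\bm{\Sigma}}(\tau)-e_j'|_\infty$, which is $O_p(\sqrt{\log p/n})$ by the last conclusion of Lemma \ref{lemmanode} (whose hypotheses are satisfied under Assumptions \ref{as1}, \ref{as2}, \ref{asnode}, and \ref{asnd}). For the third factor, conditioning on $\mathbb{A}_1\cap\mathbb{A}_2\cap\mathbb{A}_3\cap\mathbb{A}_4\cap\mathbb{A}_5$ (the event on which the oracle inequality holds), Theorem \ref{main-thm-case1} for the no-threshold-effect case yields $|\widehat{\alpha}(\widehat{\tau})-\alpha_0|_1 = O_p(s_0\lambda) = O_p\bigl(s_0\sqrt{\log p/n}\bigr)$ with $\lambda$ set by \eqref{lambda}.

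Putting these three pieces together gives
\begin{equation*}
\bigl|g'\Delta(\widehat{\tau})\bigr| \;=\; O_p\!\Bigl(\sqrt{n}\cdot\sqrt{h}\cdot \sqrt{\tfrac{\log p}{n}}\cdot s_0\sqrt{\tfrac{\log p}{n}}\Bigr) \;=\; O_p\!\Bigl(\tfrac{s_0\sqrt{h}\log p}{\sqrt{n}}\Bigr),
\end{equation*}
which is the announced rate. The only subtlety I foresee is ensuring that the supremum in Lemma \ref{lemmanode} is taken over the full parameter space $\mathbb{T}$ so that the random estimator $\widehat{\tau}$ can be inserted, and verifying that the uniform adaptive restricted eigenvalue condition invoked in Lemma \ref{lemmanode} (indexed by $\Bar{s}$) is compatible with the sparsity assumption of Theorem \ref{main-thm-case1} so that both probability-one statements hold on a common event of probability at least $1-C(\log n)^{-1}$; this is automatic under Assumption \ref{asnd}(iii). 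No truly hard step arises --- the argument is a clean Hölder split, and the main work has already been done in the earlier oracle-inequality and nodewise-regression lemmas.
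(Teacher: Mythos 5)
Your proposal is correct and follows essentially the same route as the paper's proof: the same Hölder split of $g'\Delta(\widehat{\tau})$ into $|g|_1\le\sqrt{h}$, the uniform KKT-residual bound $\max_{j}\sup_{\tau\in\mathbb{T}}|\widehat{\Theta}_j(\tau)'\widehat{\bm{\Sigma}}(\tau)-e_j'|_\infty=O_p(\sqrt{\log p/n})$ from Lemma \ref{lemmanode}, and the $\ell_1$ oracle bound $|\widehat{\alpha}(\widehat{\tau})-\alpha_0|_1=O_p(s_0\lambda)$ from Theorem \ref{main-thm-case1}. No gaps.
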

\begin{proof}[Proof of Lemma \ref{thml1notau}]
By Hölder's inequality, Theorem \ref{main-thm-case1},  and Lemma \ref{lemmanode}, we obtain
\begin{equation*}
\begin{aligned}
 g' \Delta(\widehat{\tau})  &\le \max_{j\in H}\vert\Delta_j(\widehat{\tau})\vert\sum_{j\in H}\vert g_j\vert
 =\max_{j\in H}\left|\left(  \widehat{\Theta }_j(\widehat{\tau}) \widehat{\bm{\Sigma}}(\widehat{\tau}) - e_j'\right) \sqrt{n}(\widehat{\alpha}(\widehat{\tau}) -\alpha_{0})\right| \sum_{j\in H}\vert g_j\vert\\
 &\le  \max_{1\le j\le 2p} \left|  \widehat{\Theta }_j(\widehat{\tau}) \widehat{\bm{\Sigma}}(\widehat{\tau}) - e_j'\right|_{\infty} \sqrt{n}\left|  \widehat{\alpha}(\widehat{\tau}) -\alpha_{0} \right|_{1}\sum_{j\in H}\vert g_j\vert \\ 
 &\le C\left(\frac{\lambda_{node}}{\widehat{z_1}^2_j(\widehat{\tau})}+\frac{\lambda_{node}}{\widehat{z_2}^2_j(\widehat{\tau})}\right)\cdot\sqrt{n}\cdot\lambda s_0\sqrt{h} = O_p\left(\frac{s_0\sqrt{h}\log{p} }{\sqrt{n}}\right).
\end{aligned}
\end{equation*}
\end{proof}

\begin{lem}\label{4thpower}Suppose that Assumption \ref{asnd} hold, then
\begin{equation*}
\max_{1\le k,l,j\le p} \left|\frac{1}{n}\sum_{i=1}^{n}\left({X}^{(k)}_i{X}^{(l)}_i{X}^{(j)}_i\right)^2-\frac{1}{n}\sum_{i=1}^{n}E\left[\left({X}^{(k)}_i{X}^{(l)}_i{X}^{(j)}_i\right)^2\right]\right|=O_p\left(\sqrt{\frac{\log{p}}{n}}\right),
\end{equation*}

\begin{equation*}
\max_{1\le k,l\le p} \left|\frac{1}{n}\sum_{i=1}^{n}\left({X}^{(k)}_i{X}^{(l)}_i U_i\right)^2-\frac{1}{n}\sum_{i=1}^{n}E\left[\left({X}^{(k)}_i{X}^{(l)}_i U_i\right)^2\right]\right| = O_p\left(\sqrt{\frac{\log{p}}{n}}\right),
\end{equation*}

\begin{equation*}
\max_{1\le l, k\le 2p}\sup_{\tau\in\mathbb{T}}\left| \frac{1}{n}\sum_{i=1}^{n}\bm{X}^{(k)}_i({\tau})\bm{X}^{(l)}_i({\tau}){U}_i^2-\frac{1}{n}\sum_{i=1}^{n}E\left[\bm{X}^{(k)}_i({\tau})\bm{X}^{(l)}_i({\tau}){U}_i^2\right]\right|
=O_p\left(\frac{\sqrt{\log{p}}}{\sqrt{n}}\right).
\end{equation*}
\end{lem}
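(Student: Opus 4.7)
The plan is to prove each of the three bounds in Lemma \ref{4thpower} by a direct application of the concentration machinery already established in Lemmas \ref{concenine} and \ref{conpart}, verifying in each case (a) that the variance parameter $\sigma^2$ is uniformly bounded via Cauchy--Schwarz from the moment conditions in Assumption \ref{asnd}(i), and (b) that the envelope term $B$ satisfies $B\sqrt{\log p}/\sqrt{n}=o_p(1)$ via the rate conditions in Assumption \ref{asnd}(i). I use throughout the convention $p>n$ maintained in Appendix A, so $\log(p^a \vee n^b)=O(\log p)$ for any fixed $a,b$.

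For the first statement, I would apply Lemma \ref{concenine} to the independent array $Y_i^{(k,l,j)}:=(X_i^{(k)}X_i^{(l)}X_i^{(j)})^2$ indexed by $(k,l,j)\in\{1,\dots,p\}^3$. The variance parameter $\sigma^2 = \max_{k,l,j} n^{-1}\sum_i E[(X_i^{(k)}X_i^{(l)}X_i^{(j)})^4]$ is bounded because $\max_j E[(X_i^{(j)})^{12}]\le C$ by Assumption \ref{asnd}(i) and repeated Cauchy--Schwarz yields $E[(X_i^{(k)}X_i^{(l)}X_i^{(j)})^4]\le \bigl(E[(X_i^{(k)})^{12}]E[(X_i^{(l)})^{12}]E[(X_i^{(j)})^{12}]\bigr)^{1/3}$. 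The envelope is exactly $B=\sqrt{E[M_{X^6}^2]}$, whose scaling is controlled by Assumption \ref{asnd}(i). For the second statement, the same argument applies to $(X_i^{(k)}X_i^{(l)}U_i)^2$ indexed by $(k,l)\in\{1,\dots,p\}^2$; now the fourth-moment bound uses both $E[(X_i^{(j)})^{12}]$ and $E[U_i^8]$ via Cauchy--Schwarz, and the envelope $B=\sqrt{E[M_{X^4U^2}^2]}$ satisfies the required rate.

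The third statement is the only one that requires extra work because of the supremum over $\tau\in\mathbb{T}$; this is where the main obstacle lies. The key observation is that $\bm{X}_i^{(k)}(\tau)\bm{X}_i^{(l)}(\tau)U_i^2$ equals $X_i^{(k')}X_i^{(l')}U_i^2$ multiplied by either $1$ or $\mathbf{1}\{Q_i<\tau\}$, according to whether $k,l\le p$ or $k,l>p$ (four cases in total). Following the sorting trick already used in the proofs of Lemmas \ref{lemmaprobAB} and \ref{tau22}, I would order $\{X_i,U_i,Q_i\}_{i=1}^n$ by $Q_i$ ascending, which reduces $\sup_{\tau\in\mathbb{T}}$ of the partial sum involving $\mathbf{1}\{Q_i<\tau\}$ to $\max_{1\le k\le n}$ of a partial sum $n^{-1}\sum_{i=1}^{k}X_i^{(k')}X_i^{(l')}U_i^2$. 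Then Lemma \ref{conpart}(i) applies to the $p^2$-indexed array $X_i^{(k')}X_i^{(l')}U_i^2$: $\sigma^2$ is bounded by Cauchy--Schwarz from $E[(X_i^{(j)})^{12}]$ and $E[U_i^8]$, the envelope satisfies $B\le\sqrt{E[M_{X^2U^2}^2]}$ with the required rate by Assumption \ref{asnd}(i), and $\log(p^2 n)=O(\log p)$ absorbs the extra $n$ factor from the partial-sum maximum. Combining the four cases gives the claimed uniform bound.

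The main obstacle is therefore conceptually minor but notationally involved: verifying that the sorting-plus-partial-sum reduction applies in all four blocks of $\bm{X}(\tau)$ without producing a worse-than-$\sqrt{\log p/n}$ rate. Since every required moment bound is a one-step Cauchy--Schwarz from Assumption \ref{asnd}(i), and every envelope scaling is explicitly assumed, no new probabilistic inputs are needed beyond Lemmas \ref{concenine} and \ref{conpart}.
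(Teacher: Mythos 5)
Your proposal is correct and follows essentially the same route as the paper, which omits the details but explicitly points to Lemmas \ref{concenine} and \ref{conpart} together with the arguments of Lemmas \ref{lemmaprobAB} and \ref{tau22} (bounding $\sigma^2$ by the moment conditions of Assumption \ref{asnd}(i), taking the envelopes $M_{X^6}$, $M_{X^4U^2}$, $M_{X^2U^2}$, and handling the $\sup_{\tau}$ in the third display by sorting on $Q_i$ and applying the partial-sum inequality). The only cosmetic point is that the three-factor moment bounds such as $E[(X_i^{(k)}X_i^{(l)}X_i^{(j)})^4]\le\prod\bigl(E[(X_i^{(\cdot)})^{12}]\bigr)^{1/3}$ are instances of the generalized H\"older inequality rather than repeated Cauchy--Schwarz, matching the paper's own (equally loose) attribution.
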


\begin{proof}[Proof of Lemma \ref{4thpower}] Under Assumption \ref{asnd}, by applying Lemmas \ref{concenine} and \ref{conpart}, we can obtain the results using similar proofs as in Lemmas \ref{lemmaprobAB} and \ref{tau22}; therefore, the proof is omitted.
\end{proof}

\begin{lem}\label{thm3s41notau} Suppose that Assumptions \ref{as1} to \ref{asnd} hold, conditional on events $\mathbb{A}_1,$ $\mathbb{A}_2,$ $\mathbb{A}_3,$ $\mathbb{A}_4$ and $\mathbb{A}_5,$ then we have
\begin{eqnarray*}
\left| g'\widehat{\bm{\Theta}}(\widehat{\tau})\widehat{\bm{\Sigma}}_{xu}(\widehat{\tau}) \widehat{\bm{\Theta}}(\widehat{\tau})'g-g'\bm{\Theta}(\widehat{\tau}){\bm{\Sigma}}_{xu}(\widehat{\tau}) {\bm{\Theta}}(\widehat{\tau})'g\right|=O_p\left( h\Bar{s}\sqrt{s_0^3} \sqrt{\frac{\log{p}}{n}}\right).
\end{eqnarray*}
\end{lem}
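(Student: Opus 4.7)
The plan is to decompose the quadratic-form difference by telescoping and control each resulting piece using the two available tools: the uniform nodewise estimation bounds of Lemma \ref{lemmanode} and entrywise concentration for the sample sandwich matrix. Specifically, I would write
\begin{equation*}
g'\widehat{\bm{\Theta}}(\widehat{\tau})\widehat{\bm{\Sigma}}_{xu}(\widehat{\tau})\widehat{\bm{\Theta}}(\widehat{\tau})'g-g'\bm{\Theta}(\widehat{\tau})\bm{\Sigma}_{xu}(\widehat{\tau})\bm{\Theta}(\widehat{\tau})'g=\mathrm{I}+\mathrm{II}+\mathrm{III},
\end{equation*}
where $\mathrm{I}:=g'(\widehat{\bm{\Theta}}-\bm{\Theta})\widehat{\bm{\Sigma}}_{xu}\widehat{\bm{\Theta}}'g$, $\mathrm{II}:=g'\bm{\Theta}(\widehat{\bm{\Sigma}}_{xu}-\bm{\Sigma}_{xu})\widehat{\bm{\Theta}}'g$, and $\mathrm{III}:=g'\bm{\Theta}\bm{\Sigma}_{xu}(\widehat{\bm{\Theta}}-\bm{\Theta})'g$, all evaluated at $\widehat\tau$. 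For each piece I would apply the elementary matrix Hölder estimate $|vAw|\le|v|_1\|A\|_\infty|w|_1$ together with $|g'B|_1\le|g|_1\max_{j\in H}|B_j|_1\le\sqrt{h}\max_{j\in H}|B_j|_1$.

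For pieces $\mathrm{I}$ and $\mathrm{III}$, the nodewise estimates from Lemma \ref{lemmanode} supply $\max_{j\in H}\sup_\tau|\widehat{\Theta}_j(\tau)-\Theta_j(\tau)|_1=O_p(\bar s\sqrt{\log p/n})$ and $\max_{j\in H}\sup_\tau|\widehat{\Theta}_j(\tau)|_1=O_p(\sqrt{\bar s})$ (with the same order for $|\Theta_j|_1$ by the triangle inequality), while $\|\widehat{\bm{\Sigma}}_{xu}(\widehat\tau)\|_\infty=O_p(1)$ follows from Lemma \ref{4thpower} and the bounded twelfth/eighth moments in Assumption \ref{asnd}(i). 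Combining these factors gives $|\mathrm{I}|+|\mathrm{III}|=O_p(h\bar s^{3/2}\sqrt{\log p/n})$, which is absorbed in the target rate under Assumption \ref{asnd}(ii).

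The main obstacle is piece $\mathrm{II}$, since it requires $\|\widehat{\bm{\Sigma}}_{xu}(\widehat\tau)-\bm{\Sigma}_{xu}(\widehat\tau)\|_\infty$ and the empirical sandwich is built from the plug-in residuals $\widehat U_i(\widehat\tau)$, not the true $U_i$. To isolate the plug-in error I would insert the oracle sandwich $\widetilde{\bm{\Sigma}}_{xu}(\tau):=n^{-1}\sum_i\bm{X}_i(\tau)\bm{X}_i(\tau)'U_i^2$ and split
\begin{equation*}
\widehat{\bm{\Sigma}}_{xu}(\widehat\tau)-\bm{\Sigma}_{xu}(\widehat\tau)=\bigl[\widehat{\bm{\Sigma}}_{xu}(\widehat\tau)-\widetilde{\bm{\Sigma}}_{xu}(\widehat\tau)\bigr]+\bigl[\widetilde{\bm{\Sigma}}_{xu}(\widehat\tau)-\bm{\Sigma}_{xu}(\widehat\tau)\bigr].
\end{equation*}
The second bracket is $O_p(\sqrt{\log p/n})$ in $\|\cdot\|_\infty$ by the third inequality of Lemma \ref{4thpower}. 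For the first bracket I would use the identity $\widehat U_i(\widehat\tau)=U_i-\bm{X}_i(\widehat\tau)'(\widehat\alpha(\widehat\tau)-\alpha_0)$, which is valid here because $\delta_0=0$ forces $Y_i=\bm{X}_i(\widehat\tau)'\alpha_0+U_i$ for every $\widehat\tau$, and expand $\widehat U_i^2-U_i^2$ into a cross term linear in $\widehat\alpha-\alpha_0$ and a quadratic term. Entrywise Hölder then bounds this bracket by $|\widehat\alpha-\alpha_0|_1\cdot M_1+|\widehat\alpha-\alpha_0|_1^2\cdot M_2$, where $M_1:=\max_{k,l,j}\sup_\tau|n^{-1}\sum_i X_i^{(k)}X_i^{(l)}X_i^{(j)}U_i\bm{1}\{Q_i<\tau\}|$ and $M_2:=\max_{k,l,j_1,j_2}\sup_\tau|n^{-1}\sum_i X_i^{(k)}X_i^{(l)}X_i^{(j_1)}X_i^{(j_2)}\bm{1}\{Q_i<\tau\}|$. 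The moment controls of Assumption \ref{asnd}(i) are tailored precisely so that Lemmas \ref{concenine} and \ref{conpart} bound $M_1$ and $M_2$ uniformly (with $M_1$ centered at zero thanks to $E[U|X,Q]=0$); together with $|\widehat\alpha-\alpha_0|_1=O_p(s_0\sqrt{\log p/n})$ from Theorem \ref{main-thm-case1}, this yields the required bound on $\mathrm{II}$.

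Summing $\mathrm{I},\mathrm{II},\mathrm{III}$ and absorbing lower-order contributions under Assumption \ref{asnd}(ii) delivers the stated rate $O_p(h\bar s\sqrt{s_0^3}\sqrt{\log p/n})$. The principal obstacle is the plug-in correction in piece $\mathrm{II}$: the careful expansion of $\widehat U_i^2-U_i^2$ and the verification that each uniform-in-$\tau$ concentration applies for the full list of index combinations generated by the expansion is the technical heart of the proof.
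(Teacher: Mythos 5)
Your overall architecture matches the paper's: telescope the quadratic form, insert the oracle sandwich $\widetilde{\bm{\Sigma}}_{xu}(\tau)=n^{-1}\sum_i\bm{X}_i(\tau)\bm{X}_i(\tau)'U_i^2$, control the precision-matrix replacement with the $\ell_1$ nodewise bounds of Lemma \ref{lemmanode}, and control the empirical-vs-population sandwich with Lemma \ref{4thpower}. The one substantive deviation is in the plug-in residual correction $\widehat{\bm{\Sigma}}_{xu}(\widehat\tau)-\widetilde{\bm{\Sigma}}_{xu}(\widehat\tau)$. The paper expands $\widehat U_i^2-U_i^2$ and bounds each entry by Cauchy--Schwarz as $\bigl(\max_{k,l}n^{-1}\sum_i(\bm X_i^{(k)}\bm X_i^{(l)})^2(\bm X_i'\alpha_0)^2\bigr)^{1/2}\cdot\|\bm X(\widehat\tau)(\widehat\alpha-\alpha_0)\|_n$, which is exactly why Assumption \ref{asnd}(i) controls the envelopes $M_{X^6}$, $M_{X^2U^2}$, $M_{X^4U^2}$ and why the prediction norm (rate $\sqrt{s_0\log p/n}$) rather than the $\ell_1$ error enters, producing the factor $\sqrt{s_0^3}$ via $|\alpha_0|_1=O_p(s_0)$. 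You instead use H\"older against $|\widehat\alpha-\alpha_0|_1$ and the raw arrays $M_1$ (triple $X$ times $U$) and $M_2$ (quadruple $X$); your claim that Assumption \ref{asnd}(i) is ``tailored precisely'' to concentrate these is not accurate --- the assumed envelopes do not include $\max_i\max_{k,l,j}|X_i^{(k)}X_i^{(l)}X_i^{(j)}U_i|$ or the quadruple product, so the $O_p(\sqrt{\log p/n})$ rate you assert for $M_1$ is not directly licensed. This is repairable rather than fatal: bounding $M_1$ and $M_2$ by $O_p(1)$ via Cauchy--Schwarz from the assumed fourth-moment arrays already gives a contribution $O_p(s_0\sqrt{\log p/n})+O_p(s_0^2\log p/n)$, which is dominated by the stated rate. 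Similarly, your pieces $\mathrm{I}$ and $\mathrm{III}$ come out as $O_p(h\bar s^{3/2}\sqrt{\log p/n})$, which is not literally $O_p(h\bar s\sqrt{s_0^3}\sqrt{\log p/n})$ unless $\bar s\lesssim s_0^3$, but since the lemma is only used to conclude $o_p(1)$ under Assumption \ref{asnd}(ii) (and the paper's own step 3 performs the same kind of absorption), this is cosmetic. In short: same route, with a bookkeeping mismatch between your H\"older-based plug-in bound and the moment conditions the paper actually assumes; state the cruder $O_p(1)$ bounds for $M_1,M_2$ and the argument closes.
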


\begin{proof}[Proof of Lemma \ref{thm3s41notau}]Recall no threshold effect case, we have $\bm{\Sigma}_{xu}(\widehat{\tau}) \\ =E\left[\frac{1}{n}\sum_{i=1}^n\bm{X}_i(\widehat{\tau})\bm{X}_i'(\widehat{\tau}){U}_i^2\right],$ ${\widehat{U}_i(\widehat{\tau})}=Y_i-\bm{X}_i'(\widehat{\tau})\widehat{\alpha}(\widehat{\tau})=U_i+\bm{X}_i(\widehat{\tau})'\alpha_0-\bm{X}_i(\widehat{\tau})'\widehat{\alpha}(\widehat{\tau}),$ $\widehat{\bm{\Sigma}}_{xu}(\widehat{\tau})=\frac{1}{n}\sum_{i=1}^{n}\bm{X}_i(\widehat{\tau})\bm{X}_i(\widehat{\tau})'\widehat{U}_i(\widehat{\tau})^2,$ and define $\widetilde{\bm{\Sigma}}_{xu}(\widehat{\tau})=\frac{1}{n}\sum_{i=1}^{n}\bm{X}_i(\widehat{\tau})\bm{X}_i(\widehat{\tau})'{U}_i^2.$ Then we will follow the proof (part b) of Theorem 2 in \cite{canerkock2018} to derive our results. First, to prove this lemma, we need to prove the following, as (A.62), (A.63), and (A.64) in \cite{canerkock2018},
 \begin{equation}\ \label{firstcov}
     \left| g'\widehat{\bm{\Theta}}(\widehat{\tau})\widehat{\bm{\Sigma}}_{xu}(\widehat{\tau}) \widehat{\bm{\Theta}}(\widehat{\tau})'g-g'\widehat{\bm{\Theta}}(\widehat{\tau})\widetilde{\bm{\Sigma}}_{xu}(\widehat{\tau}) \widehat{\bm{\Theta}}(\widehat{\tau})'g\right|=o_p(1),
\end{equation}
\begin{equation}\ \label{secondcov}\left| g'\widehat{\bm{\Theta}}(\widehat{\tau})\widetilde{\bm{\Sigma}}_{xu}(\widehat{\tau}) \widehat{\bm{\Theta}}(\widehat{\tau})'g- g'\widehat{\bm{\Theta}}(\widehat{\tau})\bm{\Sigma}_{xu}(\widehat{\tau}) \widehat{\bm{\Theta}}(\widehat{\tau})'g \right|=o_p(1),
\end{equation}
\begin{equation}\ \label{thirdcov}\left| g'\widehat{\bm{\Theta}}(\widehat{\tau})\bm{\Sigma}_{xu}(\widehat{\tau}) \widehat{\bm{\Theta}}(\widehat{\tau})'g-g'\bm{\Theta}(\widehat{\tau})\bm{\Sigma}_{xu}(\widehat{\tau}) \bm{\Theta}(\widehat{\tau})'g\right|=o_p(1).
\end{equation}
To prove \eqref{firstcov}, we write
\begin{equation*}
\begin{aligned}
&\left| g'\widehat{\bm{\Theta}}(\widehat{\tau})\widehat{\bm{\Sigma}}_{xu}(\widehat{\tau}) \widehat{\bm{\Theta}}(\widehat{\tau})'g-g'\widehat{\bm{\Theta}}(\widehat{\tau})\widetilde{\bm{\Sigma}}_{xu}(\widehat{\tau}) \widehat{\bm{\Theta}}(\widehat{\tau})'g\right|\\
&\le\left| g'\widehat{\bm{\Theta}}(\widehat{\tau})\left(\widehat{\bm{\Sigma}}_{xu}(\widehat{\tau})-\widetilde{\bm{\Sigma}}_{xu}(\widehat{\tau})\right) \widehat{\bm{\Theta}}(\widehat{\tau})'g\right| \le \left| g'\widehat{\bm{\Theta}}(\widehat{\tau})\right|_{1}^2  \Vert\widehat{\bm{\Sigma}}_{xu}(\widehat{\tau})-\widetilde{\bm{\Sigma}}_{xu}(\widehat{\tau})\Vert_{\infty}.
\end{aligned}
\end{equation*}
Then we obtain,

\begin{equation*}
\begin{aligned}
&\widehat{\bm{\Sigma}}_{xu}(\widehat{\tau})-\widetilde{\bm{\Sigma}}_{xu}(\widehat{\tau})
=\frac{1}{n}\sum_{i=1}^{n}\left(\bm{X}_i(\widehat{\tau})\bm{X}_i(\widehat{\tau})'{\widehat{U}_i^2(\widehat{\tau})}-\bm{X}_i(\widehat{\tau})\bm{X}_i(\widehat{\tau})'{U}_i^2\right)\\
=&\frac{1}{n}\sum_{i=1}^{n}\left(\bm{X}_i(\widehat{\tau})\bm{X}_i(\widehat{\tau})'(U_i+\bm{X}_i(\widehat{\tau})'\alpha_0-\bm{X}_i(\widehat{\tau})'\widehat{\alpha}(\widehat{\tau}))^2-\bm{X}_i(\widehat{\tau})\bm{X}_i(\widehat{\tau})'{U}_i^2\right)
\end{aligned}
\end{equation*}
\begin{equation*}
\begin{aligned}
=&\frac{1}{n}\sum_{i=1}^{n}\left(\bm{X}_i(\widehat{\tau})\bm{X}_i(\widehat{\tau})'\alpha_0'\bm{X}_i(\widehat{\tau})\bm{X}_i(\widehat{\tau})'\alpha_0\right) + \frac{1}{n}\sum_{i=1}^{n}\left(\bm{X}_i(\widehat{\tau})\bm{X}_i(\widehat{\tau})'\widehat{\alpha}'(\widehat{\tau})\bm{X}_i(\widehat{\tau})\bm{X}_i(\widehat{\tau})'\widehat{\alpha}(\widehat{\tau})\right)\\
-&\frac{2}{n}\sum_{i=1}^{n}\left(\bm{X}_i(\widehat{\tau})\bm{X}_i(\widehat{\tau})'\alpha_0'\bm{X}_i(\widehat{\tau})\bm{X}_i(\widehat{\tau})'\widehat{\alpha}(\widehat{\tau})\right) +\frac{2}{n}\sum_{i=1}^{n}\left(\bm{X}_i(\widehat{\tau})\bm{X}_i(\widehat{\tau})'\alpha_0'\bm{X}_i(\widehat{\tau})U_i\right)\\
-&\frac{2}{n}\sum_{i=1}^{n}\left(\bm{X}_i(\widehat{\tau})\bm{X}_i(\widehat{\tau})'\widehat{\alpha}(\widehat{\tau})'\bm{X}_i(\widehat{\tau})U_i\right)\\
=&\frac{1}{n}\sum_{i=1}^{n}\bm{X}_i(\widehat{\tau})\bm{X}_i(\widehat{\tau})'\alpha_0\bm{X}_i(\widehat{\tau})\bm{X}_i(\widehat{\tau})'\left(\alpha_0-\widehat{\alpha}(\widehat{\tau})\right) +\frac{1}{n}\sum_{i=1}^{n}\bm{X}_i(\widehat{\tau})\bm{X}_i(\widehat{\tau})'\left(\widehat{\alpha}(\widehat{\tau})'-\alpha_0'\right)\bm{X}_i(\widehat{\tau})\bm{X}_i(\widehat{\tau})'\widehat{\alpha}(\widehat{\tau})\\
+&\frac{2}{n}\sum_{i=1}^{n}\bm{X}_i(\widehat{\tau})\bm{X}_i(\widehat{\tau})'\left(\alpha_0'-\widehat{\alpha}(\widehat{\tau})'\right)\bm{X}_i(\widehat{\tau})U_i.\\
\end{aligned}
\end{equation*}

By Cauchy-Schwarz inequality and Hölder's inequality
\begin{equation*}
\begin{aligned}
&\max_{1\le k,l\le 2p}\left|\frac{1}{n}\sum_{i=1}^{n}\left(\bm{X}^{(k)}_i(\widehat{\tau})\bm{X}^{(l)}_i(\widehat{\tau})\alpha_0'\bm{X}_i(\widehat{\tau})\left(\bm{X}_i(\widehat{\tau})'\alpha_0-\bm{X}_i(\widehat{\tau})'\widehat{\alpha}(\widehat{\tau}\right)\right)\right|\\
\le&\sqrt{\max_{1\le k,l\le 2p} \frac{1}{n}\sum_{i=1}^{n}\left(\bm{X}^{(k)}_i(\widehat{\tau})\bm{X}^{(l)}_i(\widehat{\tau})\right)^2\left(\bm{X}_i(\widehat{\tau})'\alpha_0\right)^2}\Vert\bm{X}(\widehat{\tau})\alpha_0-\bm{X}(\widehat{\tau})\widehat{\alpha}(\widehat{\tau})\Vert_n\\ 
\le&\sqrt{\max_{1\le k,l\le 2p}\frac{1}{n}\sum_{i=1}^{n}\left(\bm{X}^{(k)}_i(\widehat{\tau})\bm{X}^{(l)}_i(\widehat{\tau})\right)^2   \left(\max_{1\le k\le 2p}\bm{X}^{(k)}_i(\tau_0)\right)^2 \left|\alpha_0\right|_1^2}\Vert\bm{X}(\widehat{\tau})\alpha_0-\bm{X}(\widehat{\tau})\widehat{\alpha}(\widehat{\tau})\Vert_n\\ 
\le&\sqrt{\max_{1\le k,l,j\le p} \frac{1}{n}\sum_{i=1}^{n}\left({X}^{(k)}_i{X}^{(l)}_i{X}^{(j)}_i\right)^2\left|\alpha_0\right|_1^2 \bm{1}\left( Q_{i}<\widehat{\tau} \right)}\Vert\bm{X}(\widehat{\tau})\widehat{\alpha}(\widehat{\tau})-\bm{X}(\widehat{\tau})\alpha_0\Vert_n
= O_p\left( \sqrt{s_0^3} \sqrt{\frac{\log{p}}{n}}\right),
\end{aligned}
\end{equation*}
the last equality follows from Lemma \ref{4thpower}, and $\left|\widehat{\alpha}(\widehat{\tau})\right|_1 \le\left|\alpha_0\right|_1 +O_p\left( s_0 \sqrt{\frac{\log{p}}{n}}\right),$
$\Vert\bm{X}(\widehat{\tau})\widehat{\alpha}(\widehat{\tau})-\bm{X}(\widehat{\tau})\alpha_0\Vert_n=O_p\left( \sqrt{s_0} \sqrt{\frac{\log{p}}{n}}\right)$ by Theorem \ref{main-thm-case1}, and $\left|\alpha_0\right|_1=O_p(s_0)$ under Assumption \ref{as1}. Also, we have

\begin{equation*}
\begin{aligned}
&\max_{1\le k,l\le 2p}\left|\frac{1}{n}\sum_{i=1}^{n}\left(\bm{X}^{(k)}_i(\widehat{\tau})\bm{X}^{(l)}_i(\widehat{\tau})\left(\widehat{\alpha}(\widehat{\tau})'-\alpha_0'\right)\bm{X}_i(\widehat{\tau})\bm{X}_i(\widehat{\tau})'\widehat{\alpha}(\widehat{\tau})\right)\right|\\
\le&\max_{1\le k,l\le 2p}\sqrt{\frac{1}{n}\sum_{i=1}^{n}\left(\bm{X}^{(k)}_i(\widehat{\tau})\bm{X}^{(l)}_i(\widehat{\tau})\right)^2\left(\widehat{\alpha}'(\widehat{\tau})\bm{X}_i(\widehat{\tau})\right)^2}\Vert\bm{X}(\widehat{\tau})\widehat{\alpha}(\widehat{\tau})-\bm{X}(\widehat{\tau})\alpha_0\Vert_n\\ 
\le&\sqrt{\max_{1\le k,l,j\le p} \frac{1}{n}\sum_{i=1}^{n}\left({X}^{(k)}_i{X}^{(l)}_i{X}^{(j)}_i\right)^2\left|\widehat{\alpha}(\widehat{\tau})\right|_1^2 \bm{1}\left( Q_{i}<\widehat{\tau} \right)}\Vert\bm{X}(\widehat{\tau})\widehat{\alpha}(\widehat{\tau})-\bm{X}(\widehat{\tau})\alpha_0\Vert_n = O_p\left( \sqrt{s_0^3}\sqrt{\frac{\log{p}}{n}}\right),
\end{aligned}
\end{equation*}
and
\begin{equation*}
\begin{aligned}
&\max_{1\le k,l\le 2p}\left|\frac{2}{n}\sum_{i=1}^{n}\left(\bm{X}_i(\widehat{\tau})'\alpha_0-\bm{X}_i(\widehat{\tau})'\widehat{\alpha}(\widehat{\tau})\right)\left(\bm{X}^{(k)}_i(\widehat{\tau})\bm{X}^{(l)}_i(\widehat{\tau})\right)U_i\right|\\
\le&2\sqrt{\max_{1\le k,l\le p} \frac{1}{n}\sum_{i=1}^{n}\left({X}^{(k)}_i{X}^{(l)}_iU_i\right)^2\bm{1}\left( Q_{i}<\widehat{\tau} \right)}\Vert\bm{X}(\widehat{\tau})\widehat{\alpha}(\widehat{\tau})-\bm{X}(\widehat{\tau})\alpha_0\Vert_n = O_p\left( \sqrt{s_0} \sqrt{\frac{\log{p}}{n}}\right).
\end{aligned}
\end{equation*}
We then obtain
\begin{equation*}
\left\Vert\widehat{\bm{\Sigma}}_{xu}(\widehat{\tau})-\widetilde{\bm{\Sigma}}_{xu}(\widehat{\tau})\right\Vert_{\infty}=O_p\left( \sqrt{s_0^3} \sqrt{\frac{\log{p}}{n}}\right).
\end{equation*}
Therefore,
\begin{equation*}
\begin{aligned}
&\left| g'\widehat{\bm{\Theta}}(\widehat{\tau})\widehat{\bm{\Sigma}}_{xu}(\widehat{\tau}) \widehat{\bm{\Theta}}(\widehat{\tau})'g-g'\widehat{\bm{\Theta}}(\widehat{\tau})\widetilde{\bm{\Sigma}}_{xu}(\widehat{\tau}) \widehat{\bm{\Theta}}(\widehat{\tau})'g\right|
\le \left| g'\widehat{\bm{\Theta}}(\widehat{\tau})\right|_{1}^2  \Vert\widehat{\bm{\Sigma}}_{xu}(\widehat{\tau})-\widetilde{\bm{\Sigma}}_{xu}(\widehat{\tau})\Vert_{\infty}\\
\le&\left(\sum_{j\in H}|g_j|\max_{j\in H}\sup_{\tau\in\mathbb{T}}\left\Vert\widehat{\bm{\Theta}}(\widehat{\tau}) \right\Vert_1\right)^2 \left\Vert\widehat{\bm{\Sigma}}_{xu}(\widehat{\tau})-\widetilde{\bm{\Sigma}}_{xu}(\widehat{\tau})\right\Vert_{\infty}\\
=&O_p\left(h\Bar{s}\right)O_p\left( \sqrt{s_0^3} \sqrt{\frac{\log{p}}{n}}\right)=O_p\left( h\Bar{s}\sqrt{s_0^3} \sqrt{\frac{\log{p}}{n}}\right).
\end{aligned}
\end{equation*}

To prove \eqref{secondcov}, we have
\begin{equation*}
\widetilde{\bm{\Sigma}}_{xu}(\widehat{\tau})-\bm{\Sigma}_{xu}(\widehat{\tau})=\frac{1}{n}\sum_{i=1}^{n}\bm{X}_i(\widehat{\tau})\bm{X}_i(\widehat{\tau})'{U}_i^2-\frac{1}{n}\sum_{i=1}^{n}E\left[\bm{X}_i(\widehat{\tau})\bm{X}_i(\widehat{\tau})'{U}_i^2\right].\\ 
\end{equation*}

\noindent We thus derive
\begin{equation*}
\begin{aligned}
&\left| g'\widehat{\bm{\Theta}}(\widehat{\tau})\widetilde{\bm{\Sigma}}_{xu}(\widehat{\tau}) \widehat{\bm{\Theta}}(\widehat{\tau})'g- g'\widehat{\bm{\Theta}}(\widehat{\tau})\bm{\Sigma}_{xu}(\widehat{\tau}) \widehat{\bm{\Theta}}(\widehat{\tau})'g \right|
\le \left| g'\widehat{\bm{\Theta}}(\widehat{\tau})\left(\widetilde{\bm{\Sigma}}_{xu}(\widehat{\tau})-{\bm{\Sigma}}_{xu}(\widehat{\tau}) \right)\widehat{\bm{\Theta}}(\widehat{\tau})'g\right|\\ 
\le&\left| g'\widehat{\bm{\Theta}}(\widehat{\tau})\right|_{1}^2 \left\Vert\widetilde{\bm{\Sigma}}_{xu}(\widehat{\tau})-{\bm{\Sigma}}_{xu}(\widehat{\tau})\right\Vert_{\infty}
\le \left(\sum_{j\in H}|g_j|\max_{j\in H}\sup_{\tau\in\mathbb{T}}\left|\widehat{\Theta}_j(\widehat{\tau}) \right|_1\right)^2 \left\Vert\widetilde{\bm{\Sigma}}_{xu}(\widehat{\tau})-{\bm{\Sigma}}_{xu}(\widehat{\tau}) \right\Vert_{\infty}\\
\le&O_p\left(h\Bar{s} \right)O_p\left( \sqrt{\frac{\log{p}}{n}}\right)=O_p\left( h\Bar{s} \sqrt{\frac{\log{p}}{n}}\right).
\end{aligned}
\end{equation*}

To prove \eqref{thirdcov}, we write 
\begin{equation*}
\begin{aligned}
&\left| g'\widehat{\bm{\Theta}}(\widehat{\tau})\bm{\Sigma}_{xu}(\widehat{\tau}) \widehat{\bm{\Theta}}(\widehat{\tau})'g-g'\bm{\Theta}(\widehat{\tau})\bm{\Sigma}_{xu}(\widehat{\tau}) \bm{\Theta}(\widehat{\tau})'g\right|\\
\le& | \bm{\Sigma}_{xu}(\widehat{\tau})\Vert_{\infty}\left|\left(\widehat{\bm{\Theta}}(\widehat{\tau})-\bm{\Theta}(\widehat{\tau})\right)'g\right|_{1}^2+2\left|\left(\widehat{\bm{\Theta}}(\widehat{\tau})-\bm{\Theta}(\widehat{\tau})\right)'g\right|_{2}
\left| \bm{\Sigma}_{xu}(\widehat{\tau})\bm{\Theta}(\widehat{\tau})'g\right|_{2}\\
=& \Vert \bm{\Sigma}_{xu}(\widehat{\tau})\Vert_{\infty}\left|\left(\widehat{\bm{\Theta}}(\widehat{\tau})-\bm{\Theta}(\widehat{\tau})\right)'g\right|_{1}^2+2\widetilde{\kappa}(\Bar{s},c_0,\mathbb{T},\bm{\Sigma_{xu}}) \left|\left(\widehat{\bm{\Theta}}(\widehat{\tau})-\bm{\Theta}(\widehat{\tau})\right)'g\right|_{2}
\left| \bm{\Theta}(\widehat{\tau})'g\right|_{2}\\
\le& \Vert \bm{\Sigma}_{xu}(\widehat{\tau})\Vert_{\infty}\left|\left(\widehat{\bm{\Theta}}(\widehat{\tau})-\bm{\Theta}(\widehat{\tau})\right)'g\right|_{1}^2+2\widetilde{\kappa}(\Bar{s},c_0,\mathbb{T},\bm{\Sigma_{xu}}) \left|\left(\widehat{\bm{\Theta}}(\widehat{\tau})-\bm{\Theta}(\widehat{\tau})\right)'g\right|_{2}
\widetilde{\kappa}(\Bar{s},c_0,\mathbb{T},\bm{\Theta})\left| g\right|_2.
\end{aligned}
\end{equation*}

As $\Vert \bm{\Sigma}_{xu}(\widehat{\tau})\Vert_{\infty}=\max_{1\le l, k\le 2p} E\left[\frac{1}{n}\sum_{i=1}^n\bm{X}^{(k)}_i(\widehat{{\tau}})\bm{X}^{(l)}_i(\widehat{{\tau}}){u}_i^2\right]$, $\widetilde{\kappa}(\Bar{s},c_0,\mathbb{T},\bm{\Sigma_{xu}}) $ and
$\widetilde{\kappa}(\Bar{s},c_0,\mathbb{T},\bm{\Theta})$
are bounded under Assumption \ref{asnd}, we obtain
\begin{equation*}
\begin{aligned}
&\left|\left(\widehat{\bm{\Theta}}(\widehat{\tau})-\bm{\Theta}(\widehat{\tau})\right)'g\right|_{1} = \sum_{j\in H}\left(\vert g_j\vert \left|\widehat{\Theta}_j(\widehat{\tau})-\Theta_j(\widehat{\tau})\right|_{1}\right)
\le\sum_{j\in H}\vert g_j\vert \sup_{\tau\in\mathbb{T}} \max_{j\in H} \left|\widehat{\Theta}_j( {\tau})-\Theta_j(\tau)\right|_{1}\\ 
\le &\sqrt{h} \sup_{\tau\in\mathbb{T}} \max_{j\in H} \left|\Theta_j( {\tau})-\Theta_j(\tau)\right|_{1} = O_p\left(\sqrt{h}\Bar{s}\sqrt{\frac{\log{p}}{n}}\right),
\end{aligned}
\end{equation*}
and
\begin{equation*}
\begin{aligned}
&\left|\left(\widehat{\bm{\Theta}}(\widehat{\tau})-\bm{\Theta}(\widehat{\tau})\right)'g\right|_{2} =\left|\sum_{j\in H} \left(\Theta_j( \widehat{\tau})-\Theta_j(\tau_0)\right) \vert g_j\vert \right|_{2} \le\max_{j\in H} \left|\Theta_j( \widehat{\tau})-\Theta_j(\tau_0)\right|_{2} \sum_{j\in H}\vert g_j\vert\\ 
\le &\sqrt{h} \sup_{\tau\in \mathbb{T}} \max_{j\in H} \left|\Theta_j( {\tau})-\Theta_j(\tau_0)\right|_{2} = O_p\left(\sqrt{h\Bar{s}}\sqrt{\frac{\log{p}}{n}}\right).
\end{aligned}
\end{equation*}
Furthermore,  
\begin{equation*}
\begin{aligned}
&\left| g'\widehat{\bm{\Theta}}(\widehat{\tau})\bm{\Sigma}_{xu}(\widehat{\tau}) \widehat{\bm{\Theta}}(\widehat{\tau})'g-g'\bm{\Theta}(\widehat{\tau})\bm{\Sigma}_{xu}(\widehat{\tau}) \bm{\Theta}(\widehat{\tau})'g\right|\\
\le& \Vert \bm{\Sigma}_{xu}(\widehat{\tau})\Vert_{\infty}\left|\left(\widehat{\bm{\Theta}}(\widehat{\tau})-\bm{\Theta}(\widehat{\tau})\right)'g\right|_{1}^2+2\widetilde{\kappa}(\Bar{s},c_0,\mathbb{T},\bm{\Sigma_{xu}}) \left|\left(\widehat{\bm{\Theta}}(\widehat{\tau})-\bm{\Theta}(\widehat{\tau})\right)'g\right|_{2}
\widetilde{\kappa}(\Bar{s},c_0,\mathbb{T},\bm{\Theta})\left| g\right|_2\\
\le& O_p\left(\sqrt{h}\Bar{s}\sqrt{\frac{\log{p}}{n}}\right)^2+O_p\left(\sqrt{h\Bar{s}}\sqrt{\frac{\log{p}}{n}}\right)
= O_p\left(\sqrt{h\Bar{s}}\sqrt{\frac{\log{p}}{n}}\right).
\end{aligned}
\end{equation*}
Finally, under Assumption \ref{asnd} (ii),
\begin{equation*}
\begin{aligned}
&\left| g'\widehat{\bm{\Theta}}(\widehat{\tau})\widehat{\bm{\Sigma}}_{xu}(\widehat{\tau}) \widehat{\bm{\Theta}}(\widehat{\tau})'g-g'\bm{\Theta}(\widehat{\tau}){\bm{\Sigma}}_{xu}(\widehat{\tau}) {\bm{\Theta}}(\widehat{\tau})'g\right|\\
=&O_p\left( h\Bar{s}\sqrt{s_0^3} \sqrt{\frac{\log{p}}{n}}\right)+
O_p\left( h\Bar{s} \sqrt{\frac{\log{p}}{n}}\right)+
O_p\left(\sqrt{h\Bar{s}}\sqrt{\frac{\log{p}}{n}}\right)=O_p\left( h\Bar{s}\sqrt{s_0^3} \sqrt{\frac{\log{p}}{n}}\right).
\end{aligned}
\end{equation*}
\end{proof}

\begin{proof}[Proof of Theorem \ref{thm3} in no threshold effect case]

\noindent{\bf Step 1}.

{\bf Step 1.1)} Given that $\tau_0$ is undefined and unknown in the current setup, it is necessary to show the asymptotic standard normality of $t_1'(\tau) = \frac{g' \bm{\Theta}({\tau})\bm {X}'({\tau})U/n^{1/2}}{\sqrt{g' \bm{\Theta}({\tau}) \bm{\Sigma}({\tau})_{xu} \bm{\Theta}({\tau})'g}}$ uniformly over $\tau \in \mathbb{T}$.  Subsequently, for any $\widehat{\tau}$ obtained from \eqref{joint-max}, we need to show get $t_1'(\widehat{\tau})$ and $t_1$ are asymptotically equivalent.

We will follow the proof (part a) of Theorem 2 in \cite{canerkock2018} to derive our results. As $E(U_i|X_i)=0$ for all $i=1,...,n$, we have
\begin{equation}
E \left[t_1'(\tau) \right] 
=
E \left[ \frac{g'\bm{\Theta}(\tau)\sum_{i=1}^n \bm {X}_i(\tau) U_i /n^{1/2}}{\sqrt{g'\bm{\Theta}(\tau)\bm{\Sigma}_{xu}(\tau) \bm{\Theta}(\tau)'g}}\right] 
=0,
\end{equation}
and
\begin{equation*}
\begin{aligned}
E \left[\left(t_1'(\tau)\right)^2\right] 
=
E \left[ \left(\frac{g'{\bm\Theta}(\tau)\sum_{i=1}^n \bm {X}_i(\tau) U_i /n^{1/2}}{\sqrt{g'\bm{\Theta}(\tau)\bm{\Sigma}(\tau)_{xu} \bm{\Theta}(\tau)'g}}\right)^2\right]  
=
1.
\end{aligned}
\end{equation*}
Next, we will apply Lyapounov's central limit theorem for a sequence of independent random variables. We thus need to show that for some $\varepsilon >0,$

\begin{equation*}
\lim_{n\rightarrow \infty}\frac{\sum_{i=1}^n E\left[\vert g'\bm{\Theta}(\tau)\bm {X}_i(\tau)' U_i /n^{1/2}\vert\right]^{2+\varepsilon}}{\left(g'\bm{\Theta}(\tau)\bm{\Sigma}_{xu}(\tau)\bm{\Theta}(\tau)'g\right)^{1+\varepsilon/2}}\to0.
\end{equation*}
Let $\widetilde{S}(\tau)=\cup_ {j\in H }S_j(\tau)$, then 
the cardinality $ \sup_{\tau\in\mathbb{T}}\vert\widetilde{S}(\tau)\vert=2p\wedge h\Bar{s}.$ We then have
\begin{equation*}
\begin{aligned}
&E\left[\left\vert g'{\bm\Theta}(\tau)\bm {X}_i'(\tau) U_i /n^{1/2}\right\vert^{2+\varepsilon}\right]
\leq E\left[\left\vert\left| g'\bm{\Theta}(\tau)/n^{1/2}\right|_1\max_{j\in \widetilde{S}(\tau)}\left(\bm {X}_i^{(j)}(\tau)U_i\right)\right\vert^{2+\varepsilon}\right] \\
& \leq E\left[\left|g'\bm{\Theta}(\tau)/n^{1/2}\right|_1^{2+\varepsilon}\max_{j\in\widetilde{S}(\tau)}\left\vert  \bm {X}_i^{(j)}(\tau)U_i\right\vert^{2+\varepsilon}\right]
\le\left| g'\bm{\Theta}(\tau)/n^{1/2}\right|_1^{2+\varepsilon} E\left[\max_{j\in \widetilde{S}(\tau)}\left\vert  \bm {X}_i^{(j)}(\tau)U_i\right\vert^{2+\varepsilon}\right]\\
&\le \left| g'\bm{\Theta}(\tau)/n^{1/2}\right|_1^{2+\varepsilon} E\left[\sum_{j\in\widetilde{S}(\tau)}\left\vert  \bm {X}_i^{(j)}(\tau)U_i\right\vert^{2+\varepsilon}\right]
\le \left| g'\bm{\Theta}(\tau)/n^{1/2}\right|_1^{2+\varepsilon}(p\wedge h\Bar{s})\max_{j\in\widetilde{S}(\tau)}E\left[\left\vert  \bm {X}_i^{(j)}(\tau)U_i\right\vert^{2+\varepsilon}\right]\\
&\le \left| g'\bm{\Theta}(\tau)/n^{1/2}\right|_1^{2+\varepsilon}(p\wedge h\Bar{s})\max_{1\le j\le p}E\left[\left\vert  {X}_i^{(j)}U_i\right\vert^{2+\varepsilon}\right]\\
&=O_p\left(\frac{(h\Bar{s})^{2+\varepsilon/2}}{n^{1+\varepsilon/2}}\right)\max_{1\le j\le p}  E\left[\left( X_i^{(j)} U_i\right)^{2+\varepsilon}\right]\wedge O_p\left(\frac{(h\Bar{s})^{1+\varepsilon/2}p}{n^{1+\varepsilon/2}}\right)\max_{1\le j\le p}  E\left[\left( X_i^{(j)} U_i\right)^{2+\varepsilon}\right],
\end{aligned}
\end{equation*}
where the first inequality follows from Holder's inequality.

$E\left[\left( X_i^{(j)} U_i\right)^4\right]\le \sqrt{E\left[\left( X_i^{(j)}\right)^8\right]E\left[\left( U_i\right)^8\right]}$ is bounded by Cauchy–Schwarz inequality under assumption \ref{asnd} (i). We thus take $\varepsilon=2,$  $\sum_{i=1}^nE\left[\left\vert g'\bm{\Theta}(\tau) \bm {X}_i(\tau) U_i /n^{1/2} \right\vert^{4}\right] =O_p\left(\frac{(h\Bar{s})^{3}}{n}\right) \wedge O_p\left(\frac{(h\Bar{s})^{2}p}{n}\right)=o_p(1) $ under Assumption \ref{asnd} (iv)

Next, we show that $g'\bm{\Theta}(\tau){\Sigma}_{xu}(\tau) \bm{\Theta}(\tau)'g$ is asymptotically bounded away from zero. We have,
\begin{align}
\begin{split}
& g'\bm{\Theta}(\tau)\bm{\Sigma}_{xu}(\tau) \bm{\Theta}(\tau)'g
\geq \kappa(\Bar{s},c_0,\mathbb{T},\bm{\Sigma}_{xu}) \left| g'\bm{\Theta}(\tau)\right|_2^2\\
&\geq \kappa(\Bar{s},c_0,\mathbb{T},\bm{\Sigma}_{xu}) \left| g'\right|_2^2   \kappa(\Bar{s},c_0,\mathbb{T},\bm{\Theta})^2 =\kappa(\Bar{s},c_0,\mathbb{T},\bm{\Sigma}_{xu}) \kappa (\Bar{s},c_0,\mathbb{T},\bm{\Theta})^2,    
\end{split}
\end{align} 
which is bounded away from zero since $ \kappa(\Bar{s},c_0,\mathbb{T},\bm{\Sigma}_{xu})$ and $  \kappa (\Bar{s},c_0,\mathbb{T},\bm{\Theta})$ are bounded away from zero under Assumption \ref{asnd} (iv). The Lyapunov condition is thus satisfied. For $\forall \, \tau\in\mathbb{T},t_1'(\tau)$ converges in distribution to a standard normal distribution.

{\bf Step 1.2)}.

Let 

\begin{equation*}t_1^{\prime \prime} (\widehat{\tau}) = \frac{g'\bm{\Theta}(\widehat{\tau})\bm{X}(\widehat{\tau})' U /n^{1/2} }{{\sqrt{g'\widehat{\bm{\Theta}}(\widehat{\tau})\widehat{\bm{\Sigma}}_{xu}(\widehat{\tau}) \widehat{\bm{\Theta}}(\widehat{\tau})g}}}.\end{equation*}
We have
\begin{equation} \label{thetaX}
\begin{aligned}
&\left| g'\widehat{\bm{\Theta}}(\widehat{\tau})\bm{X}(\widehat{\tau})' U /n^{1/2}-g'{\bm{\Theta}}(\widehat{\tau})\bm{X}(\widehat{\tau})' U/n^{1/2}\right|
\leq \left| g'\left(\widehat{\bm{\Theta}}(\widehat{\tau})-\bm{\Theta}(\widehat{\tau})\right)\right|_{1}\left| \bm{X}(\widehat{\tau}) U /n^{1/2}\right|_{\infty}\\
=&O_p\left(\sqrt{h}\Bar{s}\frac{\sqrt{\log{p}}}{\sqrt{n}}\right)O_p\left( \sqrt{\log{p}} \right)=O_p\left(\sqrt{h}\Bar{s}\frac{\log{p}}{\sqrt{n}}\right)=o_p(1),
\end{aligned}
\end{equation}
where the first equality holds by conditioning on $\mathbb{A}_1$, $\mathbb{A}_2$, $\mathbb{A}_3$ and $\mathbb{A}_4$ and by Lemma \ref{lemmanode}, and the last equality holds under Assumption \ref{asnd} (ii). In addition, we can show
\begin{equation*}
\vert t_1^{\prime \prime}(\widehat{\tau}) - t_1\vert
=\frac{g'\left(\bm{\Theta}(\widehat{\tau})\bm{X}(\widehat{\tau})' U/n^{1/2} -\widehat{\bm{\Theta}}(\widehat{\tau})\bm{X}(\widehat{\tau})' U/n^{1/2} \right)}{{\sqrt{g'\widehat{\bm{\Theta}}(\widehat{\tau})\widehat{\bm{\Sigma}}_{xu}(\widehat{\tau}) \widehat{\bm{\Theta}}(\widehat{\tau})'g}}}=o_p(1), \end{equation*}
\begin{equation*}
\begin{aligned}
&\vert t_1^{\prime}(\widehat{\tau})-t_1^{\prime \prime }(\widehat{\tau})\vert=\frac{{\left(\sqrt{g'\widehat{\bm{\Theta}}(\widehat{\tau})\widehat{\bm{\Sigma}}_{xu}(\widehat{\tau}) \widehat{\bm{\Theta}}(\widehat{\tau})'g}-\sqrt{g' \bm{\Theta}(\widehat{\tau})\bm{\Sigma}_{xu}(\widehat{\tau}) \bm{\Theta}(\widehat{\tau})'g}\right)g'{\bm{\Theta}}(\widehat{\tau})X(\widehat{\tau})' U /n^{1/2} }}{\sqrt{g'\widehat{\bm{\Theta}}(\widehat{\tau})\widehat{\bm{\Sigma}}_{xu}(\widehat{\tau}) \widehat{\bm{\Theta}}(\widehat{\tau})'g}\sqrt{g' \bm{\Theta}(\widehat{\tau})\bm{\Sigma}_{xu}(\widehat{\tau}) \bm{\Theta}(\widehat{\tau})'g}}\\
=&\frac{{\left(g'\widehat{\bm{\Theta}}(\widehat{\tau})\widehat{\bm{\Sigma}}_{xu}(\widehat{\tau}) \widehat{\bm{\Theta}}(\widehat{\tau})'g-g' \bm{\Theta}(\widehat{\tau})\bm{\Sigma}_{xu}(\widehat{\tau}) \bm{\Theta}(\widehat{\tau})'g\right)g'{\bm{\Theta}}(\widehat{\tau})X(\widehat{\tau})' U /n^{1/2}}}{\sqrt{g'\widehat{\bm{\Theta}}(\widehat{\tau})\widehat{\bm{\Sigma}}_{xu}(\widehat{\tau}) \widehat{\bm{\Theta}}(\widehat{\tau})'g}\sqrt{g' \bm{\Theta}(\widehat{\tau})\bm{\Sigma}_{xu}(\widehat{\tau}) \bm{\Theta}(\widehat{\tau})'g}\left(\sqrt{g'\widehat{\bm{\Theta}}(\widehat{\tau})\widehat{\bm{\Sigma}}_{xu}(\widehat{\tau}) \widehat{\bm{\Theta}}(\widehat{\tau})'g}+\sqrt{g' \bm{\Theta}(\widehat{\tau})\bm{\Sigma}_{xu}(\widehat{\tau}) \bm{\Theta}(\widehat{\tau})'g}\right)}\\
\end{aligned}
\end{equation*}
\begin{equation*}
\begin{aligned}
\le&\frac{{\left| g'\widehat{\bm{\Theta}}(\widehat{\tau})\widehat{\bm{\Sigma}}_{xu}(\widehat{\tau}) \widehat{\bm{\Theta}}(\widehat{\tau})'g-g' \bm{\Theta}(\widehat{\tau})\bm{\Sigma}_{xu}(\widehat{\tau})\bm{\Theta}(\widehat{\tau})'g\right| g'{\bm{\Theta}}(\widehat{\tau})X(\widehat{\tau})' U /n^{1/2}}}{\sqrt{g'\widehat{\bm{\Theta}}(\widehat{\tau})\widehat{\bm{\Sigma}}_{xu}(\widehat{\tau}) \widehat{\bm{\Theta}}(\widehat{\tau})'g}\sqrt{g'{\bm{\Theta}}(\tau_0){\bm{\Sigma}}_{xu}(\tau_0){\bm{\Theta}}(\tau_0)'g}\left(\sqrt{g'\widehat{\bm{\Theta}}(\widehat{\tau})\widehat{\bm{\Sigma}}_{xu}(\widehat{\tau}) \widehat{\bm{\Theta}}(\widehat{\tau})'g}+\sqrt{g' \bm{\Theta}(\widehat{\tau})\bm{\Sigma}_{xu}(\widehat{\tau}) \bm{\Theta}(\widehat{\tau})'g}\right)}\\
\le&\frac{ O_p\left( h\sqrt{s_0^3\Bar{s}^2}\sqrt{\frac{\log{p}}{n}}\right)  O_p\left(\sqrt{h\Bar{s}\log{p}}\right)}{\sqrt{g'\widehat{\bm{\Theta}}(\widehat{\tau})\widehat{\bm{\Sigma}}_{xu}(\widehat{\tau}) \widehat{\bm{\Theta}}(\widehat{\tau})'g}\sqrt{g'{\bm{\Theta}}(\tau_0){\bm{\Sigma}}_{xu}(\tau_0) {\bm{\Theta}}(\tau_0)'g}\left(\sqrt{g'\widehat{\bm{\Theta}}(\widehat{\tau})\widehat{\bm{\Sigma}}_{xu}(\widehat{\tau}) \widehat{\bm{\Theta}}(\widehat{\tau})'g}+\sqrt{g' \bm{\Theta}(\widehat{\tau})\bm{\Sigma}_{xu}(\widehat{\tau}) \bm{\Theta}(\widehat{\tau})'g}\right)}\\
=&o_p(1)\\
\end{aligned}
\end{equation*}
by Lemma \ref{thm3s41notau}. Thus, $\vert t_1-t_1^{\prime }(\widehat{\tau})\vert = o_p(1).$  

\noindent{\bf Step 2}.
In addition, we have

\begin{equation*}
\begin{aligned}
t_2 = \frac{g'\Delta(\widehat{\tau}) }{\sqrt{g'\widehat{\bm{\Theta}}(\widehat{\tau})\widehat{\bm{\Sigma}}_{xu}(\widehat{\tau}) \widehat{\bm{\Theta}}(\widehat{\tau})'g}}=o_p(1)
\end{aligned}
\end{equation*}
by Lemma \ref{thml1notau}.
Finally, by  Slutsky's theorem,
$ t=o_p(1)+ t_1'(\widehat{\tau})\stackrel{d}{\to}N(0,1).$
\end{proof}

\subsubsection{Fixed Threshold Effect}
This subsection proves the case where the threshold effect is well-identified and discontinuous. To show that the ratio
\begin{equation}
t =\frac{\sqrt{n}g'(\widehat{a}(\widehat{\tau})-\alpha_{0})}
{\sqrt{g'\widehat{\bm{\Theta}}(\widehat{\tau})\widehat{\bm{\Sigma}}_{xu}(\widehat{\tau}) \widehat{\bm{\Theta}}(\widehat{\tau})'g}}
\end{equation}
is asymptotically standard normal. Now, by (\ref{stat}),
$t = t_1 + t_2,$
where
$$t_1 = \frac{g'\widehat{\bm{\Theta}}(\widehat{\tau})\bm {X}(\widehat{\tau})'U/n^{1/2}}{\sqrt{g'\widehat{\bm{\Theta}}(\widehat{\tau})\widehat{\bm{\Sigma}}_{xu}(\widehat{\tau}) \widehat{\bm{\Theta}}(\widehat{\tau})'g}}\text{ and }\, t_2 = \frac{ g'\widehat{\bm{\Theta}}(\widehat{\tau})(\bm {X}(\widehat{\tau})'\bm {X}(\tau_0)-\bm {X}(\widehat{\tau})'\bm {X}(\widehat{\tau}))\alpha_0 /n^{1/2}-g'\Delta(\widehat{\tau})}{\sqrt{g'\widehat{\bm{\Theta}}(\widehat{\tau})\widehat{\bm{\Sigma}}_{xu}(\widehat{\tau}) \widehat{\bm{\Theta}}(\widehat{\tau})'g}},$$
which still suffices to show that $t_1$ is asymptotically standard normal and $t_2=o_p(1)$.

\begin{lem}\label{thml1}
Suppose that Assumptions \ref{as1} to \ref{asnd} hold, conditional on events $\mathbb{A}_1,$ $\mathbb{A}_2,$ $\mathbb{A}_3,$ $\mathbb{A}_4,$ and $\mathbb{A}_5,$ then we have $g'\Delta(\widehat{\tau})=O_p\left(\frac{s_0\sqrt{h}\log{p} }{\sqrt{n}}\right).$ 
\end{lem}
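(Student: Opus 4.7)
The plan is to mirror the argument used for Lemma \ref{thml1notau} (the no-threshold case), substituting the oracle inequality from Theorem \ref{thmftau} wherever Theorem \ref{main-thm-case1} was invoked. The structural form of $\Delta(\tau)$ does not depend on whether the threshold effect is present, so the same decomposition via Hölder's inequality should carry through.

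First, I would write
\begin{equation*}
|g'\Delta(\widehat{\tau})| \;\le\; \max_{j\in H}\bigl|\Delta_j(\widehat{\tau})\bigr| \cdot \sum_{j\in H}|g_j|,
\end{equation*}
and then bound the max by
\begin{equation*}
\max_{j\in H}\bigl|\Delta_j(\widehat{\tau})\bigr| \;\le\; \max_{1\le j\le 2p}\bigl|\widehat{\Theta}_j(\widehat{\tau})\widehat{\bm{\Sigma}}(\widehat{\tau}) - e_j'\bigr|_{\infty} \cdot \sqrt{n}\,|\widehat{\alpha}(\widehat{\tau}) - \alpha_0|_1.
\end{equation*}
By Cauchy--Schwarz, $\sum_{j\in H}|g_j|\le \sqrt{h}\,|g|_2 = \sqrt{h}.$

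Next, I would plug in the three ingredients: (i) Lemma \ref{lemmanode} gives $\max_j\sup_\tau|\widehat{\Theta}_j(\tau)\widehat{\bm{\Sigma}}(\tau) - e_j'|_\infty = O_p(\sqrt{\log p/n})$ because, as noted below \eqref{2.8}, this quantity is controlled by $\lambda_{\mathrm{node}}/\widehat{z}_j^{\,2}$ and $\lambda_{\mathrm{node}}/\widehat{\widetilde{z}}_j^{\,2}$, both of order $\sqrt{\log p/n}$ on the regularized events; (ii) Theorem \ref{thmftau} yields $|\widehat{\alpha}(\widehat{\tau}) - \alpha_0|_1 = O_p(s_0\lambda) = O_p(s_0\sqrt{\log p/n})$ on $\mathbb{A}_1\cap\cdots\cap\mathbb{A}_5$, which is the crucial replacement for Theorem \ref{main-thm-case1}; and (iii) the bound on $\sum_{j\in H}|g_j|$ above. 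Combining these,
\begin{equation*}
|g'\Delta(\widehat{\tau})| \;\le\; O_p\!\left(\sqrt{\tfrac{\log p}{n}}\right)\cdot \sqrt{n}\cdot O_p\!\left(s_0\sqrt{\tfrac{\log p}{n}}\right)\cdot \sqrt{h} \;=\; O_p\!\left(\tfrac{s_0\sqrt{h}\,\log p}{\sqrt{n}}\right),
\end{equation*}
which is the stated rate.

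There is essentially no obstacle here, since the only substantive change relative to Lemma \ref{thml1notau} is that the $\ell_1$ estimation error for $\widehat{\alpha}(\widehat{\tau})$ now has to be taken from Theorem \ref{thmftau} rather than Theorem \ref{main-thm-case1}. These two theorems deliver the same order, $O_p(s_0\sqrt{\log p/n})$, so the final rate for $g'\Delta(\widehat{\tau})$ is identical in both cases. The only point requiring mild care is verifying that the Lemma \ref{lemmanode} bound is indeed uniform over $\tau\in\mathbb{T}$ (so that evaluation at the random $\widehat{\tau}$ is valid), but this uniformity is explicitly part of the statement of Lemma \ref{lemmanode}. Hence the proof reduces to a short assembly of earlier results.
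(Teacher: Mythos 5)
Your proposal is correct and matches the paper's own proof essentially line for line: the same Hölder decomposition, the same $\sqrt{h}$ bound on $\sum_{j\in H}|g_j|$, the KKT/nodewise bound from Lemma \ref{lemmanode}, and the $\ell_1$ oracle inequality from Theorem \ref{thmftau} in place of Theorem \ref{main-thm-case1}. Nothing further is needed.
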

\begin{proof}[Proof of Lemma \ref{thml1}]

Recall that $\Delta(\tau) =  \sqrt{n} \left( \widehat{\bm{\Theta} }(\tau) \widehat{\bm{\Sigma}}(\tau) - I_{2p} \right) \left(\widehat{\alpha}(\widehat \tau) - \alpha_0\right).$
Then, by Holder's inequality, Lemma \ref{lemmanode} and Theorem \ref{thmftau}, we have,
\begin{equation*}
\begin{aligned}
 g' \Delta(\widehat{\tau})  &\le \max_{j\in H}\vert\Delta_j(\widehat{\tau})\vert\sum_{j\in H}\vert g_j\vert
  \le \max_{j\in H}\left|\left(\widehat{\bm{\Theta} }_j(\widehat{\tau}) \widehat{\bm{\Sigma}}(\widehat{\tau}) - e_j'\right) \sqrt{n}(\widehat{\alpha}(\widehat{\tau}) -\alpha_{0})\right|\sum_{j\in H}\vert g_j\vert\\
 &\le  \max_{1\le j\le 2p} \left|  \widehat{\bm{\Theta} }_j(\widehat{\tau}) \widehat{\bm{\Sigma}}(\widehat{\tau}) - e_j'\right|_{\infty} \sqrt{n}\left|  \widehat{\alpha}(\widehat{\tau}) -\alpha_{0} \right|_{1}\sum_{j\in H}\vert g_j\vert \\ 
 &\le C\left(\frac{\lambda_{node}}{\widehat{z}^2_j(\widehat{\tau})}+\frac{\lambda_{node}}{\widehat{\widetilde z}^2_j(\widehat{\tau})}\right)\sqrt{n}\lambda s_0\sqrt{h} =O_p\left(\frac{s_0\sqrt{h}\log{p} }{\sqrt{n}}\right).
\end{aligned}
\end{equation*}
\end{proof}
The result of Lemma \ref{thml1} is similar to that of Lemma \ref{thml1notau} but is derived under different conditions.

\begin{lem}\label{theta0} Suppose that Assumptions \ref{as1} to \ref{asnd} hold and  let $g$ be $2p\times 1$ vector 
satisfying $|g|_2= 1$. Then, conditional on events $\mathbb{A}_1,$ $\mathbb{A}_2,$ $\mathbb{A}_3,$ $\mathbb{A}_4,$ and $\mathbb{A}_5,$ we have
\begin{equation*}
 \left|g'\left(\widehat{\bm{\Theta}}(\widehat{\tau}) - \widehat{\bm{\Theta}}(\tau_0)\right) \right|_{1} =O_p\left(\sqrt{h}\Bar{s}\sqrt{\frac{\log{p}}{n}}\right).
\end{equation*}
\end{lem}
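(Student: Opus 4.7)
The plan is to decompose $\widehat{\bm{\Theta}}(\widehat\tau)-\widehat{\bm{\Theta}}(\tau_0)$ through the triangle inequality by inserting the population precision matrices $\bm{\Theta}(\widehat\tau)$ and $\bm{\Theta}(\tau_0)$ as intermediates, and then to bound the three resulting pieces using the oracle inequalities from Lemma \ref{lemmanode} together with a perturbation argument for the purely population term. First, since $|g|_2=1$ and $g$ is supported on $H$ with $|H|=h$, by Cauchy--Schwarz $|g|_1\le \sqrt{h}$, so
\begin{align*}
\bigl|g'\bigl(\widehat{\bm{\Theta}}(\widehat\tau)-\widehat{\bm{\Theta}}(\tau_0)\bigr)\bigr|_1 \le \sqrt{h}\,\max_{j\in H}\Bigl\{&\bigl|\widehat{\Theta}_j(\widehat\tau)-\Theta_j(\widehat\tau)\bigr|_1\\
&+\bigl|\Theta_j(\widehat\tau)-\Theta_j(\tau_0)\bigr|_1 + \bigl|\Theta_j(\tau_0)-\widehat{\Theta}_j(\tau_0)\bigr|_1\Bigr\}.
\end{align*}

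For the first and third terms I would apply Lemma \ref{lemmanode}, which establishes that $\max_{j\in H}\sup_{\tau\in\mathbb{T}}|\widehat{\Theta}_j(\tau)-\Theta_j(\tau)|_1 = O_p(\bar{s}\sqrt{\log p/n})$ uniformly in $\tau$. Hence each of these two terms contributes $O_p(\sqrt{h}\,\bar{s}\sqrt{\log p/n})$ to the overall bound, which already matches the target rate.

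The main obstacle is the purely population term $|\Theta_j(\widehat\tau)-\Theta_j(\tau_0)|_1$. I would handle it via the matrix identity
\[
\bm{\Theta}(\widehat\tau)-\bm{\Theta}(\tau_0) = -\bm{\Theta}(\widehat\tau)\bigl(\bm{\Sigma}(\widehat\tau)-\bm{\Sigma}(\tau_0)\bigr)\bm{\Theta}(\tau_0).
\]
The block structure of $\bm{\Sigma}(\tau)$ reduces the perturbation to blocks involving $\bm{M}(\widehat\tau)-\bm{M}(\tau_0)$ only, and Assumption \ref{as1} (iii), giving continuity and boundedness of $E[X_i^{(k)}X_i^{(\ell)}\mid Q_i=\tau]$ in a neighborhood of $\tau_0$, combined with a bounded density for $Q_i$, yields a quadratic-form bound $\|\bm{\Sigma}(\widehat\tau)-\bm{\Sigma}(\tau_0)\|_{\mathrm{op}}\le C|\widehat\tau-\tau_0|$. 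Using the operator-norm controls implied by Assumption \ref{asnd} (iii) together with the row-sparsity bound $\sup_{\tau\in\mathbb{T}}\max_{j\in H}|\Theta_j(\tau)|_1=O(\sqrt{\bar{s}})$ that follows from Lemma \ref{lemmanode}, this gives $\|\bm{\Theta}(\widehat\tau)-\bm{\Theta}(\tau_0)\|_{\mathrm{op}}=O_p(\bar{s}\,|\widehat\tau-\tau_0|)$. Since $\Theta_j(\widehat\tau)$ and $\Theta_j(\tau_0)$ each have support of cardinality at most $\bar{s}$, the difference lives on a support of size at most $2\bar{s}$, and therefore
\[
|\Theta_j(\widehat\tau)-\Theta_j(\tau_0)|_1 \le \sqrt{2\bar{s}}\,|\Theta_j(\widehat\tau)-\Theta_j(\tau_0)|_2 \le \sqrt{2\bar{s}}\,\|\bm{\Theta}(\widehat\tau)-\bm{\Theta}(\tau_0)\|_{\mathrm{op}}.
\]

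Finally, by Theorem \ref{thmftau}, $|\widehat\tau-\tau_0|=O_p(s_0\lambda^2)=O_p(s_0\log p/n)$, so the middle term is of order $O_p(\bar{s}^{3/2}s_0\log p/n)$, which under the sparsity restrictions in Assumption \ref{asnd} (ii) is $o_p(\bar{s}\sqrt{\log p/n})$. Collecting all three pieces yields the advertised rate $O_p(\sqrt{h}\,\bar{s}\sqrt{\log p/n})$. The hard part is precisely this middle term: one has to convert a one-dimensional perturbation $|\widehat\tau-\tau_0|$ in the threshold variable into an $\ell_1$-bound on a row of an inverse matrix, and the essential mechanism is the row-sparsity of $\bm{\Theta}$, which lets us pass from an operator-norm perturbation bound to the desired $\ell_1$ bound via an $\ell_2$ intermediate on a support of size at most $2\bar{s}$.
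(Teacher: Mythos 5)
Your overall decomposition is the same as the paper's: insert the population precision matrices as intermediates, control the two estimation-error terms $\bigl|\widehat{\Theta}_j(\widehat\tau)-\Theta_j(\widehat\tau)\bigr|_1$ and $\bigl|\widehat{\Theta}_j(\tau_0)-\Theta_j(\tau_0)\bigr|_1$ uniformly over $\tau$ via Lemma \ref{lemmanode}, and show the purely population term is lower order using the resolvent identity together with $|\widehat\tau-\tau_0|=O_p(s_0\log p/n)$ from Theorem \ref{thmftau}. Those parts are correct and match the paper.

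There is, however, one step in your treatment of the middle term that does not hold as stated. You claim $\left\Vert \bm{\Sigma}(\widehat\tau)-\bm{\Sigma}(\tau_0)\right\Vert_{\mathrm{op}}\le C|\widehat\tau-\tau_0|$ from the continuity and boundedness of $E[X_i^{(k)}X_i^{(\ell)}\mid Q_i=\tau]$. But Assumption \ref{as1}(iii) only controls the conditional second moments \emph{entrywise}, which (via Lemma A.1 of Hansen, 2000, as the paper invokes) yields $\left\Vert \bm{\Sigma}(\widehat\tau)-\bm{\Sigma}(\tau_0)\right\Vert_{\infty}\le C|\widehat\tau-\tau_0|$ in the elementwise sup norm. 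For a $2p\times 2p$ matrix the operator norm can exceed the elementwise sup norm by a factor of order $p$, so your bound would need $\sup_q\Vert E[X_iX_i'\mid Q_i=q]\Vert_{\mathrm{op}}$ bounded, which is a strictly stronger condition than what is assumed and would wreck the rate if unavailable. The paper sidesteps this entirely: from the identity $\bm{\Theta}(\widehat\tau)-\bm{\Theta}(\tau_0)=-\bm{\Theta}(\widehat\tau)\bigl(\bm{\Sigma}(\widehat\tau)-\bm{\Sigma}(\tau_0)\bigr)\bm{\Theta}(\tau_0)$ it applies H\"older twice, bounding the $j$-th row by $\bigl|\Theta_j(\widehat\tau)\bigr|_1\,\bigl|\Theta_j(\tau_0)\bigr|_1\,\left\Vert\bm{\Sigma}(\widehat\tau)-\bm{\Sigma}(\tau_0)\right\Vert_{\infty}=O_p\bigl(\bar{s}\,s_0\log p/n\bigr)$, using only the elementwise norm and the row-$\ell_1$ bounds $\max_{j}\sup_\tau|\Theta_j(\tau)|_1=O(\sqrt{\bar s})$ — ingredients you already have in hand. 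Your subsequent $\ell_2$-to-$\ell_1$ conversion via the support of size $2\bar s$ is then unnecessary. Since the middle term is lower order either way, your final conclusion stands once the operator-norm step is replaced by this elementwise H\"older argument.
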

\begin{proof}[Proof of Lemma \ref{theta0}]
As $Q_i$ is continuously distributed and $E\left[\left|{X}^{(j)}_i{X}^{(l)}_i\right|\vert Q_i=\tau\right]$
is continuous and bounded in a neighborhood of $\tau_0,$ such that conditions for  Lemma A.1 in \cite{hansen2000} hold. Then, we have
\begin{equation*}
\begin{aligned}
&\|\bm{\Sigma}(\tau_0)- \bm{\Sigma}( \widehat{\tau})\|_{\infty}
=
\left\|{\begin{bmatrix} \begin{array}{cccc}
		0 &{\bm{M}}(\tau_0)-{\bm{M}}( \widehat{\tau})\\
	{\bm{M}}(\tau_0)-{\bm{M}}(\widehat{\tau})&	{\bm{M}}(\tau_0)-{\bm{M}}( \widehat{\tau})\end{array} \end{bmatrix}}\right\|_{\infty}\\
\le&\|{\bm{M}}(\tau_0)-{\bm{M}}( \widehat{\tau})\|_{\infty}
=\max_{1\le j,l\le p}E\left[\left|{X}^{(j)}_i{X}^{(l)}_i\right| \left|1\left( Q_{i}<\tau _{0}\right) -1\left( Q_{i}< \widehat{\tau}\right)
\right|\right]\\
\le&C \left\vert\tau_0- \widehat{\tau}\right\vert
=O_p\left(\frac{(\log{p}) s_0}{n}\right)
\end{aligned}
\end{equation*}
where the last inequality is by  Lemma A.1 in \cite{hansen2000} and the last equality is due to Theorem \ref{thmftau}. Next, consider
\begin{equation*}
\begin{aligned}
&\left|\Theta_j( \widehat{\tau})-\Theta_j(\tau_0)\right|_{1}
=\left|\Theta_j(\widehat{\tau})\left(\Sigma_j(\tau_0)- \Sigma_j( \widehat{\tau})\right)'\Theta_j(\tau_0)\right|_{1}\\ 
\le &  \left|\Theta_j( \widehat{\tau})\Vert_{1}\Vert\left(\Sigma_j(\tau_0)- \Sigma_j( \widehat{\tau})\right)'\Theta_j(\tau_0)\right|_{\infty}
\le \left|\Theta_j( \widehat{\tau})\Vert_{1} \Vert\Theta_j(\tau_0)\|_{1} \Vert\left(\Sigma_j(\tau_0)- \Sigma_j( \widehat{\tau})\right)'\right|_{\infty}.\\
\end{aligned}
\end{equation*}
Then, by Lemma \ref{lemmanode}, we obtain
\begin{equation}
\begin{aligned}\label{gtheta}
&\left|g'\left(\bm{\Theta}( \widehat{\tau}) - \bm{\Theta}(\tau_0)\right) \right|_{1} 
=\sum_{j\in H}\left(\vert g_j\vert \|\Theta_j( \widehat{\tau})-\Theta_j(\tau_0)\|_{1}\right)
\le\sum_{j\in H}\vert g_j\vert  \max_{j\in H} \left|\Theta_j( \widehat{\tau})-\Theta_j(\tau_0)\right|_{1}\\ 
\le &\sqrt{h}  \max_{j\in H}\left|\Theta_j( {\tau})\right|_{1}  \max_{j\in H}\left|\Theta_j(\tau_0)\right|_{1} \left|\left(\Sigma_j(\tau_0)- \Sigma_j( \widehat{\tau})\right)'\right|_{\infty}=  O_p\left(\sqrt{h}\Bar{s}s_0\frac{\log{p}}{n}\right).    
\end{aligned}
\end{equation}
We thus have,

\begin{align*}
&\left|g'\left(\widehat{\bm{\Theta}}(\widehat{\tau}) - \widehat{\bm{\Theta}}(\tau_0)\right) \right|_{1}\\
\le& \sum_{j\in H}\vert g_j\vert  \max_{j\in H}  \left|\widehat{\Theta}_j(\widehat{\tau}) - \Theta_j(\widehat{\tau}) \right|_{1}+ \sum_{j\in H}\vert g_j\vert \sup_{\tau\in\mathbb{T}}\left|\Theta_j(\widehat {\tau})-\Theta_j(\tau_0)\right|_{1}+ \sum_{j\in H}\vert g_j\vert \max_{j\in H}\left|\widehat{\Theta}_j(\tau_0) - \Theta_j(\tau_0) \right|_{1}\\ 
 = &O_p\left(\sqrt{h}\Bar{s}\sqrt{\frac{\log{p}}{n}}\right) +O_p\left(\sqrt{h}\Bar{s}s_0\frac{\log{p}}{n}\right)=O_p\left(\sqrt{h}\Bar{s}\sqrt{\frac{\log{p}}{n}}\right),
\end{align*}
as $s_0\sqrt{\frac{\log{p}}{n}}=o_p(1)$ under Assumption \ref{as1}.
\end{proof} 

\begin{lem}
\label{suptau0}Suppose that Assumptions \ref{as1} to \ref{asnd} hold, conditional on events $\mathbb{A}_1,$ $\mathbb{A}_2,$ $\mathbb{A}_3,$ $\mathbb{A}_4,$ and $\mathbb{A}_5,$ then we have
\begin{equation*}
\begin{aligned}
\left| g'\widehat{\bm{\Theta}}(\widehat{\tau})(\bm {X}(\widehat{\tau})'\bm {X}(\tau_0)-\bm {X}(\widehat{\tau})'\bm {X}(\widehat{\tau}))\alpha_0 /n^{1/2}\right|=
O_p\left( \frac{ s_0^2\sqrt{h\Bar{s}}\log{p}}{\sqrt{n}}\right).
\end{aligned}
\end{equation*}
\end{lem}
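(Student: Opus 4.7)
The plan is to exploit the block structure of $\bm{X}(\tau)$ and then apply H\"older's inequality in the natural $(\ell_1,\ell_\infty)$ pairing, after which each factor is controlled by an already-established result.

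First I will simplify the inner matrix. Writing the $i$-th row of $\bm{X}(\tau)$ as $(X_i',\, X_i'\bm{1}\{Q_i<\tau\})$ and $\alpha_0=(\beta_0',\delta_0')'$, the first $p$ columns of $\bm{X}(\tau_0)-\bm{X}(\widehat\tau)$ vanish, so
\begin{equation*}
(\bm{X}(\tau_0)-\bm{X}(\widehat\tau))\alpha_0 \;=\; \bigl(X_i'\delta_0\,[\bm{1}\{Q_i<\tau_0\}-\bm{1}\{Q_i<\widehat\tau\}]\bigr)_{i=1}^n.
\end{equation*}
Thus $v := \bm{X}(\widehat\tau)'(\bm{X}(\tau_0)-\bm{X}(\widehat\tau))\alpha_0/\sqrt{n}$ is a $2p$-vector whose $j$-th coordinate (for $j\le p$) is $n^{-1/2}\sum_{i=1}^n X_i^{(j)}X_i'\delta_0[\bm{1}\{Q_i<\tau_0\}-\bm{1}\{Q_i<\widehat\tau\}]$, and similarly for $j>p$ with an extra indicator $\bm{1}\{Q_i<\widehat\tau\}$ that only makes the absolute value smaller.

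Next I would apply H\"older's inequality to split the target quantity as
\begin{equation*}
\bigl|g'\widehat{\bm\Theta}(\widehat\tau)v\bigr|\;\le\;\bigl|g'\widehat{\bm\Theta}(\widehat\tau)\bigr|_1\,\|v\|_\infty.
\end{equation*}
The first factor is handled exactly as in the proof of Lemma \ref{theta0}: by $|g|_2=1$, $|H|=h$, and Lemma \ref{lemmanode},
\begin{equation*}
\bigl|g'\widehat{\bm\Theta}(\widehat\tau)\bigr|_1 \;\le\; \sqrt{h}\,\max_{j\in H}\sup_{\tau\in\mathbb{T}}\bigl|\widehat\Theta_j(\tau)\bigr|_1 \;=\; O_p\bigl(\sqrt{h\bar s}\bigr).
\end{equation*}

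For the second factor I will bound $\|v\|_\infty$ by pulling out $|\delta_0|_1$ and using the smoothness-of-design inequality in Assumption \ref{A-smoothness}\eqref{xx} (which, by Lemma \ref{as4proof}, holds with high probability under Assumption \ref{as1}). With $\eta=|\widehat\tau-\tau_0|$,
\begin{equation*}
\|v\|_\infty\;\le\;\frac{|\delta_0|_1}{\sqrt n}\cdot n\cdot \sup_{1\le j,l\le p}\sup_{|\tau-\tau_0|<\eta}\frac{1}{n}\sum_{i=1}^n\bigl|X_i^{(j)}X_i^{(l)}\bigr|\bigl|\bm{1}\{Q_i<\tau_0\}-\bm{1}\{Q_i<\tau\}\bigr| \;\le\; \sqrt n\, C_5\,|\delta_0|_1\,|\widehat\tau-\tau_0|.
\end{equation*}
Two ingredients from earlier results finish the calculation: $|\delta_0|_1\le C_1 s_0$ by Assumption \ref{as1}(i), and the super-consistency bound from Theorem \ref{thmftau} yields $|\widehat\tau-\tau_0|=O_p(s_0\lambda^2)=O_p(s_0\log p/n)$ under $\lambda\asymp\sqrt{\log p/n}$. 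Therefore $\|v\|_\infty=O_p(s_0^2\log p/\sqrt n)$, and combining with the bound on $|g'\widehat{\bm\Theta}(\widehat\tau)|_1$ gives the stated rate $O_p\bigl(s_0^2\sqrt{h\bar s}\log p/\sqrt n\bigr)$.

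The only step that requires care is the third one: the factor $s_0^2$ comes from two distinct sources that I must not double-count or confuse, namely the sparsity of $\delta_0$ (giving $|\delta_0|_1=O(s_0)$) and the super-consistency of $\widehat\tau$ at rate $s_0\log p/n$ (giving one more $s_0$ on top of the expected $\lambda^2$ rate). Keeping the $\sqrt n$ factors straight under the Hölder split is the main bookkeeping hazard, but no new probabilistic arguments beyond Lemmas \ref{lemmanode}, \ref{as4proof} and Theorem \ref{thmftau} are needed.
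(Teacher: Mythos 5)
Your proposal is correct and follows essentially the same route as the paper: both exploit that the first $p$ columns of $\bm{X}(\tau_0)-\bm{X}(\widehat\tau)$ vanish so only $\delta_0$ enters, apply the H\"older split $|g'\widehat{\bm\Theta}(\widehat\tau)v|\le|g'\widehat{\bm\Theta}(\widehat\tau)|_1\|v\|_\infty$, bound the first factor by $O_p(\sqrt{h\bar s})$ via Lemma \ref{lemmanode}, and bound the second via Assumption \ref{A-smoothness}\eqref{xx} together with $|\delta_0|_1\le C_1 s_0$ and $|\widehat\tau-\tau_0|=O_p(s_0\log p/n)$ from Theorem \ref{thmftau}. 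The only cosmetic difference is that the paper writes the inner matrix in $2\times 2$ block form with $\widehat{\bm M}(\tau_0)-\widehat{\bm M}(\widehat\tau)$ while you work row-by-row, which yields identical bounds.
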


\begin{proof}[Proof of Lemma \ref{suptau0}]

There are only two cases for $\bm {X}(\widehat{\tau})'\bm {X}(\tau_0)$:
$\bm {X}(\widehat{\tau})'\bm {X}(\tau_0)=\bm {X}(\tau_0)'\bm {X}(\tau_0)$ or $\bm {X}(\widehat{\tau})'\bm {X}(\tau_0)=\bm {X}(\widehat{\tau})'\bm {X}(\widehat{\tau})$, thus.
\begin{equation*}
\begin{aligned}
&\left| g'\widehat{\bm{\Theta}}(\widehat{\tau})\left(\bm {X}(\widehat{\tau})'\bm {X}(\tau_0)-\bm {X}(\widehat{\tau})'\bm {X}(\widehat{\tau})\right)\alpha_0 /n^{1/2}\right|\\
\le&\sqrt{n} \sum_{j\in H}\vert g_j\vert  \left|\widehat{\Theta}_j(\widehat{\tau})\right|_{1}  
\left|\left|
{\begin{bmatrix} \begin{array}{cccc}
		0 &\widehat{\bm{M}}(\tau_0)-\widehat{\bm{M}}(\widehat{\tau})\\
		0&	\widehat{\bm{M}}(\min\{\tau_0,\widehat{\tau}\})-\widehat{\bm{M}}(\widehat{\tau}) \end{array} \end{bmatrix}}
	\begin{bmatrix}\beta_0' \quad \delta_0'\end{bmatrix}'\right|\right|_{\infty}\\
\le&\sqrt{n}\max_{j\in H}\left|\widehat{\Theta}_j(\widehat{\tau})\right|_{1} \sum_{j\in H}\vert g_j\vert\left|\widehat{\bm{M}}(\tau_0)-\widehat{\bm{M}}(\widehat{\tau})\right|_{\infty}\left|\delta_0\right|_{1}.
\end{aligned}
\end{equation*}
Then we have 
\begin{equation*}
\begin{aligned}
&\left\|\widehat{\bm{M}}(\tau_0)-\widehat{\bm{M}}(\widehat{\tau})\right\|_{\infty}
\le\max_{1\le j,l \le p} \left\vert\frac{1}{n}\sum_{i=1}^n X_i^{(j)} X_i^{(l)}\left[ 1\left( Q_{i}<\tau _{0}\right) -1\left( Q_{i}<\widehat{\tau} \right) \right] \right\vert\\
\le&\max_{1\le j,l \le p} \sup_{\vert\tau-\tau_0\vert\le\vert\tau_0-\widehat{\tau}\vert}
\left\vert\frac{1}{n}\sum_{i=1}^n X_i^{(j)} X_i^{(l)}\left[ 1\left( Q_{i}<\tau _{0}\right) -1\left( Q_{i}<{\tau} \right) \right] \right\vert\\
\le& C_5 \vert\tau_0-\widehat{\tau}\vert=O_p\left(\frac{s_0\log{p}}{n}\right),
\end{aligned}
\end{equation*}
where the last equality follows from Assumption \ref{A-smoothness},
and we know that $\sup_{\tau\in\mathbb{T}}\\\max_{j\in H}\left|\widehat{\Theta}_j({\tau})\right|_{1}=O_p(\sqrt{\Bar{s}})$ by Lemma \ref {lemmanode}, we thus obtain,

\begin{equation*}
\begin{aligned}
&\left| g'\widehat{\bm{\Theta}}(\widehat{\tau})(\bm {X}(\widehat{\tau})'\bm {X}(\tau_0)-\bm {X}(\widehat{\tau})'\bm {X}(\widehat{\tau}))\alpha_0 /n^{1/2}\right|
\le \sqrt{n}\max_{j\in H}\left|\widehat{\Theta}_j(\widehat{\tau})\right|_{1} \sum_{j\in H}\vert g_j\vert\left\|\widehat{\bm{M}}(\tau_0)-\widehat{\bm{M}}(\widehat{\tau})\right\|_{\infty}|\delta_0|_{1}\\
& \le \sqrt{n}O_p(\sqrt{h\Bar{s}})O_p\left( \frac{s_0\log{p}}{n}\right)|\delta_0|_{1} =O_p\left( \frac{ |\delta_0|_{1}s_0 \sqrt{h\Bar{s}}\log{p}}{\sqrt{n}}\right)=O_p\left( \frac{ s_0^2 \sqrt{h\Bar{s}}\log{p}}{\sqrt{n}}\right).
\end{aligned}
\end{equation*}
\end{proof}

\begin{lem}\label{thm3s42} Suppose that Assumptions \ref{as1} to \ref{asnd} hold, conditional on events $\mathbb{A}_1,$ $\mathbb{A}_2,$ $\mathbb{A}_3,$ $\mathbb{A}_4,$ and $\mathbb{A}_5,$ then we have
\begin{eqnarray*}
\left| g'\widehat{\bm{\Theta}}(\widehat{\tau})\widehat{\bm{\Sigma}}_{xu}(\widehat{\tau}) \widehat{\bm{\Theta}}(\widehat{\tau})'g-g'{\bm{\Theta}}(\widehat{\tau}){\bm{\Sigma}}_{xu}(\widehat{\tau}) {\bm{\Theta}}(\widehat{\tau})'g\right|=O_p\left( h\Bar{s}\sqrt{s_0^3} \sqrt{\frac{\log{p}}{n}}\right).
\end{eqnarray*}
\end{lem}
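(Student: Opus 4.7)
The plan is to follow the three-step decomposition used in the proof of Lemma \ref{thm3s41notau}, adapting only those pieces that change because $\delta_0\ne 0$. Introducing the intermediate quantity $\widetilde{\bm{\Sigma}}_{xu}(\widehat{\tau}):=n^{-1}\sum_{i=1}^n \bm{X}_i(\widehat{\tau})\bm{X}_i(\widehat{\tau})' U_i^2$, I would bound the target by the triangle inequality as the sum of
\begin{align*}
&\left|g'\widehat{\bm{\Theta}}(\widehat{\tau})\big[\widehat{\bm{\Sigma}}_{xu}(\widehat{\tau})-\widetilde{\bm{\Sigma}}_{xu}(\widehat{\tau})\big]\widehat{\bm{\Theta}}(\widehat{\tau})'g\right|,\\
&\left|g'\widehat{\bm{\Theta}}(\widehat{\tau})\big[\widetilde{\bm{\Sigma}}_{xu}(\widehat{\tau})-\bm{\Sigma}_{xu}(\widehat{\tau})\big]\widehat{\bm{\Theta}}(\widehat{\tau})'g\right|,\\
&\left|g'\widehat{\bm{\Theta}}(\widehat{\tau})\bm{\Sigma}_{xu}(\widehat{\tau})\widehat{\bm{\Theta}}(\widehat{\tau})'g-g'\bm{\Theta}(\widehat{\tau})\bm{\Sigma}_{xu}(\widehat{\tau})\bm{\Theta}(\widehat{\tau})'g\right|.
\end{align*}
In each piece the vector factor $|g'\widehat{\bm{\Theta}}(\widehat{\tau})|_1=O_p(\sqrt{h\bar s})$ is extracted via H\"older, using Lemma \ref{lemmanode}, so the whole argument reduces to controlling the three middle matrix differences in $\|\cdot\|_\infty$.

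The genuinely new step relative to Lemma \ref{thm3s41notau} is the first piece. Since the true model now satisfies $Y_i=\bm{X}_i(\tau_0)'\alpha_0+U_i$, the plug-in residual admits the decomposition
\begin{equation*}
\widehat{U}_i(\widehat{\tau})-U_i \;=\; X_i'\delta_0\big(\bm{1}\{Q_i<\tau_0\}-\bm{1}\{Q_i<\widehat{\tau}\}\big)\;+\;\bm{X}_i(\widehat{\tau})'(\alpha_0-\widehat{\alpha}(\widehat{\tau})),
\end{equation*}
so $\widehat{U}_i(\widehat{\tau})^2-U_i^2=(\widehat{U}_i(\widehat{\tau})-U_i)^2+2U_i(\widehat{U}_i(\widehat{\tau})-U_i)$ produces, after substitution, the same family of cross terms as in Lemma \ref{thm3s41notau} plus additional terms carrying the indicator jump $\bm{1}\{Q_i<\tau_0\}-\bm{1}\{Q_i<\widehat{\tau}\}$. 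For the old family, the bound $O_p(\sqrt{s_0^3}\sqrt{\log p/n})$ transfers verbatim, using the $\ell_1$ bound on $\widehat\alpha-\alpha_0$ and the prediction-norm bound from Theorem \ref{thmftau}, together with the moment control in Lemma \ref{4thpower}. For the new jump terms, I would apply Assumption \ref{A-smoothness}\eqref{xx} with $\eta=|\widehat{\tau}-\tau_0|$, combined with the super-consistent rate $|\widehat{\tau}-\tau_0|=O_p(s_0\lambda^2)$ from Theorem \ref{thmftau}, and $|\delta_0|_\infty\le C_1$, $|\delta_0|_1\le C_1 s_0$. Because $s_0\lambda^2=O(s_0\log p/n)$ is of strictly smaller order than $\sqrt{s_0^3\log p/n}$, these additions do not inflate the rate.

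For the second piece, applying Lemma \ref{4thpower} directly gives $\|\widetilde{\bm{\Sigma}}_{xu}(\widehat{\tau})-\bm{\Sigma}_{xu}(\widehat{\tau})\|_\infty=O_p(\sqrt{\log p/n})$ uniformly in $\tau\in\mathbb{T}$, and together with $|g'\widehat{\bm{\Theta}}(\widehat{\tau})|_1^2=O_p(h\bar s)$ yields $O_p(h\bar s\sqrt{\log p/n})$. For the third piece, I would mimic the calculation in Lemma \ref{thm3s41notau}, but with the crucial substitution that $|g'(\widehat{\bm{\Theta}}(\widehat{\tau})-\bm{\Theta}(\widehat{\tau}))|_1$ and $|g'(\widehat{\bm{\Theta}}(\widehat{\tau})-\bm{\Theta}(\widehat{\tau}))|_2$ are now bounded via Lemma \ref{theta0}, which already absorbs the extra error $\|\bm{\Theta}(\widehat{\tau})-\bm{\Theta}(\tau_0)\|$ induced by $\widehat\tau\neq\tau_0$ using Theorem \ref{thmftau} and Lemma A.1 of \cite{hansen2000}. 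Summing the three pieces gives the advertised rate $O_p(h\bar s\sqrt{s_0^3}\sqrt{\log p/n})$.

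The main obstacle is the first piece: the jump term $X_i'\delta_0(\bm{1}\{Q_i<\tau_0\}-\bm{1}\{Q_i<\widehat\tau\})$ is nonzero only on the random, data-dependent strip $(\min\{\tau_0,\widehat\tau\},\max\{\tau_0,\widehat\tau\})$, so standard sample averages do not directly apply; the argument crucially relies on the combination of Assumption \ref{A-smoothness}\eqref{xx} (uniform entropy bound in a neighborhood of $\tau_0$) and the super-consistency of $\widehat\tau$, which together ensure the strip has negligible empirical mass. Everything else is a mechanical replay of Lemma \ref{thm3s41notau}.
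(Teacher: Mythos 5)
Your proposal is correct and uses the same three-way triangle-inequality decomposition through $\widetilde{\bm{\Sigma}}_{xu}(\widehat{\tau})$ that the paper uses, with the same tools (H\"older to extract $|g'\widehat{\bm{\Theta}}(\widehat{\tau})|_1^2=O_p(h\bar s)$, Lemma \ref{4thpower} for the second piece, Lemma \ref{lemmanode} for the third). The one place you genuinely diverge is the first piece. You split the residual difference as $\widehat{U}_i(\widehat\tau)-U_i = X_i'\delta_0(\bm{1}\{Q_i<\tau_0\}-\bm{1}\{Q_i<\widehat\tau\})+\bm{X}_i(\widehat\tau)'(\alpha_0-\widehat\alpha(\widehat\tau))$ and treat the indicator-jump term as a separate object requiring Assumption \ref{A-smoothness}\eqref{xx} plus super-consistency of $\widehat\tau$. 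The paper instead observes that the sum of these two terms is exactly $f_0(X_i,Q_i)-\widehat f(X_i,Q_i)$, keeps it intact, and bounds every cross term by Cauchy--Schwarz against $\Vert\bm{X}(\widehat\tau)\widehat\alpha(\widehat\tau)-\bm{X}(\tau_0)\alpha_0\Vert_n=\Vert\widehat f-f_0\Vert_n=O_p(\sqrt{s_0}\lambda)$ from Theorem \ref{thmftau}. So the "main obstacle" you identify --- the data-dependent strip between $\tau_0$ and $\widehat\tau$ --- simply never arises in the paper's argument; the prediction norm already absorbs it. Your route still closes (the jump term contributes $O_p(|\delta_0|_1^2 C_5|\widehat\tau-\tau_0|)$ to the relevant empirical second moment, which is dominated), but it is more labor for the same rate. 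One small misattribution: for the third piece you cite Lemma \ref{theta0}, which bounds $|g'(\widehat{\bm\Theta}(\widehat\tau)-\widehat{\bm\Theta}(\tau_0))|_1$; what is actually needed there is the uniform-in-$\tau$ bound $\sup_{\tau\in\mathbb{T}}\max_{j\in H}|\widehat\Theta_j(\tau)-\Theta_j(\tau)|_1=O_p(\bar s\sqrt{\log p/n})$ from Lemma \ref{lemmanode}, since both matrices in that piece are evaluated at the same $\widehat\tau$; Lemma \ref{theta0} only enters later, in Lemma \ref{thm3s4}, when passing from $\widehat\tau$ to $\tau_0$.
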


\begin{proof}[Proof of Lemma \ref{thm3s42}] Recall the fixed threshold effect case, we have ${\bm{\Sigma}}_{xu}(\widehat{\tau})=\\ E\left[\frac{1}{n}\sum_{i=1}^n\bm{X}_i(\widehat{\tau})\bm{X}_i(\widehat{\tau})'{U}_i^2\right],$ ${\widehat{U}_i(\widehat{\tau})}=Y_i-\bm{X}_i(\widehat{\tau})'\widehat{\alpha}(\widehat{\tau})=U_i+\bm{X}_i(\tau_0)'\alpha_0-\bm{X}_i(\widehat{\tau})'\widehat{\alpha}(\widehat{\tau}),$ $\widehat{\bm{\Sigma}}_{xu}(\widehat{\tau})=\frac{1}{n}\sum_{i=1}^{n}\bm{X}_i(\widehat{\tau})\bm{X}_i(\widehat{\tau})'\widehat{U}_i(\widehat{\tau})^2,$ and define $\widetilde{\bm{\Sigma}}(\widehat{\tau})_{xu}=\frac{1}{n}\sum_{i=1}^{n}\bm{X}_i(\widehat{\tau})\bm{X}_i(\widehat{\tau})'{U}_i^2.$ We first need to prove the followings, as in Lemma \ref{thm3s41notau}, 
 \begin{equation}\ \label{firstcovfixed}\left| g'\widehat{\bm{\Theta}}(\widehat{\tau})\widehat{\bm{\Sigma}}_{xu}(\widehat{\tau}) \widehat{\bm{\Theta}}(\widehat{\tau})'g-g'\widehat{\bm{\Theta}}(\widehat{\tau})\widetilde{\bm{\Sigma}}_{xu}(\widehat{\tau}) \widehat{\bm{\Theta}}(\widehat{\tau})'g\right|=o_p(1),
 \end{equation}
\begin{equation}\ \label{secondcovfixed}\left| g'\widehat{\bm{\Theta}}(\widehat{\tau})\widetilde{\bm{\Sigma}}_{xu}(\widehat{\tau}) \widehat{\bm{\Theta}}(\widehat{\tau})'g- g'\widehat{\bm{\Theta}}(\widehat{\tau}) {\bm{\Sigma}}_{xu}(\widehat{\tau}) \widehat{\bm{\Theta}}(\widehat{\tau})'g \right|=o_p(1),
\end{equation}
\begin{equation}\ \label{thirdcovfixed}\left| g'\widehat{\bm{\Theta}}(\widehat{\tau}) {\bm{\Sigma}}_{xu}(\widehat{\tau}) \widehat{\bm{\Theta}}(\widehat{\tau})'g-g'{\bm{\Theta}}(\widehat{\tau}){\bm{\Sigma}}_{xu}(\widehat{\tau}) {\bm{\Theta}}(\widehat{\tau})'g\right|=o_p(1).
\end{equation}
Proving \eqref{firstcovfixed} is similar to proving \eqref{firstcov}; the difference is 

\begin{equation*}
\begin{aligned}
&\widehat{\bm{\Sigma}}_{xu}(\widehat{\tau})-\widetilde{\bm{\Sigma}}_{xu}(\widehat{\tau})
=\frac{1}{n}\sum_{i=1}^{n}\left(\bm{X}_i(\widehat{\tau})\bm{X}_i(\widehat{\tau})'{\widehat{U}_i^2(\widehat{\tau})}-\bm{X}_i(\widehat{\tau})\bm{X}_i(\widehat{\tau})'{U}_i^2\right)\\
=&\frac{1}{n}\sum_{i=1}^{n}\left(\bm{X}_i(\widehat{\tau})\bm{X}_i(\widehat{\tau})'(U_i+\bm{X}_i(\tau_0)'\alpha_0-\bm{X}_i(\widehat{\tau})'\widehat{\alpha}(\widehat{\tau}))^2-\bm{X}_i(\widehat{\tau})\bm{X}_i(\widehat{\tau})'{U}_i^2\right)\\
=&\frac{1}{n}\sum_{i=1}^{n}\left(\bm{X}_i(\widehat{\tau})\bm{X}_i(\widehat{\tau})'\alpha_0\bm{X}_i(\tau_0)'\left(\bm{X}_i(\tau_0)'\alpha_0-\bm{X}_i(\widehat{\tau})'\widehat{\alpha}(\widehat{\tau})\right)\right)\\
\end{aligned}
\end{equation*}

\begin{equation*}
\begin{aligned}
+&\frac{1}{n}\sum_{i=1}^{n}\left(\bm{X}_i(\widehat{\tau})\bm{X}_i(\widehat{\tau})'\widehat{\alpha}(\widehat{\tau})\bm{X}_i(\widehat{\tau})'\left(\bm{X}_i(\widehat{\tau})'\widehat{\alpha}(\widehat{\tau})-\bm{X}_i(\tau_0)'\alpha_0\right)\right)\\
+&\frac{2}{n}\sum_{i=1}^{n}\left(\bm{X}_i(\widehat{\tau})\bm{X}_i(\widehat{\tau})'U_i\left(\alpha_0'\bm{X}_i(\tau_0)-\widehat{\alpha}(\widehat{\tau})'\bm{X}_i(\widehat{\tau})\right)\right),\\
\end{aligned}
\end{equation*}
given that $\alpha_0'\bm{X}_i(\tau_0)\bm{X}_i(\widehat{\tau})'\widehat{\alpha}(\widehat{\tau})=
\widehat{\alpha}'(\widehat{\tau})\bm{X}_i(\widehat{\tau})\bm{X}_i(\tau_0)\alpha_0'.$

By Cauchy-Schwarz inequality and Hölder's inequality
\begin{equation*}
\begin{aligned}
&\max_{1\le k,l\le 2p}\left|\frac{1}{n}\sum_{i=1}^{n}\left(\bm{X}^{(k)}_i(\widehat{\tau})\bm{X}^{(l)}_i(\widehat{\tau})\alpha_0'\bm{X}_i(\tau_0)\left(\bm{X}_i(\tau_0)'\alpha_0-\bm{X}_i(\widehat{\tau})'\widehat{\alpha}(\widehat{\tau}\right)\right)\right|\\
\le&\sqrt{\max_{1\le k,l\le 2p} \frac{1}{n}\sum_{i=1}^{n}\left(\bm{X}^{(k)}_i(\widehat{\tau})\bm{X}^{(l)}_i(\widehat{\tau})\right)^2\left(\bm{X}_i(\tau_0)\alpha_0\right)^2}\Vert\bm{X}(\tau_0)'\alpha_0-\bm{X}(\widehat{\tau})\widehat{\alpha}(\widehat{\tau})\Vert_n\\ 
\le&\sqrt{\max_{1\le k,l\le 2p}\frac{1}{n}\sum_{i=1}^{n}\left(\bm{X}^{(k)}_i(\widehat{\tau})\bm{X}^{(l)}_i(\widehat{\tau})\right)^2   \left(\max_{1\le k\le 2p}\bm{X}^{(k)}_i(\tau_0)\right)^2 \left|\alpha_0\right|_1^2}\Vert\bm{X}(\tau_0)\alpha_0-\bm{X}(\widehat{\tau})\widehat{\alpha}(\widehat{\tau})\Vert_n\\ 
\le&\sqrt{\max_{1\le k,l,j\le p} \frac{1}{n}\sum_{i=1}^{n}\left({X}^{(k)}_i{X}^{(l)}_i{X}^{(j)}_i\right)^2\left|\alpha_0\right|_1^2 \bm{1}\left( Q_{i}<\tau_0 \right)\bm{1}\left( Q_{i}<\widehat{\tau} \right)}\Vert\bm{X}(\widehat{\tau})\widehat{\alpha}(\widehat{\tau})-\bm{X}(\tau_0)\alpha_0\Vert_n\\
=& O_p\left( \sqrt{s_0^3} \sqrt{\frac{\log{p}}{n}}\right),
\end{aligned}
\end{equation*}
the last equality follows from Lemma \ref{4thpower}, and $\left|\widehat{\alpha}(\widehat{\tau})\right|_1 \le\left|\alpha_0\right|_1 +O_p\left( s_0 \sqrt{\frac{\log{p}}{n}}\right),$
$\Vert\bm{X}(\widehat{\tau})\widehat{\alpha}(\widehat{\tau})-\bm{X}(\widehat{\tau})\alpha_0\Vert_n=O_p\left( \sqrt{s_0} \sqrt{\frac{\log{p}}{n}}\right)$ by Theorem \ref{thmftau}, and $\left|\alpha_0\right|_1=O_p(s_0)$ under Assumption \ref{as1}. Also, we have

\begin{equation*}
\begin{aligned}
&\max_{1\le k,l\le 2p}\left|\frac{1}{n}\sum_{i=1}^{n}\left(\bm{X}^{(k)}_i(\widehat{\tau})\bm{X}^{(l)}_i(\widehat{\tau})\widehat{\alpha}(\widehat{\tau})\bm{X}_i(\widehat{\tau})'\left(\bm{X}_i(\widehat{\tau})'\widehat{\alpha}(\widehat{\tau})-\bm{X}_i(\tau_0)'\alpha_0\right)\right)\right|\\
\le&\max_{1\le k,l\le 2p}\sqrt{\frac{1}{n}\sum_{i=1}^{n}\left(\bm{X}^{(k)}_i(\widehat{\tau})\bm{X}^{(l)}_i(\widehat{\tau})\right)^2\left(\widehat{\alpha}'(\widehat{\tau})\bm{X}_i(\widehat{\tau})\right)^2}\Vert\bm{X}(\widehat{\tau})\widehat{\alpha}(\widehat{\tau})-\bm{X}(\widehat{\tau})\alpha_0\Vert_n\\
\end{aligned}
\end{equation*}
\begin{equation*}
\begin{aligned}
\le&\sqrt{\max_{1\le k,l,j\le p} \frac{1}{n}\sum_{i=1}^{n}\left({X}^{(k)}_i{X}^{(l)}_i{X}^{(j)}_i\right)^2\left|\widehat{\alpha}(\widehat{\tau})\right|_1^2 \bm{1}\left( Q_{i}<\widehat{\tau} \right)}\Vert\bm{X}(\widehat{\tau})\widehat{\alpha}(\widehat{\tau})-\bm{X}(\widehat{\tau})\alpha_0\Vert_n\\
=& O_p\left( \sqrt{s_0^3}\sqrt{\frac{\log{p}}{n}}\right),
\end{aligned}
\end{equation*}
and 
\begin{equation*}
\begin{aligned}
&\max_{1\le k,l\le 2p}\left|\frac{2}{n}\sum_{i=1}^{n}\left(\bm{X}_i(\tau_0)'\alpha_0-\bm{X}_i(\widehat{\tau})'\widehat{\alpha}(\widehat{\tau})\right)\left(\bm{X}^{(k)}_i(\widehat{\tau})\bm{X}^{(l)}_i(\widehat{\tau})\right)U_i\right|\\
\le&2\sqrt{\max_{1\le k,l\le p} \frac{1}{n}\sum_{i=1}^{n}\left({X}^{(k)}_i{X}^{(l)}_iU_i\right)^2\bm{1}\left( Q_{i}<\widehat{\tau} \right)}\Vert\bm{X}(\widehat{\tau})\widehat{\alpha}(\widehat{\tau})-\bm{X}(\tau_0)\alpha_0\Vert_n 
=O_p\left( \sqrt{s_0} \sqrt{\frac{\log{p}}{n}}\right).
\end{aligned}
\end{equation*}
We then obtain
$$\left\Vert\widehat{\bm{\Sigma}}_{xu}(\widehat{\tau})-\widetilde{\bm{\Sigma}}_{xu}(\widehat{\tau})\right\Vert_{\infty}=O_p\left( \sqrt{s_0^3} \sqrt{\frac{\log{p}}{n}}\right).$$
Therefore,
\begin{equation*}
\begin{aligned}
&\left| g'\widehat{\bm{\Theta}}(\widehat{\tau})\widehat{\bm{\Sigma}}_{xu}(\widehat{\tau}) \widehat{\bm{\Theta}}(\widehat{\tau})'g-g'\widehat{\bm{\Theta}}(\widehat{\tau})\widetilde{\bm{\Sigma}}_{xu}(\widehat{\tau}) \widehat{\bm{\Theta}}(\widehat{\tau})'g\right|\\
=&O_p\left(h\Bar{s}\right)O_p\left( \sqrt{s_0^3} \sqrt{\frac{\log{p}}{n}}\right)=O_p\left( h\Bar{s}\sqrt{s_0^3} \sqrt{\frac{\log{p}}{n}}\right).
\end{aligned}
\end{equation*}
Proving \eqref{secondcovfixed} and \eqref{thirdcovfixed} is the same as deriving \eqref{secondcov} and \eqref{thirdcov} in Lemma \ref{thm3s41notau}. 
\end{proof} 

\begin{lem} \label{thm3s4} Suppose that Assumptions \ref{as1} to \ref{asnd} hold, conditional on events $\mathbb{A}_1,$ $\mathbb{A}_2,$ $\mathbb{A}_3,$ $\mathbb{A}_4,$ and $\mathbb{A}_5,$ then  
\begin{eqnarray*}
\left| g'\widehat{\bm{\Theta}}(\widehat{\tau})\widehat{\bm{\Sigma}}_{xu}(\widehat{\tau}) \widehat{\bm{\Theta}}(\widehat{\tau})'g-g'{\bm{\Theta}}(\tau_0){\bm{\Sigma}}_{xu}(\tau_0) {\bm{\Theta}}(\tau_0)'g\right|=o_p(1).
\end{eqnarray*}
\end{lem}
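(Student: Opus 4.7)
The plan is to use Lemma \ref{thm3s42} as a bridge and then control the remaining population-level gap induced by $\widehat\tau \to \tau_0$. First I would apply the triangle inequality to split
\[
\left|g'\widehat{\bm{\Theta}}(\widehat\tau)\widehat{\bm{\Sigma}}_{xu}(\widehat\tau)\widehat{\bm{\Theta}}(\widehat\tau)'g - g'\bm{\Theta}(\tau_0)\bm{\Sigma}_{xu}(\tau_0)\bm{\Theta}(\tau_0)'g\right| \le (\mathrm{I}) + (\mathrm{II}),
\]
where $(\mathrm{I})$ is the quantity already bounded in Lemma \ref{thm3s42} at order $O_p\!\left(h\bar s\sqrt{s_0^3}\sqrt{\log p/n}\right)$, which is $o_p(1)$ under Assumption \ref{asnd}(ii), and $(\mathrm{II})=\left|g'\bm{\Theta}(\widehat\tau)\bm{\Sigma}_{xu}(\widehat\tau)\bm{\Theta}(\widehat\tau)'g - g'\bm{\Theta}(\tau_0)\bm{\Sigma}_{xu}(\tau_0)\bm{\Theta}(\tau_0)'g\right|$.

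For $(\mathrm{II})$ I would use the standard product-rule decomposition $(\mathrm{II}) \le (\mathrm{II}_a)+(\mathrm{II}_b)+(\mathrm{II}_c)$, where $(\mathrm{II}_a)=|g'(\bm{\Theta}(\widehat\tau)-\bm{\Theta}(\tau_0))\bm{\Sigma}_{xu}(\widehat\tau)\bm{\Theta}(\widehat\tau)'g|$, $(\mathrm{II}_b)=|g'\bm{\Theta}(\tau_0)(\bm{\Sigma}_{xu}(\widehat\tau)-\bm{\Sigma}_{xu}(\tau_0))\bm{\Theta}(\widehat\tau)'g|$, and $(\mathrm{II}_c)=|g'\bm{\Theta}(\tau_0)\bm{\Sigma}_{xu}(\tau_0)(\bm{\Theta}(\widehat\tau)-\bm{\Theta}(\tau_0))'g|$. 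For $(\mathrm{II}_a)$ and $(\mathrm{II}_c)$ I would invoke H\"older's inequality together with the already-established bound $|g'(\bm{\Theta}(\widehat\tau)-\bm{\Theta}(\tau_0))|_1 = O_p(\sqrt{h}\bar s s_0 \log p/n)$ from display \eqref{gtheta} in the proof of Lemma \ref{theta0}, the estimate $|\bm{\Theta}(\widehat\tau)'g|_1 = O_p(\sqrt{h\bar s})$ available via Lemma \ref{lemmanode}, and the boundedness of $\|\bm{\Sigma}_{xu}(\tau)\|_\infty$ ensured by Assumption \ref{asnd}(i).

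For $(\mathrm{II}_b)$, the key observation is that $\bm{\Sigma}_{xu}(\tau)$ inherits the $2\times 2$ block structure of $\bm{\Sigma}(\tau)$, so its entries differ between $\widehat\tau$ and $\tau_0$ only through $E[X_i^{(j)}X_i^{(l)}U_i^2(\bm{1}\{Q_i<\widehat\tau\}-\bm{1}\{Q_i<\tau_0\})]$. Using the continuity and boundedness of $E[X^{(j)}X^{(l)}U^2\mid Q=\tau]$ in a neighborhood of $\tau_0$ from Assumption \ref{as1}(iv) (mirroring the application of Lemma A.1 of \cite{hansen2000} inside Lemma \ref{theta0}) together with the super-consistency rate $|\widehat\tau-\tau_0|=O_p(s_0\log p/n)$ from Theorem \ref{thmftau}, I would obtain $\|\bm{\Sigma}_{xu}(\widehat\tau)-\bm{\Sigma}_{xu}(\tau_0)\|_\infty = O_p(s_0\log p/n)$. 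Sandwiching this between $|g'\bm{\Theta}(\tau_0)|_1=O_p(\sqrt{h\bar s})$ and $|\bm{\Theta}(\widehat\tau)'g|_1=O_p(\sqrt{h\bar s})$ yields $(\mathrm{II}_b) = O_p(h\bar s s_0 \log p/n)$, which is $o_p(1)$ under Assumption \ref{asnd}(ii).

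The main obstacle is $(\mathrm{II}_b)$: unlike the empirical Gram matrix treated in Lemma \ref{theta0}, here $\bm{\Sigma}_{xu}$ is a population fourth-moment object involving the unobserved $U_i^2 X_i X_i'$, so one cannot reuse the oracle bounds for covariate matrices directly, and the argument must lean on the smoothness of the conditional fourth-moment in Assumption \ref{as1}(iv) and on the fast rate $|\widehat\tau-\tau_0|=O_p(s_0\lambda^2)$ made possible by the discontinuity of the regression function. Collecting $(\mathrm{I})$, $(\mathrm{II}_a)$, $(\mathrm{II}_b)$ and $(\mathrm{II}_c)$ then delivers the desired $o_p(1)$ conclusion.
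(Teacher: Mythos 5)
Your proposal is correct and follows essentially the same route as the paper: bridge through Lemma \ref{thm3s42}, telescope the remaining population-level gap into three terms (two $\bm{\Theta}$-increments and one $\bm{\Sigma}_{xu}$-increment), control the $\bm{\Theta}$-increments via H\"older's inequality with the $\ell_1$ bound from \eqref{gtheta} and $|\bm{\Theta}(\widehat\tau)'g|_1=O_p(\sqrt{h\bar s})$, and control the $\bm{\Sigma}_{xu}$-increment using the super-consistency $|\widehat\tau-\tau_0|=O_p(s_0\log p/n)$. Your treatment of the $\bm{\Sigma}_{xu}$-increment directly via the conditional-moment smoothness in Assumption \ref{as1}(iv) is in fact a slightly cleaner justification than the paper's factoring of $\max_i E[U_i^2]$ out of the expectation, but it yields the same $O_p(h\bar s s_0\log p/n)$ rate and the same conclusion.
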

\begin{proof}[Proof of Lemma \ref{thm3s4}]
To prove this lemma, we require to prove the following, 
 \begin{equation}\ \label{firstcovfixed2}\vert g'{\bm{\Theta}}(\widehat{\tau}){\bm{\Sigma}}_{xu}(\widehat{\tau}) {\bm{\Theta}}(\widehat{\tau})'g-g'\bm{\Theta}(\tau_0){\bm{\Sigma}}_{xu}(\widehat{\tau}) {\bm{\Theta}}(\widehat{\tau})'g\vert
 \end{equation}
\begin{equation}\ \label{secondcovfixed2}\vert g'\bm{\Theta}(\tau_0){\bm{\Sigma}}_{xu}(\widehat{\tau}) {\bm{\Theta}}(\widehat{\tau})'g-g'\bm{\Theta}(\tau_0){\bm{\Sigma}}_{xu}(\widehat{\tau}) \bm{\Theta}(\tau_0)'g\vert
\end{equation}
\begin{equation}\ \label{thirdcovfixed2}\vert g'\bm{\Theta}(\tau_0){\bm{\Sigma}}_{xu}(\widehat{\tau}) \bm{\Theta}(\tau_0)'g-g'\bm{\Theta}(\tau_0)\bm{\Sigma}_{xu}(\tau_0) \bm{\Theta}(\tau_0)'g\vert.
 \end{equation}
Firstly, we prove \eqref{firstcovfixed2}. Since ${\bm{\Theta}}(\widehat{\tau})$ is symmetric, $ |{\bm{\Theta}}(\widehat{\tau})'g|_{1}= | g'{\bm{\Theta}}(\widehat{\tau})|_{1}.$ Additionally, $\Vert{\bm{\Sigma}}_{xu}(\widehat{\tau})\Vert_{\infty}$ is bounded under Assumption \ref{as1}. Combining these with \eqref{gtheta}, we obtain  
\begin{equation*}
\begin{aligned}
&\vert g'{\bm{\Theta}}(\widehat{\tau}){\bm{\Sigma}}_{xu}(\widehat{\tau}){\bm{\Theta}}(\widehat{\tau})'g-g'\bm{\Theta}(\tau_0){\bm{\Sigma}}_{xu}(\widehat{\tau}) {\bm{\Theta}}(\widehat{\tau})'g\vert
\le| g'\left({\bm{\Theta}}(\widehat{\tau})-\bm{\Theta}(\tau_0)\right)|_{1}
\Vert{\bm{\Sigma}}_{xu}(\widehat{\tau}) {\bm{\Theta}}(\widehat{\tau})'g\Vert_{\infty}\\
\le&| g'\left({\bm{\Theta}}(\widehat{\tau})-\bm{\Theta}(\tau_0)\right)|_{1}
\Vert{\bm{\Sigma}}_{xu}(\widehat{\tau})\Vert_{\infty} \left|g'{\bm{\Theta}}(\widehat{\tau})\right|_{1}
= O_p\left(\sqrt{h}\Bar{s} s_0\frac{\log{p}}{n}\right)O_p\left(\sqrt{h\Bar{s}}\right) 
=O_p\left(h\sqrt{\Bar{s}^3} s_0\frac{\log{p}}{n}\right).
\end{aligned}
\end{equation*}
To prove \eqref{secondcovfixed2}, as $| g'\left({\bm{\Theta}}(\widehat{\tau})-\bm{\Theta}(\tau_0)\right)|_{1}=|\left({\bm{\Theta}}(\widehat{\tau})-\bm{\Theta}(\tau_0)\right)'g|_{1},$  we derive
\begin{equation*}
\begin{aligned}
&\vert g'\bm{\Theta}(\tau_0){\bm{\Sigma}}_{xu}(\widehat{\tau}) {\bm{\Theta}}(\widehat{\tau})'g-g'\bm{\Theta}(\tau_0){\bm{\Sigma}}_{xu}(\widehat{\tau}) \bm{\Theta}(\tau_0)'g\vert
\le | g'{\bm{\Theta}}(\tau_0)|_{1}\Vert{\bm{\Sigma}}_{xu}(\widehat{\tau})\Vert_{\infty} |\left({\bm{\Theta}}(\widehat{\tau})-\bm{\Theta}(\tau_0)\right)'g|_{1}\\
&= O_p\left(h\sqrt{\Bar{s}^3} s_0\frac{\log{p}}{n}\right).
\end{aligned}
\end{equation*}
To prove \eqref{thirdcovfixed2}, we write
\begin{equation*}
\begin{aligned}
&{\bm{\Sigma}}_{xu}(\widehat{\tau}) -\bm{\Sigma}_{xu}(\tau_0)
=E\left[\frac{1}{n}\sum_{i=1}^n\bm{X}_i(\widehat{\tau})\bm{X}_i'(\widehat{\tau}){u}_i^2\right]-E\left[\frac{1}{n}\sum_{i=1}^n\bm{X}_i(\tau_0)\bm{X}_i'(\tau_0){u}_i^2\right]\\
\le&E\left[\frac{1}{n}\sum_{i=1}^n\bm{X}_i(\widehat{\tau})\bm{X}_i'(\widehat{\tau})-\frac{1}{n}\sum_{i=1}^n\bm{X}_i(\tau_0)\bm{X}_i'(\tau_0)\right]\max_{1\le i\le n}E\left[{u}_i^2\right]
=E\left[\widehat{M}(\widehat{\tau})-\widehat{M}(\tau_0)\right]\max_{1\le i\le n}E\left[{u}_i^2\right].
\end{aligned}
\end{equation*}
Since we have
$\left\Vert(\widehat{M}(\widehat{\tau})-\widehat{M}(\tau_0)\right\Vert_{\infty}=O_p\left(s_0\frac{\log{p}}{n}\right),$
we obtain,
\begin{equation*}
\begin{aligned}
&\vert g'\bm{\Theta}(\tau_0){\bm{\Sigma}}_{xu}(\widehat{\tau}) \bm{\Theta}(\tau_0)'g-g'\bm{\Theta}(\tau_0)\bm{\Sigma}_{xu}(\tau_0) \bm{\Theta}(\tau_0)'g\vert
\le\vert g'\bm{\Theta}(\tau_0)\left({\bm{\Sigma}}_{xu}(\widehat{\tau}) -\bm{\Sigma}_{xu}(\tau_0) \right)\bm{\Theta}(\tau_0)'g\vert\\
\le&| g'\bm{\Theta}(\tau_0)|_{1}^2
\Vert{\bm{\Sigma}}_{xu}(\widehat{\tau}) -\bm{\Sigma}_{xu}(\tau_0) \Vert_{\infty} 
=O_p\left(h\Bar{s}\right)O_p\left(s_0\frac{\log{p}}{n}\right)=O_p\left(h\Bar{s}s_0\frac{\log{p}}{n}\right).
\end{aligned}
\end{equation*}
Therefore, we have
\begin{equation*}
\begin{aligned}
\vert g'{\bm{\Theta}}(\widehat{\tau}){\bm{\Sigma}}_{xu}(\widehat{\tau}) {\bm{\Theta}}(\widehat{\tau})'g-g'\bm{\Theta}(\tau_0)_{xu}\bm{\Sigma}(\tau_0)\bm{\Theta}(\tau_0)'g\vert&= O_p\left(h\sqrt{\Bar{s}^3} s_0\frac{\log{p}}{n}\right)+O_p\left(h\Bar{s}s_0\frac{\log{p}}{n}\right)\\
&=O_p\left(h\sqrt{\Bar{s}^3} s_0\frac{\log{p}}{n}\right).
\end{aligned}
\end{equation*}
By Lemma \ref{thm3s42}, we obtain

\begin{equation*}
\begin{aligned}
\vert g'\widehat{\bm{\Theta}}(\widehat{\tau})\widehat{\bm{\Sigma}}_{xu}(\widehat{\tau}) \widehat{\bm{\Theta}}(\widehat{\tau})'g-g'\bm{\Theta}(\tau_0)\bm{\Sigma}_{xu}(\tau_0) \bm{\Theta}(\tau_0)'g\vert&=O_p\left( h\Bar{s}\sqrt{s_0^3} \sqrt{\frac{\log{p}}{n}}\right)+O_p\left(h\sqrt{\Bar{s}^3} s_0\frac{\log{p}}{n}\right)\\
=O_p\left( h\sqrt{s_0^3\Bar{s}^3}\sqrt{\frac{\log{p}}{n}}\right).
\end{aligned}
\end{equation*}
\end{proof}

\begin{proof}[Proof of Theorem \ref{thm3} in the fixed threshold effect case.]
\noindent{\bf Step 1}. 

This step is the same as Step 1 in the proof of Theorem \ref{thm3} for the no-threshold case, implying that in the fixed-threshold case, $\vert t_1-t_1^{\prime }(\widehat{\tau})\vert = o_p(1),$ where
$t_1^{\prime }(\widehat{\tau})$ converges in distribution to a standard normal distribution.

\noindent{\bf Step 2}.
By Lemma \ref{thml1} and \ref{suptau0},
\begin{equation*}
\begin{aligned}
t_2 = \frac{g'\widehat{\bm{\Theta}}(\widehat{\tau})(\bm {X}(\widehat{\tau})'\bm {X}(\tau_0)-\bm {X}(\widehat{\tau})'\bm {X}(\widehat{\tau}))\alpha_0 /n^{1/2}-g'\Delta(\widehat{\tau}) }{\sqrt{g'\widehat{\bm{\Theta}}(\widehat{\tau})\widehat{\bm{\Sigma}}_{xu}(\widehat{\tau}) \widehat{\bm{\Theta}}(\widehat{\tau})'g}}=o_p(1).
\end{aligned}
\end{equation*}
Finally, by Slutsky's theorem,
\begin{equation*}
t=o_p(1)+ t_1'(\widehat{\tau})\stackrel{d}{\to}N(0,1).    
\end{equation*}

Additionally, Lemma \ref{thm3s4} implies that
$\sup_{\alpha_0\in\mathcal{A}^{({2})}_{\ell_0}(s_0)}\left|{\widehat{\bm{\Theta}}(\widehat{\tau})\widehat{\bm{\Sigma}}_{xu}(\widehat{\tau})\widehat{\bm{\Theta}}(\widehat{\tau})'-\bm{\Theta}(\tau_0)\bm{\Sigma}_{xu}(\tau_0) \bm{\Theta}(\tau_0)'}\right|
\\= o_p(1).$
\end{proof}

\subsection{Proof of Theorem \ref{thm5}}
\begin{proof}[Proof of Theorem \ref{thm5}]
We will follow the proof of Theorem 3 in \cite{canerkock2018}.
For $\varepsilon>0,$ define the following events
\begin{equation*}
\mathcal{F}_{1,n}=\left\{\sup_{\alpha_0\in\mathcal{B}_{\ell_0}(s_0)}\vert g'\Delta(\widehat{\tau})\vert<\varepsilon\right\},\end{equation*}

\begin{equation*}\mathcal{F}_{2,n}=\left\{\sup_{\alpha_0\in\mathcal{B}_{\ell_0}(s_0)}\left\vert\frac{g'\widehat{\bm{\Theta}}(\widehat{\tau})\widehat{\bm{\Sigma}}_{xu}(\widehat{\tau}) \widehat{\bm{\Theta}}(\widehat{\tau})'g}{g'{\bm{\Theta}}(\widehat{\tau}){\bm{\Sigma}}_{xu}(\widehat{\tau}) {\bm{\Theta}}(\widehat{\tau})'g}-1\right\vert<\varepsilon\right\},\end{equation*}

\begin{equation*}\mathcal{F}_{3,n}=\left\{\sup_{\alpha_0\in\mathcal{B}_{\ell_0}(s_0)}\vert g'\widehat{\bm{\Theta}}(\widehat{\tau})X(\widehat{\tau})' U /n^{1/2}-g'{\bm{\Theta}}(\widehat{\tau})X(\widehat{\tau})' U/n^{1/2}\vert<\varepsilon\right\},\end{equation*}

\begin{equation*}\mathcal{F}_{4,n}=\left\{ \sup_{\alpha_0\in\mathcal{A}^{({2})}_{\ell_0}(s_0)}\vert g'\widehat{\bm{\Theta}}(\widehat{\tau})(\bm {X}(\widehat{\tau})'\bm {X}(\tau_0)-\bm {X}(\widehat{\tau})'\bm {X}(\widehat{\tau}))\alpha_0 /n^{1/2}\vert<\varepsilon\right\}.\end{equation*}

By Lemma \ref{thml1notau} (and \ref{thml1}), Lemma \ref{thm3s41notau} (and \ref{thm3s42}), \eqref{thetaX} from Step 1.2 in the proof of Theorem \ref{thm3} in no threshold effect case, and Lemma \ref{suptau0}, respectively, we obtain that the probabilities of these sets all approach one. 
Thus, for each $t \in \mathbb{R}$, we have

\scalebox{0.78}{\parbox{0.1\linewidth}{
\begin{equation}
 \begin{aligned} 
&\left|\mathbb{P}\left\{\frac{\sqrt{n}g'(\widehat{a}(\widehat{\tau}) -\alpha_{0 })}{\sqrt{g'\widehat{\bm{\Theta}}( \widehat{\tau})\widehat{\bm{\Sigma}}_{xu}(\widehat{\tau}) \widehat{\bm{\Theta}}(\widehat{\tau})'g}}\le t\right\}-\varPhi(t)\right|\\
=&   \left|\mathbb{P}\left\{\delta_0\ne0\right\}\mathbb{P}\left\{\frac{g'\widehat{\bm{\Theta}}(\widehat{\tau})\bm {X}(\widehat{\tau})'U//n^{1/2}-g'\Delta(\widehat{\tau})+g'\widehat{\bm{\Theta}}(\widehat{\tau})(\bm {X}(\widehat{\tau})'\bm {X}(\tau_0)-\bm {X}(\widehat{\tau})'\bm {X}(\widehat{\tau}))\alpha_0 /n^{1/2}}{\sqrt{g'\widehat{\bm{\Theta}}(\widehat{\tau})\widehat{\bm{\Sigma}}_{xu}(\widehat{\tau}) \widehat{\bm{\Theta}}(\widehat{\tau})'g}}\le t\right\} \right. \\
    &\left.+\mathbb{P}\left\{\delta_0=0\right\}\mathbb{P}\left\{\frac{g'\widehat{\bm{\Theta}}(\widehat{\tau})\bm {X}(\widehat{\tau})'U/n^{1/2} -g'\Delta(\widehat{\tau}) }{\sqrt{g'\widehat{\bm{\Theta}}(\widehat{\tau})\widehat{\bm{\Sigma}}_{xu}(\widehat{\tau}) \widehat{\bm{\Theta}}(\widehat{\tau})'g}}\le t\right\}-\varPhi(t)\right|\\
    \le&   \mathbb{P}\left\{\delta_0\ne0\right\}\left|\mathbb{P}\left\{\frac{g'\widehat{\bm{\Theta}}(\widehat{\tau})\bm {X}(\widehat{\tau})'U//n^{1/2}-g'\Delta(\widehat{\tau})+g'\widehat{\bm{\Theta}}(\widehat{\tau})(\bm {X}(\widehat{\tau})'\bm {X}(\tau_0)-\bm {X}(\widehat{\tau})'\bm {X}(\widehat{\tau}))\alpha_0 /n^{1/2}}{\sqrt{g'\widehat{\bm{\Theta}}(\widehat{\tau})\widehat{\bm{\Sigma}}_{xu}(\widehat{\tau}) \widehat{\bm{\Theta}}(\widehat{\tau})'g}}\le t\right\}-\varPhi(t) \right|\\
    &+\mathbb{P}\left\{\delta_0=0\right\}\left|\mathbb{P}\left\{\frac{g'\widehat{\bm{\Theta}}(\widehat{\tau})\bm {X}(\widehat{\tau})'U/n^{1/2} -g'\Delta(\widehat{\tau}) }{\sqrt{g'\widehat{\bm{\Theta}}(\widehat{\tau})\widehat{\bm{\Sigma}}_{xu}(\widehat{\tau}) \widehat{\bm{\Theta}}(\widehat{\tau})'g}}\le t\right\}-\varPhi(t)\right|,\label{unifproof}\end{aligned}\end{equation}}}
    
\noindent where $\mathbb{P}\left\{\delta_0=0\right\}+\mathbb{P}\left\{\delta_0\ne0\right\}=1.$
Firstly, we consider the second term in the last inequality of \eqref{unifproof}, we write
 \begin{align} \begin{split}
&\left|\mathbb{P}\left\{\frac{g'\widehat{\bm{\Theta}}(\widehat{\tau})\bm {X}(\widehat{\tau})'U/n^{1/2} -g'\Delta(\widehat{\tau}) }{\sqrt{g'\widehat{\bm{\Theta}}(\widehat{\tau})\widehat{\bm{\Sigma}}_{xu}(\widehat{\tau}) \widehat{\bm{\Theta}}(\widehat{\tau})'g}}\le t\right\}-\varPhi(t)\right|\\
\le&\left|\mathbb{P}\left\{\frac{g'\widehat{\bm{\Theta}}(\widehat{\tau})\bm {X}(\widehat{\tau})'U/n^{1/2} -g'\Delta(\widehat{\tau}) }{\sqrt{g'\widehat{\bm{\Theta}}(\widehat{\tau})\widehat{\bm{\Sigma}}_{xu}(\widehat{\tau}) \widehat{\bm{\Theta}}(\widehat{\tau})'g}}\le t,\mathcal{F}_{1,n},\mathcal{F}_{2,n},\mathcal{F}_{3,n}\right\}-\varPhi(t)\right|+\mathbb{P}\left\{\mathcal{F}_{1,n}^c\cup\mathcal{F}_{2,n}^c\cup\mathcal{F}_{3,n}^c\right\}.
\end{split},\end{align}
As $g'{\bm{\Theta}}(\widehat{\tau}){\bm{\Sigma}}_{xu}(\widehat{\tau}) {\bm{\Theta}}(\widehat{\tau})'g$ is bounded away from zero, there exists a positive constant $D_1$ such that

\begin{equation}\begin{aligned} \label{unif1}
&\mathbb{P}\left\{\frac{g'\widehat{\bm{\Theta}}(\widehat{\tau})\bm {X}(\widehat{\tau})'U/n^{1/2} -g'\Delta(\widehat{\tau}) }{\sqrt{g'\widehat{\bm{\Theta}}(\widehat{\tau})\widehat{\bm{\Sigma}}_{xu}(\widehat{\tau}) \widehat{\bm{\Theta}}(\widehat{\tau})'g}}\le t,\mathcal{F}_{1,n},\mathcal{F}_{2,n},\mathcal{F}_{3,n}\right\}\\
=&\mathbb{P}\left\{\frac{g'\widehat{\bm{\Theta}}(\widehat{\tau})\bm {X}(\widehat{\tau})'U/n^{1/2} -g'\Delta(\widehat{\tau}) }{\sqrt{g'{\bm{\Theta}}(\widehat{\tau}){\bm{\Sigma}}_{xu}(\widehat{\tau}) {\bm{\Theta}}(\widehat{\tau})'g}}\le t\sqrt{\frac{g'\widehat{\bm{\Theta}}(\widehat{\tau})\widehat{\bm{\Sigma}}_{xu}(\widehat{\tau}) \widehat{\bm{\Theta}}(\widehat{\tau})'g}{g'{\bm{\Theta}}(\widehat{\tau}){\bm{\Sigma}}_{xu}(\widehat{\tau}) {\bm{\Theta}}(\widehat{\tau})'g} },\mathcal{F}_{1,n},\mathcal{F}_{2,n},\mathcal{F}_{3,n}\right\}\\
\le&\mathbb{P}\left\{\frac{g'{\bm{\Theta}}(\widehat{\tau})X'(\widehat{\tau}) U/n^{1/2}}{\sqrt{g'{\bm{\Theta}}(\widehat{\tau}){\bm{\Sigma}}_{xu}(\widehat{\tau}) {\bm{\Theta}}(\widehat{\tau})'g}}\le t(1+\varepsilon)+\frac{\varepsilon+\varepsilon}{\sqrt{g'{\bm{\Theta}}(\widehat{\tau}){\bm{\Sigma}}_{xu}(\widehat{\tau}) {\bm{\Theta}}(\widehat{\tau})'g} } \right\}\\
\le&\mathbb{P}\left\{\frac{g'{\bm{\Theta}}(\widehat{\tau})X'(\widehat{\tau}) U/n^{1/2}}{\sqrt{g'{\bm{\Theta}}(\widehat{\tau}){\bm{\Sigma}}_{xu}(\widehat{\tau}) {\bm{\Theta}}(\widehat{\tau})'g}}\le t(1+\varepsilon)+D_1 \varepsilon\right\}\\
\le& \Phi(t(1+\varepsilon)+D_1 \varepsilon)+\varepsilon,\end{aligned}\end{equation}
where the last inequality is derived from the proof of Theorem \ref{thm3}, in which we established the asymptotic normality of $\frac{g'{\bm{\Theta}}(\widehat{\tau})X'(\widehat{\tau}) U/n^{1/2}}{\sqrt{g'{\bm{\Theta}}(\widehat{\tau}){\bm{\Sigma}}_{xu}(\widehat{\tau}) {\bm{\Theta}}(\widehat{\tau})'g}}.$ 
Since the right-hand sides in the last inequality in \eqref{unif1} do not depend on $\alpha_0$, we obtain
 \begin{align} 
 \begin{split}
&\sup_{\alpha_0\in\mathcal{B}_{\ell_0}(s_0)}\mathbb{P}\left\{\frac{g'\widehat{\bm{\Theta}}(\widehat{\tau})\bm {X}(\widehat{\tau})'U/n^{1/2} -g'\Delta(\widehat{\tau}) }{\sqrt{g'\widehat{\bm{\Theta}}(\widehat{\tau})\widehat{\bm{\Sigma}}_{xu}(\widehat{\tau}) \widehat{\bm{\Theta}}(\widehat{\tau})'g}}\le t,\mathcal{F}_{1,n},\mathcal{F}_{2,n},\mathcal{F}_{3,n}\right\} \le \Phi(t(1+\varepsilon)+D_1 \varepsilon)+\varepsilon.\end{split}\end{align}
The above arguments hold for all $\varepsilon > 0.$ By the continuity of $\Phi(\cdot)$, for any $\eta > 0$, we can choose $\varepsilon$ to be sufficiently small and derive that
 \begin{align} \begin{split}
\sup_{\alpha_0\in\mathcal{B}_{\ell_0}(s_0)}\mathbb{P}\left\{\frac{g'\widehat{\bm{\Theta}}(\widehat{\tau})\bm {X}(\widehat{\tau})'U/n^{1/2} -g'\Delta(\widehat{\tau}) }{\sqrt{g'\widehat{\bm{\Theta}}(\widehat{\tau})\widehat{\bm{\Sigma}}_{xu}(\widehat{\tau}) \widehat{\bm{\Theta}}(\widehat{\tau})'g}}\le t,\mathcal{F}_{1,n},\mathcal{F}_{2,n},\mathcal{F}_{3,n}\right\} 
\le \Phi(t)+\eta+\varepsilon.\end{split}\label{unif11}\end{align}
Next, as $g'\bm{\Theta}(\widehat{\tau})\bm{\Sigma}_{xu}(\widehat{\tau}) \bm{\Theta}(\widehat{\tau})'g$  is bounded away from zero, there exists a positive constant $D_2$ such that

\begin{equation}\begin{aligned} \label{unifnew}
&\mathbb{P}\left\{\frac{g'\widehat{\bm{\Theta}}(\widehat{\tau})\bm {X}(\widehat{\tau})'U/n^{1/2} -g'\Delta(\widehat{\tau}) }{\sqrt{g'\widehat{\bm{\Theta}}(\widehat{\tau})\widehat{\bm{\Sigma}}_{xu}(\widehat{\tau}) \widehat{\bm{\Theta}}(\widehat{\tau})'g}}\le t,\mathcal{F}_{1,n},\mathcal{F}_{2,n},\mathcal{F}_{3,n}\right\}\\
=&\mathbb{P}\left\{\frac{g'\widehat{\bm{\Theta}}(\widehat{\tau})\bm {X}(\widehat{\tau})'U/n^{1/2} -g'\Delta(\widehat{\tau}) }{\sqrt{g'{\bm{\Theta}}(\widehat{\tau}){\bm{\Sigma}}_{xu}(\widehat{\tau}) {\bm{\Theta}}(\widehat{\tau})'g}}\le t\sqrt{\frac{g'\widehat{\bm{\Theta}}(\widehat{\tau})\widehat{\bm{\Sigma}}_{xu}(\widehat{\tau})\widehat{\bm{\Theta}}(\widehat{\tau})'g}{g'{\bm{\Theta}}(\widehat{\tau}){\bm{\Sigma}}_{xu}(\widehat{\tau}) {\bm{\Theta}}(\widehat{\tau})'g} },\mathcal{F}_{1,n},\mathcal{F}_{2,n},\mathcal{F}_{3,n}\right\}\\
\ge&\mathbb{P}\left\{\frac{g'{\bm{\Theta}}(\widehat{\tau})X(\widehat{\tau})' U/n^{1/2}}{\sqrt{g'{\bm{\Theta}}(\widehat{\tau}){\bm{\Sigma}}_{xu}(\widehat{\tau}) {\bm{\Theta}}(\widehat{\tau})'g}}\le t(1-\varepsilon)-\frac{\varepsilon+\varepsilon}{\sqrt{g'{\bm{\Theta}}(\widehat{\tau}){\bm{\Sigma}}_{xu}(\widehat{\tau}) {\bm{\Theta}}(\widehat{\tau})'g} },\mathcal{F}_{1,n},\mathcal{F}_{2,n},\mathcal{F}_{3,n} \right\}\\
\ge&\mathbb{P}\left\{\frac{g'{\bm{\Theta}}(\widehat{\tau})\bm{X}(\widehat{\tau})' U/n^{1/2}}{\sqrt{g'{\bm{\Theta}}(\widehat{\tau}){\bm{\Sigma}}_{xu}(\widehat{\tau}) {\bm{\Theta}}(\widehat{\tau})'g}}\le t(1-\varepsilon)-D_2 \varepsilon\right\}+\mathbb{P}\left\{\mathcal{F}_{1,n}\cap\mathcal{F}_{2,n}\cap\mathcal{F}_{3,n}\right\}-1\\
\ge&\varPhi(t(1-\varepsilon)-D_2 \varepsilon)-\varepsilon+\mathbb{P}\left\{\mathcal{F}_{1,n}\cap\mathcal{F}_{2,n}\cap\mathcal{F}_{3,n}\right\}-1, \end{aligned}\end{equation}
where the last inequality is from the asymptotic normality of $\frac{g'{\bm{\Theta}}(\widehat{\tau})\bm{X}(\widehat{\tau})' U/n^{1/2}}{\sqrt{g'{\bm{\Theta}}(\widehat{\tau}){\bm{\Sigma}}_{xu}(\widehat{\tau}) {\bm{\Theta}}(\widehat{\tau})'g}}.$

As  $\mathbb{P}\left\{\mathcal{F}_{1,n}\cap\mathcal{F}_{2,n}\cap\mathcal{F}_{3,n}\right\}$ can  arbitrarily approach to one by choosing $n$ sufficiently large and $\varepsilon$ sufficiently small, meanwhile, the right-hand sides in the last inequality in
 \eqref{unifnew} do not depend on $\alpha_0,$ we have
\begin{align} \begin{split}
\inf_{\alpha_0\in\mathcal{B}_{\ell_0}(s_0)}\mathbb{P}\left\{\frac{g'\widehat{\bm{\Theta}}(\widehat{\tau})\bm {X}(\widehat{\tau})'U/n^{1/2} -g'\Delta(\widehat{\tau}) }{\sqrt{g'\widehat{\bm{\Theta}}(\widehat{\tau})\widehat{\bm{\Sigma}}_{xu}(\widehat{\tau}) \widehat{\bm{\Theta}}(\widehat{\tau})'g}}\le t,\mathcal{F}_{1,n},\mathcal{F}_{2,n},\mathcal{F}_{3,n}\right\} 
\ge \varPhi(t(1-\varepsilon)-D_2 \varepsilon)-\varepsilon.\end{split}\end{align}

By the continuity of $\Phi(\cdot)$, for any $\eta > 0$, we can choose $\varepsilon$ to be sufficiently small and obtain

\begin{align} \begin{split}
\inf_{\alpha_0\in\mathcal{B}_{\ell_0}(s_0)}\mathbb{P}\left\{\frac{g'\widehat{\bm{\Theta}}(\widehat{\tau})\bm {X}'(\widehat{\tau})U/n^{1/2} -g'\Delta(\widehat{\tau}) }{\sqrt{g'\widehat{\bm{\Theta}}(\widehat{\tau})\widehat{\bm{\Sigma}}(\widehat{\tau})_{xu} \widehat{\bm{\Theta}}(\widehat{\tau})'g}}\le t,\mathcal{F}_{1,n},\mathcal{F}_{2,n},\mathcal{F}_{3,n}\right\} 
\ge  \varPhi(t)-\eta-2\varepsilon.\end{split}\label{unif21}\end{align}

Combining \eqref{unif11} and \eqref{unif21}, and $\sup_{\alpha_0\in\mathcal{B}_{\ell_0}(s_0)}\mathbb{P}\left\{\mathcal{F}_{1,n}^c\cup\mathcal{F}_{2,n}^c\cup\mathcal{F}_{3,n}^c\right\}\to 0,$ we thus derive

\begin{align} \begin{split}
\left\vert\sup_{\alpha_0\in\mathcal{A}^{({1})}_{\ell_0}(s_0)}\mathbb{P}\left\{\frac{\sqrt{n}g'(\widehat{a}(\widehat{\tau}) -\alpha_{0 })} {\sqrt{g'\widehat{\bm{\Theta}}(\widehat{\tau})\widehat{\bm{\Sigma}}_{xu}(\widehat{\tau}) \widehat{\bm{\Theta}}(\widehat{\tau})'g}}\le t\right\} -
\varPhi(t)\right\vert\to0.\end{split}\end{align}

We now consider the first term in the last inequality of \eqref{unifproof} and write 

\noindent \scalebox{0.70}{\parbox{0.1\linewidth}{
 \begin{align} \begin{split}
&\left|\mathbb{P}\left\{\frac{g'\widehat{\bm{\Theta}}(\widehat{\tau})\bm {X}(\widehat{\tau})'U//n^{1/2}-g'\Delta(\widehat{\tau})+g'\widehat{\bm{\Theta}}(\widehat{\tau})(\bm {X}(\widehat{\tau})'\bm {X}(\tau_0)-\bm {X}(\widehat{\tau})'\bm {X}(\widehat{\tau}))\alpha_0 /n^{1/2}}{\sqrt{g'\widehat{\bm{\Theta}}(\widehat{\tau})\widehat{\bm{\Sigma}}_{xu}(\widehat{\tau}) \widehat{\bm{\Theta}}(\widehat{\tau})'g}}\le t\right\}-\varPhi(t) \right|\\
\le&\left|\mathbb{P}\left\{\frac{g'\widehat{\bm{\Theta}}(\widehat{\tau})\bm {X}(\widehat{\tau})'U//n^{1/2}-g'\Delta(\widehat{\tau})+g'\widehat{\bm{\Theta}}(\widehat{\tau})(\bm {X}(\widehat{\tau})'\bm {X}(\tau_0)-\bm {X}(\widehat{\tau})'\bm {X}(\widehat{\tau}))\alpha_0 /n^{1/2}}{\sqrt{g'{\bm{\Theta}}(\widehat{\tau}){\bm{\Sigma}}_{xu}(\widehat{\tau}) {\bm{\Theta}}(\widehat{\tau})'g}}\le t\sqrt{\frac{g'\widehat{\bm{\Theta}}(\widehat{\tau})\widehat{\bm{\Sigma}}_{xu}(\widehat{\tau})\widehat{\bm{\Theta}}(\widehat{\tau})'g}{g'{\bm{\Theta}}(\widehat{\tau}){\bm{\Sigma}}_{xu}(\widehat{\tau}) {\bm{\Theta}}(\widehat{\tau})'g} },\mathcal{F}_{1,n},\mathcal{F}_{2,n},\mathcal{F}_{3,n},\mathcal{F}_{4,n}\right\}-\varPhi(t)\right|\\
&+\mathbb{P}\left\{\mathcal{F}_{1,n}^c\cup\mathcal{F}_{2,n}^c\cup\mathcal{F}_{3,n}^c\cup\mathcal{F}_{4,n}^c\right\}.\end{split}\end{align}}}

\noindent As $g'{\bm{\Theta}}(\widehat{\tau}){\bm{\Sigma}}(\widehat{\tau})_{xu} {\bm{\Theta}}(\widehat{\tau})'g$ is bounded away from zero, there exists a positive $D_3$ such that

\noindent \scalebox{0.75}{\parbox{0.1\linewidth}{
 \begin{align} \begin{split}
& \mathbb{P}\left\{\frac{g'\widehat{\bm{\Theta}}(\widehat{\tau})\bm {X}(\widehat{\tau})'U//n^{1/2}-g'\Delta(\widehat{\tau})+g'\widehat{\bm{\Theta}}(\widehat{\tau})(\bm {X}(\widehat{\tau})'\bm {X}(\tau_0)-\bm {X}(\widehat{\tau})'\bm {X}(\widehat{\tau}))\alpha_0 /n^{1/2}}{\sqrt{g'{\bm{\Theta}}(\widehat{\tau}){\bm{\Sigma}}_{xu}(\widehat{\tau}) {\bm{\Theta}}(\widehat{\tau})'g}}\le t\sqrt{\frac{g'\widehat{\bm{\Theta}}(\widehat{\tau})\widehat{\bm{\Sigma}}_{xu}(\widehat{\tau})\widehat{\bm{\Theta}}(\widehat{\tau})'g}{g'{\bm{\Theta}}(\widehat{\tau}){\bm{\Sigma}}_{xu}(\widehat{\tau}) {\bm{\Theta}}(\widehat{\tau})'g} },\mathcal{F}_{1,n},\mathcal{F}_{2,n},\mathcal{F}_{3,n},\mathcal{F}_{4,n}\right\}\\
\le&\mathbb{P}\left\{\frac{g'{\bm{\Theta}}(\widehat{\tau})X'(\widehat{\tau}) U/n^{1/2}}{\sqrt{g'{\bm{\Theta}}(\widehat{\tau}){\bm{\Sigma}}_{xu}(\widehat{\tau}) {\bm{\Theta}}(\widehat{\tau})'g}}\le t(1+\varepsilon)+\frac{\varepsilon+\varepsilon+\varepsilon}{\sqrt{g'{\bm{\Theta}}(\widehat{\tau}){\bm{\Sigma}}_{xu}(\widehat{\tau}) {\bm{\Theta}}(\widehat{\tau})'g} } \right\}\\
\le&\mathbb{P}\left\{\frac{g'{\bm{\Theta}}(\widehat{\tau})X'(\widehat{\tau}) U/n^{1/2}}{\sqrt{g'{\bm{\Theta}}(\widehat{\tau}){\bm{\Sigma}}_{xu}(\widehat{\tau}) {\bm{\Theta}}(\widehat{\tau})'g}}\le t(1+\varepsilon)+D_3 \varepsilon\right\}\\
\le& \Phi(t(1+\varepsilon)+D_3 \varepsilon)+\varepsilon.
\end{split}\end{align}}}

Thus, for any $\eta > 0$, we can choose $\varepsilon$ to be sufficiently small and derive that

\noindent \scalebox{0.78}{\parbox{0.1\linewidth}{
\begin{equation}
\begin{aligned}\label{unif12}
\sup_{\alpha_0\in\mathcal{A}^{({2})}_{\ell_0}(s_0)}&\mathbb{P}\left\{\frac{g'\widehat{\bm{\Theta}}(\widehat{\tau})\bm {X}(\widehat{\tau})'U/n^{1/2} -g'\Delta(\widehat{\tau})+g'\widehat{\bm{\Theta}}(\widehat{\tau})(\bm {X}(\widehat{\tau})'\bm {X}(\tau_0)-\bm {X}(\widehat{\tau})'\bm {X}(\widehat{\tau}))\alpha_0 /n^{1/2} }{\sqrt{g'\widehat{\bm{\Theta}}(\widehat{\tau})\widehat{\bm{\Sigma}}_{xu}(\widehat{\tau}) \widehat{\bm{\Theta}}(\widehat{\tau})'g}} \le t,\mathcal{F}_{1,n},\mathcal{F}_{2,n},\mathcal{F}_{3,n},\mathcal{F}_{4,n}\right\} \\
& \le \Phi(t)+\eta+\varepsilon,\end{aligned}\end{equation}}}

\noindent by similar arguments of obtaining \eqref{unif11}.

Next, as $g'{\bm{\Theta}}(\widehat{\tau}){\bm{\Sigma}}(\widehat{\tau})_{xu} {\bm{\Theta}}(\widehat{\tau})'g$ is bounded away from zero, there exists a positive constant $D_4,$

\noindent \scalebox{0.70}{\parbox{0.1\linewidth}{
\begin{equation}\begin{aligned} \label{unifnewfixed}
& \mathbb{P}\left\{\frac{g'\widehat{\bm{\Theta}}(\widehat{\tau})\bm {X}(\widehat{\tau})'U//n^{1/2}-g'\Delta(\widehat{\tau})+g'\widehat{\bm{\Theta}}(\widehat{\tau})(\bm {X}(\widehat{\tau})'\bm {X}(\tau_0)-\bm {X}(\widehat{\tau})'\bm {X}(\widehat{\tau}))\alpha_0 /n^{1/2}}{\sqrt{g'{\bm{\Theta}}(\widehat{\tau}){\bm{\Sigma}}_{xu}(\widehat{\tau}) {\bm{\Theta}}(\widehat{\tau})'g}}\le t\sqrt{\frac{g'\widehat{\bm{\Theta}}(\widehat{\tau})\widehat{\bm{\Sigma}}_{xu}(\widehat{\tau})\widehat{\bm{\Theta}}(\widehat{\tau})'g}{g'{\bm{\Theta}}(\widehat{\tau}){\bm{\Sigma}}_{xu}(\widehat{\tau}) {\bm{\Theta}}(\widehat{\tau})'g} },\mathcal{F}_{1,n},\mathcal{F}_{2,n},\mathcal{F}_{3,n},\mathcal{F}_{4,n}\right\}\\
&\ge\mathbb{P}\left\{\frac{g'{\bm{\Theta}}(\widehat{\tau})X'(\widehat{\tau}) U/n^{1/2}}{\sqrt{g'{\bm{\Theta}}(\widehat{\tau}){\bm{\Sigma}}_{xu}(\widehat{\tau}) {\bm{\Theta}}(\widehat{\tau})'g}}\le t(1+\varepsilon)-\frac{3\varepsilon}{\sqrt{g'{\bm{\Theta}}(\widehat{\tau}){\bm{\Sigma}}_{xu}(\widehat{\tau}) {\bm{\Theta}}(\widehat{\tau})'g}},\mathcal{F}_{1,n},\mathcal{F}_{2,n},\mathcal{F}_{3,n},\mathcal{F}_{4,n}\right\}\\
&\ge\mathbb{P}\left\{\frac{g'{\bm{\Theta}}(\widehat{\tau})X'(\widehat{\tau}) U/n^{1/2}}{\sqrt{g'{\bm{\Theta}}(\widehat{\tau}){\bm{\Sigma}}_{xu}(\widehat{\tau}) {\bm{\Theta}}(\widehat{\tau})'g}}\le t(1+\varepsilon)-D_4\varepsilon\right\}+\mathbb{P}\left\{\mathcal{F}_{1,n}\cap\mathcal{F}_{2,n}\cap\mathcal{F}_{3,n}\cap\mathcal{F}_{4,n}\right\}-1.
\end{aligned}\end{equation}}}

As the right-hand sides in the last inequality in \eqref{unifnewfixed} do not depend on $\alpha_0$, and  $\mathbb{P}\left\{\mathcal{F}_{1,n}\cap\mathcal{F}_{2,n}\cap\mathcal{F}_{3,n}\cap\mathcal{F}_{4,n}\right\}$ can be arbitrarily close to one by choosing $n$ sufficiently large and $\varepsilon$ sufficiently small, we have

\noindent \scalebox{0.75}{\parbox{0.1\linewidth}{
 \begin{equation}\begin{aligned}
&\inf_{\alpha_0\in\mathcal{A}^{({2})}_{\ell_0}(s_0)} \mathbb{P}\left\{\frac{g'\widehat{\bm{\Theta}}(\widehat{\tau})\bm {X}'(\widehat{\tau})U/n^{1/2} -g'\Delta(\widehat{\tau}) +g'\widehat{\bm{\Theta}}(\widehat{\tau})(\bm {X}(\widehat{\tau})'\bm {X}(\tau_0)-\bm {X}(\widehat{\tau})'\bm {X}(\widehat{\tau}))\alpha_0 /n^{1/2}}{\sqrt{g'\widehat{\bm{\Theta}}(\widehat{\tau})\widehat{\bm{\Sigma}}(\widehat{\tau})_{xu} \widehat{\bm{\Theta}}(\widehat{\tau})'g}}\le t,\mathcal{F}_{1,n},\mathcal{F}_{2,n},\mathcal{F}_{3,n},\mathcal{F}_{4,n}\right\}\\
&\ge\mathbb{P}\left\{\frac{g'\widehat{\bm{\Theta}}(\tau_0)\bm {X}'(\tau_0)U//n^{1/2}}{\sqrt{g'{\bm{\Theta}}(\tau_0){\bm{\Sigma}}(\tau_0)_{xu} {\bm{\Theta}}(\tau_0)'g}}\le t(1-\varepsilon)- D_4\varepsilon\right\}-\varepsilon.
\end{aligned}\end{equation}}}

Thus, for any $\eta > 0$, we can choose $\varepsilon$ to be sufficiently small and derive

\scalebox{0.75}{\parbox{0.1\linewidth}{
\begin{equation}
\begin{aligned}\label{unif22}
&\inf_{\alpha_0\in\mathcal{A}^{({2})}_{\ell_0}(s_0)}\mathbb{P}\left\{\frac{g'\widehat{\bm{\Theta}}(\widehat{\tau})\bm {X}'(\widehat{\tau})U/n^{1/2} -g'\Delta(\widehat{\tau}) +g'\widehat{\bm{\Theta}}(\widehat{\tau})(\bm {X}(\widehat{\tau})'\bm {X}(\tau_0)-\bm {X}(\widehat{\tau})'\bm {X}(\widehat{\tau}))\alpha_0 /n^{1/2}}{\sqrt{g'\widehat{\bm{\Theta}}(\widehat{\tau})\widehat{\bm{\Sigma}}(\widehat{\tau})_{xu} \widehat{\bm{\Theta}}(\widehat{\tau})'g}}\le t,\mathcal{F}_{1,n},\mathcal{F}_{2,n},\mathcal{F}_{3,n},\mathcal{F}_{4,n}\right\} \\
&\ge  \varPhi(t)-\eta-2\varepsilon.\end{aligned}\end{equation}}}

\noindent by similar arguments of obtaining \eqref{unif21}.

Combining \eqref{unif12} and \eqref{unif22}, and $\sup_{\alpha_0\in\mathcal{A}^{({2})}_{\ell_0}(s_0)}\mathbb{P}\left\{\mathcal{F}_{1,n}^c\cup\mathcal{F}_{4,n}^c\cup\mathcal{F}_{5,n}^c\cup\mathcal{F}_{6,n}^c\cup\mathcal{F}_{7,n}^c\right\}\to 0,$ we thus derive
\begin{align} \begin{split}
\left\vert\sup_{\alpha_0\in\mathcal{A}^{({2})}_{\ell_0}(s_0)}\mathbb{P}\left\{\frac{\sqrt{n}g'(\widehat{a}(\widehat{\tau}) -\alpha_{0 })} {\sqrt{g'\widehat{\bm{\Theta}}(\widehat{\tau})\widehat{\bm{\Sigma}}(\widehat{\tau})_{xu} \widehat{\bm{\Theta}}(\widehat{\tau})'g}}\le t\right\} -
\varPhi(t)\right\vert\to0\end{split}\end{align}
Therefore, for \eqref{unifproof}, we have
\begin{align} \begin{split}
\left\vert\sup_{\alpha_0\in\mathcal{B}_{\ell_0}(s_0)}\mathbb{P}\left\{\frac{\sqrt{n}g'(\widehat{a}(\widehat{\tau}) -\alpha_{0 })} {\sqrt{g'\widehat{\bm{\Theta}}(\widehat{\tau})\widehat{\bm{\Sigma}}(\widehat{\tau})_{xu} \widehat{\bm{\Theta}}(\widehat{\tau})'g}}\le t\right\} -
\varPhi(t)\right\vert\to0.\end{split}\end{align}
To obtain \eqref{th42}, we write

\begin{equation}\begin{aligned}
&\mathbb{P}\left\{\alpha_{0}^{(j)}\notin \left[\widehat{a}^{(j)}(\widehat{\tau})-z_{1-\alpha/2}\frac{\widehat{\Sigma}_j( \widehat{\tau})}{\sqrt{n}},\widehat{a}^{(j)}( \widehat{\tau})+z_{1- \alpha/2}\frac{\widehat{\sigma}_j(\widehat{\tau})}{\sqrt{n}}\right]\right\}
=\mathbb{P}\left\{\left\vert\frac{\sqrt{n}(\widehat{a}^{(j)}(\widehat{\tau})-\alpha_{0}^{(j)})}{\widehat{\sigma}_j( \widehat{\tau})}\right\vert>z_{1-\alpha/2}\right\}\\
=&\mathbb{P}\left\{\frac{\sqrt{n}(\widehat{a}^{(j)}(\widehat{\tau})-\alpha_{0}^{(j)})}{\widehat{\sigma}_j( \widehat{\tau})}>z_{1-\alpha/2}\right\}+\mathbb{P}\left\{\frac{\sqrt{n}(\widehat{a}^{(j)}(\widehat{\tau})-\alpha_{0}^{(j)})}{\widehat{\sigma}_j( \widehat{\tau})}<-z_{1-\alpha/2}\right\}\\
\le&1-\mathbb{P}\left\{\frac{\sqrt{n}(\widehat{a}^{(j)}(\widehat{\tau})-\alpha_{0}^{(j)})}{\widehat{\sigma}_j( \widehat{\tau})}\le z_{1-\alpha/2}\right\}+\mathbb{P}\left\{\frac{\sqrt{n}(\widehat{a}^{(j)}(\widehat{\tau})-\alpha_{0}^{(j)})}{\widehat{\sigma}_j( \widehat{\tau})}<-z_{1-\alpha/2}\right\}.
\end{aligned}\end{equation}
Thus, taking the supremum over $\mathcal{B}_{\ell_0}(s_0)$ and letting $n$ go to infinity yields \eqref{th42} by \eqref{thm51}.

Finally, to prove \eqref{th43}, let $g=e_j$ and as $\phi_{\max}(\bm{\Theta}(\tau)))=1/\phi_{\min}(\bm{\Sigma}(\tau)),$ for $\tau\in\mathbb{T},$ we derive
\begin{equation}
\begin{aligned}
        &\sup_{\alpha_0\in\mathcal{B}_{\ell_0}(s_0)} \text{diam}\left[\widehat{a}^{(j)}(\widehat{\tau})-z_{1-\alpha/2}\frac{\widehat{\sigma}_j( \widehat{\tau})}{\sqrt{n}},\widehat{a}^{(j)}( \widehat{\tau})+z_{1-\alpha/2}\frac{\widehat{\sigma}_j(\widehat{\tau})}{\sqrt{n}}\right]\\
        &= \sup_{\alpha_0\in\mathcal{B}_{\ell_0}(s_0)}2\widehat{\sigma}_j(\widehat{\tau}) z_{1-\alpha/2}/\sqrt{n}\\
&=2\left(\sup_{\alpha_0\in\mathcal{B}_{\ell_0}(s_0)} \sqrt{e_j'\bm{\Theta}(\widehat{\tau})\bm{\Sigma}_{xu}(\widehat{\tau}) \bm{\Theta}(\widehat{\tau})'e_j}+o_p(1)\right)z_{1-\alpha/2}/\sqrt{n}\\
       &\le2\left(\sqrt{\phi_{\max}(\bm{\Sigma}_{xu}(\widehat{\tau}))}\frac{1}{\phi_{\min}(\bm{\Sigma}(\widehat{\tau}))}+o_p(1)\right)z_{1-\alpha/2}/\sqrt{n}=O_p(1/\sqrt{n}),
    \end{aligned}
    \end{equation}
the last equality is due to the boundedness of $\phi_{\max}(\bm{\Sigma}_{xu}(\widehat{\tau}))$ and $\phi_{\min}(\bm{\Sigma}(\widehat{\tau}))$ under Assumptions \ref{as2} and \ref{asnd}. 
\end{proof}

\subsection{Proofs for Section \ref{sec:time:inf}}

We first recall the definitions of Near-Epoch Dependence and Mixingale from \cite{Davidson02}, as formulated in \cite{ADAMEK20231114}.

\begin{defn}[Near-Epoch Dependence, \cite{Davidson02}, ch. 18]\label{def:NED}
Suppose that there exist non-negative NED constants $\{c_i\}_{i=-\infty}^{\infty}$, an NED sequence $\{\psi_q\}_{q=0}^{\infty}$ such that $\psi_q\rightarrow 0$ as $q\rightarrow\infty$, and a (possibly vector-valued) stochastic sequence $\{\bm s_{i}\}_{i=-\infty}^{\infty}$ with $\mathcal{F}_{i-l-q}^{i-l+q}=\sigma\{\bm s_{i-q},\dots,\bm s_{i+q}\}$, such that $\{\mathcal{F}_{i-l-q}^{i-l+q}\}_{q=0}^{\infty}$ is an increasing sequence of $\sigma$-fields. For $p>0$, the random variable $\{X_i\}_{i=-\infty}^{\infty}$ is $L_p$-NED on $\bm s_i$ if
\begin{equation*}
    \left(\E\left[\left|X_i-\E\left(X_i\vert \mathcal{F}_{i-l-q}^{i-l+q}\right)\right|^p\right]\right)^{1/p}\leq c_i\psi_q.
\end{equation*}
for all $i$ and $q\ge 0$. Furthermore, we say $\{X_i\}$ is $L_p$-NED of size $-d$ on $\bm s_i$ if $\psi_q=O(q^{-d-\varepsilon})$ for some $\varepsilon>0$.
\end{defn}

\begin{defn}[Mixingale, \cite{Davidson02}, ch. 17]\label{def:mixingale}
Suppose that there exist non-negative mixingale constants $\{c_i\}_{i=-\infty}^{\infty}$ and mixingale sequence $\{\psi_q\}_{q=0}^{\infty}$ such that $\psi_q\rightarrow 0$ as $q\rightarrow\infty$. For $p\ge 1$, the random variable $\{X_i\}_{i=-\infty}^{\infty}$ is an $L_p$-mixingale with respect to the $\sigma$-algebra $\{\mathcal{F}_{i}\}_{i=-\infty}^{\infty}$ if
\begin{equation*}
    \left(\E\left[\left|\E\left(X_i\vert \mathcal{F}_{i-q}\right)\right|^p\right]\right)^{1/p}\leq c_i\psi_q,
\end{equation*}
\begin{equation*}
    \left(\E\left[\left|X_i-\E\left(X_i\vert \mathcal{F}_{i+q}\right)\right|^p\right]\right)^{1/p}\leq c_i\psi_q,
\end{equation*}
for all $i$ and $q\geq0$. Furthermore, we say $\{X_i\}$ is an $L_p$-mixingale of size $-d$ with respect to $\{\mathcal{F}_{i}\}$ if $\psi_q=O(q^{-d-\varepsilon})$ for some $\varepsilon>0$. The same notation for the constants $c_i$ and sequence $\psi_q$ used in near-epoch dependence applies, due to the same role in both types of dependence.
\end{defn}
\noindent We also recall the properties of NED and mixingale sequences from \cite{Davidson02}.

\begin{lem} \label{NEDMIX1}
Let $\{X_i\}_{i=-\infty}^{\infty}$ be an $L_r$-bounded sequence, for $r>1$ and $L_p$-NED of size $-b$ on a sequence $\{\bm s_i\}$ for $1\le p\le r$ with non-negative constants $\{c_i'\}_{i=-\infty}^{\infty},$ if $\{\bm s_i\}$ is $\alpha$-mixing of size $-a$ and $p<r,$ then $\{X_i-E[X_i], \mathcal{F}_{-\infty}^i\}$ is an $L_p$- mixingale of size $-\min\{b,a(1/p-1/r)\}$ with constants $c_i \le \max\{c_i',|X_i|_r\}.$ 

\end{lem}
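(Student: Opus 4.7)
The plan is to establish both mixingale conditions by a single two-scale decomposition: approximate $X_i$ by an $\mathcal{F}_{i-l}^{i+l}$-measurable near-epoch truncation, control the approximation error by the NED hypothesis, and control the truncated quantity by a conditional $\alpha$-mixing inequality. The argument I have in mind is essentially that of Theorem 17.5 in \cite{Davidson02}; I sketch the steps below.

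First, I would fix $q\ge 1$, set $l=\lfloor q/2\rfloor$, and let $X_i^{(l)}:=E[X_i\mid \mathcal{F}_{i-l}^{i+l}]$. Then I would decompose
\begin{equation*}
E[X_i\mid\mathcal{F}_{-\infty}^{i-q}]-E[X_i]=E[X_i-X_i^{(l)}\mid\mathcal{F}_{-\infty}^{i-q}]+\bigl(E[X_i^{(l)}\mid\mathcal{F}_{-\infty}^{i-q}]-E[X_i^{(l)}]\bigr)+\bigl(E[X_i^{(l)}]-E[X_i]\bigr).
\end{equation*}
The first and third pieces are directly bounded in $L_p$ by the NED hypothesis: by the conditional Jensen inequality each is at most $\|X_i-X_i^{(l)}\|_p\le c_i'\psi_l=O(c_i' l^{-b-\varepsilon})$.

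Second, for the middle term I would invoke a standard conditional $\alpha$-mixing moment inequality (Corollary 14.3 of \cite{Davidson02}): because $X_i^{(l)}$ is $\sigma(\bm s_{i-l},\dots,\bm s_{i+l})$-measurable and the conditioning $\sigma$-algebra $\mathcal{F}_{-\infty}^{i-q}$ is separated from it by at least $q-l\ge l$ time indices, for $1\le p<r$,
\begin{equation*}
\bigl\|E[X_i^{(l)}\mid\mathcal{F}_{-\infty}^{i-q}]-E[X_i^{(l)}]\bigr\|_p\le 6\,\alpha(l)^{1/p-1/r}\,\|X_i^{(l)}\|_r\le 6\,\alpha(l)^{1/p-1/r}\,\|X_i\|_r,
\end{equation*}
with $\alpha(l)=O(l^{-a-\varepsilon})$ by the mixing size hypothesis. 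Combining via the triangle inequality, the three pieces are dominated by a sequence $c_i\psi_q$ with $c_i\le C\max\{c_i',\|X_i\|_r\}$ and $\psi_q=O(q^{-\min\{b,a(1/p-1/r)\}-\varepsilon'})$, which is the claimed mixingale size.

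Third, the companion bound $\|X_i-E[X_i\mid\mathcal{F}_{i+q}]\|_p\le c_i\psi_q$ follows from the same decomposition applied with $\mathcal{F}_{-\infty}^{i+q}$ in place of $\mathcal{F}_{-\infty}^{i-q}$, since the same $l$-sized gap now separates $X_i^{(l)}$ from the future conditioning $\sigma$-field. The main obstacle is the middle step: the strict inequality $p<r$ is essential because the mixing bound contributes the factor $\alpha(l)^{1/p-1/r}$, whose exponent degenerates as $p\uparrow r$; this is why the lemma excludes the boundary case and why the resulting size takes the $\min\{b,a(1/p-1/r)\}$ form. Everything else amounts to assembling constants and verifying that the resulting $\psi_q$ tends to zero at the claimed polynomial rate.
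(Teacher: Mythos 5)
The paper offers no proof of this lemma at all --- it is invoked verbatim as Theorem 18.6(i) of \cite{Davidson02} --- so your sketch is reconstructing the textbook argument rather than paralleling anything written in the paper. Your first two steps do this correctly: the two-scale decomposition of $E[X_i\mid\mathcal{F}_{-\infty}^{i-q}]-E[X_i]$ with $l=\lfloor q/2\rfloor$, the conditional Jensen bound $\|X_i-X_i^{(l)}\|_p\le c_i'\psi_l$ for the NED pieces, and the mixing moment inequality $\|E[Y\mid\mathcal{G}]-E[Y]\|_p\le 6\,\alpha^{1/p-1/r}\|Y\|_r$ for the middle piece (using the separation $q-l\ge \lfloor q/2\rfloor$ and monotonicity of the mixing coefficients, together with $\|X_i^{(l)}\|_r\le\|X_i\|_r$) combine to give exactly the size $-\min\{b,a(1/p-1/r)\}$ and constants of order $\max\{c_i',\|X_i\|_r\}$. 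The role you assign to $p<r$ is also the right one.

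Your third step, however, is wrong as stated. The second mixingale requirement is a bound on $\|X_i-E[X_i\mid\mathcal{F}_{-\infty}^{i+q}]\|_p$, not on $\|E[X_i\mid\mathcal{F}_{-\infty}^{i+q}]-E[X_i]\|_p$, and ``the same decomposition with $\mathcal{F}_{-\infty}^{i+q}$ in place of $\mathcal{F}_{-\infty}^{i-q}$'' does not go through: there is no gap separating $X_i^{(l)}$ from the future conditioning field. On the contrary, $\sigma(\bm{s}_{i-l},\dots,\bm{s}_{i+l})\subset\mathcal{F}_{-\infty}^{i+q}$ for $l\le q$, so $E[X_i^{(l)}\mid\mathcal{F}_{-\infty}^{i+q}]=X_i^{(l)}$ and the middle term of your decomposition, $X_i^{(l)}-E[X_i^{(l)}]$, does not decay in $q$ at all; the mixing inequality contributes nothing here. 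The correct (and strictly easier) argument uses only the NED hypothesis: since $E[X_i^{(l)}\mid\mathcal{F}_{-\infty}^{i+q}]=X_i^{(l)}$, one can write $X_i-E[X_i\mid\mathcal{F}_{-\infty}^{i+q}]=(X_i-X_i^{(l)})-E[X_i-X_i^{(l)}\mid\mathcal{F}_{-\infty}^{i+q}]$, and the triangle inequality plus conditional Jensen give $\|X_i-E[X_i\mid\mathcal{F}_{-\infty}^{i+q}]\|_p\le 2c_i'\psi_l$, which is of size $-b$ and hence dominated by the claimed rate. With that repair the proof is complete.
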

This Lemma is from Theorem 18.6 (i) of \cite{Davidson02}.

\begin{lem} \label{NEDplusNED}
Let $X_i$ and $Y_i$ be $L_p$-NED on a sequence $\bm s_i$ of respective sizes $-d_1$ and $-d_2.$ Then $X_i+Y_i$ is $L_p$-NED of size $-\min\{d_1,d_2\}.$

\end{lem}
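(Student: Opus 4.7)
The plan is to unpack Definition \ref{def:NED} for $X_i$ and $Y_i$, add the two approximations, and bound the resulting $L_p$-error using Minkowski's inequality, after which the size condition is a routine comparison of polynomial rates.

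First I would fix the notation. By hypothesis there exist non-negative constants $\{c_i^X\}$, $\{c_i^Y\}$ and non-negative sequences $\psi_q^X = O(q^{-d_1-\varepsilon_1})$, $\psi_q^Y = O(q^{-d_2-\varepsilon_2})$ with $\varepsilon_1,\varepsilon_2>0$, all defined relative to the same $\sigma$-fields $\mathcal{F}_{i-l-q}^{i-l+q} = \sigma\{\bm{s}_{i-q},\dots,\bm{s}_{i+q}\}$ (since both sequences are NED on the common array $\bm{s}_i$), such that
\begin{equation*}
\bigl(\E\bigl|X_i - \E[X_i \mid \mathcal{F}_{i-l-q}^{i-l+q}]\bigr|^p\bigr)^{1/p} \leq c_i^X \psi_q^X,\qquad
\bigl(\E\bigl|Y_i - \E[Y_i \mid \mathcal{F}_{i-l-q}^{i-l+q}]\bigr|^p\bigr)^{1/p} \leq c_i^Y \psi_q^Y.
\end{equation*}

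Next I would use linearity of conditional expectation to write
\begin{equation*}
(X_i+Y_i) - \E[X_i+Y_i \mid \mathcal{F}_{i-l-q}^{i-l+q}] = \bigl(X_i - \E[X_i \mid \mathcal{F}_{i-l-q}^{i-l+q}]\bigr) + \bigl(Y_i - \E[Y_i \mid \mathcal{F}_{i-l-q}^{i-l+q}]\bigr),
\end{equation*}
and then apply the Minkowski inequality in $L_p$ (valid for $p\geq 1$) to obtain
\begin{equation*}
\bigl(\E\bigl|(X_i+Y_i) - \E[X_i+Y_i \mid \mathcal{F}_{i-l-q}^{i-l+q}]\bigr|^p\bigr)^{1/p} \leq c_i^X \psi_q^X + c_i^Y \psi_q^Y.
\end{equation*}
Setting $c_i := \max\{c_i^X, c_i^Y\}$ and $\psi_q := \psi_q^X + \psi_q^Y$, the right-hand side is bounded by $c_i \psi_q$, which matches the form required by Definition \ref{def:NED}.

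Finally I would verify the size. Let $d := \min\{d_1,d_2\}$ and $\varepsilon := \min\{\varepsilon_1,\varepsilon_2\}>0$. Since $q^{-d_j-\varepsilon_j} \leq q^{-d-\varepsilon}$ for each $j=1,2$ (when $q\geq 1$), we have $\psi_q = O(q^{-d-\varepsilon})$, so $\{X_i+Y_i\}$ is $L_p$-NED of size $-\min\{d_1,d_2\}$ on $\bm{s}_i$ as claimed. No step looks delicate: the main (and only) conceptual point is to apply Minkowski in $L_p$ to the sum of approximation errors; the rest is bookkeeping of constants and polynomial rates.
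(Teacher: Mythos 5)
Your proof is correct and is exactly the standard argument behind Davidson (2002, Theorem 18.8), which the paper simply cites rather than reproving: linearity of conditional expectation, Minkowski's inequality in $L_p$, and then routine bookkeeping of constants and polynomial rates. The only caveat is that Minkowski requires $p\ge 1$ (the paper's Definition of NED allows $p>0$), but every invocation of this lemma in the paper has $p\ge 1$, so nothing is lost.
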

This Lemma is from Theorem 18.8 of \cite{Davidson02}.

\begin{lem} \label{NEDNED}
Let $X_i$ and $Y_i$ be $L_p$-NED on a sequence $\bm s_i$ with $p\ge 2$ of respective sizes $-d_1$ and $-d_2.$ Then $X_iY_i$ is $L_{p/2}$-NED of size $-\min\{d_1,d_2\}.$
\end{lem}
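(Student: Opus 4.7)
The plan is to bound the $L_{p/2}$ conditional approximation error of the product $X_iY_i$ by the individual NED errors through the elementary product decomposition and H\"older's inequality, with a detour through an arbitrary (non-optimal) $\mathcal{F}_{i-l-q}^{i-l+q}$-measurable predictor.

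Fix the sigma-field $\mathcal{F} := \mathcal{F}_{i-l-q}^{i-l+q}$. The first step is to avoid working directly with $E[X_iY_i\mid\mathcal{F}]$, which is the $L_2$-optimal predictor but not the $L_{p/2}$-optimal one. Instead, I use the standard inequality that for any $\mathcal{F}$-measurable $Z \in L_{p/2}$,
\begin{equation*}
\bigl\Vert X_iY_i - E[X_iY_i\mid \mathcal{F}]\bigr\Vert_{p/2} \;\leq\; 2\,\bigl\Vert X_iY_i - Z\bigr\Vert_{p/2},
\end{equation*}
which follows from the triangle inequality and conditional Jensen's inequality applied to $E[X_iY_i - Z\mid\mathcal{F}] = E[X_iY_i\mid\mathcal{F}] - Z$. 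I take the convenient choice $Z = E[X_i\mid\mathcal{F}]\,E[Y_i\mid\mathcal{F}]$.

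The second step is the algebraic identity
\begin{equation*}
X_iY_i - E[X_i\mid\mathcal{F}]\,E[Y_i\mid\mathcal{F}]
\;=\; \bigl(X_i - E[X_i\mid\mathcal{F}]\bigr)\,Y_i \;+\; E[X_i\mid\mathcal{F}]\,\bigl(Y_i - E[Y_i\mid\mathcal{F}]\bigr).
\end{equation*}
Taking $\|\cdot\|_{p/2}$ and using H\"older's inequality with conjugate exponents $p,p$ (valid since $p\geq 2$),
\begin{align*}
\bigl\Vert X_iY_i - E[X_i\mid\mathcal{F}]\,E[Y_i\mid\mathcal{F}]\bigr\Vert_{p/2}
&\leq \bigl\Vert X_i - E[X_i\mid\mathcal{F}]\bigr\Vert_p \,\Vert Y_i\Vert_p \\
&\quad + \bigl\Vert E[X_i\mid\mathcal{F}]\bigr\Vert_p\,\bigl\Vert Y_i - E[Y_i\mid\mathcal{F}]\bigr\Vert_p.
\end{align*}
A further application of conditional Jensen's inequality gives $\Vert E[X_i\mid\mathcal{F}]\Vert_p \leq \Vert X_i\Vert_p$, and the $L_p$-NED hypotheses bound the two approximation errors by $c_i^X\psi_q^X$ and $c_i^Y\psi_q^Y$ respectively, with $\psi_q^X = O(q^{-d_1-\varepsilon_1})$ and $\psi_q^Y = O(q^{-d_2-\varepsilon_2})$.

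Combining these bounds yields
\begin{equation*}
\bigl\Vert X_iY_i - E[X_iY_i\mid\mathcal{F}]\bigr\Vert_{p/2}
\;\leq\; 2\bigl(\Vert Y_i\Vert_p\, c_i^X\psi_q^X + \Vert X_i\Vert_p\, c_i^Y\psi_q^Y\bigr)
\;\leq\; \widetilde c_i\,\widetilde\psi_q,
\end{equation*}
where $\widetilde c_i := 2\bigl(\Vert Y_i\Vert_p c_i^X + \Vert X_i\Vert_p c_i^Y\bigr)$ and $\widetilde\psi_q := \max\{\psi_q^X,\psi_q^Y\} = O\bigl(q^{-\min\{d_1,d_2\} - \min\{\varepsilon_1,\varepsilon_2\}}\bigr)$. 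The latter rate shows $\widetilde\psi_q$ is of size $-\min\{d_1,d_2\}$, completing the verification that $X_iY_i$ is $L_{p/2}$-NED of that size on $\{\bm s_i\}$.

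The only delicate point is the $L_p$-boundedness of $X_i$ and $Y_i$ required to keep the NED constants $\widetilde c_i$ finite; this is not stated explicitly in the lemma but is the standard convention in Davidson (2002) and is in any case guaranteed in all applications in the paper by the moment bounds in Assumption \ref{time:dgp} and \ref{time:ass:nodewise}. No further obstacle arises because the H\"older split is exact and the size is preserved by taking a maximum of two polynomially decaying sequences.
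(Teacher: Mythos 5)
Your proof is correct. Note, however, that the paper does not actually prove this lemma: it simply cites Theorem 18.9 of Davidson (2002), so there is no in-paper argument to compare against. What you have written is the standard proof underlying that theorem --- replace the $L_{p/2}$-projection $E[X_iY_i\mid\mathcal{F}]$ by the suboptimal predictor $E[X_i\mid\mathcal{F}]E[Y_i\mid\mathcal{F}]$ at the cost of a factor $2$ (valid since $p/2\ge 1$, so conditional Jensen applies), telescope the product into two terms, and apply the Cauchy--Schwarz form of H\"older ($\|fg\|_{p/2}\le\|f\|_p\|g\|_p$) together with the contraction $\|E[X_i\mid\mathcal{F}]\|_p\le\|X_i\|_p$. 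Your bookkeeping of the size is also right: $\max\{\psi^X_q,\psi^Y_q\}=O(q^{-\min\{d_1+\varepsilon_1,d_2+\varepsilon_2\}})$ and $\min\{d_1+\varepsilon_1,d_2+\varepsilon_2\}\ge\min\{d_1,d_2\}+\min\{\varepsilon_1,\varepsilon_2\}$, so the product is of size $-\min\{d_1,d_2\}$. You are also right to flag that the lemma as stated silently presumes $\|X_i\|_p,\|Y_i\|_p<\infty$ so that the new NED constants $\widetilde c_i=2(\|Y_i\|_pc^X_i+\|X_i\|_pc^Y_i)$ are finite; this hypothesis appears explicitly in Davidson's statement and is supplied in the paper's applications by the moment bounds of Assumptions \ref{time:dgp} and \ref{time:ass:nodewise}. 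No gaps.
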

This Lemma is from Theorem 18.9 of \cite{Davidson02}.

Due to the existence of the non-zero parameters, we define the weak sparsity index set
\begin{equation} \label{eq:weaksparsity}
S_\lambda:=\left\lbrace j:\abs{\beta_j^0}>\lambda \right\rbrace \quad \text{with cardinality } \vert S_\lambda\vert, 
\end{equation} for $\lambda\geq0$, 
and its complement set $S^c_\lambda=\left\lbrace1,\dots,N\right\rbrace\setminus S_\lambda$.

\begin{lem}\label{lem:time:lam} 
Suppose that Assumptions \ref{time:dgp}, \ref{ass:sparsity} and \ref{ass:compatibility} hold, and assume that
\begin{equation} \label{eq:condition_lambda}
\begin{split}
0<r<1:&\quad\lambda\geq  C\log(\log(n))^{\frac{d+m-1}{r(dm+m-1)}}\left[s_r\left(\frac{p^{\left(\frac{2}{d}+\frac{2}{m-1}\right)}}{\sqrt{n}}\right)^{\frac{1}{\left(\frac{1}{d}+\frac{m}{m-1}\right)}}\right]^{\frac{1}{r}}\\
r=0:&\quad s_0\leq C \log(\log( n))^{-\frac{d+m-1}{dm+m-1}}\left[\frac{\sqrt{n}}{p^{\left(\frac{2}{d}+\frac{2}{m-1}\right)}}\right]^{\frac{1}{\left(\frac{1}{d}+\frac{m}{m-1}\right)}},\\
&\quad \lambda\geq C{ \log(\log( n))}^{1/m}\frac{p^{1/m}}{\sqrt{n}},
\end{split}
\end{equation}
For $C > 0,$ with probability at least $1-C \log(\log(n))^{-1},$ we have

$$\max_{1\le j\le p}\max_{1\le k\le n}\frac{1}{n}\sum_{i=1}^k\left|u_iX_{i}^{(j)}\right| \lesssim \frac{\lambda}{4},$$ and

$$\left\Vert\frac{1}{n}\sum_{i=1}^n{X}_i{X}_i'-\frac{1}{n}\sum_{i=1}^nE\left[{X}_i{X}_i'\right]\right\Vert_\infty \le \frac{C}{\left|S_{\lambda}\right|}. $$
\end{lem}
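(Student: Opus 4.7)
The two inequalities are the time-series analogues of the regularization events in Lemmas~\ref{lemmaprobAB} and \ref{tau22}. My plan is to (i) verify that each scalar summand is centered and near-epoch-dependent on $s_{N,i}$ by invoking the preservation rules in Lemmas~\ref{NEDplusNED}–\ref{NEDNED}; (ii) convert the resulting NED property to an $L_m$-mixingale via Lemma~\ref{NEDMIX1}, exploiting the $\alpha$-mixing size in Assumption~\ref{time:dgp}(ii); (iii) apply a Burkholder–McLeish maximal/Rosenthal moment inequality for mixingale partial sums; and (iv) conclude with Markov's inequality and a union bound over the $p$ or $p^2$ coordinates. Neither displayed bound involves $Q_i$ or $\tau$, so the threshold structure does not intervene; the template is essentially Lemma~A.1 of \cite{ADAMEK20231114}, and I would follow it closely.

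For the first bound (reading $\bigl|\sum_{i=1}^{k}u_iX_i^{(j)}\bigr|$ in place of the displayed $\sum|u_iX_i^{(j)}|$, since the former is the quantity actually controlling the Lasso noise term, as in Lemma~\ref{lem2-conc}): Assumption~\ref{time:dgp}(i) yields $E[u_iX_i^{(j)}]=0$ with uniformly bounded $2\bar m$-moments, and Lemma~\ref{NEDNED} applied to $u_i$ and $X_i^{(j)}$ (each $L_{2m}$-NED of size $-d$) shows $\{u_iX_i^{(j)}\}$ is $L_m$-NED of size $-d$, hence an $L_m$-mixingale by Lemma~\ref{NEDMIX1} with constants controlled uniformly in $j$. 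The McLeish maximal inequality gives $E\bigl[\max_{1\le k\le n}\bigl|\sum_{i=1}^{k}u_iX_i^{(j)}\bigr|^{m}\bigr]\lesssim n^{m/2}$. Markov's inequality at threshold $n\lambda/4$ together with a union bound over $j\in\{1,\dots,p\}$ yields a tail of order $p/(n^{m/2}\lambda^{m})$, which is $O((\ln\ln n)^{-1})$ precisely because the $r=0$ branch of \eqref{eq:condition_lambda} forces $\lambda\gtrsim (\ln\ln n)^{1/m}p^{1/m}/\sqrt{n}$; the $0<r<1$ branch requires a strictly larger $\lambda$ once the weak-sparsity budget $s_r$ is absorbed, so the same tail estimate transfers.

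For the second bound, the centered products $X_i^{(j)}X_i^{(l)}-E[X_i^{(j)}X_i^{(l)}]$ are $L_m$-NED of size $-d$ by Lemmas~\ref{NEDplusNED}–\ref{NEDNED} with constants bounded through $\max_{j,i}|X_i^{(j)}|_{2\bar m}$ (Assumption~\ref{time:dgp}(i)), and hence form mixingales. A Rosenthal-type moment bound (no maximum over $k$ is needed) gives $E\bigl|\sum_{i=1}^{n}(X_i^{(j)}X_i^{(l)}-E[\,\cdot\,])\bigr|^{m}\lesssim n^{m/2}$. Markov plus a union bound over the $p^{2}$ pairs gives
\begin{equation*}
P\!\left(\left\|\tfrac{1}{n}\sum_{i=1}^{n}X_iX_i' - \tfrac{1}{n}\sum_{i=1}^{n}E[X_iX_i']\right\|_{\infty}>\frac{C}{|S_\lambda|}\right)\lesssim \frac{p^{2}\,|S_\lambda|^{m}}{n^{m/2}},
\end{equation*}
and the weak-sparsity relation $|S_\lambda|\le s_r\lambda^{-r}$ combined with the lower range for $\lambda$ in \eqref{eq:condition_lambda} drives this to $O((\ln\ln n)^{-1})$.

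The hard part is matching the sharp exponents $2/d+2/(m-1)$ and $1/d+m/(m-1)$ appearing in \eqref{eq:condition_lambda}: these do not come from any single off-the-shelf inequality but from an optimal interpolation between the raw NED size $d$ and the mixingale size $d/(1/m-1/\bar m)$ produced by Lemma~\ref{NEDMIX1}, traded off against the moment order used in the Rosenthal bound. I would invoke the detailed bookkeeping in the proof of Lemma~A.1 of \cite{ADAMEK20231114}, since it carries out exactly this optimization in the NED/$\alpha$-mixing setting of Assumption~\ref{time:dgp}; my role reduces to verifying that \eqref{eq:condition_lambda} is the corresponding admissible lower range for $\lambda$ and that $\ln\ln n$ absorbs the Markov slack.
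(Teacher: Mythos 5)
Your proposal takes essentially the same route as the paper, whose entire ``proof'' of this lemma is a citation to Theorem 1 of \cite{ADAMEK20231114}: you correctly identify the NED-to-mixingale conversion (Lemmas \ref{NEDMIX1}--\ref{NEDNED}), the maximal inequality of Lemma \ref{time:mixconc}, and the Markov/union-bound finish, and you rightly read the first display as $\bigl|\sum_{i=1}^k u_iX_i^{(j)}\bigr|$ rather than the (typographically erroneous) sum of absolute values. The only caveat is that your self-contained Markov--Rosenthal sketch would deliver only the $p^{1/m}/\sqrt{n}$ rate and not the $d$-dependent exponents $\frac{2}{d}+\frac{2}{m-1}$ and $\frac{1}{d}+\frac{m}{m-1}$; those arise from the truncation (Triplex-inequality) argument of \cite{jiang2009uniform} used inside Adamek et al.'s proof, which the paper itself flags in the proof of Theorem \ref{thm:time:oraine} --- but since you explicitly defer that bookkeeping to their Theorem 1, exactly as the paper does, there is no gap.
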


This lemma is from Theorem 1 of \cite{ADAMEK20231114}, which provides a concentration inequality for dependent data.

\begin{lem} \label{time:mixconc}
    Let $\{X_i, \mathcal{F}_i\}$ be an $L^r$ mixingale for some $r>1$ and $\sum_{q=1}^{\infty} \psi_q < \infty.$ Assume that $E[X_i] = 0.$ Define $S_k = \sum_{i=1}^kX_i.$ Then there exists a positive constant $C$ such that 
\begin{equation*}
    ||\max_{1 \le k\le n}|S_k|||_r \leq C \left(\sum_{i=1}^n c_i^2\right)^{1/2},
\end{equation*}
\end{lem}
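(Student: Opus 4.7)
My plan is to prove this maximal inequality by McLeish's martingale-approximation technique combined with Doob's $L^r$ maximal inequality. First I would introduce the projection operators $P_j(Y) := \E[Y\mid\mathcal{F}_j] - \E[Y\mid\mathcal{F}_{j-1}]$ and show, using the two-sided mixingale bounds in Definition \ref{def:mixingale} together with $\E[X_i]=0$, that each $X_i$ admits the telescoping representation
\begin{equation*}
X_i \;=\; \sum_{j=-\infty}^{\infty} P_{i+j}(X_i),
\end{equation*}
convergent in $L^r$, with $\|P_{i+j}(X_i)\|_r \le 2 c_i \psi_{|j|}$ by the triangle inequality and contractivity of conditional expectation. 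The backward tail ($j\to-\infty$) is controlled by $\|\E[X_i\mid\mathcal{F}_{i+j}]\|_r \le c_i\psi_{|j|}$ and the forward tail ($j\to+\infty$) by $\|X_i - \E[X_i\mid\mathcal{F}_{i+j}]\|_r \le c_i\psi_{|j|}$.

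Next I would interchange the order of summation to write $S_k = \sum_{j\in\mathbb{Z}} M_{k,j}$, where $M_{k,j} := \sum_{i=1}^{k} P_{i+j}(X_i)$. For each fixed $j$ the sequence $\{P_{i+j}(X_i)\}_i$ is a martingale difference with respect to the shifted filtration $\{\mathcal{F}_{i+j}\}_i$, so $\{M_{k,j}\}_{k=1}^n$ is a martingale in $k$. Doob's $L^r$ maximal inequality bounds $\|\max_k |M_{k,j}|\|_r$ by $\frac{r}{r-1}\|M_{n,j}\|_r$, and Burkholder's martingale inequality in $L^r$ then yields
\begin{equation*}
\|M_{n,j}\|_r \;\le\; C_r \Bigl(\sum_{i=1}^n \|P_{i+j}(X_i)\|_r^2\Bigr)^{1/2} \;\le\; 2 C_r \psi_{|j|}\Bigl(\sum_{i=1}^n c_i^2\Bigr)^{1/2}.
\end{equation*}
Finally, applying Minkowski's inequality over $j$ and using $\sum_{j\in\mathbb{Z}}\psi_{|j|} \le 2\sum_{q\ge 0}\psi_q < \infty$ delivers the claim with $C = 2C_r \cdot \frac{r}{r-1} \cdot \sum_j \psi_{|j|}$.

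The main obstacle will be the range $1 < r < 2$. There both the $L^r$-convergence of the telescoping representation and the Burkholder-type bound $\|M_{n,j}\|_r \lesssim (\sum_i \|P_{i+j}(X_i)\|_r^2)^{1/2}$ are subtler than for $r\ge 2$, since the naive $L^r$-Burkholder--Rosenthal estimate would instead yield the weaker $(\sum_i \|P_{i+j}(X_i)\|_r^r)^{1/r}$. I expect to resolve this by appealing to the conditional-square-function version of Burkholder's inequality (e.g.\ Hall and Heyde (1980), Theorem 2.11) together with the pointwise bound $\E[P_{i+j}(X_i)^2 \mid \mathcal{F}_{i+j-1}] \le C c_i^2 \psi_{|j|}^2$ that follows from the mixingale constants, noting that Jensen gives $\|\cdot\|_r \le \|\cdot\|_2$ on any bounded probability space argument applied to the conditional square function.
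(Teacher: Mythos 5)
Your proof is self-contained where the paper's own proof is a one-line citation of Hansen (1991, Lemma 2), so the real question is whether your argument actually closes. For $r\ge 2$ it does: the telescoping decomposition $X_i=\sum_j P_{i+j}(X_i)$ (convergent in $L^r$ by the two mixingale tail bounds), the martingale property of $M_{k,j}$ in $k$, Doob's $L^r$ maximal inequality, Burkholder followed by Minkowski in $L^{r/2}$, and the final Minkowski sum over $j$ are all standard and correct, up to the harmless index shift $\|P_{i+j}(X_i)\|_r\le c_i(\psi_{|j|-1}+\psi_{|j|})$, which requires either monotone $\psi_q$ or replacing $\psi_q$ by $\sup_{q'\ge q}\psi_{q'}$.

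The case $1<r<2$, which you correctly flag as the obstacle, is where the argument breaks and your proposed repair does not work. The conditional-square-function form of Burkholder's inequality needs the almost-sure bound $\E\left[P_{i+j}(X_i)^2\mid\mathcal{F}_{i+j-1}\right]\le Cc_i^2\psi_{|j|}^2$, but the mixingale conditions in Definition \ref{def:mixingale} control only unconditional $L^r$ norms; for $1<r<2$ the variables need not possess second moments at all, so the conditional square function can be infinite almost surely, and the closing appeal to Jensen is not a coherent substitute. Worse, no argument that retains only the $L^r$ sizes $\|P_{i+j}(X_i)\|_r\lesssim c_i\psi_{|j|}$ of the martingale differences can produce the $\left(\sum_i c_i^2\right)^{1/2}$ bound in this range: i.i.d.\ symmetric $\alpha$-stable summands with $1<r<\alpha<2$ are martingale differences, hence mixingales with $\psi_q=0$ for $q\ge1$ and $c_i=\|X_i\|_r$, yet $\|S_n\|_r=n^{1/\alpha}\|X_1\|_r$, which grows strictly faster than $n^{1/2}\|X_1\|_r=\left(\sum_{i\le n}c_i^2\right)^{1/2}\,$. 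So for $1<r<2$ the desired bound cannot follow from the quantities your decomposition keeps, and whatever makes the cited result go through (Hansen's 1991 lemma together with his 1992 erratum, whose existence reflects exactly this delicacy) must exploit more than that. Your proof is therefore complete only for $r\ge2$; for $1<r<2$ there is a genuine gap that is not patchable along the lines you propose.
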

\noindent where $||X_i||_r=\left(E|X_i|^r\right)^{1/r}.$

This mixingale concentration inequality directly follows from Lemma 2 in \cite{Hansen_1991}.

\begin{proof}[Proof of Theorem \ref{thm:time:oraine}]

The proof of Theorem \ref{thm:time:oraine} is similar to that of Theorems \ref{main-thm-case1} and \ref{thmftau}. We now apply the concentration inequality from Lemma \ref{time:mixconc}, combined with Triplex inequality (\cite{jiang2009uniform}), similarly to the proof of Lemma A.3 in \cite{ADAMEK20231114}, as a time series analog of Lemma \ref{concenine}. Additionally, we use the concentration inequality from Lemma \ref{time:mixconc}, combined with Markov inequality, similarly to the proof of Lemma A.4 in \cite{ADAMEK20231114}, as a time series analog of Lemma \ref{conpart}. Meanwhile, we obtain that $\left\{X_i^{(j)}U_i\right\}$ and $X_i^{(j)}X_i^{(l)} - E\left[X_i^{(j)}X_i^{(l)}\right]$ are $L_m$-Mixingale sequences with respect to $\mathcal{F}_i = \sigma\{\bm W_i,\bm W_{i-1},...\},$ following Lemma A.1 and Lemma A.2 in \cite{ADAMEK20231114} under Lemmas \ref{NEDMIX1}, \ref{NEDplusNED} and \ref{NEDNED}. Furthermore, the proof of Lemma \ref{lem:time:lam} follows from that of Theorem 1 in \cite{ADAMEK20231114}. With the additional Assumptions \ref{A-discontinuity} and \ref{A-smoothness} for the well-defined threshold effect, we can thus establish the oracle inequalities in Theorem \ref{thm:time:oraine}.
\end{proof}

\begin{proof}[Proof of Theorem \ref{time:thm:CLT}]
With the condition $s_{r,max}^{3/2}log p/\sqrt{n}\rightarrow 0,$ we can obtain \begin{align*}
\left| g'\widehat{\bm{\Theta}}(\widehat{\tau})(\bm {X}(\widehat{\tau})'\bm {X}(\tau_0)-\bm {X}(\widehat{\tau})'\bm {X}(\widehat{\tau}))\alpha_0 /n^{1/2}\right|=
o_p\left(1\right) 
\end{align*} by Lemma \ref{suptau0}. Then, based on the oracle inequalities in Theorem \ref{thm:time:oraine}, and combing the proof of Theorem \ref{thm3} with the proof of Theorem 2 in \cite{ADAMEK20231114}, we thus establish the asymptotic normality of the debiased estimator.
\end{proof}

\begin{proof}[Proofs of Theorem \ref{time:thm:LRVconsistency} and Theorem \ref{cor:usefulResult}]
\noindent We can prove Theorem \ref{time:thm:LRVconsistency} by combining the proof of Theorem \ref{thm3} with that of Theorem 3 in \cite{ADAMEK20231114}. Furthermore,we can prove Theorem \ref{cor:usefulResult} by combining the proof of Theorem \ref{thm5} with Corollary 2 in \cite{ADAMEK20231114}.
\end{proof}

\section{Appendix B}\label{appendixB}

\renewcommand{\thefigure}{B.\arabic{figure}}
\setcounter{figure}{0}

\renewcommand{\thetable}{B.\arabic{table}}
\setcounter{table}{0}
\begin{table}[htbp]

\begin{center}
\caption{List of variables, reproduced from \cite{lee2016}}
\label{tb:listVar}
\small
\begin{tabular}{p{1.2in}p{5in}}
\hline \hline
Variable Names & Description \\
\hline
\multicolumn{2}{l}{\underline{\emph{Dependent Variable}}}\\
$\textit{gr}$ & Annualized GDP growth rate in the period of 1960--85 \\
& \\
\multicolumn{2}{l}{\underline{\emph{Threshold Variables}}}\\
\textit{gdp60} & Real GDP per capita in 1960 (1985 price)\\
\textit{lr} & Adult literacy rate in 1960 \\
& \\
\multicolumn{2}{l}{\underline{\emph{Covariates}}}\\
\textit{lgdp60} &	Log GDP per capita in 1960 (1985 price)\\
\textit{lr} & Adult literacy rate in 1960 (only included when $Q=lr$)\\
$\textit{ls}_k$	& Log(Investment/Output) annualized over 1960-85; a proxy for the log physical savings rate\\
$\textit{lgr}_{pop}$ &Log population growth rate annualized over 1960--85\\
\textit{pyrm60} &	Log average years of primary schooling in the male population in 1960\\
\textit{pyrf60}	&Log average years of primary schooling in the female population in 1960\\
\textit{syrm60} &	Log average years of secondary schooling in the male population in 1960\\
\textit{syrf60}	&Log average years of secondary schooling in the female population in 1960\\
\textit{hyrm60} &	Log average years of higher schooling in the male population in 1960\\
\textit{hyrf60} &	Log average years of higher schooling in the female population in 1960\\
\textit{nom60} &	Percentage of no schooling in the male population in 1960\\
\textit{nof60}	&Percentage of no schooling in the female population in 1960\\
\textit{prim60}&	Percentage of primary schooling attained in the male population in 1960\\
\textit{prif60}&	Percentage of primary schooling attained in the female population in 1960\\
\textit{pricm60}	&Percentage of primary schooling complete in the male population in 1960\\
\textit{pricf60}&	Percentage of primary schooling complete in the female population in 1960\\
\textit{secm60}&	Percentage of secondary schooling attained in the male population in 1960\\
\textit{secf60}	&Percentage of secondary schooling attained in the female population in 1960\\
\textit{seccm60}&	Percentage of secondary schooling complete in the male population in 1960\\
\textit{seccf60}&	Percentage of secondary schooling complete in the female population in 1960\\
\textit{llife} &	Log of life expectancy at age 0 averaged over 1960--1985\\
\textit{lfert}&	Log of fertility rate (children per woman) averaged over 1960--1985\\
\textit{edu/gdp} &	Government expenditure on eduction per GDP averaged over 1960--85\\
\textit{gcon/gdp}&	Government consumption expenditure net of defence and education per GDP averaged over 1960--85\\
\textit{revol} & The number of revolutions per year over 1960--84\\
\textit{revcoup} &	The number of revolutions and coups per year over 1960--84\\
\textit{wardum}  &	Dummy for countries that participated in at least one external war over 1960--84\\
\textit{wartime}  &	The fraction of time over 1960-85 involved in external war\\
\textit{lbmp}	& Log(1+black market premium averaged over 1960--85)\\
\textit{tot}&	The term of trade shock\\
\hline
\end{tabular}
\end{center}
\end{table}

\begin{table*}[ht]
\begin{center}
{\small
\caption{Lasso and debiased estimates with $Q=lr$}
\label{tb:resultM2}
\scalebox{0.8}{
\begin{tabular}{|l||c|c||c|c|}
\hline 
\multirow{2}{*}{Variable} & \multicolumn{2}{c||}{Lasso estimates} & \multicolumn{2}{c|}{Debiased estimates} \\
\cline{2-5}
 & $ \widehat{\beta}$ & $\widehat{\delta}$  & $ \widehat{\beta}$ & $ \widehat{\delta}$ \\

\textit{lgdp60} 		& $-0.0099$ 	& -& $-0.0099^{***}$ & - \\
& & 	& (0.0000) &  \\
$\textit{ls}_k$			& $0.0046$ 	&- & $0.0046^{***}$  & - \\
& 	& & (0.0000) &  \\
$\textit{hyrm}_k$			& 0.0101& - & $0.0100^{***}$ & -\\
& & &  (0.0009) &  \\
\textit{syrf60}		& -&-	& - & $-0.0002^{*}$ \\
& & 	&  & (0.0001) \\
$\textit{nom60}_k$			&- & - 	&  $-9.3285 \times 10^{-7*}$ & -\\
&	& & ($5.4215\times 10^{-7}$) &  \\
$\textit{nof60}_k$			&- & - 	&  $-7.7304 \times 10^{-7*}$ & -\\
&	& & ($3.9468\times 10^{-7}$) &  \\
\textit{prim60}		&$-0.0001$	&- & $-8.6867 \times 10^{-5***}$ & -\\
& & & ($4.9597\times 10^{-7}$) &  \\
\textit{prif60}		&-	&- & $-1.1828 \times 10^{-6*}$ & -\\
& & & ($6.7136\times 10^{-7}$) &  \\
\textit{pricm60}		&$0.0001$ &$0.0001$	& $9.4748 \times 10^{-5***}$  & $7.1360 \times 10^{-5***}$  \\
&	& & ($1.2976\times 10^{-6}$) & ($2.0563\times 10^{-6}$) \\
\textit{seccm60}		&-	&$0.0018$ & - & $0.0018^{***}$\\
&	& &  & (0.0000) \\
\textit{llife}			&$0.0335$	&- & $0.0335^{***}$& $-1.1027 \times 10^{-5*}$ \\
&	& & (0.0000) & ($3.9716\times 10^{-6}$) \\
\textit{lfert}			&$-0.0069$	&- & $-0.0069^{***}$& - \\
&	& & (0.0000) &  \\
\textit{gcon/gdp}		&$-0.0593$	&- & $-0.0599^{***}$& - \\
& 	& & (0.0010) &  \\
\textit{wartime}		&$-0.0231$	&- & $-0.0235^{***}$ & -\\
& 	& & (0.0023) & \\
\textit{lbmp}			&$-0.0142$	&- & $-0.0147^{***}$& -\\
&	& & (0.0009) &\\
\textit{tot}			&$0.0846$	&- & $0.0963^{**}$ & -\\
&	& & (0.0270) & \\
$\textit{lgdp60} \times\textit{hyrf60}$         &-& $-0.0053$ & - & $-0.0053^{***}$  \\
&	& &  & (0.0000)  \\
$\textit{lgdp60} \times\textit{prim60}$	& -&-  &$-1.3611\times 10^{-7**}$ & - \\
& 	& & ($5.2872\times 10^{-8}$) &  \\
$\textit{lgdp60} \times\textit{prif60}$	&-&$-2.66\times 10^{-6}$   &$-1.6920\times 10^{-7**}$   & $-3.0065\times 10^{-6***}$\\
&	& & ($8.3185\times 10^{-8}$) & ($3.9226\times 10^{-7}$)\\
$\textit{lgdp60} \times\textit{secm60}$	&-  &- & - & $-1.5796\times 10^{-6**}$\\
& 	& &  & ($7.9906\times 10^{-7}$)\\
\hline
\multicolumn{5}{p{.8\textwidth}}{\footnotesize \emph{Note: }***\ p$<$0.01, **\ p$<$0.05, *\ p$<$0.10; standard errors (in parentheses).}

\end{tabular}
}
}
\end{center}
\end{table*}

\end{document}